% !TEX encoding = UTF-8 Unicode
% !TEX root = main.tex
\RequirePackage[l2tabu, orthodox]{nag}
\documentclass[11pt, a4paper]{article}

% Margin & fonts (delete them to use a provided class style)
\usepackage[top=1truein, bottom=1truein, left=1truein, right=1truein]{geometry}
\usepackage{type1cm}
\usepackage[utf8]{inputenc}
\usepackage[T1]{fontenc}
\usepackage{lmodern}

% hyperref
\usepackage[bookmarksnumbered, pdfdisplaydoctitle, pdfusetitle, unicode]{hyperref}
\usepackage{url}

% Clever references
\usepackage{amsmath}
\usepackage[capitalize, noabbrev]{cleveref}

\usepackage{autonum}

% Tables
\usepackage{booktabs}
\usepackage{multirow}

% Tikz
\usepackage{tikz}
\usetikzlibrary{arrows.meta}
\usetikzlibrary{positioning}

% Misc
\usepackage{comment}
\usepackage[en-US]{datetime2}
\usepackage{subcaption}
\usepackage{xcolor}

% okicmd
\usepackage{okicmd}
\usepackage{okithm}

% Write commands here!
\DeclareMathOperator{\ch}{ch}
\DeclareMathOperator{\GF}{GF}

\newcommand{\setRpp}{\setR_{>0}}

\newcommand{\base}{\mathcal{B}}
\newcommand{\cbase}{\mathcal{B}}
\newcommand{\pbase}{\mathcal{B}}
\newcommand{\feas}{\mathcal{F}}

\newcommand{\matroid}{\mathbf{M}}
\newcommand{\dmatroid}{\mathbf{D}}

\DeclareMathOperator{\pf}{Pf}

\newcommand{\head}[1]{\partial^- #1}
\newcommand{\tail}[1]{\partial^+ #1}

\newcommand{\pbig}[1]{\prn[\big]{#1}}

\title{\texorpdfstring{%
  Pfaffian Pairs and Parities:\\
  Counting on Linear Matroid Intersection and Parity Problems
}{%
  Pfaffian Pairs and Parities: Counting on Linear Matroid Intersection and Parity Problems
}}
\author{
  Kazuki Matoya%
  \texorpdfstring{\thanks{
    Department of Mathematical Informatics, Graduate School of Information Science and Technology, University of Tokyo, Tokyo 113-8656, Japan.
    E-mail: \{\href{mailto:kazuki_matoya@mist.i.u-tokyo.ac.jp}{\nolinkurl{kazuki_matoya}}, \href{mailto:taihei_oki@mist.i.u-tokyo.ac.jp}{\nolinkurl{taihei_oki}}\}\texttt{@mist.i.u-tokyo.ac.jp}
  }}{}
  \and
  Taihei Oki%
  \texorpdfstring{\footnotemark[1]}{}
}
\newcommand{\mykeywords}{linear matroid intersection, linear matroid parity, Pfaffian, matrix-tree theorem, regular delta-matroid, arborescence, matching, Pfaffian orientation, spanning hypertree, \ST\ path, \calS-path, counting algorithm}
\hypersetup{pdfkeywords={\mykeywords}}

\begin{document}
\maketitle

\begin{abstract}
  % !TEX encoding = UTF-8 Unicode
% !TEX root = main.tex

Spanning trees are a representative example of linear matroid bases that are efficiently countable.
Perfect matchings of Pfaffian bipartite graphs are a countable example of common bases of two matrices.
Generalizing these two examples, Webb~(2004) introduced the notion of Pfaffian pairs as a pair of matrices for which counting of their common bases is tractable via the Cauchy--Binet formula.

This paper studies counting on linear matroid problems extending Webb's work.
We first introduce ``Pfaffian parities'' as an extension of Pfaffian pairs to the linear matroid parity problem, which is a common generalization of the linear matroid intersection problem and the matching problem.
We enumerate combinatorial examples of Pfaffian pairs and parities.
The variety of the examples illustrates that Pfaffian pairs and parities serve as a unified framework of efficiently countable discrete structures.
Based on this framework, we derive celebrated counting theorems, such as Kirchhoff's matrix-tree theorem, Tutte's directed matrix-tree theorem, the Pfaffian matrix-tree theorem, and the Lindström--Gessel--Viennot lemma.

Our study then turns to algorithmic aspects.
We observe that the fastest randomized algorithms for the linear matroid intersection and parity problems by Harvey~(2009) and Cheung--Lau--Leung~(2014) can be derandomized for Pfaffian pairs and parities.
We further present polynomial-time algorithms to count the number of minimum-weight solutions on weighted Pfaffian pairs and parities.
Our algorithms make use of Frank's weight splitting lemma for the weighted matroid intersection problem and the algebraic optimality criterion of the weighted linear matroid parity problem given by Iwata--Kobayashi~(2017).

  \bigskip\noindent\textbf{Keywords:} \mykeywords{}
\end{abstract}

\newpage
\tableofcontents

\newpage
% !TEX encoding = UTF-8 Unicode
% !TEX root = main.tex

\section{Introduction}
Let $A$ be a totally unimodular matrix of row-full rank; that is, any minor of $A$ is $0$ or $\pm 1$.
The (generalized) \emph{matrix-tree theorem}~\cite{Maurer1976} claims that the number of column bases of $A$ is equal to $\det A \trsp{A}$.
This can be observed by setting $A_1 = A_2 = A$ in the \emph{Cauchy--Binet formula}
\begin{align} \label{eq:cauchy_binet}
  \det A_1\trsp{A_2} = \sum_{J \subseteq E : \card{J} = r} \det A_1[J] \det A_2[J],
\end{align}
where $A_1, A_2$ are matrices of size $r \times n$ with common column set $E$ and $A_k[J]$ denotes the submatrix of $A_k$ indexed by columns $J \subseteq E$ for $k = 1, 2$.
In the case where $A$ comes from the incidence matrix of an undirected graph, the formula~\eqref{eq:cauchy_binet} provides the celebrated matrix-tree theorem due to Kirchhoff~\cite{Kirchhoff1847} for counting spanning trees.

From a matroidal point of view, the matrix-tree theorem is regarded as a theorem for counting bases of regular matroids, which are a subclass of linear matroids represented by totally unimodular matrices.
Regular matroids are recognized as the largest class of matroids for which base counting is exactly tractable.
For general matroids (even for binary or transversal matroids), base counting is \textsf{\#P-complete}~\cite{Colbourn1995,Snook2012} and hence approximation algorithms have been well-studied~\cite{Anari2018,Anari2019}.

Another example of a polynomial-time countable object is perfect matchings of graphs with \emph{Pfaffian orientation}~\cite{Kasteleyn1961}.
The \emph{Pfaffian} is a polynomial of matrix entries defined for a skew-symmetric matrix $S$ of even order.
If $S$ is the Tutte matrix of a graph $G$, its Pfaffian is the sum over all perfect matchings of $G$ except that each matching has an associated sign as well.
Suppose that edges of $G$ are oriented so that all terms in the Pfaffian become $+1$ by assigning $+1$ or $-1$ to each variable in the Tutte matrix according to the edge direction.
This means that there are no cancellations in the Pfaffian, and thus it coincides with the number of perfect matchings of $G$.
Such an orientation is called \emph{Pfaffian} and a graph that admits a Pfaffian orientation is also called \emph{Pfaffian}.
If $G$ is bipartite, we can consider the determinant of the Edmonds matrix instead of the Pfaffian of the Tutte matrix.
Whereas counting of perfect matchings is \textsf{\#P-complete} even for bipartite graphs~\cite{Valiant1977}, characterizations of Pfaffian graphs and polynomial-time algorithms to give a Pfaffian orientation have been intensively studied~\cite{Kasteleyn1961,Little1974,Robertson1999,Temperley1961,Vazirani1989}.

From the viewpoint of matroids again, the bipartite matching problem is generalized to the \emph{linear matroid intersection problem}~\cite{Edmonds1968,Edmonds1970}.
This is the problem to find a common column base of two matrices $A_1, A_2$ of the same size.
Besides the bipartite matching problem, the linear matroid intersection problem includes a large number of combinatorial optimization problems as special cases, such as problems of finding an arborescence, a colorful spanning tree, and a vertex-disjoint $S$--$T$ path~\cite{Tutte1965}.
The \emph{weighted linear matroid intersection problem} is to find a common base of $A_1, A_2$ that minimizes a given column weight $\funcdoms{w}{E}{\setR}$.
Various polynomial-time algorithms have been proposed for both the unweighted and the weighted linear matroid intersection problems~\cite{Edmonds1968,Edmonds1970,Frank2011,Harvey2009}.
However, the counting of common bases is intractable even for a pair of totally unimodular matrices, as it includes the counting of perfect bipartite matchings.

Commonly generalizing Pfaffian bipartite graphs and regular matroids, Webb~\cite{Webb2004} introduced the notion of a \emph{Pfaffian} (\emph{matrix}) \emph{pair} as a pair of totally unimodular matrices $A_1, A_2$ such that $\det A_1[B] \det A_2[B]$ is constant for any common base $B$ of $A_1$ and $A_2$.
This condition means due to the Cauchy--Binet formula~\eqref{eq:cauchy_binet} that the number of common bases of $(A_1, A_2)$ can be retrieved from $\det A_1 \trsp{A_2}$.
For example, bases of a totally unimodular matrix $A$ are clearly common bases of a Pfaffian pair $(A, A)$.
Webb~\cite{Webb2004} indicated that the set of perfect matchings of a Pfaffian bipartite graph can also be represented as common bases of a Pfaffian pair.
Although the Pfaffian pairs concept nicely integrates these two celebrated countable objects, its existence and importance do not seem to have been recognized besides the original thesis~\cite{Webb2004} of Webb.
We remark that one can remove the assumption of the total unimodularity on $A_1$ and $A_2$ for counting purpose.

The linear matroid intersection and the (nonbipartite) matching problems are commonly generalized to the \emph{linear matroid parity problem}~\cite{Lawler1976}, which is explained as follows.
Let $A$ be a $2r \times 2n$ matrix whose column set is partitioned into pairs, called \emph{lines}.
Let $L$ be the set of lines.
In this paper, we call $(A, L)$ a (\emph{linear}) \emph{matroid parity}.
The linear matroid parity problem on $(A, L)$ is to find a \emph{parity base} of $(A, L)$, which is a column base of $A$ consisting of lines.
Applications of the linear matroid parity problem include the maximum hyperforest problem on a 3-uniform hypergraph, the disjoint $A$-path problem and the disjoint \calS-path problem~\cite{Lovasz1980}.
The linear matroid parity problem is known to be solvable in polynomial time since the pioneering work of Lovász~\cite{Lovasz1980}.
Recently, Iwata--Kobayashi~\cite{Iwata2017} presented the first polynomial-time algorithm for the \emph{weighted linear matroid parity problem}, which is to find a parity base of $(A, L)$ that minimizes a given line weight $\funcdoms{w}{L}{\setR}$.

In this paper, we explore Pfaffian pairs and their generalization to the linear matroid parity problem, which we call \emph{Pfaffian} (\emph{linear matroid}) \emph{parities}.
The contributions of this paper are twofold: structural and algorithmic results.

\subsection{Structural Results}
We introduce a new concept ``Pfaffian parity'' as a matroid parity $(A, L)$ such that $\det A[B]$ is constant for all parity base $B$ of $(A, L)$.
As in the case of Pfaffian pairs, this condition ensures that the number of parity bases can be retrieved from the Pfaffian of a skew-symmetric matrix associated with $(A, L)$.
The proof of this fact relies on a generalization of the Cauchy--Binet formula~\eqref{eq:cauchy_binet} to the Pfaffian given by Ishikawa--Wakayama~\cite{Ishikawa1995} (see \cref{prop:ishikawa_pfaffian_cauchy_binet}).

We then consolidate a list of discrete structures that can be represented as common bases of Pfaffian pairs or parity bases of Pfaffian parities.
Some of them are already (explicitly or implicitly) known, and some are newly proved.
The variety of this list illustrates that Pfaffian pairs and parities serve as a unified framework of discrete structures for which counting is tractable.
Much of celebrated counting theorems are explained in this framework, such as Kirchhoff's matrix-tree theorem~\cite{Kirchhoff1847}, Tutte's directed matrix-tree theorem~\cite{Tutte1948}, the Pfaffian matrix-tree theorem due to Masbaum--Vaintrob~\cite{Masbaum2002}, and the Lindström--Gessel--Viennot (LGV) lemma~\cite{Gessel1985,Lindstrom1973}.
An overview of the list is as follows; see each linked section for the exact problem definitions.

\paragraph{Regular Matroids and Regular Delta-Matroids~\textmd{(\cref{sec:regular_matroids,sec:regular_delta_matroids})}.}
Bases of regular matroids are a trivial example of Pfaffian pairs, as explained above.
We obtain Kirchhoff's matrix-tree theorem~\cite{Kirchhoff1847} as a corollary.

Webb~\cite{Webb2004} showed that one can represent the set of nonsingular principal submatrices of a skew-symmetric totally unimodular matrix as common bases of a Pfaffian pair.
This can be slightly generalized to the feasible sets of \emph{regular delta-matroids}, which are a generalization of regular matroids introduced by Bouchet~\cite{Bouchet1995, Bouchet1998}.
A combinatorial example of regular delta-matroids is Euler tours in 4-regular directed graphs~\cite{Bouchet1995}.

\paragraph{Arborescences~\textmd{(\cref{sec:arborescences})}.}
An \emph{arborescence} of a directed graph $G$ is a rooted directed spanning tree.
It is well-known that arborescences of $G$ are characterized as a common base of two matrices $A_1, A_2$ associated with $G$.
Tutte's directed matrix-tree theorem~\cite{Tutte1948} claims that the number of arborescences of $G$ is equal to $\det A_1 \trsp{A_2}$.
Some known proofs of the directed matrix-tree theorem essentially show that $(A_1, A_2)$ is Pfaffian.
This means that the directed matrix-tree theorem can be treated in the framework of Pfaffian pairs.

\paragraph{Perfect Matchings of Pfaffian Graphs~\textmd{(\cref{sec:perfect_matchings})}.}
We show that the set of perfect matchings of a Pfaffian graph can be seen as parity bases of a Pfaffian parity.
This is an extension of the relationship between a Pfaffian bipartite graph and a Pfaffian pair observed by Webb~\cite{Webb2004}.

%Unfortunately, the counting of perfect matchings is \textsf{\#P-complete} even for bipartite graphs in general~\cite{Valiant1977}.
%Nevertheless, Kasteleyn~\cite{Kasteleyn1967} proved that every planar graph admits a Pfaffian orientation, and this was improved by Little~\cite{Little1974} for graphs that do not contain $K_{3,3}$ as a ``matching minor.''
%Polynomial-time algorithms to give a Pfaffian orientation have also been developed for planar graphs~\cite{Kasteleyn1961,Temperley1961} (known as the FKT algorithm) as well as for $K_{3,3}$-free graphs~\cite{Vazirani1989}.

\paragraph{Spanning Hypertrees of 3-Pfaffian 3-Uniform Hypergraphs~\textmd{(\cref{sec:spanning_hypertrees})}.}
Let $H = (V, \mathcal{E})$ be a 3-graph (3-uniform hypergraph) and $\vec{H} = (V, \vec{\mathcal{E}})$ an orientation of $H$.
Namely, each element in $\mathcal{E}$ is an unordered triple over $V$, and each element in $\vec{\mathcal{E}}$ is an ordered triple.
A spanning hypertree of $H$ has the sign according to the orientation $\vec{H}$.
The Pfaffian matrix-tree theorem~\cite{Masbaum2002} claims that the Pfaffian of a skew-symmetric matrix associated with $\vec{H}$ is the signed sum of all spanning hypertrees of $H$.
The orientation $\vec{H}$ is called \emph{3-Pfaffian} if the signs of all the spanning hypertrees are the same~\cite{Goodall2011}.
This means that the absolute value of the determinant of the Pfaffian turns out to be the number of spanning hypertrees of $H$.
The 3-graph $H$ is also called \emph{3-Pfaffian} if it admits a 3-Pfaffian orientation.
3-Pfaffian orientations for 3-graphs includes Pfaffian orientations for graphs as a special case~\cite{Goodall2011}.

Lovász~\cite{Lovasz1980} presented a reduction of the problem to find a spanning hypertree of $H$ to the linear (graphic) matroid parity problem.
This reduction yields a one-to-one correspondence between spanning hypertrees of $H$ and parity bases of a matroid parity $(A, L)$ constructed from $H$ (with $L = \mathcal{E}$ indeed).
Appropriately reflecting the information on the orientation to the matrix $A$, we show that $(A, L)$ becomes Pfaffian when the orientation is 3-Pfaffian.
More generally, we prove that the sign of a spanning hypertree $T$ is equal to $\det A[B]$, where $B$ is the parity base of $(A, L)$ corresponding to $T$.
Although we can easily confirm this fact by using the Pfaffian matrix-tree theorem, we also provide another proof without relying on the Pfaffian matrix-tree theorem.
This leads us to a new proof of the Pfaffian matrix-tree theorem.

\paragraph{Disjoint $S$--$T$ Paths of DAGs~\textmd{(\cref{sec:dag})}.}
Let $G$ be a directed acyclic graph (DAG) and take disjoint vertex subsets $S$ and $T$ with $\card{S} = \card{T} = k$.
An \emph{$S$--$T$ path} of $G$ is the union of $k$ directed paths, each from a vertex in $S$ to a vertex in $T$
\footnote{
  An $S$--$T$ path generally refers to a single path from $S$ to $T$ rather than the union of such paths.
  In some literature, an $S$--$T$ path in our definition is called a \emph{perfect Menger-type linking}~\cite[Section~2.2.4]{Murota2000}.
}.
An $S$--$T$ path is called (vertex-)\emph{disjoint} if the consisting directed paths are pairwise vertex-disjoint. % chktex 36
A disjoint $S$--$T$ path has the sign according to the pattern of which vertices in $S$ are connected to which vertices in $T$.
The LGV lemma~\cite{Gessel1985,Lindstrom1973} provides a formula on the sum of signs of disjoint $S$--$T$ paths in $G$ via the determinant.
The LGV lemma has various applications in combinatorics and linear algebra~\cite{Gessel1985}; the Cauchy--Binet formula~\eqref{eq:cauchy_binet} can be proved via the LGV lemma for example.

The disjoint $S$--$T$ path problem on $G$ reduces to the bipartite matching problem~\cite{Frank2011}, in which a disjoint $S$--$T$ path is mapped to a perfect bipartite matching bijectively.
We show that this map retains the signs as well.
This means that if all disjoint $S$--$T$ paths of $G$ have the same sign, the set of disjoint $S$--$T$ paths forms common bases of a Pfaffian pair.
We say that such $(S, T)$ is in the \emph{LGV position} on $G$ and illustrate two examples arising from planar graphs.
We further provide a new proof of the LGV lemma making use of this map.

\paragraph{Shortest Disjoint $S$--$T$ Paths and $S$--$T$--$U$ Paths~\textmd{(\cref{sec:st,sec:stu})}.}
We generalize the above arguments of the disjoint $S$--$T$ path problem on DAGs to \emph{Mader's disjoint \calS-path problem}~\cite{Gallai1964,Mader1978} on undirected graphs.
Let $G$ be an undirected graph and $\mathcal{S}$ a family of disjoint vertex subsets.
Suppose that $\card{\cup_{S \in \mathcal{S}} S} = 2k$.
An \emph{\calS-path} of $G$ is the union of $k$ paths, each of which connects vertices belonging to distinct parts in $\mathcal{S}$.
An \calS-path is called \emph{disjoint} if the consisting paths are pairwise vertex-disjoint.
The disjoint \calS-path problem on $G$ is to find a disjoint \calS-path of $G$.
As for disjoint $S$--$T$ paths of DAGs, the sign of a disjoint \calS-path is defined based on the connecting pattern on $\mathcal{S}$.
When $\card{\mathcal{S}} = 2$, we call an \calS-path an $S$--$T$ path (with $\mathcal{S} = \set{S, T}$).
When $\card{\mathcal{S}} = 3$, we refer to \calS-path as an $S$--$T$--$U$ path (with $\mathcal{S} = \set{S, T, U}$).

Tutte~\cite{Tutte1965} proposed a reduction of the disjoint $S$--$T$ path problem to the linear (graphic) matroid intersection problem.
Subsequently, Schrijver~\cite{Schrijver2003} presented a reduction of the disjoint \calS-path problem to the linear matroid parity problem based on the Lovász' reduction~\cite{Lovasz1980}.
Suppose also that $G$ is equipped with a positive edge length, and consider the shortest disjoint \calS-path problem, which is to find a disjoint \calS-path that minimizes the sum of edge lengths.
Yamaguchi~\cite{Yamaguchi2016} showed that the shortest disjoint \calS-path problem is reduced to the weighted linear matroid parity problem.
As a special case, the shortest disjoint $S$--$T$ path problem reduces to the weighted linear matroid intersection problem.
This is a generalization of the reduction from the disjoint $S$--$T$ path problem on a DAG to the bipartite matching problem.

We first deal with the disjoint $S$--$T$ path problem on $G$.
Unfortunately, Tutte's reduction provides only a one-to-many correspondence between disjoint $S$--$T$ paths and common bases of a matrix pair $(A_1, A_2)$.
Nevertheless, we show that the sign of a disjoint $S$--$T$ path $P$ coincides with $\det A_1[B] \det A_2[B]$, where $B$ is a common base corresponding to $P$.
In addition, the weighted reduction gives a one-to-one correspondence between optimal solutions.
As a consequence, if $(S, T)$ is in the \emph{LGV position}, i.e., the signs of all disjoint $S$--$T$ paths are the same, we can represent the set of shortest disjoint $S$--$T$ paths of $G$ as minimum-weight common bases of a Pfaffian pair.

We next consider the general disjoint \calS-path problem.
Like the $S$--$T$ case, Schrijver's reduction for unweighted problems constructs only a one-to-many correspondence, whereas Yamaguchi's reduction for weighted problems provides a one-to-one correspondence.
Unlike the $S$--$T$ case, however, it will be turned out that $\det A[B]$ depends on some factor other than the sign of a disjoint \calS-path $P$, where $A$ is the matrix in the reduced linear matroid parity problem and $B$ is a parity base corresponding to $P$.
Nonetheless, when $\card{\mathcal{S}} = 3$, i.e., in the $S$--$T$--$U$ case, the factor would be constant for any disjoint \calS-path.
This means the shortest disjoint $S$--$T$--$U$ paths are represented as minimum-weight parity bases of a weighted Pfaffian parity when $S, T, U$ are in the \emph{LGV position}.

\subsection{Algorithmic Results}

Let $(A_1, A_2)$ be an $r \times n$ Pfaffian pair and $(A, L)$ a $2r \times 2n$ Pfaffian parity.
The definitions of Pfaffian pairs and parities guarantee that one can count the number of common bases of $(A_1, A_2)$ and the number of parity bases of $(A, L)$ just by matrix computations.
We observe that these computations can be done in $\Order\prn{nr^{\omega-1}}$ or $\Order\prn{nr^{\omega-1} + r^3}$-time (see \cref{thm:complexity_c_known}), where we assume that arithmetic operations can be performed in constant time.
Here, $2 < \omega \le 3$ is the matrix multiplication exponent, i.e., the multiplication of two $r \times r$ matrices is performed in $\Order\prn{r^\omega}$-time.
The current best value of $\omega$ is $2.3728639$~\cite{LeGall2014}.
More generally and precisely, when the matrices are over a field $\setK$ of characteristic $\ch(\setK)$, we can compute the number of common or parity bases modulo $\ch(\setK)$ within these times.

In the above arguments, we implicitly assumed that we know the value of $\det A_1[B] \det A_2[B]$ with an arbitrary common base $B$ of $(A_1, A_2)$ and $\det A[B]$ with an arbitrary parity base $B$ of $(A, L)$.
These values are called \emph{constants}.
If we do not know the value of constants, then we need to obtain one common or parity base $B$ beforehand by executing linear matroid intersection and parity algorithms.
The current best time complexities for solving the linear matroid intersection problem is deterministic $\Order\prn{nr^{\frac{5-\omega}{4-\omega}} \log r}$-time due to Gabow--Xu~\cite{Gabow1996} and randomized $\Order\prn{nr^{\omega-1}}$-time due to Harvey~\cite{Harvey2009}.
Also, the current best time complexities for the linear matroid parity problem is deterministic $\Order\prn{nr^\omega}$-time due to Gabow--Stallmann~\cite{Gabow1986} and Orlin~\cite{Orlin2008}, and randomized $\Order\prn{nr^{\omega-1}}$-time due to Cheung--Lau--Leung~\cite{Cheung2014}.
Therefore, we are confronted with a choice of whether to stick to deterministic algorithms or to employ randomized algorithms to keep the running time.
We would face the same trade-off to find one common or parity base even if we know the constants.

For the case of $\ch(\setK) = 0$, we show that it is possible to cherry-pick good points of both choices as follows.

\begin{theorem}\label{thm:complexity_of_unweighted_counting}
  When $\ch(\setK) = 0$, we can count the number of common bases of an $r \times n$ Pfaffian pair and the number of parity bases of a $2r \times 2n$ Pfaffian parity in deterministic $\Order\prn{nr^{\omega-1}}$-time.
  In addition, we can construct one common or parity base in the same time complexity.
\end{theorem}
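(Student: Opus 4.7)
I split the theorem into two subtasks: (a) find one common or parity base in deterministic $\Order\prn{nr^{\omega-1}}$ time, and (b) given one such base, count the total number in the same time. Subtask (b) is essentially an instance of \cref{thm:complexity_c_known}: from the known base $B$, compute the constant $c = \det A_1[B]\det A_2[B]$ (or $c = \det A[B]$) in $\Order\prn{r^\omega}$ time, form the product $A_1 \trsp{A_2}$ by block matrix multiplication in $\Order\prn{nr^{\omega-1}}$ time, take its determinant (or the corresponding Pfaffian via Ishikawa--Wakayama in the parity case) in $\Order\prn{r^\omega}$ time, and divide by $c$. Since $r \le n$ in any feasible instance, $r^\omega \le nr^{\omega-1}$, so the total cost is $\Order\prn{nr^{\omega-1}}$.

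For subtask (a), the plan is to derandomize the algorithms of Harvey~\cite{Harvey2009} and Cheung--Lau--Leung~\cite{Cheung2014} by exploiting the definitional non-cancellation property of Pfaffian pairs and parities. Both algorithms are driven by rank or nonsingularity tests on an auxiliary matrix of the form $A_1 \mathrm{diag}(x_1, \ldots, x_n) \trsp{A_2}$ (for intersection), or an analogous skew-symmetric construction from $A$ (for parity), where the $x_e$ are normally specialized to random field elements so that Schwartz--Zippel guarantees correctness. By Cauchy--Binet~\eqref{eq:cauchy_binet} and its Pfaffian analogue, the coefficient of $\prod_{e \in B} x_e$ in the resulting polynomial is $\det A_1[B]\det A_2[B]$ or $\det A[B]$, which equals the constant $c$ for every base $B$ by the Pfaffian hypothesis. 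Substituting $x_e = 1$ therefore evaluates the polynomial to $c$ times the number of bases, which is nonzero in $\setK$ precisely because $\ch(\setK) = 0$. To propagate this to every iteration, I will show that Pfaffianness is preserved under the deletion and contraction of matroid elements or lines performed at each pivot step, so that each residual rank or nonsingularity test is itself realized by a Pfaffian pair/parity with the same constant $c$. Under this invariant the deterministic substitution $x_e = 1$ stays correct throughout, and the overall $\Order\prn{nr^{\omega-1}}$ running time is preserved.

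The main obstacle I foresee is establishing this Pfaffian-stability lemma in the precise form consumed by each algorithm: for Harvey's method it must track the sequence of single-element contractions and deletions that realize each exchange, and for CLL's block-decomposition parity algorithm it must handle line-based updates. This reduces to determinantal identities relating $\det A_1[B]\det A_2[B]$ (or $\det A[B]$) before and after each update, which I expect to be manageable but the most delicate part of the argument. Once it is in hand, combining subtasks (a) and (b) gives both the base construction and the count within the claimed complexity.
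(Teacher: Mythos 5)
Your subtask (b) matches the paper's \cref{thm:complexity_c_known}, with one imprecision: Ishikawa--Wakayama is an identity, not an algorithm, so it does not give you the Pfaffian in $\Order\prn{r^\omega}$ time. The paper gets this bound by computing $\det A\Delta(\onevec)\trsp{A} = \prn{\pf A\Delta(\onevec)\trsp{A}}^2$ with fast determinant computation and then resolving the sign ambiguity of the square root using exactly the hypothesis $\ch(\setK)=0$: since $c^{-1}\pf A\Delta(\onevec)\trsp{A}$ is the number of parity bases, it is nonnegative, so the correct root is determined. Without that (or a cubic-time Pfaffian elimination, which is what the general-characteristic bound $\Order\prn{nr^{\omega-1}+r^3}$ reflects), your stated cost for the parity case is not justified.

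The real gap is in subtask (a). You correctly identify the heart of the matter --- substituting $x_e = 1$ causes no cancellation because every base contributes the same constant $c$ and $\ch(\setK)=0$ --- but you then route the derandomization through an unproven ``Pfaffian-stability'' invariant: that every residual problem produced by the deletions/contractions inside Harvey's and Cheung--Lau--Leung's algorithms is again Pfaffian with the same constant, and that this suffices for every intermediate rank test. You explicitly flag this lemma as not yet established, and it would in any case force you to reopen the internal correctness analyses of both algorithms. The paper avoids all of this: the only property those algorithms need from the substituted vector $z$ is global and about the \emph{initial} data, namely that for every $J$ the set $J$ is extensible if and only if $\Xi\prn{\phi_J(z)}$ (resp.\ $\Phi\prn{\phi_J(z)}$) is nonsingular, where $\phi_J(z)$ zeroes out the coordinates in $J$ (this is \cref{lem:validity_of_derandomization}, quoted from the original analyses, where a vector of indeterminates is shown to satisfy it). For a Pfaffian pair with $\ch(\setK)=0$ the choice $z=\onevec$ satisfies this condition directly, because \cref{prop:counting_formula_for_pfaffian_pairs} gives $\det\Xi\prn{\phi_J(\onevec)}$ as $c$ times the number of common bases containing $J$, and likewise $\pf\Phi\prn{\phi_J(\onevec)}$ for parities --- i.e.\ your non-cancellation computation, applied with zeros on $J$ rather than only at $J=\varnothing$, already finishes the argument with no induction over pivot steps and no deletion/contraction lemma. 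As written, your proposal's most critical step is missing, and the missing piece is precisely this reformulation of what the randomized algorithms actually require.
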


Intuitively, the algorithms of Harvey~\cite{Harvey2009} and Cheung--Lau--Leung~\cite{Cheung2014} make use of randomness to take a random vector avoiding numerical cancellations between common or parity bases.
We show that Pfaffian pairs and parities do not involve such numerical cancellations by their definitions.

We next consider the problems of counting the number of minimum-weight common bases of a column-weighted Pfaffian pair and the number of minimum-weight parity bases of a line-weighted Pfaffian parity.
These problems can be algebraically formulated by using a univariate polynomial matrix (assuming the weight to be integral).
In these formulations, the number of minimum-weight common or parity bases is obtained as the coefficient of the lowest degree term in the determinant or Pfaffian of the polynomial matrix.
While we can compute it by performing a symbolic computation, this yields only a pseudo-polynomial time algorithm.

Broder--Mayr~\cite{Broder1997} and Hayashi--Iwata~\cite{Hayashi2018} presented polynomial-time counting algorithms for minimum-weight spanning trees and minimum-weight arborescences, respectively.
Their algorithms first compute dual optimal solutions of linear programming (LP) formulations and then perform graphic operations on trees constructed from the dual optimal solutions.

Generalizing these algorithms from a matroidal perspective, we present a polynomial-time counting algorithm for minimum-weight common bases of a Pfaffian pair.
We make use of Frank's weight splitting lemma~\cite{Frank1981}, which reveals the dual structure of the weighted matroid intersection problem.
Applying a row operation, we reduce the counting on a weighted Pfaffian pair to the counting on an unweighted Pfaffian pair.
Our reduction can be seen as a succinct description of a known trick to represent minimum-weight common bases of a weighted matrix pair as the set of common bases of unweighted matrix pair.
The running time is estimated as follows.

\begin{theorem}\label{thm:complexity_of_counting_weighted_pairs}
  Let $(A_1, A_2)$ be a Pfaffian pair with column weight $\funcdoms{w}{E}{\setR}$.
  We can compute the number of minimum-weight common bases of $(A_1, A_2)$ modulo $\ch(\setK)$ in deterministic $\Order\prn{nr^\omega + nr \log n}$-time.
\end{theorem}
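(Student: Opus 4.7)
The plan is to reduce the weighted counting problem to the unweighted one already handled by \cref{thm:complexity_of_unweighted_counting}, using Frank's weight splitting as the bridge. First, I would invoke Frank's weight splitting lemma to obtain an additive decomposition $w = w_1 + w_2$ with $w_k \colon E \to \setR$ such that a common base $B$ of $(A_1, A_2)$ is of minimum $w$-weight if and only if $B$ is simultaneously of minimum $w_k$-weight among all bases of the linear matroid $M[A_k]$ for both $k=1,2$. Letting $\mu_k$ denote this minimum $w_k$-weight, the minimum total weight is $\mu_1 + \mu_2$. Computing this splitting amounts to running a deterministic weighted linear matroid intersection algorithm, which accounts for the $O(nr^{\omega} + nr \log n)$ term.

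Next, for each $k$, find a base $B_k^*$ of $M[A_k]$ minimizing $w_k$ via the greedy algorithm (sorting contributes the $nr \log n$ term), and apply row operations to bring $A_k$ into the form $A_k[B_k^*] = I_r$. Let $e_j \in B_k^*$ be the element indexing the $j$-th column of this identity block. Then define the polynomial matrix
\begin{equation*}
  \tilde{A}_k(t)_{j,e} := (A_k)_{j,e}\, t^{\,w_k(e) - w_k(e_j)} \in \setK[t, t^{-1}].
\end{equation*}
The key claim is that every nonzero entry has \emph{non-negative} $t$-degree, so $\tilde{A}_k(t) \in \setK[t]^{r \times n}$ and the specialization $\tilde{A}_k(0)$ is well-defined. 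Indeed, if $(A_k)_{j,e} \neq 0$ after the row reduction, then $(B_k^* \setminus \set{e_j}) \cup \set{e}$ is another base of $A_k$, so the minimality of $w_k(B_k^*)$ forces $w_k(e_j) \le w_k(e)$.

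For any base $B$ of $M[A_k]$, a direct computation with the factorization $\tilde{A}_k(t) = \diag(t^{-w_k(e_j)}) A_k \diag(t^{w_k(e)})$ gives
\begin{equation*}
  \det(\tilde{A}_k(t)[B]) = t^{\,w_k(B) - \mu_k} \frac{\det(A_k[B])}{\det(A_k[B_k^*])},
\end{equation*}
so $\det(\tilde{A}_k(0)[B]) \neq 0$ iff $w_k(B) = \mu_k$. Combined with Frank's splitting, a set $B$ is a common base of $(\tilde{A}_1(0), \tilde{A}_2(0))$ exactly when $B$ is a minimum-weight common base of $(A_1, A_2)$, and on every such $B$ the product $\det(\tilde{A}_1(0)[B]) \det(\tilde{A}_2(0)[B])$ equals the constant $c / \pbig{\det(A_1[B_1^*]) \det(A_2[B_2^*])}$, where $c$ is the Pfaffian constant of the original pair. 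Hence $(\tilde{A}_1(0), \tilde{A}_2(0))$ is itself a Pfaffian pair whose common bases are precisely the minimum-weight common bases we wish to count. Applying \cref{thm:complexity_of_unweighted_counting} to this pair yields $N$ deterministically in $O(nr^{\omega-1})$ time, which is absorbed into the bound from the weighted intersection call.

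The main obstacle is the non-negativity of degrees in $\tilde{A}_k(t)$ together with the verification that the reduction preserves the Pfaffian pair structure; both hinge on Frank's splitting together with a one-step basis exchange argument, and everything else is bookkeeping. A minor side issue is reducing the real weights to integers (by standard scaling and common denominators), which does not affect the asymptotic running time.
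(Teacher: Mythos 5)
Your construction is essentially the paper's: Frank's weight splitting, a basis-exchange argument showing all entries of the rescaled matrix $\tilde{A}_k(t)$ have nonnegative degree, and the observation that setting $t=0$ (equivalently, zeroing the non-tight entries, which is exactly the paper's matrix $A_k^\#$ in \cref{lem:pair_sharp}) yields an unweighted Pfaffian pair whose common bases are precisely the minimum-weight common bases; the running time is likewise dominated by the deterministic weighted linear matroid intersection call. The only substantive deviation is in the last step, and it creates a small gap: you invoke \cref{thm:complexity_of_unweighted_counting}, which is proved only for $\ch(\setK)=0$, whereas the statement you are proving asks for the count modulo $\ch(\setK)$ over an arbitrary field. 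Moreover, because you pivot $A_1$ and $A_2$ on two possibly different bases $B_1^*$ and $B_2^*$, the constant of the reduced pair is $c/\bigl(\det A_1[B_1^*]\det A_2[B_2^*]\bigr)$, which involves the original constant $c$ that is not given a priori, so you cannot simply read the count off $\det \tilde{A}_1(0)\trsp{\tilde{A}_2(0)}$. Both issues are easily repaired within your own framework: the weighted intersection algorithm already returns a minimum-weight common base $B$, so either pivot both matrices on $B$ (as the paper does, making the constant $1$, so the answer is just $\det A_1^\#\trsp{{A_2^\#}}$ by \cref{prop:counting_formula_for_pfaffian_pairs}), or compute the reduced pair's constant explicitly from $B$, $B_1^*$, $B_2^*$ in $\Order\prn{r^\omega}$ time and apply \cref{thm:complexity_c_known}, which is valid modulo $\ch(\setK)$. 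With that adjustment your argument matches the paper's proof.
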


We also present a polynomial-time counting algorithm for weighted Pfaffian parities.
Although an LP formulation of the weighted linear matroid parity algorithm is not yet known, Iwata--Kobayashi~\cite{Iwata2017} gave an algebraic optimality criterion, which associates the minimum weight of a parity base with the maximum weight of a perfect matching of a graph.
Based on this association, we show that the number of minimum-weight parity bases coincides with the leading coefficient of a skew-symmetric polynomial matrix that the algorithm of Iwata--Kobayashi outputs as a byproduct.
We then apply Murota's upper-tightness testing algorithm~\cite{Murota1995a} to compute the leading coefficient.
Murota's algorithm was originally presented in the context of combinatorial relaxation, which is to compute the degree of the determinant (Pfaffian) of a skew-symmetric polynomial matrix.
The time complexity is summarized as follows.

\begin{theorem}\label{thm:counting_minimum_weight_parity_bases}
  Let $(A, L)$ be a Pfaffian parity with line weight $\funcdoms{w}{L}{\setR}$.
  We can compute the number of minimum-weight parity bases of $(A, L)$ modulo $\ch(\setK)$ in deterministic $\Order\prn{n^3r}$-time.
\end{theorem}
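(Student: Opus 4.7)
The strategy is to realize the count as an extremal coefficient of the Pfaffian of a univariate polynomial matrix and to compute it via a combination of the Iwata--Kobayashi and Murota algorithms. Let $\tilde{S}(x)$ be the $2n \times 2n$ skew-symmetric polynomial matrix associated with $(A, L)$ in which each line $\ell$ is scaled by $x^{w(\ell)}$ (concretely, by multiplying the two columns of $A$ indexed by $\ell$, or equivalently the corresponding rows and columns of the Lovász-type skew-symmetric matrix, by $x^{w(\ell)/2}$ after doubling weights). By the Pfaffian Cauchy--Binet identity of Ishikawa--Wakayama,
\begin{align*}
  \pf \tilde{S}(x) = \epsilon \sum_{B} \det A[B] \cdot x^{w(B)},
\end{align*}
where $B$ ranges over parity bases of $(A, L)$, $w(B)$ is the total weight of the lines in $B$, and $\epsilon \in \set{+1, -1}$ is a global sign. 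Since $(A, L)$ is Pfaffian, $\det A[B]$ takes a constant value $c \ne 0$ for every parity base $B$, so $\pf \tilde{S}(x) = \epsilon c \sum_B x^{w(B)}$. Consequently, the number of minimum-weight parity bases modulo $\ch(\setK)$ equals $\epsilon c^{-1}$ times the coefficient of $x^{w_{\min}}$ in $\pf \tilde{S}(x)$, where $w_{\min} = \min_B w(B)$.

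To carry this out deterministically, I would first run the Iwata--Kobayashi weighted matroid parity algorithm on $(A, L, w)$. In $\Order(n^3 r)$ time it returns a minimum-weight parity base $B^*$, which pins down $w_{\min} = w(B^*)$ and $c = \det A[B^*]$ via a single $r \times r$ determinant evaluation, together with a transformed skew-symmetric polynomial matrix realizing the algebraic optimality criterion. By construction, the extremal degree of the Pfaffian of this matrix equals $w_{\min}$, so no outer combinatorial relaxation loop is needed: a single call to Murota's upper-tightness testing algorithm on the resulting matrix returns its leading coefficient, and thus the desired extremal coefficient of $\pf \tilde{S}(x)$, in $\Order(n^3 r)$ time. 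Dividing by $\epsilon c$ yields the count.

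The main obstacle I foresee is verifying that Murota's upper-tightness algorithm, originally formulated for the determinant of a non-symmetric polynomial matrix, extends to the Pfaffian of a skew-symmetric polynomial matrix within the same time complexity and correctly reads off the leading coefficient rather than merely certifying the degree. I would address this either by appealing to the identity $(\pf M)^2 = \det M$ on the skew-symmetric matrix in order to reduce Pfaffian degree and coefficient extraction to the determinantal case (losing at most a square-root and a sign, both recoverable from $c$), or, more efficiently, by performing Murota's pivoting steps in a symmetric fashion that preserves skew-symmetry and halves the work; either route cleanly preserves the $\Order(n^3 r)$ bound and completes the proof.
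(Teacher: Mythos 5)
Your high-level strategy matches the paper's: run Iwata--Kobayashi to obtain a minimum-weight parity base and its byproduct matrix, then extract the needed extremal coefficient of a Pfaffian via Murota's upper-tightness testing. But there are two genuine gaps that the paper's proof has to work to close and your sketch leaves open.

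First, the relationship between the Iwata--Kobayashi output and the target Pfaffian is asserted rather than established. Their algorithm does not emit $\tilde{S}(x)$ or any matrix whose Pfaffian is literally the weighted generating function $\sum_B \det A[B]\,x^{w(B)}$. What it emits is an auxiliary matrix $C^*$ whose Schur complement with respect to a certain block recovers the pivoted data. The paper must build a specific $\Phi^*(\theta)$ out of $C^*$ and prove (its Lemma~\ref{lem:pfaffian_equality_phi_star}, via elementary skew-symmetric transformations and the Schur complement identity) that $\pf\Phi^*(\theta) = c^*\,\pf\Phi(\theta^w)$ for a computable constant $c^*$. Without this bridge you do not know that Murota's algorithm, applied to ``the matrix realizing the optimality criterion,'' is computing the number you want — or what constant to divide by at the end.

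Second, your assertion that ``the extremal degree of the Pfaffian of this matrix equals $w_{\min}$'' is false in general; it holds only when $N \ne 0$ modulo $\ch(\setK)$. When $N = 0$ in $\setK$, the actual degree drops below the target, and Murota's upper-tightness algorithm returns the coefficient at the combinatorial matching bound $\hat{\delta}(\Phi^*)$, not at $\delta = w(L) - w_{\min}$. The proof must also appeal to Iwata--Kobayashi's dual optimality certificate (the paper's Proposition~\ref{prop:feasibility_of_D2}), which guarantees $\hat{\delta}(\Phi^*) \le \delta$, and then explicitly branch: if $\hat{\delta} < \delta$ return $0$, otherwise Murota's coefficient equals $c^*N$. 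Omitting this comparison means your single call can return a coefficient at the wrong exponent when the count vanishes modulo the characteristic. (Your worry about adapting Murota's algorithm from $\det$ to $\pf$ is reasonable but is handled in the paper simply by applying the algorithm with $\det$ replaced by $\pf$; the ``square it to a determinant'' detour you suggest is not needed and would create its own sign and square-root complications.)
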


On describing time complexities, we have assumed that arithmetic operations on $\setK$ can be performed in constant time.
This assumption is reasonable when $\setK$ is a finite field of fixed order.
When $\setK$ is the field $\setQ$ of rational numbers, there is no guarantee that a direct application of the algorithm of Iwata--Kobayashi~\cite{Iwata2017} does not swell the bit-lengths of intermediate numbers.
Instead, they showed that one can solve the weight linear matroid parity problem over $\setQ$ by applying their algorithm over a sequence of finite fields.
We give a polynomial-time counting algorithm with $\setK = \setQ$ based on their reduction.

\begin{theorem}\label{thm:bit_complexity}
  Let $(A, L)$ be a Pfaffian parity over $\setQ$ with line weight $\funcdoms{w}{L}{\setR}$.
  We can deterministically compute the number of minimum-weight parity bases of $(A, L)$ in time polynomial in the binary encoding length of $A$.
\end{theorem}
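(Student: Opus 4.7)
The plan is to apply \cref{thm:counting_minimum_weight_parity_bases} over a carefully chosen sequence of finite fields and recover the answer over $\setQ$ via the Chinese Remainder Theorem (CRT), mirroring the strategy that Iwata--Kobayashi themselves used to lift their weighted matroid parity algorithm from $\GF(p)$ to $\setQ$. The key quantitative observation is that the number $N$ of minimum-weight parity bases is a non-negative integer bounded by the total number of parity bases, hence by $\binom{n}{r} \le 2^n$, so its binary length is $O(n)$. Consequently, $N$ is determined by its residues modulo $O(n)$ primes of polynomially bounded size, and recovering it from those residues via CRT takes time polynomial in the input.

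For a prime $p$ (coprime to the common denominator of the entries of $A$), let $(\bar A, L)$ denote the reduction of $(A, L)$ modulo $p$. First I would verify that for all but polynomially many ``bad'' primes, $(\bar A, L)$ is itself a Pfaffian parity over $\GF(p)$ whose parity bases and minimum-weight parity bases are in bijection with those of $(A, L)$. Let $c = \det A[B]$ be the nonzero constant shared by all parity bases $B$; as a $2r \times 2r$ minor of $A$, its numerator and denominator have bit-length polynomial in the encoding length of $A$. Provided $p$ divides neither $c$ nor the denominators of $A$, the identity $\det \bar A[B] \equiv c \not\equiv 0 \pmod{p}$ is preserved, so the family of parity bases and the minimum weight $w^\ast$ are unchanged. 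For each such good prime $p$, applying \cref{thm:counting_minimum_weight_parity_bases} over $\GF(p)$ computes $N \bmod p$ using $\Order(n^3 r)$ field operations, each of which takes time polynomial in $\log p$ in the bit model.

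To complete the algorithm, I would enumerate small primes in order, discarding those dividing $c$ or a denominator of $A$ (a set of polynomially bounded total bit-length), until the product of the accepted primes exceeds the a priori bound $2^n$ on $N$; an argument via Chebyshev-type bounds ensures that primes of size polynomial in $n$ suffice. Reconstructing $N$ from its residues via CRT then gives the exact count. The main obstacle lies not in the CRT layer itself but in controlling the bit complexity of every step end-to-end: in particular, the algorithm underlying \cref{thm:counting_minimum_weight_parity_bases} relies on the Iwata--Kobayashi weighted parity algorithm and on Murota's upper-tightness test, and over $\setQ$ a naive execution of either can swell intermediate bit lengths uncontrollably. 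Circumventing this requires that the preprocessing locating $w^\ast$ and producing the algebraic certificate for the leading coefficient be itself realized over the sequence of finite fields $\GF(p)$ chosen above, so that no computation ever takes place over $\setQ$ directly; this is exactly the pattern established by Iwata--Kobayashi, and embedding our counting step into that pattern is what delivers overall polynomial bit complexity.
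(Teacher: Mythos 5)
Your proposal is correct and follows essentially the same route as the paper: reduce modulo a polynomially bounded collection of small primes, run the finite-field counting algorithm of \cref{thm:counting_minimum_weight_parity_bases} over each $\GF(p)$ (with the Iwata--Kobayashi preprocessing carried out over those finite fields rather than over $\setQ$), and recover the integer count by the Chinese remainder theorem. The only cosmetic difference is that the paper reconstructs the product $cN$ from the $K$ smallest primes (treating primes dividing $c$ via $cN \equiv 0$) and then divides by $c$, whereas you discard the primes dividing $c$ or the denominators and reconstruct $N$ directly from the bound $N \le \binom{n}{r}$; both variants work, noting that obtaining $c$ itself is done exactly as in the paper by first running the Iwata--Kobayashi $\setQ$-algorithm (via finite fields) to get one minimum-weight parity base.
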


As seen above, our counting algorithms for weighted Pfaffian pairs and parities are based on different approaches: that for pairs reduces to the unweighted counting and that for parities is via the matching problem.
We show in Appendix that the algorithm for Pfaffian pairs can also be derived by an approach based on the bipartite matching problem.

\subsection{Organization}
The rest of this paper is organized as follows.
After introducing some preliminaries, \cref{sec:pfaffian_matroid_pairs_and_parities} gives formal definitions of Pfaffian pairs and parities as well as their properties.
\Cref{sec:examples,sec:lgv} exhibit examples of Pfaffian pairs and parities.
The family of disjoint path problems are dealt with in \cref{sec:lgv} and others are in \cref{sec:examples}.
Finally, \cref{sec:algorithms} presents our counting algorithms for unweighted and weighted Pfaffian pairs and parities.

\section{Pfaffian Pairs and Pfaffian Parities}\label{sec:pfaffian_matroid_pairs_and_parities}

\subsection{Preliminaries}\label{sec:preliminaries}
Let $\setZ, \setQ$ and $\setR$ denote the set of all integers, rational and real numbers, respectively.
For a nonnegative integer $n$, we denote $\set{1, 2, \dotsc, n}$ by $\intset{n}$.
Let $\setK$ be a field of characteristic $\ch(\setK)$.
Unless otherwise stated, all matrices are over $\setK$ in \cref{sec:pfaffian_matroid_pairs_and_parities,sec:algorithms} and are over $\setQ$ in \cref{sec:examples,sec:lgv}.
For $n \in \setZ$, we define $n$ modulo $0$ as $n$ for convenience.
For $n,m \in \setZ$, ``$n$ is equal to $m$ over $\setK$'' means $n$ is congruent to $m$ modulo $\ch(\setK)$.
For a matrix $A$, we denote by $A[I, J]$ the submatrix of $A$ with row subset $I$ and column subset $J$.
If $I$ is all the rows of $A$, we denote $A[I, J]$ by $A[J]$.

While we can describe this paper without defining matroids, here we give a general definition.
A \emph{matroid} is the pair $\mathbf{M} = (E, \mathcal{B})$ of a finite set $E$ and a nonempty set family $\mathcal{B} \subseteq 2^E$ over $E$ satisfying the following: for any $B_1, B_2 \in \mathcal{B}$ and $x \in B_1 \setminus B_2$, there exists $y \in B_2 \setminus B_1$ such that $B_1 \setminus \set{x} \cup \set{y} \in \mathcal{B}$.
Each element of $\mathcal{B}$ is called a \emph{base} of $\mathbf{M}$ and $E$ is called the \emph{ground set} of $\mathbf{M}$.

Typical examples of matroids arise from matrices.
Let $A \in \setK^{r \times n}$ be a matrix with column set $E$.
Define
\begin{align}
  \base(A) \defeq \set{B \subseteq E}[\text{$\card{B} = r$, $A[B]$ is nonsingular}].
\end{align}
If $A$ is of row-full rank, then $\matroid(A) \defeq (E, \base(A))$ forms a matroid, called a \emph{linear matroid} represented by $A$.
We refer to each element of $\base(A)$ as a \emph{base} of $A$.
In this paper, we consider $A$ to have no base if $A$ is not of row-full rank.

Recall that the \emph{determinant} of a square matrix $A = \prn{A_{i,j}}_{i,j \in \intset{n}} \in \setK^{n \times n}$ is defined as
\begin{align}\label{def:determinant}
  \det A \defeq \sum_{\sigma \in \sym_n} \sgn \sigma \prod_{i=1}^n A_{i, \sigma(i)},
\end{align}
where $\sym_n$ is the set of all permutations on $\intset{n}$ and $\sgn \sigma$ denotes the sign of a permutation $\sigma \in \sym_n$.
A square matrix $S$ is said to be \emph{skew-symmetric} if $\trsp{S} = -S$ and all diagonal entries are zero, where the latter condition cares the case when $\ch(\setK) = 2$.
For a skew-symmetric matrix $S = \prn{S_{i,j}}_{i,j \in \intset{2n}} \in \setK^{2n \times 2n}$ of even order, the \emph{Pfaffian} of $S$ is defined as
\begin{align}\label{def:pfaffian}
  \pf S \defeq \sum_{\sigma \in F_{2n}} \sgn \sigma \prod_{i=1}^n S_{\sigma(2i-1), \sigma(2i)},
\end{align}
where $F_{2n}$ is the subset of $\sym_{2n}$ given by
\begin{align}\label{def:F}
  F_{2n} \defeq \set{\sigma \in \sym_{2n}}[\text{$\sigma(1) < \sigma(3) \dotsb < \sigma(2n-1)$ and $\sigma(2i-1) < \sigma(2i)$ for $i \in \intset{n}$}].
\end{align}

It is well-known that
\begin{align}
  \prn{\pf S}^2  & = \det S,\label{eq:pf_det}                                          \\
  \pf AS\trsp{A} & = \det A \pf S\label{eq:pfaffian_multiplicativity}
\end{align}
hold, where $A \in \setK^{2n \times 2n}$ is any square matrix.
The following formula is a generalization of the Cauchy--Binet formula~\eqref{eq:cauchy_binet} to Pfaffian given by Ishikawa--Wakayama~\cite{Ishikawa1995}.
\begin{proposition}[{\cite[Theorem~1]{Ishikawa1995}}]\label{prop:ishikawa_pfaffian_cauchy_binet}
  Let $S \in \setK^{2n \times 2n}$ be a skew-symmetric matrix and $A \in \setK^{2r \times 2n}$ a square matrix.
  Suppose that the row and column sets of $S$ and the column set of $A$ are indexed by $E$.
  Then it holds
  \begin{align}
    \pf AS\trsp{A} = \sum_{\condit{J \subseteq E}[\card{J} = 2r]} \det A[J] \pf S[J, J].\label{eq:pfaffian_cauchy_binet}
  \end{align}
\end{proposition}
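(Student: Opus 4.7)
The plan is to view both sides of \eqref{eq:pfaffian_cauchy_binet} as functions of the $2r$ rows of $A$ with $S$ fixed, show that both are alternating multilinear in these rows, and then reduce to the special case where each row of $A$ is a standard basis vector.

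For multilinearity, expanding $(A S \trsp{A})_{i,k} = \sum_{p,q} A_{i,p} S_{p,q} A_{k,q}$ shows that each entry of $A S \trsp{A}$ is bilinear in rows $i$ and $k$ of $A$. In the Pfaffian expansion \eqref{def:pfaffian}, each term is a product of $r$ such entries whose row indices cover $\intset{2r}$ exactly once, so the LHS is multilinear in the rows of $A$; the RHS is multilinear because each $\det A[J]$ is multilinear in the rows of $A$. Swapping two rows of $A$ swaps the corresponding rows and columns of $A S \trsp{A}$, which negates $\pf(A S \trsp{A})$ via \eqref{eq:pfaffian_multiplicativity} applied to a transposition permutation matrix, and it likewise negates each $\det A[J]$. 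Hence both sides are alternating multilinear, so it suffices to verify \eqref{eq:pfaffian_cauchy_binet} when row $i$ of $A$ equals the standard basis vector $e_{j_i}$ with $j_1, \dotsc, j_{2r}$ pairwise distinct (otherwise both sides vanish).

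In that case, setting $J \defeq \set{j_1, \dotsc, j_{2r}}$ and letting $\pi \in \sym_{2r}$ be the permutation with $j_i$ equal to the $\pi(i)$-th smallest element of $J$, the matrix $A[J]$ is the permutation matrix of $\pi$ while $A[J']$ has a zero row for every $J' \neq J$; thus the RHS collapses to $\sgn \pi \cdot \pf S[J,J]$. On the LHS, $(A S \trsp{A})_{i,k} = S_{j_i, j_k}$, so $A S \trsp{A}$ is obtained from $S[J,J]$ by applying $\pi$ to both rows and columns, yielding $\pf(A S \trsp{A}) = \sgn \pi \cdot \pf S[J,J]$ again by \eqref{eq:pfaffian_multiplicativity}. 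The main delicate point I anticipate is the sign bookkeeping: one must confirm that the permutation $\pi$ contributes the same sign on both sides and that transposing two rows of $A$ genuinely negates the Pfaffian (the resulting simultaneous row-and-column swap on $A S \trsp{A}$ preserves skew-symmetry and the diagonal-zeros condition). Both concerns are handled uniformly by the identity $\pf(P M \trsp{P}) = \det P \cdot \pf M$ for permutation matrices $P$, which is the single ingredient the argument really needs beyond the definitions.
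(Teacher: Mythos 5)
Your argument is correct. Note, however, that the paper itself does not prove this proposition: it is imported verbatim as Theorem~1 of Ishikawa--Wakayama, so there is no in-paper proof to compare against. Your proof is therefore a genuinely self-contained alternative: it needs nothing beyond the definition \eqref{def:pfaffian}, the identity \eqref{eq:pfaffian_multiplicativity} for permutation matrices, and \eqref{eq:pf_det}, whereas the cited source derives the minor summation formula by exterior-algebra machinery. The multilinearity-plus-basis-vectors reduction is sound: each row index occurs exactly once in every Pfaffian term of $\pf\prn{AS\trsp{A}}$, so the left-hand side is indeed multilinear in the rows of $A$, and the evaluation on distinct standard basis rows collapses both sides to $\sgn\pi\cdot\pf S[J,J]$ exactly as you compute, with $A[J]$ equal to the permutation matrix of $\pi$.

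One small point deserves a patch, since the paper allows $\ch(\setK)=2$ (this is precisely why it insists on zero diagonals in the definition of skew-symmetry). Your vanishing claim for tuples with a repeated index is justified via the alternating property, which you obtain from swap-negation; in characteristic $2$ swap-negation is vacuous and does not by itself force the value on a repeated tuple to be zero. The claim is still true and easy to salvage directly: if two rows of $A$ coincide, every $\det A[J]$ has two equal rows and vanishes over any field, while $A S\trsp{A}$ has two equal rows, so $\pbig{\pf AS\trsp{A}}^2 = \det AS\trsp{A} = 0$ by \eqref{eq:pf_det} and hence $\pf AS\trsp{A} = 0$. With that one-line substitute for the char-$2$ case, the proof goes through over an arbitrary field, matching the generality in which the paper states the proposition.
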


Let $A$ be a square matrix partitioned as $A = \begin{psmallmatrix} X & Y \\ Z & W \end{psmallmatrix}$.
Suppose that $X$ and $W$ are square, and $W$ is nonsingular.
The \emph{Schur complement} of $A$ with respect to $W$ is the matrix $X - YW^{-1}Z$.
This matrix is the one that appears at the position of $X$ after eliminating $Y$ and $Z$ by elementary operations using $W$.
Since elementary operations do not change the determinant, it holds
\begin{align}\label{eq:shurt_det}
  \det A = \det W \det\prn[\big]{X - YW^{-1}Z}.
\end{align}
Elementary operations also retain the Pfaffian of a skew-symmetric matrix by~\eqref{eq:pfaffian_multiplicativity}.
Thus if $S = \begin{psmallmatrix} X & Y \\ -\trsp{Y} & W \end{psmallmatrix}$ is skew-symmetric and $W$ is nonsingular, we have
\begin{align}\label{eq:shurt_pf}
  \pf S = \pf W \pf\prn[\big]{X + YW^{-1}\trsp{Y}}.
\end{align}

\subsection{Linear Matroid Intersection Problem and Pfaffian Pairs}\label{sec:linear_matroid_intersection_problem_and_pfaffian_pairs}
The \emph{matroid intersection problem} introduced by Edmonds~\cite{Edmonds1968,Edmonds1970} is the following: given two matroids $\mathbf{M}_1 = (E, \mathcal{B}_1)$ and $\mathbf{M}_2 = (E, \mathcal{B}_2)$ over the same ground set $E$, we find a common base $B \in \mathcal{B}_1 \cap \mathcal{B}_2$ of $\mathbf{M}_1$ and $\mathbf{M}_2$.
The \emph{linear matroid intersection problem} is to find a common base of two linear matroids.
We regard the input of the linear matroid intersection problem as a \emph{matrix pair} $(A_1, A_2)$, which is the pair of matrices $A_1, A_2 \in \setK^{r \times n}$ of the same size over the same ground field $\setK$.
We denote the set of common bases of $A_1$ and $A_2$ by $\cbase(A_1, A_2) \defeq \base(A_1) \cap \base(A_2)$.

The linear matroid intersection problem can be algebraically formulated as follows.
For a vector $z = \prn{z_j}_{j \in E}$ indexed by the common column set $E$ of $(A_1, A_2)$, we denote the diagonal matrix $\diag\prn{z_j}_{j \in E}$ by $D(z)$.
We also define a block matrix
\begin{align}\label{def:Xi}
  \Xi_{A_1, A_2}(z)
  \defeq
  \begin{pmatrix}
    O & A_1 \\\trsp{A_2} &
    D(z)
  \end{pmatrix},
\end{align}
where $O$ denotes the zero matrix of appropriate size.
We henceforth omit the subscript $A_1, A_2$ of $\Xi$ as it will be clear from the context.

\begin{proposition}[{see~\cite{Geelen2001,Tomizawa1974}}]\label{prop:algebraic_formulation_of_intersection}
  Let $(A_1, A_2)$ be a matrix pair and $z = \prn{z_j}_{j \in E}$ a vector of distinct indeterminates indexed by the common column set $E$ of $(A_1, A_2)$.
  Then the following are equivalent:
  \begin{enumerate}
    \item $(A_1, A_2)$ has a common base.\label{item:algebraic_formulation_of_intersection_1}
    \item $A_1 D(z) \trsp{A_2}$ is nonsingular.\label{item:algebraic_formulation_of_intersection_2}
    \item $\Xi(z)$ is nonsingular.\label{item:algebraic_formulation_of_intersection_3}
  \end{enumerate}
\end{proposition}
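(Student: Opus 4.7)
The plan is to expand both $\det A_1 D(z) \trsp{A_2}$ and $\det \Xi(z)$ as polynomials in the indeterminates $z = (z_j)_{j \in E}$ by two applications of the Cauchy--Binet formula \eqref{eq:cauchy_binet}, bridged by one Schur complement computation. In each case the resulting polynomial has, as its coefficients, precisely the products $\det A_1[B] \det A_2[B]$ ranging over $r$-subsets $B \subseteq E$. Since the $z_j$ are distinct indeterminates, the monomials that appear are linearly independent over $\setK$, so each determinant is a nonzero element of $\setK[z]$ if and only if some $B$ is a common base.

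For \ref{item:algebraic_formulation_of_intersection_1} $\Leftrightarrow$ \ref{item:algebraic_formulation_of_intersection_2}, I would apply Cauchy--Binet directly to $A_1 D(z) \cdot \trsp{A_2}$, using that the principal submatrix $D(z)[B,B]$ is diagonal with entries $z_j$, to obtain
\[
  \det A_1 D(z) \trsp{A_2} = \sum_{B \subseteq E,\ \card{B} = r} \det A_1[B] \det A_2[B] \prod_{j \in B} z_j.
\]
Distinct $r$-subsets $B$ produce distinct monomials $\prod_{j \in B} z_j$, so this polynomial vanishes identically iff every coefficient vanishes, iff no $B$ is a common base.

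For \ref{item:algebraic_formulation_of_intersection_2} $\Leftrightarrow$ \ref{item:algebraic_formulation_of_intersection_3}, I would pass to the rational function field $\setK(z)$, in which $D(z)$ is invertible with inverse $\diag(1/z_j)_{j \in E}$, and take the Schur complement of $\Xi(z)$ with respect to its lower-right block $D(z)$. By \eqref{eq:shurt_det},
\[
  \det \Xi(z) = (-1)^r \det D(z) \cdot \det A_1 D(z)^{-1} \trsp{A_2}.
\]
A second application of Cauchy--Binet, combined with $\det D(z) = \prod_{j \in E} z_j$, then yields
\[
  \det \Xi(z) = (-1)^r \sum_{B \subseteq E,\ \card{B} = r} \det A_1[B] \det A_2[B] \prod_{j \in E \setminus B} z_j,
\]
and applying the same linear-independence argument, now to the complementary monomials $\prod_{j \notin B} z_j$, closes the chain of equivalences.

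There is no substantial obstacle here: the argument reduces to two invocations of Cauchy--Binet bridged by one Schur complement. The only care needed is to treat the matrices formally over $\setK[z]$ (or its field of fractions $\setK(z)$), so that ``nonsingular'' unambiguously means having nonzero determinant as a polynomial, and to correctly track the sign $(-1)^r$ produced by the Schur complement computation.
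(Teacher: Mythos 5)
Your proposal is correct and follows essentially the same route as the paper: Cauchy--Binet applied to $A_1 D(z)\trsp{A_2}$ for the equivalence of \ref{item:algebraic_formulation_of_intersection_1} and \ref{item:algebraic_formulation_of_intersection_2}, then the Schur complement of $\Xi(z)$ with respect to $D(z)$ over $\setK(z)$ for \ref{item:algebraic_formulation_of_intersection_3}. You even track the sign $(-1)^r$ coming from the block $O - A_1 D(z)^{-1}\trsp{A_2}$, which the paper's display~\eqref{eq:cauchy_binet_2} suppresses; this is immaterial for nonsingularity but is the more precise statement.
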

Here, the nonsingularity in \cref{prop:algebraic_formulation_of_intersection} is in the sense of matrices over the rational function field $\setK(z) \defeq \setK\prn{\set{z_j}[j \in E]}$.
As indicated by Tomizawa--Iri~\cite{Tomizawa1974}, the equivalence of \cref{prop:algebraic_formulation_of_intersection}~\ref{item:algebraic_formulation_of_intersection_1} and~\ref{item:algebraic_formulation_of_intersection_2} can be seen from the Cauchy--Binet formula~\eqref{eq:cauchy_binet} because the formula expands $\det A_1 D(z) \trsp{A_2}$ as
\begin{align}\label{eq:cauchy_binet_1}
  \det A_1 D(z) \trsp{A_2}
  = \sum_{B \in \cbase(A_1, A_2)} \det A_1[B] \det A_2[B] \prod_{j \in B} z_j.
\end{align}
This equation means that $\det A_1 D(z) \trsp{A_2} \ne 0$ if and only if $\cbase(A_1, A_2) \ne \varnothing$ since the factor on $x$ avoids cancellations in the summation.
Considering the formula~\eqref{eq:shurt_det} on the Schur complement and~\eqref{eq:cauchy_binet_1}, we also have
\begin{align}\label{eq:cauchy_binet_2}
  \det \Xi(z)
  = \det A_1 {D(z)}^{-1} \trsp{A_2} \cdot \det D(z)
  = \sum_{B \in \cbase(A_1, A_2)} \det A_1[B] \det A_2[B] \prod_{j \in E \setminus B} z_j.
\end{align}
Hence all the claims in \cref{prop:algebraic_formulation_of_intersection} are equivalent.
%We can devise a simple randomized algorithm for the linear matroid intersection problem using \cref{prop:algebraic_formulation_of_intersection} and the Schwartz--Zippel lemma~\cite{Schwartz1980,Zippel1979}.
See also Harvey~\cite{Harvey2009} and Murota~\cite[Remark~2.3.37]{Murota2000}.

Now we define Pfaffian matrix pairs slightly generalizing that of Webb~\cite{Webb2004}.

\begin{definition}[{Pfaffian matrix pair; see~\cite{Webb2004}}]\label{def:pfaffian_pair}
  We say that a matrix pair $(A_1, A_2)$ is \emph{Pfaffian} if there exists $c \in \setK \setminus \set{0}$ such that $\det A_1[B] \det A_2[B] = c$ for all $B \in \cbase(A_1, A_2)$.
  The value $c$ is called the \emph{constant} of $(A_1, A_2)$.
\end{definition}

We abbreviate a Pfaffian matrix pair as a Pfaffian pair.
If $(A_1, A_2)$ is Pfaffian, nonzero terms in the summation of~\eqref{eq:cauchy_binet_1} and~\eqref{eq:cauchy_binet_2} do not cancel out.
Hence the following holds for Pfaffian pairs.

\begin{proposition}\label{prop:counting_formula_for_pfaffian_pairs}
  Let $(A_1, A_2)$ be a Pfaffian pair of constant $c$ and $z = \prn{z_j}_{j \in E}$ a vector indexed by the common column set $E$ of $(A_1, A_2)$.
  Then it holds
  \begin{align}
    \det A_1 D(z) \trsp{A_2} & = c \sum_{B \in \cbase(A_1, A_2)} \prod_{j \in B} z_j,\label{eq:symbolic_counting_pairs} \\
    \det \Xi(z)              & = c \sum_{B \in \cbase(A_1, A_2)} \prod_{j \in E \setminus B} z_j.
  \end{align}
  In particular, the number of common bases of $(A_1, A_2)$ is equal to $c^{-1} \det A_1\trsp{A_2} = c^{-1} \det \Xi(\onevec)$ over $\setK$, where $\onevec$ denotes the vector of ones with appropriate dimension.
\end{proposition}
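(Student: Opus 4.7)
The plan is to derive both identities directly from the Cauchy--Binet expansions~\eqref{eq:cauchy_binet_1} and~\eqref{eq:cauchy_binet_2} that were established just above the proposition statement, using only the defining property of a Pfaffian pair.

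First I would recall that for every $B \in \cbase(A_1, A_2)$ the Pfaffian pair hypothesis gives the equality $\det A_1[B] \det A_2[B] = c$. Substituting this into~\eqref{eq:cauchy_binet_1} lets me pull the constant $c$ out of the summation, yielding
\begin{align*}
  \det A_1 D(z) \trsp{A_2}
  = \sum_{B \in \cbase(A_1, A_2)} c \prod_{j \in B} z_j
  = c \sum_{B \in \cbase(A_1, A_2)} \prod_{j \in B} z_j,
\end{align*}
which is exactly~\eqref{eq:symbolic_counting_pairs}. The same substitution applied to~\eqref{eq:cauchy_binet_2} gives the companion identity for $\det \Xi(z)$. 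One small point to mention is that equations~\eqref{eq:cauchy_binet_1} and~\eqref{eq:cauchy_binet_2} were derived under the assumption that $z$ is a vector of indeterminates, but since both sides of each identity are polynomials in $z$ they continue to hold when $z$ is specialized to any element of $\setK^E$; this is the only subtlety and it is routine.

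Finally, for the numerical consequence, I would set $z = \onevec$, so that $D(\onevec)$ is the identity matrix and each product $\prod_{j \in B} z_j$ and $\prod_{j \in E \setminus B} z_j$ equals $1$. Both identities then collapse to $c \card{\cbase(A_1, A_2)}$ on the right-hand side, while the left-hand sides become $\det A_1 \trsp{A_2}$ and $\det \Xi(\onevec)$, respectively. Since $c \ne 0$, dividing gives $\card{\cbase(A_1, A_2)} = c^{-1} \det A_1\trsp{A_2} = c^{-1} \det \Xi(\onevec)$ over $\setK$, interpreted modulo $\ch(\setK)$ per the convention set in \cref{sec:preliminaries}.

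There is no real obstacle: the entire proof is a direct bookkeeping consequence of the Cauchy--Binet identities already worked out before the statement and the definition of a Pfaffian pair. The only thing worth flagging is the specialization of the formal identity from indeterminate $z$ to numerical $z$, which follows because the identities are polynomial in $z$.
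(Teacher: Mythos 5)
Your proof is correct and follows the same route as the paper: substituting the Pfaffian-pair identity $\det A_1[B]\det A_2[B] = c$ into the Cauchy--Binet expansions~\eqref{eq:cauchy_binet_1} and~\eqref{eq:cauchy_binet_2} and then specializing $z = \onevec$. The remark about specializing from indeterminates to field elements is a fine (if routine) addition that the paper leaves implicit.
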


Next, consider a column-weighted version of matrix pairs.
Let $(A_1, A_2)$ be a matrix pair and $\funcdoms{w}{E}{\setR}$ a weight on the common column set $E$ of $(A_1, A_2)$.
The \emph{weight} $w(J)$ of $J \subseteq E$ is defined as $w(J) \defeq \sum_{j \in J} w(j)$.
The \emph{weighted linear matroid intersection problem} is to find a common base of $(A_1, A_2)$ that minimizes the weight $w$ among all common bases.
It is well-known that one can algebraically encode the information on the weight $w$ by putting it to the power of an indeterminate $\theta$, as the following proposition shows.
We define $\theta^w \defeq \pbig{\theta^{w(j)}}_{j \in E}$.

\begin{proposition}\label{prop:algebraic_weighted_pair}
  Let $(A_1, A_2)$ be a matrix pair with column weight $\funcdoms{w}{E}{\setR}$ and let $\theta$ be an indeterminate.
  For $x \in \setR$, the coefficient of $\theta^x$ in $\det A_1 D\pbig{\theta^w} \trsp{A_2}$ and the coefficient of $\theta^{w(E)-x}$ in $\det \Xi\pbig{\theta^w}$ are equal to
  \begin{align}\label{eq:weighted_counting_pair}
    \sum_{B \in \cbase_x} \det A_1[B] \det A_2[B],
  \end{align}
  where $\cbase_x \defeq \set{B \in \cbase(A_1, A_2)}[w(B) = x]$.
  In particular, if $(A_1, A_2)$ is Pfaffian with constant $c$, the coefficients of $\theta^x$ in $\det A_1 D\pbig{\theta^w} \trsp{A_2}$ and of $\theta^{w(E)-x}$ in $\det \Xi\pbig{t\theta^w}$ are equal to $c^{-1}\card{\cbase_x}$ over $\setK$.
\end{proposition}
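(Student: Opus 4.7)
The plan is to substitute the monomial specialization $z_j := \theta^{w(j)}$ for each $j \in E$ into the polynomial identities~\eqref{eq:cauchy_binet_1} and~\eqref{eq:cauchy_binet_2} that were already derived in \cref{sec:linear_matroid_intersection_problem_and_pfaffian_pairs}. Both sides of these identities are polynomials in the indeterminates $z = (z_j)_{j \in E}$ ($\det A_1 D(z) \trsp{A_2}$ obviously so, and $\det \Xi(z)$ because $\Xi(z)$ depends polynomially on $z$), so the substitution is legal and produces identities in the single indeterminate $\theta$.

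Under this specialization, $\prod_{j \in B} z_j$ becomes $\theta^{w(B)}$ while $\prod_{j \in E \setminus B} z_j$ becomes $\theta^{w(E) - w(B)}$. Thus~\eqref{eq:cauchy_binet_1} turns into
\begin{align*}
  \det A_1 D\pbig{\theta^w} \trsp{A_2} = \sum_{B \in \cbase(A_1,A_2)} \det A_1[B] \det A_2[B]\, \theta^{w(B)},
\end{align*}
and~\eqref{eq:cauchy_binet_2} turns into the analogous identity with $\theta^{w(E) - w(B)}$. Reading off the coefficient of $\theta^x$ in the first (respectively $\theta^{w(E)-x}$ in the second) selects exactly those bases $B$ with $w(B) = x$, i.e.~$B \in \cbase_x$, which gives the expression~\eqref{eq:weighted_counting_pair}.

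The Pfaffian refinement then follows by invoking \cref{def:pfaffian_pair}: every term $\det A_1[B] \det A_2[B]$ appearing in~\eqref{eq:weighted_counting_pair} equals the constant $c$, so the sum collapses to $c \cdot \card{\cbase_x}$, and dividing by the nonzero scalar $c \in \setK$ recovers $\card{\cbase_x}$ modulo $\ch(\setK)$, as claimed.

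There is no substantive obstacle here; the entire proof is a one-step specialization of a multivariate polynomial identity to a univariate one, followed by coefficient extraction. The only bookkeeping that needs attention is keeping track of which exponent corresponds to which object ($w(B)$ for the $A_1 D(z) \trsp{A_2}$ formulation and $w(E) - w(B)$ for the $\Xi(z)$ formulation, reflecting the complementary role of $B$ and $E \setminus B$ in the Schur complement identity~\eqref{eq:shurt_det} that was used to derive~\eqref{eq:cauchy_binet_2} in the first place).
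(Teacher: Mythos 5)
Your proposal is correct and follows essentially the same route as the paper: the paper also obtains the identity by specializing $z = \theta^w$ in the Cauchy--Binet expansions~\eqref{eq:cauchy_binet_1} and~\eqref{eq:cauchy_binet_2} and reading off the coefficient of $\theta^x$ (respectively $\theta^{w(E)-x}$), with the Pfaffian refinement following immediately from \cref{def:pfaffian_pair}.
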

\begin{proof}
  By~\eqref{eq:cauchy_binet_1}, we have
  \begin{align}
    \det A_1 D\pbig{\theta^w} \trsp{A_2}
     & = \sum_{B \in \cbase(A_1, A_2)} \det A_1[B] \det A_2[B] \prod_{j \in B} \theta^{w(j)}                                             \\
     & = \sum_{B \in \cbase(A_1, A_2)} \det A_1[B] \det A_2[B] \theta^{w(B)}.\label{eq:expansion_of_weighted_pair}
  \end{align}
  Hence the coefficient of $\theta^x$ in~\eqref{eq:expansion_of_weighted_pair} is equal to~\eqref{eq:weighted_counting_pair} for $x \in \setR$.
  We can show the claim for $\det \Xi\pbig{\theta^w}$ in the same way via~\eqref{eq:cauchy_binet_1}.
\end{proof}

\subsection{Linear Matroid Parity Problem and Pfaffian Parities}
Let $\mathbf{M} = (E, \mathcal{B})$ be a matroid with $\card{E}$ being even.
The ground set $E$ is partitioned into pairs, called \emph{lines}.
Let $L$ be the set of lines.
The \emph{matroid parity problem} (also known as the \emph{matchoid problem} or the \emph{matroid matching problem}), introduced by Lawler~\cite{Lawler1976}, is to find a base of $\mathbf{M}$ consisting of lines.
Such a base is called a \emph{parity base} of $\mathbf{M}$ (with respect to $L$).
In the general case, the matroid parity problem requires exponential number of membership oracle calls of $\mathcal{B}$~\cite{Lovasz1980}.
Nevertheless, Lovász~\cite{Lovasz1980} showed that the \emph{linear matroid parity problem} admits a polynomial-time algorithm, in which the linear matroid is given as a matrix $A$.
Here, the numbers of rows and columns of $A$ are even, say, $A \in \setK^{2r \times 2n}$.
We call the pair $(A, L)$ a (linear) \emph{matroid parity}.
We regard parity bases of $(A, L)$ as a subset of $L$ and denote by $\pbase(A, L)$ the set of all parity bases of $\matroid(A)$ with respect to $L$.
For $J \subseteq L$, we denote by $A[J]$ the submatrix of $A$ consisting of columns corresponding to lines in $J$.

The linear matroid parity problem also has algebraic formulations.
For a vector $z = \prn{z_l}_{l \in L}$ indexed by $L$, we denote by $\Delta_L(z)$ the $2n \times 2n$ skew-symmetric block-diagonal matrix defined as follows: the row and column sets are indexed by $E$, and each block corresponding to a line $l \in L$ is a $2 \times 2$ skew-symmetric matrix $\begin{psmallmatrix} \phantom{+}0 & +z_l \\ -z_l & \phantom{+}0 \end{psmallmatrix}$.
Similarly to~\eqref{def:Xi}, we define a skew-symmetric block matrix
\begin{align}\label{def:Phi}
  \Phi_{A, L}(z)
  \defeq
  \begin{pmatrix}
    O         & A           \\
    -\trsp{A} & \Delta_L(z)
  \end{pmatrix}.
\end{align}
We also omit the subscripts $L$ of $\Delta$ and $A, L$ of $\Phi$ as they will be always clear.

\begin{proposition}[{\cite{Geelen2005,Lovasz1979}}]\label{prop:algebraic_formulation_of_parity}
  Let $(A, L)$ be a matroid parity and $z = \prn{z_l}_{l \in L}$ a vector of distinct indeterminates indexed by $L$.
  Then the following are equivalent:
  \begin{enumerate}
    \item $(A, L)$ has a parity base.\label{prop:algebraic_formulation_of_parity_1}
    \item $A \Delta(z) \trsp{A}$ is nonsingular.\label{prop:algebraic_formulation_of_parity_2}
    \item $\Phi(z)$ is nonsingular.\label{prop:algebraic_formulation_of_parity_3}
  \end{enumerate}
\end{proposition}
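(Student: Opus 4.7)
The plan is to mirror the matrix pair version (\cref{prop:algebraic_formulation_of_intersection}), systematically replacing the determinantal Cauchy--Binet formula by its Pfaffian analogue (\cref{prop:ishikawa_pfaffian_cauchy_binet}) and the determinantal Schur complement identity \eqref{eq:shurt_det} by the skew-symmetric counterpart \eqref{eq:shurt_pf}. The key algebraic input is that the block-diagonal skew-symmetric matrix $\Delta(z)$ has a transparent structure on principal submatrices: $\pf \Delta(z)[J,J] = 0$ unless $J$ is a union of lines, and when $J = \bigcup_{l \in B} l$ for some $B \subseteq L$, the block-diagonal form yields $\pf \Delta(z)[J,J] = \varepsilon_B \prod_{l \in B} z_l$ for a sign $\varepsilon_B \in \set{\pm 1}$.

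To show the equivalence of \ref{prop:algebraic_formulation_of_parity_1} and \ref{prop:algebraic_formulation_of_parity_2}, I apply \cref{prop:ishikawa_pfaffian_cauchy_binet} with $S = \Delta(z)$ to obtain
\begin{align*}
  \pf A\Delta(z)\trsp{A}
  = \sum_{\condit{J \subseteq E}[\card{J} = 2r]} \det A[J] \pf \Delta(z)[J,J]
  = \sum_{\condit{B \subseteq L}[\card{B} = r]} \varepsilon_B \det A[B] \prod_{l \in B} z_l.
\end{align*}
Since the $z_l$ are algebraically independent, this polynomial vanishes identically over $\setK(z)$ if and only if $\det A[B] = 0$ for every $r$-subset $B \subseteq L$, i.e., if and only if $\pbase(A,L) = \varnothing$. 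Combining with $(\pf S)^2 = \det S$ from \eqref{eq:pf_det}, which ties nonvanishing of the Pfaffian to nonsingularity of the matrix, yields the desired equivalence.

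To show the equivalence of \ref{prop:algebraic_formulation_of_parity_2} and \ref{prop:algebraic_formulation_of_parity_3}, I apply \eqref{eq:shurt_pf} to $\Phi(z)$ with $W = \Delta(z)$, $Y = A$, and $X = O$. Since $\pf \Delta(z) = \prod_{l \in L} z_l \ne 0$ over $\setK(z)$, the matrix $\Delta(z)$ is invertible and
\begin{align*}
  \pf \Phi(z) = \pf \Delta(z) \cdot \pf \pbig{A\Delta(z)^{-1}\trsp{A}}.
\end{align*}
The first factor is a nonzero monomial, so $\Phi(z)$ is nonsingular if and only if $A\Delta(z)^{-1}\trsp{A}$ is. Now $\Delta(z)^{-1}$ is again block-diagonal skew-symmetric with $2 \times 2$ blocks whose nonzero entries are $\pm z_l^{-1}$, so the same Cauchy--Binet expansion---with $z_l^{-1}$ in place of $z_l$---shows this nonsingularity to be equivalent to the existence of a parity base, closing the chain.

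The only genuine pitfall is sign bookkeeping: the signs $\varepsilon_B$ and an extra sign that may arise from the Pfaffian Schur complement would matter if one sought an explicit counting formula for $\pf \Phi(z)$ paralleling \eqref{eq:cauchy_binet_2}, but for the qualitative nonvanishing claims used here the signs are immaterial, and the argument reduces to a direct translation of the pair case.
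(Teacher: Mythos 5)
Your proposal is correct and takes essentially the same route as the paper: the paper establishes \cref{prop:pfaffian_cauchy_binet} by exactly this combination of \cref{prop:ishikawa_pfaffian_cauchy_binet} applied with $S = \Delta(z)$ (noting $\pf \Delta(z)[J,J]$ vanishes unless $J$ is a union of lines, in which case it is the monomial $\prod_{l} z_l$, so your $\varepsilon_B$ is in fact always $+1$) together with the Pfaffian Schur complement identity \eqref{eq:shurt_pf} applied to $\Phi(z)$, and then reads off all three equivalences from the resulting cancellation-free expansions in the algebraically independent $z_l$. The only cosmetic difference is that you argue the $\Phi(z)$ case qualitatively via $\Delta(z)^{-1}$ instead of deriving the explicit identity \eqref{eq:pfaffian_cauchy_binet_2}, which does not affect correctness.
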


We note that the matrix $A \Delta(z) \trsp{A}$ in \cref{prop:algebraic_formulation_of_parity}~\ref{prop:algebraic_formulation_of_parity_2} can also be written as
\begin{align}\label{eq:rewrite_ADeltaA}
  A \Delta(z) \trsp{A}
  = A_1 D(z) \trsp{A_2} - A_2 D(z) \trsp{A_1}
  = \sum_{l = (v, \bar{v}) \in L} z_l \pbig{a_v\trsp{a_{\bar{v}}} - a_{\bar{v}} \trsp{a_v}},
\end{align}
where $a_v \defeq A[\set{v}]$ is the $v$th column of $A$ for $v \in E$ and $A_1, A_2$ are $2r \times n$ submatrices of $A$ consisting of column vectors $a_v$ and $a_{\bar{v}}$ for each line $(v, \bar{v}) \in L$, respectively.
Lovász~\cite[Theorem~3]{Lovasz1979} described the equivalence of \cref{prop:algebraic_formulation_of_parity}~\ref{prop:algebraic_formulation_of_parity_1} and~\ref{prop:algebraic_formulation_of_parity_2} representing $A \Delta(z) \trsp{A}$ in the rightmost form of~\eqref{eq:rewrite_ADeltaA}.
The equivalence of \cref{prop:algebraic_formulation_of_parity}~\ref{prop:algebraic_formulation_of_parity_1} and~\ref{prop:algebraic_formulation_of_parity_3} is due to Geelen--Iwata~\cite[Theorem~4.1]{Geelen2005}; see also Cheung--Lau--Leung~\cite{Cheung2014} and Murota~\cite[Remark~7.3.23]{Murota2000}.
These equivalences can also be observed from the following identities.
\begin{proposition}\label{prop:pfaffian_cauchy_binet}
  Let $(A, L)$ be a matroid parity and $z = \prn{z_l}_{l \in L}$ a vector indexed by $L$.
  Then it holds
  \begin{align}
    \pf A\Delta(z)\trsp{A} & = \sum_{B \in \pbase(A, L)} \det A[B] \prod_{l \in B} z_l,\label{eq:pfaffian_cauchy_binet_1}             \\
    \pf \Phi(z)            & = \sum_{B \in \pbase(A, L)} \det A[B] \prod_{l \in L \setminus B} z_l.\label{eq:pfaffian_cauchy_binet_2}
  \end{align}
\end{proposition}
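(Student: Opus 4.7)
The plan is to prove the two identities in sequence, using the first as input for the second.

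For~\eqref{eq:pfaffian_cauchy_binet_1}, I apply the generalized Cauchy--Binet formula for Pfaffians (\cref{prop:ishikawa_pfaffian_cauchy_binet}) with $S = \Delta(z)$ and with $A$ as given, obtaining
\begin{align*}
  \pf A\Delta(z)\trsp{A} = \sum_{\condit{J \subseteq E}[\card{J} = 2r]} \det A[J] \cdot \pf \Delta(z)[J, J].
\end{align*}
The key combinatorial observation is that $\Delta(z)$ is block-diagonal with one $2 \times 2$ skew block per line of $L$.  Hence, if $J$ splits some line---contains exactly one of its two endpoints---then $\Delta(z)[J, J]$ has a zero row and column, forcing $\pf \Delta(z)[J, J] = 0$.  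The surviving $J$ are precisely the unions $\bigcup_{l \in B} l$ over $B \subseteq L$ with $\card{B} = r$; for such $J$, $\Delta(z)[J, J]$ is again block-diagonal and evaluates to $\pf \Delta(z)[J, J] = \prod_{l \in B} z_l$.  Reindexing the sum by $B$, and noting that $A[B]$ has rank less than $2r$ whenever $B \notin \pbase(A, L)$ so that $\det A[B] = 0$, yields~\eqref{eq:pfaffian_cauchy_binet_1}.

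For~\eqref{eq:pfaffian_cauchy_binet_2}, I first treat $z = \prn{z_l}_{l \in L}$ as formal indeterminates so that $\Delta(z)$ is invertible over $\setK(z)$; because both sides are polynomials in the $z_l$, the resulting identity then extends to any specialization.  Applying the Pfaffian Schur-complement identity~\eqref{eq:shurt_pf} to $\Phi(z)$ with $X = O$, $Y = A$, and $W = \Delta(z)$ gives
\begin{align*}
  \pf \Phi(z) = \pf \Delta(z) \cdot \pf \pbig{A \Delta(z)^{-1} \trsp{A}}.
\end{align*}
Block-by-block, $\pf \Delta(z) = \prod_{l \in L} z_l$, and $\Delta(z)^{-1}$ is block-diagonal with the same shape as $\Delta(\cdot)$ since inverting each $\begin{psmallmatrix} 0 & z_l \\ -z_l & 0 \end{psmallmatrix}$ again produces a $2 \times 2$ skew block.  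Hence $\pf\pbig{A\Delta(z)^{-1}\trsp{A}}$ is within the scope of~\eqref{eq:pfaffian_cauchy_binet_1} applied to the rescaled variable vector; substituting and telescoping $\prod_{l \in L} z_l \cdot \prod_{l \in B} z_l^{-1} = \prod_{l \in L \setminus B} z_l$ produces the desired expansion.

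The main obstacle is the sign bookkeeping in the second identity: each inverted $2 \times 2$ block of $\Delta(z)$ has Pfaffian $-1/z_l$ rather than $+1/z_l$, so the Cauchy--Binet step applied to $A\Delta(z)^{-1}\trsp{A}$ carries a factor of $(-1)^{\card{B}}$ for each parity base.  One has to exploit $\card{B} = r$ and align these signs carefully with the ordering of rows within each $2 \times 2$ block of $\Delta$ and $\Phi$, as dictated by the convention in~\eqref{def:Phi}.  Aside from this, the only non-routine step is the rank argument that $\det A[B] = 0$ for non-parity $B$; everything else reduces to the two previously established identities~\eqref{eq:shurt_pf} and \cref{prop:ishikawa_pfaffian_cauchy_binet}.
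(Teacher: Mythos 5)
Your route is the paper's own: identity~\eqref{eq:pfaffian_cauchy_binet_1} via \cref{prop:ishikawa_pfaffian_cauchy_binet} with $S = \Delta(z)$, discarding every $J$ that splits a line and every $J$ with $A[J]$ singular, and identity~\eqref{eq:pfaffian_cauchy_binet_2} by viewing the $z_l$ as indeterminates and applying the Schur-complement formula~\eqref{eq:shurt_pf} to $\Phi(z)$. Your proof of the first identity is complete and matches the paper's essentially verbatim; the paper's proof of the second identity is the same one-line Schur-complement step you describe.

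For the second identity, though, the sign you flag is a genuine obstruction and not bookkeeping that can be ``aligned away.'' Each $2 \times 2$ block of ${\Delta(z)}^{-1}$ has Pfaffian $-z_l^{-1}$, so your own computation yields $\pf \Phi(z) = \pf \Delta(z) \cdot \pf\pbig{A{\Delta(z)}^{-1}\trsp{A}} = \prn{-1}^{r} \sum_{B \in \pbase(A, L)} \det A[B] \prod_{l \in L \setminus B} z_l$, and no reordering of rows or columns inside the blocks that is consistent with~\eqref{def:Phi} removes the factor $\prn{-1}^{r}$. A minimal check confirms this: for $A = I_2$ with a single line, $\pf \Phi(z) = -1$ while the displayed right-hand side of~\eqref{eq:pfaffian_cauchy_binet_2} equals $1$. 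So the step you leave vague cannot be completed as stated; what your argument actually establishes is~\eqref{eq:pfaffian_cauchy_binet_2} up to the global factor $\prn{-1}^{r}$. To be fair, the paper's own proof performs exactly your Schur step and is silent about this sign, so the discrepancy is a slip in the displayed statement (it propagates, e.g., to the claim that $c^{-1}\pf\Phi(\onevec)$ counts parity bases) rather than a defect peculiar to your argument; but as a proof of the proposition as written, the factor $\prn{-1}^{r}$ must either be recorded in the identity or absorbed by changing a sign convention in~\eqref{def:Phi}, and your hope that careful block-ordering alone resolves it does not go through.
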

\begin{proof}
  Applying the the expanding formula~\eqref{eq:pfaffian_cauchy_binet} to $\pf A\Delta(z)\trsp{A}$, we have
  \begin{align}
    \pf A\Delta(z)\trsp{A} = \sum_{\condit{J \subseteq E}[\card{J} = 2r]} \det A[J] \pf \Delta(z)[J, J],
  \end{align}
  where $2r$ is the cardinality of rows of $A$ and $E$ is the columns of $A$.
  From the definitions of $\Delta(z)$ and Pfaffian, $\Delta(z)[J, J]$ is nonsingular only if $J$ consists of lines.
  In this case, $\pf \Delta(z)[J, J]$ is equal to the product of $z_l$ for every line $l$ consisting $J$.
  Hence~\eqref{eq:pfaffian_cauchy_binet_1} is obtained.

  The latter identity~\eqref{eq:pfaffian_cauchy_binet_2} is obtained by applying the formula~\eqref{eq:shurt_pf} on the Schur complement to $\Phi(z)$.
  Note that $\Delta(z)$ can be regarded as nonsingular by seeing each $z_l$ as an indeterminate.
\end{proof}

Now we define Pfaffian matroid parities in the same manner as \cref{def:pfaffian_pair}.
\begin{definition}[{Pfaffian matroid parity}]\label{def:pfaffian_parity}
  We say that a matroid parity $(A, L)$ is \emph{Pfaffian} if there exists $c \in \setK \setminus \set{0}$ such that $\det A[B] = c$ for all $B \in \pbase(A, L)$.
  The value $c$ is called the \emph{constant} of $(A, L)$.
\end{definition}
We abbreviate Pfaffian matroid parity as Pfaffian parity for short.
The following is immediately obtained from \cref{prop:pfaffian_cauchy_binet} and \cref{def:pfaffian_parity}.

\begin{proposition}\label{prop:counting_formula_for_pfaffian_prities}
  Let $(A, L)$ be a Pfaffian parity of constant $c$ and $z = \prn{z_l}_{l \in L}$ a vector indexed by $L$.
  Then it holds
  \begin{align}
    \pf A\Delta(z)\trsp{A} & = c \sum_{B \in \pbase(A, L)} \prod_{l \in B} z_l,             \\
    \pf \Phi(z)            & = c \sum_{B \in \pbase(A, L)} \prod_{l \in L \setminus B} z_l.
  \end{align}
  In particular, the number of parity bases of $(A, L)$ is equal to $c^{-1} \pf A\Delta(\onevec)\trsp{A} = c^{-1} \pf \Phi(\onevec)$ over $\setK$.
\end{proposition}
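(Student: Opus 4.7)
The plan is to derive both identities as immediate specializations of \cref{prop:pfaffian_cauchy_binet}, and then obtain the counting consequence by setting $z = \onevec$. No new combinatorial argument is required; the work has already been done when establishing the Pfaffian Cauchy--Binet formula.

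First I would invoke \cref{prop:pfaffian_cauchy_binet}, which gives
\begin{align}
  \pf A\Delta(z)\trsp{A} &= \sum_{B \in \pbase(A, L)} \det A[B] \prod_{l \in B} z_l, \\
  \pf \Phi(z) &= \sum_{B \in \pbase(A, L)} \det A[B] \prod_{l \in L \setminus B} z_l.
\end{align}
By the definition of a Pfaffian parity (\cref{def:pfaffian_parity}), the factor $\det A[B]$ is equal to the same constant $c \in \setK \setminus \set{0}$ for every $B \in \pbase(A, L)$. Pulling this constant out of the summations immediately yields the two displayed identities in the statement.

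For the ``in particular'' part, I would substitute $z = \onevec$ into both identities. Under this substitution every product $\prod_{l \in B} z_l$ and $\prod_{l \in L \setminus B} z_l$ equals $1$, so the right-hand sides both collapse to $c \cdot \card{\pbase(A, L)}$. Dividing through by $c$ (which is nonzero in $\setK$ by definition) gives $\card{\pbase(A, L)} = c^{-1} \pf A\Delta(\onevec)\trsp{A} = c^{-1} \pf \Phi(\onevec)$, with the equality understood modulo $\ch(\setK)$ as per the convention in \cref{sec:preliminaries}.

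There is no genuine obstacle here; the only subtlety worth a brief remark is making sure we interpret the equalities correctly when $\ch(\setK) \neq 0$, so that ``number of parity bases'' means the count reduced modulo $\ch(\setK)$. The cancellation-free structure guaranteed by the Pfaffian assumption is precisely what allows us to read off the count from a single Pfaffian evaluation, mirroring the role that \cref{prop:counting_formula_for_pfaffian_pairs} plays for Pfaffian pairs.
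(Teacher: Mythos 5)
Your proof is correct and follows exactly the paper's route: the paper also obtains this proposition immediately from \cref{prop:pfaffian_cauchy_binet} together with \cref{def:pfaffian_parity}, pulling out the constant $c$ and specializing $z = \onevec$. Nothing is missing.
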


We next consider the weighted linear matroid parity problem.
Let $(A, L)$ be a matroid parity and $\funcdoms{w}{L}{\setR}$ a weight on lines.
The \emph{weight} of $J \subseteq L$ is defined as $w(J) \defeq \sum_{j \in J} w(j)$.
The \emph{weighted linear matroid parity problem} is to find a parity base of $(A, L)$ having the minimum weight with respect to $w$ among all parity bases.
The following is obtained in the same way as \cref{prop:algebraic_weighted_pair} via \cref{prop:pfaffian_cauchy_binet}; see also Iwata--Kobayashi~\cite{Iwata2017}.

\begin{proposition}\label{prop:algebraic_weighted_parity}
  Let $(A, L)$ be a matroid parity equipped with a line weight $\funcdoms{w}{L}{\setR}$.
  Let $\theta$ be an indeterminate.
  For $x \in \setR$, the coefficient of $\theta^x$ in $\pf A \Delta\pbig{\theta^w} \trsp{A}$ and the coefficient of $\theta^{w(L)-x}$ in $\pf \Phi\pbig{\theta^w}$ are equal to
  \begin{align}\label{eq:weighted_counting_parity}
    \sum_{B \in \pbase_x} \det A[B],
  \end{align}
  where $\pbase_x \defeq \set{B \in \pbase(A, L)}[w(B) = x]$.
  In particular, if $(A, L)$ is Pfaffian with constant $c$, the coefficients of $\theta^x$ in $\pf A \Delta\pbig{\theta^w} \trsp{A}$ and of $\theta^{w(L)-x}$ in $\pf \Phi\pbig{\theta^w}$ are equal to $c^{-1}\card{\pbase_x}$ over $\setK$.
\end{proposition}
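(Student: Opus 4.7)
The plan is to mirror the proof of \cref{prop:algebraic_weighted_pair}, replacing the Cauchy--Binet expansion by its Pfaffian counterpart~\eqref{eq:pfaffian_cauchy_binet_1}--\eqref{eq:pfaffian_cauchy_binet_2} already established in \cref{prop:pfaffian_cauchy_binet}. Those two identities express $\pf A\Delta(z)\trsp{A}$ and $\pf \Phi(z)$ as weighted sums over parity bases with monomials in the indeterminates $z_l$, so the remaining work is purely a substitution and a regrouping of terms.

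Concretely, I would first substitute $z_l = \theta^{w(l)}$ into~\eqref{eq:pfaffian_cauchy_binet_1}. Since $\prod_{l \in B}\theta^{w(l)} = \theta^{w(B)}$, this yields
\begin{align}
  \pf A\Delta\pbig{\theta^w}\trsp{A}
  = \sum_{B \in \pbase(A, L)} \det A[B]\,\theta^{w(B)}
  = \sum_{x \in \setR} \theta^x \sum_{B \in \pbase_x} \det A[B]
\end{align}
after grouping parity bases by the common value $w(B) = x$. Reading off the coefficient of $\theta^x$ produces~\eqref{eq:weighted_counting_parity} directly. For $\pf \Phi\pbig{\theta^w}$, the same substitution in~\eqref{eq:pfaffian_cauchy_binet_2} converts $\prod_{l \in L \setminus B} z_l$ into $\theta^{w(L) - w(B)}$, and an analogous regrouping shows that the coefficient of $\theta^{w(L) - x}$ is exactly the same inner sum.

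For the Pfaffian special case, I would then invoke the defining property $\det A[B] = c$ for every parity base $B$, factor $c$ out of~\eqref{eq:weighted_counting_parity}, and thereby express the coefficient purely in terms of $\card{\pbase_x}$, obtaining the concluding claim after passing modulo $\ch(\setK)$.

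There is no substantive obstacle here: all of the combinatorial and sign-tracking content has been packaged into \cref{prop:pfaffian_cauchy_binet} via the Ishikawa--Wakayama Pfaffian Cauchy--Binet formula~\eqref{eq:pfaffian_cauchy_binet}. The only bookkeeping needing attention is the correct exponent of $\theta$---namely $w(B)$ in the first identity versus $w(L) - w(B)$ in the second---and this is immediate from whether the $z_l$ product ranges over $B$ or over $L \setminus B$ in the two identities of \cref{prop:pfaffian_cauchy_binet}.
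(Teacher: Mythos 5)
Your proposal matches the paper's intended argument exactly: the paper remarks that \cref{prop:algebraic_weighted_parity} ``is obtained in the same way as \cref{prop:algebraic_weighted_pair} via \cref{prop:pfaffian_cauchy_binet},'' which is precisely the substitution-and-regroup argument you carry out via~\eqref{eq:pfaffian_cauchy_binet_1} and~\eqref{eq:pfaffian_cauchy_binet_2}. (As a side remark, carrying out your final factoring step gives $c\,\card{\pbase_x}$ rather than the $c^{-1}\card{\pbase_x}$ printed in the statement; this looks like a typographical slip in the paper --- the same one appears in \cref{prop:algebraic_weighted_pair} --- and is not a gap in your reasoning.)
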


\subsection{Reducing Pfaffian Pairs to Pfaffian Parities}

Lawler~\cite{Lawler1976} presented the following reduction of the linear matroid intersection problem to the linear matroid parity problem.
Let $(A_1, A_2)$ be an $r \times n$ matrix pair with common column set $E$.
We define a $2r \times 2n$ matrix $A$ as follows: we associate each two columns of $A$ with $j \in E$ and set the $2r \times 2$ submatrix associated with $j$ as $\begin{psmallmatrix} A_1[\set{j}] & 0 \\ 0 & A_2[\set{j}] \end{psmallmatrix}$.
Through this association, we regard $E$ as the set of lines of $A$.
Then $B \subseteq E$ is a common base of $(A_1, A_2)$ if and only if $B$ is a parity base of $(A, E)$~\cite[Chapter~9.2]{Lawler1976}.
We show that when $(A_1, A_2)$ is Pfaffian, so is $(A, E)$.

\begin{proposition}
  Let $(A_1, A_2)$ be an $r \times n$ matrix pair and $(A, E)$ the $2r \times 2n$ matroid parity defined above.
  If $(A_1, A_2)$ is Pfaffian with constant $c$, then $(A, E)$ is Pfaffian with constant $\prn{-1}^{\frac{r(r-1)}{2}}c$.
\end{proposition}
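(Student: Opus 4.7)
My plan is to establish the identity $\det A[B] = (-1)^{r(r-1)/2} \det A_1[B] \det A_2[B]$ for every $B \in \cbase(A_1, A_2)$. Combined with Lawler's identification $\pbase(A, E) = \cbase(A_1, A_2)$ and the Pfaffian hypothesis on $(A_1, A_2)$, this immediately yields $\det A[B] = (-1)^{r(r-1)/2} c$ for every parity base $B$ of $(A, E)$, proving that $(A, E)$ is Pfaffian with constant $(-1)^{r(r-1)/2} c$, which is nonzero since $c$ is.

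The key observation behind the identity is structural. Fix an arbitrary ordering $B = (j_1, \ldots, j_r)$. By the definition of $A$, the submatrix $A[B]$ is $2r \times 2r$ and consists of $r$ consecutive pairs of columns: the pair associated with $j_i$ is the ``upper'' column $\trsp{(\trsp{A_1[\set{j_i}]}, 0)}$, supported on the first $r$ rows, followed by the ``lower'' column $\trsp{(0, \trsp{A_2[\set{j_i}]})}$, supported on the last $r$ rows. A single column permutation $\pi \in \sym_{2r}$ that moves all $r$ upper columns to the left half and all $r$ lower columns to the right half, preserving their relative order in each half, transforms $A[B]$ into the block-diagonal matrix $\begin{psmallmatrix} A_1[B] & O \\ O & A_2[B] \end{psmallmatrix}$, whose determinant factors as $\det A_1[B] \det A_2[B]$.

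The only calculation to perform is the sign $\epsilon_r \defeq \sgn \pi$, which is independent of $B$ and depends only on $r$. Concretely, $\pi(2i-1) = i$ and $\pi(2i) = r + i$ for $i = 1, \ldots, r$, and a direct inversion count (each of the lower columns at original positions $2, 4, \ldots, 2r$ must pass a decreasing number of upper columns on its right) gives $\epsilon_r = (-1)^{r(r-1)/2}$; equivalently, this follows by a short induction on $r$ after peeling off the last pair. This is the main point of attention; the rest of the argument is pure bookkeeping. As a sanity check, for $r = 1$ the permutation is trivial and $\epsilon_1 = +1$, while for $r = 2$ a single transposition suffices and gives $\epsilon_2 = -1$, matching the claimed formula.
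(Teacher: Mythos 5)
Your proposal is correct and follows essentially the same route as the paper: permute the columns of $A[B]$ to obtain the block-diagonal matrix $\begin{psmallmatrix} A_1[B] & O \\ O & A_2[B] \end{psmallmatrix}$ and track the sign $\prn{-1}^{\frac{r(r-1)}{2}}$ of that permutation, which the paper computes as $\prn{-1}^{1+\cdots+(r-1)}$ and you compute by an explicit inversion count. The only difference is that you spell out the inversion count in more detail; the argument is otherwise identical.
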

\begin{proof}
  Let $B \subseteq E$ be a common base of $(A_1, A_2)$ as well as a parity base of $(A, E)$.
  By an appropriate column permutation, $A[B]$ is transformed into $\begin{psmallmatrix} A_1[B] & O \\ O & A_2[B] \end{psmallmatrix}$, whose the determinant is $\det A_1[B] \det A_2[B] = c$.
  The sign of this column permutation is $\prn{-1}^{1 + \cdots + (r-1)} = \prn{-1}^{\frac{r(r-1)}{2}}$.
  Hence the claim holds.
\end{proof}

\section{Examples}\label{sec:examples}

In this section, we enumerate discrete structures that can be represented as common bases of Pfaffian pairs or parity bases of Pfaffian parities.

\subsection{Regular Matroids and Spanning Trees}\label{sec:regular_matroids}
A matroid is called \emph{regular} if it is represented by a totally unimodular matrix, or equivalently, it is representable by a matrix over any field.
If $A$ is a totally unimodular matrix, a pair $(A, A)$ is clearly Pfaffian with constant 1.
Hence, as observed by Webb~\cite[Section~3.5]{Webb2004}, the number of bases of $A$ is equal to $\det A\trsp{A}$ by $\base(A) = \cbase(A, A)$ and \cref{prop:counting_formula_for_pfaffian_pairs}.
This well-known formula on regular matroids was initially indicated by Maurer~\cite{Maurer1976}.

Regular matroids typically arise from graphs.
Let $G = (V, E)$ be a connected undirected graph and $\mathcal{B}(G) \subseteq 2^E$ denote the set of all spanning trees of $G$.
Then $\matroid(G) \defeq (E, \mathcal{B}(G))$ forms a matroid, called the \emph{graphic matroid} of $G$.
Consider any orientation $\vec{G} = (V, \vec{E})$ of $G$.
Throughout this paper, we denote the directed edge in $\vec{E}$ corresponding to $e \in E$ by $\vec{e}$ and the directed edge set corresponding to $F \subseteq E$ by $\vec{F}$.
We define the \emph{incidence matrix} $A = \prn{A_{v,e}}_{v \in V, e \in E}$ of $\vec{G}$ as a matrix over $\setQ$ by
\begin{align}\label{def:incidence_matrix}
  A_{v,e} \defeq \begin{cases*}
    +1 & ($v = \tail{\vec{e}}$), \\
    -1 & ($v = \head{\vec{e}}$), \\
    \phantom{+}0 & (otherwise)
  \end{cases*}
\end{align}
for $v \in V$ and $e \in E$, where $\tail{\vec{e}}$ and $\head{\vec{e}}$ denote the tail and the head of $\vec{e}$, respectively.
The incidence matrix $A$ is known to be totally unimodular.
Let $A^{(r)}$ denote the submatrix of $A^{(r)}$ obtained by removing the $r$th row of $A$ for $r \in V$.
Then $A^{(r)}$ represents $\base(G)$, i.e., $\base(G) = \base\pbig{A^{(r)}}$.
Hence the number of spanning trees of $G$ is equal to $\det A^{(r)}\trsp{{A^{(r)}}}$, which is the $(r, r)$th cofactor of the \emph{Laplacian matrix} $A\trsp{A}$ of $G$.
This is exactly Kirchhoff's matrix-tree theorem~\cite{Kirchhoff1847}.
Refer to~\cite{Oxley2011} for details of regular and graphic matroids

\subsection{Regular Delta-Matroids and Euler Tours in 4-Regular Directed Graphs}\label{sec:regular_delta_matroids}
Let $S \in \setQ^{n \times n}$ be a skew-symmetric matrix whose rows and columns are indexed by a finite set $E$.
We also assume that $S$ is \emph{principally unimodular}; that is, any principal minor of $S$ is in $\set{+1, 0, -1}$~\cite{Bouchet1998}.
Since $S$ is skew-symmetric, all the principal minors of $S$ must be $0$ or $+1$.
Define
\begin{align}
  \feas(S) \defeq \set{F \subseteq E}[\text{$S[F, F]$ is nonsingular}]
\end{align}
and denote $(E, \feas(S))$ by $\dmatroid(S)$.
For $X \subseteq E$, we let $\dmatroid(S) \symdif X \defeq (E, \feas(S) \symdif X)$ with
\begin{align}
  \feas(S) \symdif X \defeq \set{F \symdif X}[F \in \feas(S)],
\end{align}
where $F \symdif X$ means the \emph{symmetric difference} of $F$ and $X$, that is, $F \symdif X \defeq (F \setminus X) \cup (X \setminus F)$.
Then $\dmatroid(S) \symdif X$ is called the \emph{regular delta-matroid} represented by $S$ (and $X$)~\cite{Bouchet1995,Geelen1995}.
Elements in $\feas(S) \symdif X$ are called \emph{feasible sets} of $\dmatroid(S) \symdif X$.
Regular delta-matroids are a generalization of regular matroids; see~\cite{Bouchet1998}.

Webb~\cite[Section~3.5]{Webb2004} indicated that the set of nonsingular principal submatrices of a skew-symmetric totally unimodular matrix can be represented by a Pfaffian pair.
This can be slightly generalized to the feasible sets of a regular delta-matroid as follows.
Define matrices $A_1 \defeq \begin{pmatrix} S & I_n\end{pmatrix}$ and $A_2 \defeq \begin{pmatrix} I_n & I_n \end{pmatrix}$ with common column set $E \cup \overline{E}$, where $I_n$ is the identity matrix of order $n = \card{E}$ and $\overline{E}$ is a disjoint copy of $E$ corresponding to the right blocks of $A_1$ and $A_2$.
Note that $A_1$ is not necessarily totally unimodular.
For $J \subseteq E$, denote by $\overline{J}$ the corresponding subset of $\overline{E}$ to $J$.

\begin{proposition}[{see~\cite[Section~3.5]{Webb2004}}]\label{prop:regular_delta_matroid_is_pfaffian_pair}
  The matrix pair $(A_1, A_2)$ is Pfaffian with constant 1.
  In addition, there is a one-to-one correspondence between common bases of $(A_1, A_2)$ and feasible sets of $\dmatroid(S) \symdif X$ given by $B \mapsto (B \cap E) \symdif X$ for $B \in \cbase(A_1, A_2)$.
\end{proposition}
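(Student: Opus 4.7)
The plan is to analyze the two nonsingularity conditions separately and exploit the very simple structure of $A_2 = (I_n \mid I_n)$. Since $A_2[B]$ is an $n \times n$ submatrix whose columns are standard basis vectors (column $j \in E$ is $e_j$, and column $\bar k \in \overline{E}$ is $e_k$), nonsingularity forces the $E$-part and $\overline{E}$-part of $B$ to partition the row index set $E$. More precisely, writing $B = J \cup \overline{J'}$ with $J, J' \subseteq E$, nonsingularity of $A_2[B]$ is equivalent to $J \cap J' = \varnothing$ together with $|J| + |J'| = n$, hence $J' = E \setminus J$. So every candidate common base has the form $B = J \cup \overline{E \setminus J}$ parameterized by $J \subseteq E$.

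Next I would compute the two determinants for such a $B$ by applying the common permutation $\pi$ of $E$ that places $J$ first and $E \setminus J$ second. Reordering the rows of $A_1$ by $\pi$ and the columns of $A_1[B]$ so that the $J$-columns precede the $\overline{E \setminus J}$-columns transforms $A_1[B]$ into the block form
\begin{equation*}
  \begin{pmatrix} S[J, J] & 0 \\ S[E \setminus J, J] & I_{E \setminus J} \end{pmatrix},
\end{equation*}
whose determinant is $\det S[J, J]$. Since the same permutation $\pi$ is applied on rows and columns, the sign contributions cancel\footnote{More carefully, the row reordering contributes $\sgn(\pi)$; the $E$-columns are already in order within $J$ and within $E \setminus J$, and the $\overline{E \setminus J}$-columns only need to be ordered by the natural order of $E \setminus J$, so the total column sign is also $\sgn(\pi)$, giving $\det A_1[B] = \sgn(\pi)\det S[J, J]$ and analogously $\det A_2[B] = \sgn(\pi)$.}, and we obtain $\det A_1[B] \det A_2[B] = \det S[J, J]$.

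Now the principal unimodularity of $S$ together with its skew-symmetry forces $\det S[J, J] \in \set{0, 1}$. Therefore $B$ is a common base of $(A_1, A_2)$ exactly when $\det S[J, J] = 1$, which by definition means $J \in \feas(S)$; and in that case $\det A_1[B] \det A_2[B] = 1$. This simultaneously verifies that $(A_1, A_2)$ is Pfaffian with constant $c = 1$ and that $B \mapsto B \cap E$ is a bijection from $\cbase(A_1, A_2)$ onto $\feas(S)$. Composing with the involution $\,\cdot \symdif X$ yields the stated bijection $B \mapsto (B \cap E) \symdif X$ onto $\feas(S) \symdif X$.

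The only delicate point is the sign bookkeeping in step two, but using a single permutation $\pi$ to reorder both the rows and the columns of $A_1$ (and of $A_2$) makes the signs cancel identically, so no cases on $\card{J}$ or on parity arise. Everything else is routine once the shape $B = J \cup \overline{E \setminus J}$ has been extracted from the $A_2$-side.
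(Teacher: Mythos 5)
Your proposal is correct and takes essentially the same route as the paper's proof: extract the shape $B = J \cup \overline{E \setminus J}$ from the $A_2$-side, block-triangularize $A_1[B]$ via a permutation applied identically to both matrices so that $\det A_1[B]\det A_2[B] = \det S[J,J]$, and invoke skew-symmetry plus principal unimodularity to get the constant $1$ and the bijection with $\feas(S)$, followed by the shift by $X$. (The footnote's column-sign accounting is slightly muddled—the $\overline{E\setminus J}$-columns already follow the $E$-columns in the natural order—but this is immaterial, since any such signs appear identically in both determinants and cancel in the product, exactly as in the paper.)
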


\begin{proof}
  We first show that $(A_1, A_2)$ is Pfaffian.
  Note that $J \subseteq E \cup \overline{E}$ with $\card{J} = n$ is a base of $A_2$ if and only if $\overline{E \setminus J} = J \cap \overline{E}$.
  Taking such a column subset $J$, put $T_1 \defeq A_1[J]$ and $T_2 \defeq A_2[J]$.
  By a row permutation on $T_1$ and $T_2$, we transform $T_1$ and $T_2$ to
  \begin{align}
    T_1 = \begin{pmatrix}
      S[J \cap E, J \cap E]      & O       \\
      S[E \setminus J, J \cap E] & I_{n-k}
    \end{pmatrix},
    \quad
    T_2 = \begin{pmatrix}
      I_k & O       \\
      O   & I_{n-k}
    \end{pmatrix},
  \end{align}
  where $k \defeq \card{J \cap E}$.
  Note that $\det T_1 \det T_2$ does not change since the same row permutation is performed on both $T_1$ and $T_2$.
  Now we have
  \begin{align}\label{eq:regular_delta_matroid_as_pfaffian_pair}
    \det T_1 \det T_2 = \det S[J \cap E, J \cap E].
  \end{align}
  Since $S$ is skew symmetric and principally unimodular,~\eqref{eq:regular_delta_matroid_as_pfaffian_pair} is either 0 or 1.
  Hence $(A_1, A_2)$ is Pfaffian with constant 1.

  The equation~\eqref{eq:regular_delta_matroid_as_pfaffian_pair} also implies that $B \subseteq E \cup \overline{E}$ is a common base of $(A_1, A_2)$ if and only if $\overline{E \setminus B} = B \cap \overline{E}$ and $B \cap E \in \mathcal{F}(S)$.
  Hence $B \in \cbase(A_1, A_2)$ corresponds to $B \cap E \in \feas(S)$ one-to-one.
  The latter part of the proposition is obtained by taking the symmetric difference with $X$.
\end{proof}

\Cref{prop:regular_delta_matroid_is_pfaffian_pair} yields the following corollary.

\begin{corollary}
  The number of feasible sets of a regular delta-matroid $\dmatroid(S) \symdif X$ is equal to $\det (S+I_n)$.
\end{corollary}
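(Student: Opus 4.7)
The plan is to apply the preceding proposition together with the counting formula for Pfaffian pairs, and then verify by direct computation that the resulting determinant is $\det(S+I_n)$.

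First, I would invoke \cref{prop:regular_delta_matroid_is_pfaffian_pair}, which guarantees that $(A_1, A_2) = (\begin{pmatrix} S & I_n \end{pmatrix}, \begin{pmatrix} I_n & I_n \end{pmatrix})$ is Pfaffian with constant $c = 1$, and that the map $B \mapsto (B \cap E) \symdif X$ is a bijection between $\cbase(A_1, A_2)$ and the feasible sets of $\dmatroid(S) \symdif X$. In particular, these two sets have the same cardinality, so it suffices to count $\cbase(A_1, A_2)$.

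Next, by \cref{prop:counting_formula_for_pfaffian_pairs} applied to the Pfaffian pair $(A_1, A_2)$ with constant $c = 1$, the number of common bases equals $\det A_1 \trsp{A_2}$. A direct block computation gives
\begin{align}
  A_1 \trsp{A_2}
  = \begin{pmatrix} S & I_n \end{pmatrix} \begin{pmatrix} I_n \\ I_n \end{pmatrix}
  = S + I_n,
\end{align}
so $\det A_1 \trsp{A_2} = \det(S + I_n)$, which completes the argument.

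There is no real obstacle here: the only things to check are that the bijection of \cref{prop:regular_delta_matroid_is_pfaffian_pair} does not depend on $X$ for cardinality purposes (it does not, since symmetric difference with a fixed set is an involution on $2^E$) and that the matrix product $A_1 \trsp{A_2}$ is evaluated correctly. Both are immediate, so the corollary follows as a one-line consequence of the preceding proposition.
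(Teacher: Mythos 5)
Your proposal is correct and follows exactly the route the paper intends: combine \cref{prop:regular_delta_matroid_is_pfaffian_pair} (Pfaffian with constant $1$, plus the bijection with feasible sets) with \cref{prop:counting_formula_for_pfaffian_pairs}, and compute $A_1 \trsp{A_2} = S + I_n$. The paper states the corollary as an immediate consequence without spelling out these steps, so your write-up is just a slightly more explicit version of the same argument.
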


Taking the symmetric difference with $X$ does not affect the number of feasible sets.
However, this changes the correspondence between elements of $E$ and columns of $(A_1, A_2)$.
Namely, labeling $t_j$ to an element $j \in E$ in $\dmatroid(S) \symdif X$ is equivalent to labeling $t_j$ in $(A_1, A_2)$ to $j \in E$ if $j \notin X$ and to $\overline{j} \in \overline{E}$ if $j \in X$.
Note this fact when applying the formula~\eqref{eq:symbolic_counting_pairs} to regular delta-matroids.

A combinatorial example of regular delta-matroids was given by Bouchet~\cite{Bouchet1995} as follows.
Let $G = (V, E)$ be a directed 4-regular Eulerian graph; that is, $G$ is strongly connected, and every vertex of $G$ is of in- and out-degree two.
A (directed) \emph{Euler tour} of $G$ is a tour that traverses every edge exactly once.
Any Euler tour $T$ of $G$ visits every vertex exactly twice as $G$ is 4-regular.
For each vertex $v \in V$ with incoming edges $e_1, e_2 \in E$ and outgoing edges $e_3, e_4 \in E$, there are exactly two possibilities of the way to visit $v$ twice; that is, $T$ traverses $e_3$ just after $e_1$ and $e_4$ just after $e_2$, or $e_4$ just after $e_1$ and $e_3$ just after $e_2$.
Therefore, fixing an Euler tour $U$ of $G$, we can represent every Euler tour $T$ of $G$ as a vertex subset $F_U(T) \subseteq V$ defined as follows: $v \in V$ is in $F_U(T)$ if and only if $T$ visits $v \in V$ in the different way from $U$.
The map $T \mapsto F_U(T)$ is injective.
Define $\dmatroid_U(G) \defeq (V, \mathcal{F}_U(G))$ with
$
  \mathcal{F}_U(G) \defeq \set{F_U(T)}[\text{$T$ is an Euler tour of $G$}].
$

For each vertex $v \in V$, we label the two directed edges leaving $v$ as $e_v^+$ and $e_v^-$.
Define a skew-symmetric matrix $S^U = \pbig{S^U_{u,v}}_{u,v \in V}$ over $\setQ$ as
\begin{align} \label{def:A_U}
  S^U_{u,v} \defeq \begin{cases*}
    +1 & ($U$ traverses edges in the order of $\cdots e_u^+ \cdots e_v^+ \cdots e_u^- \cdots e_v^- \cdots$), \\ % chktex 11
    -1 & ($U$ traverses edges in the order of $\cdots e_u^+ \cdots e_v^- \cdots e_u^- \cdots e_v^+ \cdots$), \\ % chktex 11
    \phantom{+}0 & (otherwise)
  \end{cases*}
\end{align}
for $u, v \in V$.
Then $S^U$ is principally unimodular~\cite[Theorem~11]{Bouchet1995} and $\dmatroid_U(G)$ coincides with $\dmatroid\pbig{S^U}$~\cite[Corollary~12]{Bouchet1995}.
Hence the set of Euler tours in a 4-regular directed graph can be represented as  common bases of a Pfaffian pair through \cref{prop:regular_delta_matroid_is_pfaffian_pair}.

\begin{remark}
  For an arbitrary directed graph $G = (V, E)$ each of whose vertex has the same in- and out-degree, there exists a formula, so-called the \emph{BEST theorem}, to count the number of Euler tours in $G$ (see, e.g.,~\cite[Theorem~6.36]{Cai2017}).
  This theorem states that the number of Euler tours in $G$ is
  \begin{align} \label{eq:BEST}
    T\prod_{v \in V} (d_v - 1)!,
  \end{align}
  where $d_v$ is the in-degree ($=\text{out-degree}$) of $v \in V$ and $T$ is the number of $r$-arborescences of $G$ with arbitrary root $r \in V$.

  In the case where $G$ is 4-regular, the BEST theorem~\eqref{eq:BEST} claims that the number of Euler tours in $G$ is equal to $T$, which can be computed by the directed matrix-tree theorem~\cite{Tutte1948}.
  Hence the Pfaffian-pair representation of Euler tours in 4-regular directed graph might seem useless.
  Nevertheless, this representation is needed when we apply the formula~\eqref{eq:symbolic_counting_pairs} that includes a variable $z$ because the corresponding ``symbolic'' version of the BEST theorem is yet unknown.
\end{remark}

\subsection{Arborescences}\label{sec:arborescences}
Let $G = (V, E)$ be a directed graph and take a vertex $r \in V$.
An $r$-\emph{arborescence}, or a \emph{directed tree} rooted at $r$, of $G$ is an edge subset $F \subseteq E$ satisfying the following:
\begin{enumerate}[label={(A\arabic*)}]
  \item $F$ is a spanning tree if the orientation is ignored.\label{item:A1}
  \item The in-degree of every $v \in V \setminus \set{r}$ is exactly one in $F$.\label{item:A2}
\end{enumerate}

It is well-known that $r$-arborescences can be represented as common bases of a matrix pair.
Let $A$ be the incidence matrix~\eqref{def:incidence_matrix} of $G$ and $R = \prn{R_{v,e}}_{v \in V, e \in E}$ a matrix over $\setQ$ defined by
\begin{align}
  R_{v,e} \defeq \begin{cases*}
    -1 & ($v = \partial^- e$), \\
    \phantom{+}0  & (otherwise)
  \end{cases*}
\end{align}
for $v \in V$ and $e \in E$.
The matrix $R$ is totally unimodular since each column has at most one nonzero entry.
Matroids represented by such matrices are called \emph{partition matroids}.
Put $A_1 \defeq A^{(r)}$ and $A_2 \defeq R^{(r)}$.
Then $B \subseteq E$ is a base of $A_1$ and $A_2$ if and only if $B$ satisfies~\ref{item:A1} and~\ref{item:A2}, respectively.
Hence common bases of $(A_1, A_2)$ correspond to $r$-arborescences of $G$.

The matrix $L \defeq A\trsp{R}$ is called the (directed) \emph{Laplacian matrix} of $G$.
The directed matrix-tree theorem due to Tutte~\cite{Tutte1948} claims that the $(r,r)$th cofactor of $L$, which is equal to $\det A_1 \trsp{A_2}$, coincides with the number of $r$-arborescences of $G$.
This implies:
\begin{proposition}\label{prop:arborescence_pfaffian_pair}
  The pair $(A_1, A_2)$ is Pfaffian with constant 1.
\end{proposition}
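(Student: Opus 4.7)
The plan is to compute $\det A_1[B] \cdot \det A_2[B]$ directly for any $r$-arborescence $B \subseteq E$ (equivalently, any common base of $(A_1, A_2)$), exploiting the bijection between $B$ and $V \setminus \set{r}$ given by $v \mapsto e_v$, where $e_v$ is the unique edge in $B$ with head $v$; existence and uniqueness follow from condition~\ref{item:A2}. I would index the columns of both $A_1[B]$ and $A_2[B]$ by $V \setminus \set{r}$ via this bijection, noting that the same column permutation applied to both matrices contributes a sign factor that squares out in the product, so one may work with the reindexed submatrices without loss of generality.

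Under this identification, the $(u, v)$ entry of $A_2[B]$ equals $R_{u, e_v}$, which is $-1$ if $u = \head{\vec{e_v}} = v$ and $0$ otherwise. Hence $A_2[B] = -I_n$ with $n = \card{V} - 1$, giving $\det A_2[B] = (-1)^n$.

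For $A_1[B]$, the $(u, v)$ entry equals $A_{u, e_v}$, which is $-1$ on the diagonal (since $v = \head{\vec{e_v}}$) and $+1$ precisely when $u = \tail{\vec{e_v}}$ is the parent of $v$ in the arborescence (zero otherwise). I would then order the vertices of $V \setminus \set{r}$ by a topological sort of the arborescence so that, for each $j$, the parent of $v_j$ is either $r$ or some $v_i$ with $i < j$. With rows and columns both listed in this order, the only off-diagonal nonzero entry in column $j$, if any, sits in a row $i < j$, so $A_1[B]$ is upper triangular with diagonal $-1$, yielding $\det A_1[B] = (-1)^n$. The product is $(-1)^{2n} = 1$, so $(A_1, A_2)$ is Pfaffian with constant $1$.

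No serious obstacle is expected: the argument is a direct triangularization driven by the tree structure. The only point requiring a moment of care is verifying that the simultaneous row and column reindexing produces matching sign factors in $\det A_1[B]$ and $\det A_2[B]$, which is automatic because both reindexings are governed by the same bijection between $V \setminus \set{r}$ and $B$. As a sanity check, substituting this into the Cauchy--Binet expansion~\eqref{eq:cauchy_binet} recovers Tutte's directed matrix-tree theorem, which is consistent with the discussion preceding the proposition.
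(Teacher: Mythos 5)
Your proof is correct, but it takes a different route from the paper's. The paper argues indirectly: since $A_1$ and $A_2$ are totally unimodular, each product $\det A_1[B]\det A_2[B]$ is $\pm 1$, and if the products were not all equal, cancellation in the Cauchy--Binet expansion~\eqref{eq:cauchy_binet} would make $\det A_1\trsp{A_2}$ strictly smaller than the number of arborescences, contradicting Tutte's directed matrix-tree theorem~\cite{Tutte1948}, which is invoked as a black box. You instead compute both subdeterminants directly: the head map $v \mapsto e_v$ (well-defined by~\ref{item:A2}) identifies $B$ with $V \setminus \set{r}$, making $A_2[B]$ a copy of $-I_n$ and, after ordering vertices so that parents precede children, making $A_1[B]$ upper triangular with diagonal $-1$, so the product is $(-1)^{2n} = 1$; the common column reindexing and row reordering contribute matching signs that cancel in the product. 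The one step you leave implicit is that such an ordering exists, i.e., that the tail map $v \mapsto \tail{\vec{e}_v}$ has no cycles among non-root vertices (equivalently, every edge of an arborescence points away from $r$); this follows immediately from~\ref{item:A1}, since a cycle of the tail map would produce a cycle in the underlying undirected tree. Your approach is exactly the direct argument the paper alludes to after the proposition (citing the proof of~\cite[Theorem~6.35]{Cai2017}): it is self-contained, avoids any appeal to the matrix-tree theorem, and combined with~\eqref{eq:cauchy_binet} it re-derives Tutte's theorem, whereas the paper's proof is shorter but logically depends on that theorem.
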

\begin{proof}
  Since both $A_1$ and $A_2$ are totally unimodular, $\det A_1[B] \det A_2[B]$ is $\pm1$ for all $B \in \cbase(A_1, A_2)$.
  If $(A_1, A_2)$ is not Pfaffian, $\det A_1 \trsp{A_2}$ is less than $\cbase(A_1, A_2)$ due to cancellations in the right-hand side of~\eqref{eq:cauchy_binet}.
  This contradicts the statement of the directed matrix-tree theorem.
\end{proof}

We can directly show that $(A_1, A_2)$ is Pfaffian without the directed matrix-tree theorem; see the proof of~\cite[Theorem~6.35]{Cai2017} for example.
Such a proof justifies the directed matrix-tree theorem via the Cauchy--Binet formula~\eqref{eq:cauchy_binet} the other way around.

\subsection{Perfect Matchings of Pfaffian Graphs}\label{sec:perfect_matchings}
A \emph{matching} of an undirected graph $G$ is an edge subset $M$ such that no two disjoint edges in $M$ share the same end.
We also define a matching for a directed graph by ignoring its orientation.
A matching $M$ is said to be \emph{perfect} if every vertex of $G$ is covered by some edge in $G$.
Matching theory has two faces depending on whether $G$ is bipartite or general.

First, let $G = (U \cup V, E)$ be a simple undirected bipartite graph.
The vertex set of $G$ is bipartitioned as $\set{U, V}$ with $n \defeq \card{U} = \card{V}$ and all edges are between $U$ and $V$.
We define totally unimodular matrices $A_U = \pbig{A^U_{u,e}}_{u \in U, e \in E}$ and $A_V = \pbig{A^V_{v,e}}_{v \in V, e \in E}$ as
\begin{align}\label{def:bipartite_matching_A1_A2}
  A^U_{u,e} \defeq \begin{cases}
    +1           & (u \in e),          \\
    \phantom{+}0 & (\text{otherwise}),
  \end{cases}
  \quad
  A^V_{v,e} \defeq \begin{cases}
    +1           & (v \in e),         \\
    \phantom{+}0 & (\text{otherwise})
  \end{cases}
\end{align}
for $u \in U, v \in V$ and $e \in E$.
Note that both $\matroid(A_U)$ and $\matroid(A_V)$ are partition matroids.
Then $M \subseteq E$ is a perfect matching of $G$ if and only if $M \in \cbase(A_U, A_V)$.

Suppose that vertices in $U$ and $V$ are ordered as $u_1, \ldots, u_n$ and $v_1, \ldots, v_n$.
For $i \in \intset{n}$, the $i$th rows of $A_U$ and $A_V$ are associated with $u_i$ and $v_i$, respectively.
A perfect matching $M$ of $G$ uniquely corresponds to a permutation $\sigma \in \sym_n$ on $\intset{n}$ such that $\set[\big]{u_i, v_{\sigma(i)}} \in M$ for all $i \in \intset{n}$.
Denote this permutation by $\sigma_M$.
We define the \emph{sign} of $M$ (with respect to the current ordering of $U$ and $V$) as $\sgn M \defeq \sgn \sigma_M$.

Let $z = \prn{z_e}_{e \in E}$ be a vector of distinct indeterminates indexed by $E$.
The matrix $A_1 D(z) \trsp{A_2}$ is called the \emph{Edmonds matrix} of $G$.
Its $(i,j)$th entry is $z_e$ if $e = \set{u_i, v_j} \in E$ and 0 otherwise for $i, j \in \intset{n}$.
By the definition~\eqref{def:determinant} of the determinant, it holds
\begin{align}\label{eq:edmonds_1}
  \det A_1 D(z) \trsp{A_2} = \sum_{M \in \cbase(A_1, A_2)} \sgn M \prod_{e \in M} z_e.
\end{align}
On the other hand, by~\eqref{eq:cauchy_binet_1}, we have
\begin{align}\label{eq:edmonds_2}
  \det A_1 D(z) \trsp{A_2} = \sum_{M \in \cbase(A_1, A_2)} \det A_1[M] \det A_2[M] \prod_{e \in M} z_e.
\end{align}
Comparing the coefficients of~\eqref{eq:edmonds_1} and~\eqref{eq:edmonds_2}, we have the following.

\begin{lemma}\label{lem:bipartite_matching_sign}
  The sign of a perfect matching $M$ of $G$ is equal to $\det A_1[M] \det A_2[M]$.
\end{lemma}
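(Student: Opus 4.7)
The plan is to read off the identity by comparing coefficients in the two already-displayed expansions of $\det A_1 D(z) \trsp{A_2}$. Both \eqref{eq:edmonds_1} and \eqref{eq:edmonds_2} write this determinant as a sum indexed by perfect matchings $M$ of $G$, with the squarefree monomial $\prod_{e \in M} z_e$ attached to $M$; in the first the scalar coefficient is $\sgn M$, in the second it is $\det A_1[M] \det A_2[M]$. Since distinct perfect matchings have distinct edge sets and the indeterminates $\{z_e\}_{e \in E}$ are algebraically independent over $\setK$, the family of monomials $\prod_{e \in M} z_e$ is linearly independent, so matching the coefficient of each such monomial yields the claimed equality.

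The only minor piece that needs verification is equation \eqref{eq:edmonds_1} itself, since \eqref{eq:edmonds_2} is an immediate instance of \eqref{eq:cauchy_binet_1}. I would unfold the permutation definition \eqref{def:determinant} applied to the Edmonds matrix, whose $(i,j)$-entry equals $z_e$ when $e = \{u_i, v_j\} \in E$ and $0$ otherwise (this follows from \eqref{def:bipartite_matching_A1_A2} together with the fact that $G$ is bipartite between $U$ and $V$). The product $\prod_{i=1}^n \prn{A_1 D(z) \trsp{A_2}}_{i, \sigma(i)}$ is nonzero precisely when $\{u_i, v_{\sigma(i)}\} \in E$ for every $i$, i.e., when the collection $\{\{u_i, v_{\sigma(i)}\} : i \in \intset{n}\}$ is a perfect matching $M$ and $\sigma = \sigma_M$; the nonzero value is then $\prod_{e \in M} z_e$. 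Attaching $\sgn \sigma_M = \sgn M$ and summing over $M$ gives \eqref{eq:edmonds_1}.

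There is essentially no obstacle here: the whole argument reduces to bookkeeping once one observes that the permutation expansion and the Cauchy--Binet expansion of the same polynomial must agree on every monomial. The only subtleties to confirm are that $\sigma_M$ is a well-defined element of $\sym_n$ (immediate because $M$ being perfect forces a bijection from $U$ to $V$) and that the map $M \mapsto \prod_{e \in M} z_e$ is injective (immediate because $M$ is determined by its edge set), both of which are transparent.
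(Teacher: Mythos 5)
Your proof is correct and follows the same approach as the paper: the paper also obtains the lemma by ``comparing the coefficients'' of \eqref{eq:edmonds_1} and \eqref{eq:edmonds_2}, treating \eqref{eq:edmonds_1} as a direct consequence of the definition~\eqref{def:determinant} applied to the Edmonds matrix and \eqref{eq:edmonds_2} as the instance of~\eqref{eq:cauchy_binet_1}. Your added remarks about the linear independence of the monomials $\prod_{e\in M} z_e$ and the well-definedness of $\sigma_M$ are exactly the bookkeeping the paper leaves implicit.
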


Consider an orientation $\vec{G} = (U \cup V, \vec{E})$ of $G$.
We define a vector $s = \prn{s_e}_{e \in E}$ indexed by $E$ as
\begin{align}
  s_e \defeq \begin{cases}
    +1 & (\vec{e} = (u, v)), \\
    -1 & (\vec{e} = (v, u))
  \end{cases}
\end{align}
for $e = \set{u, v} \in E$ with $u \in U$ and $v \in V$.
Put $\vec{A}_1 \defeq A_1 D(s)$ and $\vec{A}_2 \defeq A_2$.
Namely, $\vec{A}_1$ is a matrix obtained from $A_1$ by reversing the sign of every column corresponding to an edge from $V$ to $U$.
Note that $\cbase(A_1, A_2) = \cbase(\vec{A}_1, \vec{A}_2)$.
The matrix $N = \pbig{N_{i,j}}_{i,j \in \intset{n}} \defeq \vec{A}_1 \trsp{{\vec{A}_2}} = A_1 D(s) \trsp{A}_2$ is called the (\emph{directed}) \emph{bipartite adjacency matrix} of $\vec{G}$.
Its $(i,j)$th entry $N_{i,j}$ is
\begin{align}\label{eq:bipartite_adjacency_matrix}
  N_{i,j} = \begin{cases}
              +1 & ((u_i, v_j) \in \vec{E}), \\
              -1 & ((v_j, u_i) \in \vec{E}), \\
    \phantom{+}0 & (\text{otherwise})
  \end{cases}
\end{align}
for $i, j \in \intset{n}$.
Recall that $\vec{M}$ is also called a matching of $\vec{G}$ for a matching $M$ of $G$.
We define the \emph{sign} of a perfect matching $\vec{M}$ of $\vec{G}$ as
\begin{align}\label{def:sign_of_directed_perfect_bipartite_matching}
  \sgn \vec{M}
  \defeq \sgn M \prod_{e \in M} s_e
  = \sgn M \prod_{i=1}^n N_{i, \sigma_M(i)}
  \in \set{+1, -1}.
\end{align}

\begin{lemma}\label{lem:bipartite_matching_sign_directed}
  The sign of a perfect matching $\vec{M}$ of $\vec{G}$ is equal to $\det \vec{A}_1[M] \det \vec{A}_2[M]$.
\end{lemma}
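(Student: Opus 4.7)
The plan is to reduce this directly to \cref{lem:bipartite_matching_sign} by exploiting the multiplicativity of the determinant under the column-rescaling $D(s)$. First, I would observe that restricting $\vec{A}_1 = A_1 D(s)$ to the columns indexed by $M$ gives $\vec{A}_1[M] = A_1[M] \cdot D(s)[M, M]$, where $D(s)[M, M]$ is the diagonal matrix with entries $s_e$ for $e \in M$. Taking determinants yields $\det \vec{A}_1[M] = \det A_1[M] \cdot \prod_{e \in M} s_e$, while $\det \vec{A}_2[M] = \det A_2[M]$ since $\vec{A}_2 = A_2$.

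Combining these two identities and applying \cref{lem:bipartite_matching_sign}, we obtain
\begin{align}
  \det \vec{A}_1[M] \det \vec{A}_2[M]
  = \det A_1[M] \det A_2[M] \prod_{e \in M} s_e
  = \sgn M \prod_{e \in M} s_e,
\end{align}
which equals $\sgn \vec{M}$ by the definition~\eqref{def:sign_of_directed_perfect_bipartite_matching}. There is no genuine obstacle here: the argument is pure bookkeeping about how the diagonal factor $D(s)$ restricts to a square submatrix and factors out of the determinant, with the substance already contained in \cref{lem:bipartite_matching_sign}.
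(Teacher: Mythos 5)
Your proposal is correct and follows essentially the same route as the paper's proof: factor out $D(s)$ to get $\det \vec{A}_1[M] = \det A_1[M]\prod_{e \in M} s_e$ and $\det \vec{A}_2[M] = \det A_2[M]$, then invoke \cref{lem:bipartite_matching_sign} and the definition~\eqref{def:sign_of_directed_perfect_bipartite_matching}. No gaps.
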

\begin{proof}
  By \cref{lem:bipartite_matching_sign}, we have $\sgn M = \det A_1[M] \det A_2[M]$.
  We also have $\det \vec{A}_1[M] = \det A_1[M] \prod_{e \in M} s_e$ and $\det \vec{A}_2[M] = \det A_2[M]$ by the definitions of $\vec{A}_1$ and $\vec{A}_2$.
  Hence the claim holds.
\end{proof}

An orientation $\vec{G}$ of $G$ is called \emph{Pfaffian} if the signs of all perfect matchings of $\vec{G}$ are the same~\cite{Kasteleyn1961,Temperley1961}.
The following proposition, which was observed by Webb~\cite{Webb2004}, holds from \cref{lem:bipartite_matching_sign_directed}.

\begin{theorem}[{\cite[Observation~3.7]{Webb2004}}]\label{thm:bipartite_matching_pfaffian}
  Let $G$ be a bipartite graph, $\vec{G}$ an orientation of $G$, and $(\vec{A}_1, \vec{A}_2)$ the matrix pair defined above from $\vec{G}$.
  Then $\cbase(\vec{A}_1, \vec{A}_2)$ coincides with the set of perfect matchings of $G$.
  In addition, if $\vec{G}$ is Pfaffian, $(\vec{A}_1, \vec{A}_2)$ is also Pfaffian with constant $\sgn \vec{M}$, where $M$ is an arbitrary perfect matching of $G$.
\end{theorem}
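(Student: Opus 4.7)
The plan is to prove the two claims of the theorem in order, leveraging the definitions and lemmas already established in this subsection.

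For the first claim, I would start from the fact (noted immediately after defining $A_U$ and $A_V$) that $M \subseteq E$ is a perfect matching of $G$ if and only if $M \in \cbase(A_1, A_2)$, where $A_1 = A_U$ and $A_2 = A_V$. Since $\vec{A}_1 = A_1 D(s)$ and $\vec{A}_2 = A_2$, and $D(s)$ is a diagonal matrix with all entries in $\set{+1, -1}$, for any column subset $M$ the submatrix $\vec{A}_1[M]$ is obtained from $A_1[M]$ by rescaling some columns by $-1$. In particular $\det \vec{A}_1[M] = \pm \det A_1[M]$, so $A_1[M]$ is nonsingular if and only if $\vec{A}_1[M]$ is. Therefore $\cbase(A_1, A_2) = \cbase(\vec{A}_1, \vec{A}_2)$, and this set coincides with the perfect matchings of $G$.

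For the second claim, I would invoke \cref{lem:bipartite_matching_sign_directed} directly: for every perfect matching $M$ of $G$, $\det \vec{A}_1[M] \det \vec{A}_2[M] = \sgn \vec{M}$. If $\vec{G}$ is Pfaffian, then by definition $\sgn \vec{M}$ takes a common value $c \in \set{+1, -1}$ across all perfect matchings $M$, so $\det \vec{A}_1[B] \det \vec{A}_2[B] = c$ holds for every $B \in \cbase(\vec{A}_1, \vec{A}_2)$. Since $c \ne 0$, this exactly matches \cref{def:pfaffian_pair}, and $(\vec{A}_1, \vec{A}_2)$ is Pfaffian with constant $\sgn \vec{M}$ for any perfect matching $M$.

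There is no real obstacle here: the theorem is essentially an immediate corollary of \cref{lem:bipartite_matching_sign_directed} together with the observation that passing from $A_1$ to $\vec{A}_1 = A_1 D(s)$ preserves the common base set. The only bookkeeping is verifying that the rescaling by $D(s)$ does not destroy the bijection with perfect matchings, which follows from $D(s)$ being an invertible diagonal matrix of signs.
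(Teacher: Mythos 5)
Your proof is correct and follows essentially the same route as the paper: the first claim is the identity $\cbase(A_1,A_2)=\cbase(\vec{A}_1,\vec{A}_2)$ (which the paper notes in passing, and you justify via invertibility of $D(s)$) combined with the characterization $\cbase(A_1,A_2)=\{\text{perfect matchings}\}$, and the second claim is an immediate application of \cref{lem:bipartite_matching_sign_directed} to \cref{def:pfaffian_pair}. The paper itself states the result simply as a consequence of \cref{lem:bipartite_matching_sign_directed}, so you have filled in exactly the intended reasoning.
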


We extend the above arguments to nonbipartite graphs.
Let $G = (V, E)$ be a simple undirected graph that is not necessarily bipartite.
Suppose that $\card{V}$ is even and vertices are ordered as $v_1, \ldots, v_{2n}$.
We define a totally unimodular matrix $A \in \setR^{\card{V} \times 2\card{E}}$ as follows: each row is indexed by $v \in V$ and each two columns are associated with an edge $e \in E$.
%Denote by $e^+$ and $e^-$ the column indices of $A$ associated with $e \in E$.
For $v \in V$ and $e = \set{v_i, v_j} \in E$ with $i < j$, the corresponding $1 \times 2$ submatrix of $A$ to $v$ and $e$ is defined to be $\begin{pmatrix} +1 & 0 \end{pmatrix}$ if $v = v_i$, $\begin{pmatrix} 0 & +1 \end{pmatrix}$ if $v = v_j$ and $O$ otherwise.
We regard each $e \in E$ as a line of $A$.
Then $M \subseteq E$ is a perfect matching of $G$ if and only if $M \in \pbase(A, E)$~\cite[Chapter~9.2]{Lawler1976}.

Recall that $F_{2n}$ is the subset of $\sym_{2n}$ defined by~\eqref{def:F}.
A perfect matching $M$ of $G$ uniquely corresponds to a permutation $\sigma \in F_{2n}$ such that $\set[\big]{v_{\sigma(2i-1)}, v_{\sigma(2i)}} \in M$ for all $i \in \intset{n}$.
Denote this permutation by $\sigma_M$.
We define the \emph{sign} of $M$ as $\sgn M \defeq \sgn \sigma_M$.

Let $z = \prn{z_e}_{e \in E}$ be a vector of distinct indeterminates indexed by $E$.
The skew-symmetric matrix $A \Delta(z) \trsp{A}$ is called the \emph{Tutte matrix} of $G$.
Its $(i,j)$th entry is equal to $z_e$ if $e = \set{u_i, v_j} \in E$ and $i < j$, to $-z_e$ if $e = \set{u_i, v_j} \in E$ and $i > j$ and to 0 otherwise for $i, j \in \intset{n}$.
By the definition~\eqref{def:pfaffian} of the Pfaffian, it holds
\begin{align}\label{eq:tutte_1}
  \pf A \Delta(z) \trsp{A} = \sum_{M \in \pbase(A, E)} \sgn M \prod_{e \in M} z_e.
\end{align}
We also have
\begin{align}\label{eq:tutte_2}
  \pf A \Delta(z) \trsp{A} = \sum_{M \in \pbase(A, E)} \det A[M] \prod_{e \in M} z_e.
\end{align}
by~\eqref{eq:pfaffian_cauchy_binet_1}.
Hence the following holds as an extension of \cref{lem:bipartite_matching_sign}.

\begin{lemma}\label{lem:matching_sign}
  The sign of a perfect matching $M$ of $G$ is equal to $\det A[M]$.
\end{lemma}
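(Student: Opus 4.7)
The plan is to equate the two expressions for $\pf A\Delta(z)\trsp{A}$ given in \eqref{eq:tutte_1} and \eqref{eq:tutte_2}, and then read off the lemma by comparing coefficients of distinct monomials.

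First, I would verify \eqref{eq:tutte_2}: this is a direct application of \cref{prop:pfaffian_cauchy_binet} (equation~\eqref{eq:pfaffian_cauchy_binet_1}) with $L = E$, which requires no additional computation beyond what is already recorded there.

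Second, I would establish \eqref{eq:tutte_1} by computing $\pf A\Delta(z)\trsp{A}$ directly from the definition~\eqref{def:pfaffian} of the Pfaffian, after identifying $A\Delta(z)\trsp{A}$ as the Tutte matrix. From the block structure of $A$ and the rewriting in~\eqref{eq:rewrite_ADeltaA}, the $(v_i,v_j)$-entry for $i<j$ equals $z_e$ if $e=\{v_i,v_j\}\in E$ and $0$ otherwise, so the $(v_j,v_i)$-entry is its negative. Since $F_{2n}$ enforces $\sigma(2k-1)<\sigma(2k)$, the term indexed by $\sigma$ picks up only the ``upper-triangular'' entries $z_e$ with no stray sign. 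The product $\prod_{k=1}^n (A\Delta(z)\trsp{A})_{\sigma(2k-1),\sigma(2k)}$ is therefore nonzero precisely when $\{v_{\sigma(2k-1)},v_{\sigma(2k)}\}\in E$ for every $k$, and this condition puts the surviving $\sigma\in F_{2n}$ into bijection with $\pbase(A,E)$ via $\sigma=\sigma_M$. The resulting term is $\sgn\sigma_M \prod_{e\in M} z_e = \sgn M \prod_{e\in M} z_e$, yielding~\eqref{eq:tutte_1}.

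Finally, since $\{z_e\}_{e\in E}$ are distinct indeterminates, each perfect matching $M$ contributes the squarefree monomial $\prod_{e\in M}z_e$, and distinct matchings yield distinct monomials (their edge sets differ). Equating~\eqref{eq:tutte_1} and~\eqref{eq:tutte_2} as polynomials in $\setK[z]$ and matching coefficients monomial-by-monomial gives $\sgn M = \det A[M]$ for every $M\in\pbase(A,E)$.

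The one point requiring care is the sign bookkeeping in the direct expansion of the Pfaffian of the Tutte matrix, specifically confirming that the restriction $\sigma(2k-1)<\sigma(2k)$ built into $F_{2n}$ matches the convention that the $(i,j)$-entry of the Tutte matrix is $+z_e$ (not $-z_e$) for $i<j$, so that no extra sign intrudes into~\eqref{eq:tutte_1}. Beyond this, the argument is a routine comparison of coefficients and mirrors the bipartite proof of \cref{lem:bipartite_matching_sign}.
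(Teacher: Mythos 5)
Your proposal is correct and follows exactly the paper's route: it derives~\eqref{eq:tutte_2} from \cref{prop:pfaffian_cauchy_binet}, derives~\eqref{eq:tutte_1} by expanding the Pfaffian of the Tutte matrix via~\eqref{def:pfaffian}, and compares coefficients of the distinct squarefree monomials $\prod_{e\in M}z_e$. The sign check you flag (that the $F_{2n}$ condition $\sigma(2k-1)<\sigma(2k)$ picks up $+z_e$ entries) is indeed the only point needing care, and it works out as you describe.
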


In the same way as the bipartite case, we next consider an orientation $\vec{G} = (V, \vec{E})$ of $G$.
Define a vector $s = \prn{s_e}_{e \in E}$ indexed by $E$ as follows: for each $\vec{e} = (v_i, v_j) \in \vec{E}$, we set
\begin{align}
  s_e \defeq \begin{cases}
    +1 & (i < j), \\
    -1 & (i > j).
  \end{cases}
\end{align}
We also construct a symmetric block diagonal matrix $X = \diag\prn{X_e}_{e \in E}$, where $X_e$ is a $2 \times 2$ matrix defined by $X_e \defeq I_2$ if $s_e = +1$ and $X_e \defeq \begin{psmallmatrix} \phantom{+}0 & +1 \\ +1 & \phantom{+}0 \end{psmallmatrix}$ if $s_e = -1$ for $e \in E$.
Put $\vec{A} \defeq AX$, i.e., $\vec{A}$ is obtained from $A$ by interchanging two columns associated with each $(v_i, v_j) \in \vec{E}$ with $i > j$.
Note that $X\Delta(\onevec)X = \Delta(s)$ and $\pbase(A, E) = \pbase(\vec{A}, E)$ hold.
The skew-symmetric matrix $S = \pbig{S_{i,j}}_{i,j \in \intset{2n}} \defeq \vec{A} \Delta(\onevec) \trsp{\vec{A}} = A \Delta(s) \trsp{A}$ is called the (\emph{directed}) \emph{skew-symmetric adjacency matrix} of $\vec{G}$.
It can be confirmed that
\begin{align}\label{eq:skew_symmetric_adjacency_matrix}
  S_{i,j} = \begin{cases}
              +1 & ((v_i, v_j) \in \vec{E}), \\
              -1 & ((v_j, v_i) \in \vec{E}), \\
    \phantom{+}0 & (\text{otherwise})
  \end{cases}
\end{align}
holds for $i, j \in \intset{2n}$.
For a perfect matching $M$ of $G$, it holds $s_e = S_{\sigma_M(2i-1), \sigma_M(2i)}$ for every $e = \set[\big]{v_{\sigma_M(2i-1)}, v_{\sigma_M(2i)}} \in M$ since $\sigma_M(2i-1) < \sigma_M(2i)$.
We define the \emph{sign} of a perfect matching $\vec{M}$ of $\vec{G}$ as
\begin{align}
  \sgn \vec{M}
  \defeq \sgn M \prod_{e \in M} s_e
  = \sgn M \prod_{i=1}^n S_{\sigma_M(2i-1), \sigma_M(2i)}
  \in \set{+1, -1}.
\end{align}

\begin{lemma}\label{lem:matching_sign_directed}
  The sign of a perfect matching $\vec{M}$ of $\vec{G}$ is equal to $\det \vec{A}[M]$.
\end{lemma}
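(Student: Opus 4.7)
The plan is to mimic the strategy used for the bipartite analogue (\cref{lem:bipartite_matching_sign_directed}): reduce $\det \vec{A}[M]$ to $\det A[M]$ via the multiplicativity of the determinant, and then invoke \cref{lem:matching_sign} together with the explicit form of the column-swap matrix $X$ to pick up exactly the factor $\prod_{e \in M} s_e$ that appears in the definition of $\sgn \vec{M}$.

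First I would use the block-diagonal structure of $X = \diag(X_e)_{e \in E}$: since the columns of $A$ (and hence of $\vec{A} = AX$) are grouped by lines $e \in E$, and since selecting a perfect matching $M$ amounts to selecting all $2\card{M}$ columns corresponding to the lines in $M$, we have the matrix identity
\begin{align}
  \vec{A}[M] = A[M] \cdot X[M, M],
\end{align}
where $X[M, M] = \diag(X_e)_{e \in M}$ is itself block-diagonal with $2 \times 2$ blocks.

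Next I would evaluate $\det X[M,M]$. By block-diagonality, $\det X[M, M] = \prod_{e \in M} \det X_e$. A direct inspection of the two cases in the definition of $X_e$ gives $\det X_e = +1$ when $s_e = +1$ (since $X_e = I_2$) and $\det X_e = -1$ when $s_e = -1$ (since $X_e$ swaps the two columns of $I_2$). Hence $\det X_e = s_e$ for every $e$, and therefore $\det X[M,M] = \prod_{e \in M} s_e$.

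Combining these via the multiplicativity of the determinant and applying \cref{lem:matching_sign} yields
\begin{align}
  \det \vec{A}[M]
  = \det A[M] \cdot \det X[M, M]
  = \sgn M \cdot \prod_{e \in M} s_e
  = \sgn \vec{M},
\end{align}
which is exactly the claim. There is no substantive obstacle here: the only point requiring a little care is to verify that the block decomposition of $\vec{A}[M]$ is genuinely respected when we restrict to the columns of $M$, but this is immediate because lines are by definition the column-pairs on which $X$ acts blockwise, and $M \subseteq E = L$ is a union of such lines.
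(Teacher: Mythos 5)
Your proof is correct and follows essentially the same route as the paper's: the paper also reduces $\det \vec{A}[M]$ to $\det A[M]\prod_{e\in M}s_e$ and then invokes \cref{lem:matching_sign}. Your factorization $\vec{A}[M] = A[M]\,X[M,M]$ with $\det X_e = s_e$ simply makes explicit the step the paper states without elaboration.
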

\begin{proof}
  The claim follows from $\det \vec{A}[M] = \det A[M] \prod_{e \in M} s_e$ and $\sgn M = \det A[M]$ by \cref{lem:matching_sign}.
\end{proof}

An orientation $\vec{G}$ of $G$ is also called \emph{Pfaffian} if the signs of all perfect matchings of $\vec{G}$ are constant.
The following holds from \cref{lem:matching_sign_directed} as a generalization of \cref{thm:bipartite_matching_pfaffian}.

\begin{theorem}\label{thm:matching_pfaffian}
  Let $G = (V, E)$ be a graph, $\vec{G}$ an orientation of $G$ and $\vec{A}$ the matrix defined above from $\vec{G}$.
  Then $\pbase(\vec{A}, E)$ coincides with the set of perfect matchings of $G$.
  In addition, if $\vec{G}$ is Pfaffian, $(\vec{A}, E)$ is also Pfaffian with constant $\sgn \vec{M}$, where $M$ is an arbitrary perfect matching of $G$.
\end{theorem}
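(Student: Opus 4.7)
The plan is to deduce this theorem almost directly from Lemma \ref{lem:matching_sign_directed} together with the construction of $\vec{A}$. The statement has two parts, which I would handle separately.

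First, I would verify that $\pbase(\vec{A}, E)$ equals the set of perfect matchings of $G$. By the Lawler-style construction of $A$ recalled just before the theorem, $M \subseteq E$ is a perfect matching of $G$ if and only if $M \in \pbase(A, E)$. Now observe that $\vec{A} = AX$, where $X = \diag(X_e)_{e \in E}$ is a block-diagonal matrix whose $2 \times 2$ blocks are either $I_2$ or a column-swap $\begin{psmallmatrix} 0 & 1 \\ 1 & 0 \end{psmallmatrix}$. Right-multiplication by $X$ only permutes the two columns within each line $e \in E$, and in particular preserves the line partition and the nonsingularity of any line-indexed column submatrix. Hence $\pbase(\vec{A}, E) = \pbase(A, E)$, and the first claim follows.

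Second, I would use the Pfaffian assumption on $\vec{G}$ to pin down the constant. By definition, there exists $c \in \set{+1, -1}$ such that $\sgn \vec{M} = c$ for every perfect matching $M$ of $G$. Applying Lemma \ref{lem:matching_sign_directed}, we get $\det \vec{A}[M] = \sgn \vec{M} = c$ for every $M \in \pbase(\vec{A}, E)$. Since $c \neq 0$, Definition \ref{def:pfaffian_parity} yields that $(\vec{A}, E)$ is Pfaffian with constant $c = \sgn \vec{M}$, as required.

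There is essentially no difficult step: the construction of $\vec{A}$ via the sign-twisting matrix $X$ was engineered precisely so that $\det \vec{A}[M]$ carries the orientation signs, and all the substantive work was already carried out in \cref{lem:matching_sign,lem:matching_sign_directed}. The only point worth checking with a bit of care is that $\pbase(A, E) = \pbase(\vec{A}, E)$, but this is immediate from the block structure of $X$.
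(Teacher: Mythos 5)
Your proposal is correct and follows essentially the same route as the paper: the paper states that the theorem "holds from \cref{lem:matching_sign_directed}" and gives no further proof, having already noted the identity $\pbase(A, E) = \pbase(\vec{A}, E)$ in the text preceding that lemma. Your argument simply spells out those two steps explicitly — the invariance of the parity-base set under right-multiplication by the block-diagonal column-swap matrix $X$, and then the application of \cref{lem:matching_sign_directed} together with \cref{def:pfaffian_parity} — which is exactly what the paper intends.
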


\subsection{Spanning Hypertrees of 3-Pfaffian 3-Uniform Hypergraphs}\label{sec:spanning_hypertrees}

We introduce basic notions of hypergraphs.
A \emph{3-graph}, or a \emph{3-uniform hypergraph} is a pair $H = (V, \mathcal{E})$, where $V$ is a finite set and $\mathcal{E} \subseteq \binom{V}{3}$ is a subset of unordered triples of elements in $V$.
Elements in $V$ and $\mathcal{E}$ are called \emph{vertices} and \emph{hyperedges} of $H$, respectively.
An $s$--$t$ \emph{path} of $H$ is an alternating sequence $s = v_0, e_1, v_1, \ldots, e_l, v_l = t$ of disjoint vertices $v_0, \ldots, v_l \subseteq V$ and disjoint hyperedges $e_1, \ldots, e_l \subseteq \mathcal{E}$ such that $\set{v_{i-1}, v_i} \subseteq e_i$ for $i \in \intset{l}$.
A \emph{cycle} is an alternating sequence satisfying the above condition with $v_0 = v_l$ and $l \ge 2$.
We regard paths and cycles as hyperedge subsets.
A hyperedge subset $F \subseteq \mathcal{E}$ is called \emph{connected} if there exists a path in $F$ between any two vertices contained in some hyperedge in $F$.
A \emph{hyperforest} of $H$ is a hyperedge subset having no cycles, and a \emph{hypertree} is a connected hyperforest.
A hypertree $T$ is said to be \emph{spanning} if any vertex of $H$ is contained in some hyperedge in $T$.
It holds $\card{V} = 2\card{T} + 1$ for any spanning hypertree $T$.
Namely, if $H$ has a spanning hypertree, $\card{V}$ must be odd.

Lovász~\cite{Lovasz1980} showed that one can find a spanning hypertree of a 3-graph by solving the linear (graphic) matroid parity problem as follows.
Let $H = (V, \mathcal{E})$ be a 3-graph with $\card{V}$ being odd.
We construct a multiple graph $G = (V, E)$ as follows: starting from $E = \varnothing$, for each hyperedge $\set{u, v, w} \in \mathcal{E}$, we arbitrary add two edges out of $\set{u, v}, \set{v, w}, \set{w, u}$ to $E$.
Now each hyperedge in $\mathcal{E}$ is associated with two edges in $E$.
By this association, we regard each hyperedge in $\mathcal{E}$ as a line of $E$.
It is easily checked that $T \subseteq \mathcal{E}$ is a spanning hypertree of $H$ if and only if the edge subset of $E$ corresponding to $T$ is a spanning tree of $G$.
Hence spanning hypertrees of $H$ correspond to parity bases of $\matroid(G)$ with respect to the line set $\mathcal{E}$.

Masbaum--Vaintrob~\cite{Masbaum2002} established the \emph{Pfaffian matrix-tree theorem} for enumerating hypertrees of a 3-graph.
To describe the theorem, we orient each hyperedge and then define the \emph{sign} of a spanning hypertree based on the description of~\cite{Goodall2011}.
Let $H = (V, \mathcal{E})$ be a 3-graph and suppose that vertices are ordered as $v_1, \ldots, v_{2n+1}$, where $\card{V} = 2n+1$.
Let $\vec{H} = (V, \vec{\mathcal{E}})$ be an orientation of $H$, i.e., $\vec{\mathcal{E}}$ is the set of ordered triples.
Elements in $\vec{\mathcal{E}}$ are called \emph{directed hyperedges}.
As in the orientation of graphs, we denote the directed hyperedge corresponding to $e \in \mathcal{E}$ by $\vec{e}$ and the set of directed hyperedges corresponding  to $F \subseteq \mathcal{E}$ by $\vec{F}$.
For each directed hyperedge $\vec{e} = (v_i, v_j, v_k) \in \vec{\mathcal{E}}$, we associate a cyclic permutation $\sigma_{\vec{e}} \defeq (i\; j\; k)$ with $\vec{e}$.
Arbitrary fixing an ordering of hyperedges in a spanning hypertree $T$, we put
\begin{align}\label{eq:prod_sym}
  \sigma \defeq \prod_{e \in T} \sigma_{\vec{e}}.
\end{align}
Then $\sigma$ forms a cyclic permutation $(s_1\; s_2\; \cdots\; s_{2n+1})$ over $\intset{2n+1}$ as shown in~\cite[Proposition~3.4]{Masbaum2002}.
Let $s \in \sym_{2n+1}$ be a permutation defined by $s(i) \defeq s_i$ for $i \in \intset{2n+1}$.
We define the \emph{sign} of $\vec{T}$ as $\sgn \vec{T} \defeq \sgn s$.
While $\sigma$ and $s$ depend on the order of the product in~\eqref{eq:prod_sym}, $\sgn \vec{T}$ is well-defined~\cite[Proposition~3.8]{Masbaum2002}.

Now we are ready to describe the Pfaffian matrix-tree theorem.
Let $z = \prn{z_e}_{e \in \mathcal{E}}$ be a vector of distinct indeterminates indexed by $\mathcal{E}$.
For distinct $i,j,k \in \intset{2n+1}$, let $m_{ijk}$ be the inversion number of $(i,j,k)$ and put $\epsilon_{ijk} \defeq \prn{-1}^{m_{ijk}}$.
Namely, $\epsilon_{ijk} = +1$ if $i < j < k$, $j < k < i$, or $k < i < j$ and $\epsilon_{ijk} = -1$ otherwise.
We define a skew-symmetric matrix $\Lambda(z) = \prn{\Lambda_{i,j}(z)}_{i,j \in \intset{2n+1}}$ by
\begin{align}\label{def:Lambda-hypertree}
  \Lambda_{i,j}(z) \defeq
  \sum_{\condit{k \in \intset{2n+1}}[e = \set{v_i, v_j, v_k} \in \mathcal{E}]} \epsilon_{ijk} z_e.
\end{align}
For $r \in \intset{2n+1}$, let $\Lambda^{(r)}(z)$ denote the submatrix of $\Lambda(z)$ obtained by removing the $r$th row and column.

\begin{theorem}[{Pfaffian matrix-tree theorem~\cite[Theorem~5.3]{Masbaum2002}}]\label{thm:pfaffian_matrix_tree_theorem}
  In the above setting, it holds
  \begin{align}\label{eq:pfaffian_matrix_tree_theorem}
    \pf \Lambda^{(r)}(z) = \prn{-1}^{r-1} \sum_{T \in \mathcal{T}} \sgn \vec{T} \prod_{e \in T} z_e
  \end{align}
  for $r \in \intset{2n+1}$, where $\mathcal{T}$ is the set of all spanning hypertrees of $H$.
\end{theorem}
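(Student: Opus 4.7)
My plan is to derive~\eqref{eq:pfaffian_matrix_tree_theorem} from the Pfaffian Cauchy--Binet formula (\cref{prop:pfaffian_cauchy_binet}) via Lovász's reduction of the spanning-hypertree problem to the linear graphic matroid parity problem, together with an orientation-aware sign computation. Following Lovász, I build a multigraph $G = (V, E)$ by assigning to each hyperedge $e \in \mathcal{E}$ two graph edges between its vertices and pairing them into a line; the crucial refinement is to coordinate these choices, together with the edge orientations, with $\vec{H}$ so that the resulting incidence matrix $A$ of the oriented multigraph carries the information of $\vec{H}$. Let $A^{(r)}$ denote the submatrix with the row of $v_r$ removed. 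Lovász's reduction then yields a bijection between spanning hypertrees $T \in \mathcal{T}$ of $H$ and parity bases $B$ of the matroid parity $(A^{(r)}, \mathcal{E})$.

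The main step is the direct sign identity $\det A^{(r)}[B] = (-1)^{r-1} \sgn \vec{T}$ for the parity base $B$ corresponding to the spanning hypertree $T$; this must be established without invoking \cref{thm:pfaffian_matrix_tree_theorem} itself, and I expect it to be the core obstacle. I would argue by induction on $n = \card{T}$, pruning a hyperedge $e = \set{v_a, v_b, v_c} \in T$ containing two vertices $v_b, v_c$ of degree one in $T$ (which exists for $n \ge 2$ by a degree count). On the matrix side, the rows of $A^{(r)}[B]$ indexed by $v_b$ and $v_c$ each contain a single nonzero entry, located in the two columns indexed by $e$; a double Laplace expansion along these rows reduces $\det A^{(r)}[B]$ to an analogous determinant for $H' = H \setminus \set{v_b, v_c}$ and $T' = T \setminus \set{e}$, up to an explicit sign. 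On the combinatorial side, since the cycles $\sigma_{\vec{f}}$ for $f \in T'$ fix $b$ and $c$, the product $\sigma = \sigma_{\vec{e}} \prod_{f \in T'} \sigma_{\vec{f}}$ factors cleanly, so $\sgn \vec{T}$ differs from $\sgn \vec{T}'$ by a sign determined by $\sigma_{\vec{e}} = (a\; b\; c)$ and the positions of $a, b, c$ in the ordering of $V$. Reconciling these two signs is the technical heart of the argument; it requires careful bookkeeping of Laplace signs, row-index shifts from $v_r$ to $v_{r'}$, and the cyclic-permutation definition of $\sigma$, but it is local to the pruned hyperedge.

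Finally, a $3 \times 3$ block calculation shows how the local contribution of each hyperedge to $A^{(r)}\Delta(z)\trsp{A^{(r)}}$ relates to the corresponding entries of $\Lambda^{(r)}(z)$, and combining the sign identity with \cref{prop:pfaffian_cauchy_binet} yields
\begin{align}
  \pf \Lambda^{(r)}(z) = \sum_{B \in \pbase(A^{(r)}, \mathcal{E})} \det A^{(r)}[B] \prod_{e \in B} z_e = (-1)^{r-1} \sum_{T \in \mathcal{T}} \sgn \vec{T} \prod_{e \in T} z_e,
\end{align}
which is exactly~\eqref{eq:pfaffian_matrix_tree_theorem}. As a byproduct, the matroid parity $(A^{(r)}, \mathcal{E})$ is Pfaffian precisely when all signs $\sgn \vec{T}$ coincide, i.e., when $\vec{H}$ is 3-Pfaffian, realizing the spanning hypertrees of 3-Pfaffian 3-graphs as parity bases of a Pfaffian parity.
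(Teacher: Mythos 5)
Your overall architecture coincides with the paper's own proof: Lovász's reduction with an orientation-aware incidence matrix, the identity $A \Delta(z) \trsp{A} = \Lambda(z)$ (\cref{lem:AxA_Lambda}), the Pfaffian Cauchy--Binet expansion~\eqref{eq:pfaffian_cauchy_binet_1}, and the sign identity $\det A^{(r)}[T] = (-1)^{r-1} \sgn \vec{T}$ (\cref{lem:3_pfaffian_sign}). The trouble is that this sign identity is the entire substance of the theorem, and you do not prove it: you announce an induction that prunes a hyperedge $e = \{v_a, v_b, v_c\}$ containing two degree-one vertices and defer the decisive step as ``careful bookkeeping.'' Two concrete points. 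First, with the incidence construction used here (columns $(v_i,v_j)$ and $(v_i,v_k)$ for $\vec{e} = (v_i,v_j,v_k)$), it is not true that the rows of $v_b$ and $v_c$ each contain a single nonzero entry in the columns of $e$: if the pruned hyperedge is oriented with a degree-one vertex as its first vertex, that row has two nonzeros, so you need a two-row (generalized Laplace) expansion, and you must verify that the resulting $2\times 2$ minor equals $1$ in each of the three cyclic orientations; your sketch does not do this. Second, and more seriously, comparing $\sgn \vec{T}$ with $\sgn \vec{T}'$ after deleting $v_b, v_c$ — reindexing the remaining $2n-1$ vertices, possibly moving the deleted row from $v_r$ to some $v_{r'}$ when $r \in \{b, c\}$, and tracking how the $3$-cycle $\sigma_{\vec{e}}$ sits inside the big cycle $\sigma$, whose defining product is only well-defined up to reordering — is exactly the nontrivial content, and asserting that it is ``local to the pruned hyperedge'' is not a proof; the interplay of the Laplace sign with these index shifts is precisely where such arguments break.

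For comparison, the paper closes this step without induction: it orders the hyperedges of $T$ outward from the root, permutes rows and columns so that $A^{(r)}[T]$ becomes block upper-triangular with $2 \times 2$ diagonal blocks of determinant $1$, which gives $\det A^{(r)}[T] = (-1)^{r-1} \sgn \tau$ for the explicit permutation $\tau$ sending $(1,2,\ldots,2n+1)$ to $(p_1, q_1, \ldots, p_n, q_n, r)$, and then shows $\sgn \tau = \sgn \vec{T}$ by proving $\sigma(p_j) = q_j$ for the cyclic permutation $\sigma$ built from the $3$-cycles $\sigma_{\vec{e}}$. Your leaf-pruning induction is essentially the one-step version of that triangularization and can very likely be pushed through, but as written the key sign computation is missing, so the proposal does not yet constitute a proof of~\eqref{eq:pfaffian_matrix_tree_theorem}.
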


Recall the construction of the multiple undirected graph $G$ from $H$ described above.
We shall construct a multiple directed graph $\vec{G} = (V, \vec{E})$ reflecting the orientation $\vec{H}$ as follows: starting from $\vec{E} = \varnothing$, for each $(v_i, v_j, v_k) \in \vec{\mathcal{E}}$, we add $(v_i, v_j)$ and $(v_i, v_k)$ to $\vec{E}$.
Let $A$ be the incidence matrix~\eqref{def:incidence_matrix} of $\vec{G}$.
We arrange the rows and columns of $A$ so that the $i$th row corresponds to $v_i$ for $i \in \intset{2n+1}$, and in the two columns associated with each $(v_i, v_j, v_k) \in \vec{\mathcal{E}}$, the column indexed by $(v_i, v_j)$ is in the left of that by $(v_i, v_k)$.
Then the following holds.

\begin{lemma}\label{lem:AxA_Lambda}
  It holds $A \Delta(z) \trsp{A} = \Lambda(z)$.
\end{lemma}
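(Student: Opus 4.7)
The plan is a direct entrywise computation. Since both $A \Delta(z) \trsp{A}$ and $\Lambda(z)$ are skew-symmetric matrices of order $2n+1$, it suffices to match the $(p, q)$-entries for every pair $p \ne q$ in $\intset{2n+1}$.

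The right-hand side is essentially read off from the definition: each hyperedge $e = \set{v_p, v_q, v_k} \in \mathcal{E}$ contributes a single term $\epsilon_{pqk} z_e$ to $\Lambda_{p, q}(z)$, and nothing else contributes. For the left-hand side, I would exploit the block-diagonal structure of $\Delta(z)$: since each line $l$ gives a single $2 \times 2$ skew block $z_l \begin{psmallmatrix} 0 & 1 \\ -1 & 0 \end{psmallmatrix}$, the product decomposes as
\[
  A \Delta(z) \trsp{A} = \sum_{\vec{e} \in \vec{\mathcal{E}}} z_e \pbig{a^1_{\vec{e}} \trsp{(a^2_{\vec{e}})} - a^2_{\vec{e}} \trsp{(a^1_{\vec{e}})}},
\]
where, for $\vec{e} = (v_i, v_j, v_k)$, the column vectors $a^1_{\vec{e}}$ and $a^2_{\vec{e}}$ are the incidence vectors of the directed edges $(v_i, v_j)$ and $(v_i, v_k)$ of $\vec{G}$; by construction these are the left and right columns of $A$ associated with the line corresponding to $\vec{e}$. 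This reduces the proof to a per-hyperedge computation followed by summation.

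A short case enumeration using the incidence formula~\eqref{def:incidence_matrix} shows that, for a single $\vec{e} = (v_i, v_j, v_k)$, the skew-symmetric matrix $a^1_{\vec{e}} \trsp{(a^2_{\vec{e}})} - a^2_{\vec{e}} \trsp{(a^1_{\vec{e}})}$ is supported on $\set{i, j, k} \times \set{i, j, k}$ and takes the value $+1$ at the positions $(i, j), (j, k), (k, i)$ and $-1$ at the three transposes. I would then identify this cyclic $\pm 1$ pattern with the corresponding summand of $\Lambda_{p, q}(z)$: using the cyclic invariance $\epsilon_{ijk} = \epsilon_{jki} = \epsilon_{kij}$ together with the antisymmetry of $\epsilon$ under a single transposition, the entry at $(p, q) \in \set{i, j, k}^2$ with third index $k'$ can be rewritten as $\epsilon_{pqk'} z_e$. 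Summing these per-hyperedge matches over all $\vec{e} \in \vec{\mathcal{E}}$ recovers the defining formula for $\Lambda_{p, q}(z)$.

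The main obstacle is the sign bookkeeping in this last identification: one must keep straight how the cyclic orientation of $\vec{e}$ determines the direction of the $\pm 1$ pattern and how that meshes with the inversion parity of $(p, q, k')$. Once the rank-two outer product above has been computed, however, this reduces to a finite case check over the three choices of the pair $\set{p, q} \subset \set{i, j, k}$, which is routine.
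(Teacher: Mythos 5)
Your proposal is correct and takes essentially the same route as the paper's proof: you expand $A \Delta(z) \trsp{A}$ line by line (this is just the decomposition \eqref{eq:rewrite_ADeltaA} specialized to this setting) and verify, for each directed hyperedge $\vec{e} = (v_i, v_j, v_k)$, the cyclic $\pm z_e$ pattern on $\set{i,j,k}$, which is exactly the $3 \times 2$, $2 \times 2$, $2 \times 3$ block product the paper computes explicitly for $i < j < k$ before declaring the remaining orderings analogous. The only difference is organizational (outer products versus the explicit submatrix multiplication), and the sign bookkeeping you defer for non-sorted orientation triples is at the same level of detail as the paper's ``cases of other orderings of $i,j,k$ are the same.''
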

\begin{proof}
  Take $\vec{e} = \prn{v_i, v_j, v_k} \in \vec{\mathcal{E}}$ and suppose that $i < j < k$.
  The $3 \times 3$ submatrix of $A \Delta(z) \trsp{A}$ indexed by $\set{i, j, k}$ is
  \begin{align}
    \begin{pmatrix}
      +1           & +1           \\
      -1           & \phantom{+}0 \\
      \phantom{+}0 & -1
    \end{pmatrix}
    \begin{pmatrix}
      0    & z_e \\
      -z_e & 0
    \end{pmatrix}
    \begin{pmatrix}
      +1 & -1           & \phantom{+}0 \\
      +1 & \phantom{+}0 & -1
    \end{pmatrix}
    =
    \begin{pmatrix}
      0    & +z_e & -z_e \\
      -z_e & 0    & +z_e \\
      +z_e & -z_e & 0
    \end{pmatrix}
    =
    \begin{pmatrix}
      0                 & \epsilon_{ijk} z_e & \epsilon_{ikj}z_e \\
      \epsilon_{jik}z_e & 0                  & \epsilon_{jki}z_e \\
      \epsilon_{kij}z_e & \epsilon_{kji}z_e  & 0
    \end{pmatrix},
  \end{align}
  which agrees with the contribution of $z_e$ in $\Lambda(z)$.
  Cases of other orderings of $i,j,k$ are the same.
\end{proof}

Recall that $A^{(r)}$ denotes the submatrix of $A$ obtained by removing the $r$th row for $r \in \intset{2n+1}$.
Note that $A^{(r)}$ represents $\matroid(G)$ and hence $\pbase\pbig{A^{(r)}, \mathcal{E}}$ coincides with the set of spanning hypertrees of $H$ one-to-one.
Using the Pfaffian matrix-tree theorem, we can associate $\sgn \vec{T}$ with $\det A^{(r)}[T]$ as follows.

\begin{lemma}\label{lem:3_pfaffian_sign}
  For a spanning hypertree $T$ of $H$ and $r \in \intset{2n+1}$, the sign of $\vec{T}$ is equal to $\prn{-1}^{r-1} \det A^{(r)}[T]$.
\end{lemma}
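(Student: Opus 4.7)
The plan is to prove this by comparing two expressions for $\pf \Lambda^{(r)}(z)$, where $z = (z_e)_{e \in \mathcal{E}}$ is a vector of distinct indeterminates. Since the Pfaffian matrix-tree theorem is already invoked in the lead-in, I would use it directly rather than reprove it.

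First, I would observe that $\Lambda^{(r)}(z) = A^{(r)} \Delta(z) \trsp{A^{(r)}}$. This follows immediately from \cref{lem:AxA_Lambda} by restricting both sides to rows and columns indexed by $\intset{2n+1} \setminus \set{r}$: removing the $r$th row of $A$ amounts to removing the $r$th row and column of the skew-symmetric product $A \Delta(z) \trsp{A}$.

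Second, I would apply identity~\eqref{eq:pfaffian_cauchy_binet_1} of \cref{prop:pfaffian_cauchy_binet} to the matroid parity $(A^{(r)}, \mathcal{E})$. Since $A^{(r)}$ represents the graphic matroid $\matroid(G)$, the set $\pbase(A^{(r)}, \mathcal{E})$ coincides with the collection of (hyperedge sets of) spanning hypertrees $\mathcal{T}$ of $H$ via Lovász' correspondence, so
\begin{align}
  \pf \Lambda^{(r)}(z) = \pf A^{(r)}\Delta(z)\trsp{A^{(r)}} = \sum_{T \in \mathcal{T}} \det A^{(r)}[T] \prod_{e \in T} z_e.
\end{align}

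Third, I would compare this with the Pfaffian matrix-tree theorem~\eqref{eq:pfaffian_matrix_tree_theorem}, which gives
\begin{align}
  \pf \Lambda^{(r)}(z) = (-1)^{r-1} \sum_{T \in \mathcal{T}} \sgn \vec{T} \prod_{e \in T} z_e.
\end{align}
Both right-hand sides are polynomials in the distinct indeterminates $z_e$ whose monomials $\prod_{e \in T} z_e$ are pairwise distinct as $T$ ranges over $\mathcal{T}$ (each $T$ is a distinct subset of $\mathcal{E}$). Therefore the coefficients on each monomial must agree, giving $\det A^{(r)}[T] = (-1)^{r-1} \sgn \vec{T}$, equivalently $\sgn \vec{T} = (-1)^{r-1} \det A^{(r)}[T]$.

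The only nontrivial point is confirming that the bijection between spanning hypertrees of $H$ and parity bases of $(A^{(r)}, \mathcal{E})$ is the one implicit in the statement (i.e., both are indexed by the same hyperedge subset $T$); this is the content of the reduction from Lovász~\cite{Lovasz1980} recalled just above \cref{thm:pfaffian_matrix_tree_theorem}. Everything else is a coefficient comparison, so there is no real obstacle once \cref{lem:AxA_Lambda} and \cref{thm:pfaffian_matrix_tree_theorem} are in hand.
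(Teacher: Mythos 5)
Your proposal is correct and coincides with the paper's first proof of this lemma: the paper likewise combines \cref{lem:AxA_Lambda} with the expansion~\eqref{eq:pfaffian_cauchy_binet_1} to get $\pf \Lambda^{(r)}(z) = \sum_{T} \det A^{(r)}[T] \prod_{e \in T} z_e$ and then compares coefficients with~\eqref{eq:pfaffian_matrix_tree_theorem}. (The paper additionally gives a second, independent proof that avoids the Pfaffian matrix-tree theorem, but your route is exactly its primary argument.)
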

\begin{proof}
  By \cref{lem:AxA_Lambda} and~\eqref{eq:pfaffian_cauchy_binet_1}, we have
  \begin{align}\label{eq:3_pfaffian_sign_mid}
    \pf \Lambda^{(r)}(z)
    = \pf A^{(r)} \Delta(z) \trsp{{A^{(r)}}}
    = \sum_{T \in \pbase(A^{(r)}, \mathcal{E})} \det A^{(r)}[T] \prod_{e \in T} z_e.
  \end{align}
  Comparing the coefficients of~\eqref{eq:pfaffian_matrix_tree_theorem} and~\eqref{eq:3_pfaffian_sign_mid}, we obtain the claim.
\end{proof}

An orientation of a 3-graph $H$ is said to be \emph{3-Pfaffian} if $\sgn \vec{T}$ is constant for all spanning hypertree $T$ of $H$~\cite{Goodall2011}.
A 3-graph is called \emph{3-Pfaffian} if it admits a 3-Pfaffian orientation.
By \cref{lem:3_pfaffian_sign}, the following holds.

\begin{theorem}
  Let $H = (V, \mathcal{E})$ be a 3-graph, $\vec{H}$ an orientation of $H$ and $A$ the matrix defined above from $\vec{H}$.
  Then $\pbase\pbig{A^{(r)}, \mathcal{E}}$ corresponds to spanning hypertrees of $H$ one-to-one for $r \in \intset{\card{V}}$.
  In addition, when $\vec{H}$ is 3-Pfaffian, then $\pbig{A^{(r)}, \mathcal{E}}$ is Pfaffian with constant $\prn{-1}^{r-1} \sgn \vec{T}$, where $T$ is an arbitrary spanning hypertree of $H$.
\end{theorem}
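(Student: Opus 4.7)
The plan is to derive this theorem as an essentially immediate consequence of the setup already built up in this subsection: Lovász's reduction gives the bijective part, and \cref{lem:3_pfaffian_sign} gives the Pfaffian part.

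For the first claim, I would unpack the construction of $G$ and $\vec{G}$. Recall from the discussion preceding \cref{lem:AxA_Lambda} that, for each hyperedge $\{u,v,w\} \in \mathcal{E}$, the multigraph $G$ has two edges (both incident to a common endpoint as dictated by the orientation $(v_i,v_j,v_k)$), and each $e \in \mathcal{E}$ is regarded as a line of $E$. By Lovász's reduction, $T \subseteq \mathcal{E}$ is a spanning hypertree of $H$ if and only if the corresponding edge subset of $E$ is a spanning tree of $G$. Since $\vec{G}$ is connected (being the orientation of $G$) and its incidence matrix $A$ has $A^{(r)}$ as a totally unimodular representation of the graphic matroid $\matroid(G)$, the spanning trees of $G$ are precisely the bases of $A^{(r)}$. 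Viewing $\mathcal{E}$ as the set of lines, spanning hypertrees of $H$ correspond bijectively to parity bases in $\pbase(A^{(r)}, \mathcal{E})$.

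For the second claim, assume that $\vec{H}$ is 3-Pfaffian, so that $\sgn \vec{T}$ takes a common value $\epsilon \in \{+1, -1\}$ for all spanning hypertrees $T$ of $H$. By \cref{lem:3_pfaffian_sign}, for every such $T$ we have
\begin{align}
  \det A^{(r)}[T] = (-1)^{r-1} \sgn \vec{T} = (-1)^{r-1} \epsilon,
\end{align}
which is a fixed nonzero constant independent of $T$. Using the bijection from the first part to identify parity bases of $(A^{(r)}, \mathcal{E})$ with spanning hypertrees of $H$, this shows that $\det A^{(r)}[B]$ is the same nonzero element of $\setQ$ for every $B \in \pbase(A^{(r)}, \mathcal{E})$. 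By \cref{def:pfaffian_parity}, $(A^{(r)}, \mathcal{E})$ is therefore Pfaffian with constant $(-1)^{r-1} \sgn \vec{T}$, as claimed.

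There is no serious obstacle here: the bulk of the work has already been carried out in \cref{lem:AxA_Lambda} and \cref{lem:3_pfaffian_sign}, the latter of which already packages the Pfaffian matrix-tree theorem into an entry-by-entry identification of $\det A^{(r)}[T]$ with $\sgn \vec{T}$ up to the global sign $(-1)^{r-1}$. The only thing to be careful about is ensuring the Lovász reduction really yields a bijection (and not merely a surjection) between $\pbase(A^{(r)}, \mathcal{E})$ and $\mathcal{T}$; this is clear because two distinct spanning hypertrees differ as subsets of $\mathcal{E}$, hence as lines, and the map sending each hyperedge to its pair of edges in $E$ is injective on lines.
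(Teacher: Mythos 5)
Your proposal is correct and follows the same route as the paper, which states this theorem as an immediate consequence of the Lovász reduction (identifying $\pbase\pbig{A^{(r)}, \mathcal{E}}$ with spanning hypertrees via $A^{(r)}$ representing $\matroid(G)$) together with \cref{lem:3_pfaffian_sign}, giving the constant $\prn{-1}^{r-1}\sgn\vec{T}$.
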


The above proof of \cref{lem:3_pfaffian_sign} relies on the Pfaffian matrix-tree theorem.
In the following, we give an alternative proof of \cref{lem:3_pfaffian_sign} without the Pfaffian matrix-tree theorem.
It provides a new proof of the Pfaffian matrix-tree theorem via \cref{lem:AxA_Lambda} and~\eqref{eq:pfaffian_cauchy_binet_1}.

\begin{proof}[{of \cref{lem:3_pfaffian_sign}}]
  We regard $T$ as a hypertree rooted at $r$.
  We arrange hyperedges in $T$ as $e_1, \ldots, e_n$ so that $e_1$ contains $v_r$ and $|e_j \cap (e_1 \cup \cdots \cup e_{j-1})| = 1$ for every $2 \le j \le n$.
  Such an ordering is obtained by traversing $T$ in pre-order or breadth-first order.
  For $j \in \intset{n}$, we define $t_j, p_j, q_j \in \intset{2n+1}$ as follows.
  Put $t_1 \defeq r$ and take $p_1$ and $q_1$ so that $(t_1, p_1, q_1)$ is equal to $\vec{e}_1$ up to an even permutation.
  For $j \ge 1$, the hyperedge $\vec{e}_j$ consists of one vertex contained in $e_1 \cup \cdots \cup e_{j-1}$ and two vertices which do not occur in $e_1 \cup \cdots \cup e_{j-1}$.
  We denote the former vertex by $v_{t_j}$ and the latter two vertices by $v_{p_j}$ and $v_{q_j}$, where $(v_{t_j},v_{p_j},v_{q_j})$ is equal to $\vec{e}_j$ up to an even permutation.

  Permuting rows and columns, we transform $A^{(r)}[T]$ into a matrix $X$ as follows.
  The lines (columns) are arranged in the same order as hyperedges.
  The row permutation is performed so that $v_{p_j}$ corresponds to the $(2j-1)$st row and $v_{q_j}$ corresponds to the $2j$th row for $j \in \intset{n}$.
  Then $X$ is a block upper-triangular matrix whose the $j$th diagonal block is $\begin{psmallmatrix} -1 & \phantom{+}0 \\ \phantom{+}0 & -1 \end{psmallmatrix}$ if $\vec{e}_j = (v_{t_j}, v_{p_j}, v_{q_j})$, $\begin{psmallmatrix} +1 & +1 \\ -1 & \phantom{+}0 \end{psmallmatrix}$ if $\vec{e}_j = (v_{p_j}, v_{q_j}, v_{t_j})$, and $\begin{psmallmatrix} \phantom{+}0 & -1 \\ +1 & +1 \end{psmallmatrix}$ if $\vec{e}_j = (v_{q_j}, v_{t_j}, v_{p_j})$.
  The determinants of those three matrices are 1, and thus we have $\det X = 1$.
  The sign of the line permutation is 1 and the sign of the row permutation is $\prn{-1}^{r-1}\sgn \tau$, where
  \begin{align}
    \tau \defeq
    \begin{pmatrix}
      1  &   2 &   3 &   4 & \cdots & 2n-1 & 2n  & 2n+1 \\
      p_1& q_1 & p_2 & q_2 & \cdots & p_n  & q_n & r
    \end{pmatrix}.
  \end{align}
  Hence we have $\det A^{(r)}[T] = \prn{-1}^{r-1} \sgn \tau$.

  We next show $\sgn \tau = \sgn \vec{T}$, which implies the claim.
  For $i \in \intset{2n+1}$, let $T_i$ denote the \emph{subhypertree} of $T$ rooted at $v_i$, i.e., the set of hyperedges $e$ in $T$ such that the unique $v$--$v_r$ path in $T$ contains $v_i$ for every $v \in e$.
  Note that $T_i$ might be empty.
  %It holds $i \ne p_j$ and $i \ne q_j$ for every $j \in \intset{n}$ with $e_j \in T_i$.
  Put $R_i \defeq \set{j \in \intset{n}}[v_i \in e_j \in T_i]$.
  We recursively define a function $f: [2n+1] \rightarrow \sym_{2n+1}$ by
  \begin{align}
    f(i) \defeq \prod_{j \in R_i} f(q_j) (t_j\; p_j\; q_j) f(p_j).
  \end{align}
  Then $\sigma \defeq f(r)$ is the product of $(t_j\; p_j\ q_j)$ for $e_j \in T$ in some order.
  As we have mentioned in defining $\sgn \vec{T}$, the permutation $\sigma$ is cyclic.
  Suppose $\sigma = (s_1\;\cdots\;s_{2n+1})$.
  Note that $\set{s_1, \ldots ,s_{2n+1}} = V = \set{p_1, q_1, \ldots, p_n, q_n, r}$.
  We have $\sigma(p_j) = q_j$ for $j \in \intset{n}$ because every hyperedge containing $q_j$ appears the right of $(t_j\;p_j\;q_j)$ in the sequence of products of permutations and $(t_j\;p_j\;q_j)$ maps $p_j$ to $q_j$.
  This means that $(s_1, s_2, \ldots, s_{2n+1})$ is equal to $(p_1,q_1,p_2,\ldots,q_n,r)$ up to an even permutation.
  Hence $\sgn \vec{T}$ is equal to $\sgn \tau$ by the definition of $\sgn \vec{T}$.
\end{proof}

\begin{remark}
  As mentioned by Hirschiman--Reiner~\cite{Hirschman2004}, Kirchhoff's matrix-tree theorem has at least three different well-known proofs: one via the Cauchy-Binet formula~\eqref{eq:cauchy_binet}, one via deletion-contraction induction, and one via the sign-reversing involution.
  The original proof of the Pfaffian matrix-tree theorem by Masbaum--Vaintrob~\cite{Masbaum2002} is based on an analog of deletion-contraction induction.
  Hirschiman--Reiner~\cite{Hirschman2004} provided the second proof using the sign-reversing involution.
  Our proof above is the third one, which is analogous to the proof of Kirchhoff's matrix-tree theorem via the Cauchy-Binet formula.
\end{remark}

\section{Examples from Disjoint \calS-Path Problem}\label{sec:lgv}

Continued from \cref{sec:examples}, this section provides further examples of Pfaffian pairs and parities in the framework of Mader's disjoint \calS-path problem.
%In \cref{sec:dag}, we show that the known reduction of the disjoint $S$--$T$ path problem on a directed acyclic graph (DAG) to the bipartite matching problem retains the sign.
%This observation yield an example of Pfaffian pairs (bipartite graphs) as well as a new proof of the Lindström--Gessel--Viennot (LGV) lemma.
%\Cref{sec:st} generalizes this relation to the shortest disjoint $S$--$T$ path problem on an undirected graph.
%We further consider the $S$--$T$--$U$ path problem in \cref{sec:stu}.

\subsection{Disjoint \ST\ Paths on Directed Acyclic Graphs}\label{sec:dag}

Let $G = (V, E)$ be a directed acyclic graph (DAG) and take disjoint vertex subsets $S, T \subseteq V$ with $k \defeq \card{S} = \card{V}$.
Suppose that vertices in $S$ and $T$ are ordered as $s_1, \ldots, s_k$ and $t_1, \ldots, t_k$.
We assume that the in-degree of $s_i$ and the out-degree of $t_j$ are zero for all $i,j \in \intset{k}$.
We regard directed paths of $G$ as edge subsets.
A (\emph{directed}) \emph{$S$--$T$ path} $P \subseteq E$ of $G$ is the union of $k$ directed paths $P_1, \ldots, P_k$ of $G$ satisfying the following:
\begin{enumerate}[label={(P1)}]
  \item There exists a permutation $\sigma \in \sym_k$ of $\intset{k}$ such that every $P_i$ is a path from $s_i$ to $t_{\sigma(i)}$ for $i \in \intset{k}$.\label{item:P1}
\end{enumerate}
We call an $S$--$T$ path $P$ (\emph{vertex-})\emph{disjoint} if $P_i$ and $P_j$ have no common vertices for distinct $i, j \in \intset{k}$.
We denote the permutation in~\ref{item:P1} by $\sigma_P$.
Note that $\sigma_P$ is well-defined for disjoint $S$--$T$ paths.
We define the \emph{sign} of a disjoint $S$--$T$ path $P$ as $\sgn P \defeq \sgn \sigma_P$.

We introduce the Lindström--Gessel--Viennot (LGV) lemma, which was provided by Gessel--Viennot~\cite{Gessel1985} based on the work of Lindström~\cite{Lindstrom1973}.
Let $z = \prn{z_e}_{e \in E}$ be a vector of distinct indeterminates indexed by $E$.
Define a $k \times k$ matrix $\Omega(z) = \prn{\Omega_{i,j}(z)}_{i,j \in \intset{k}}$ by
\begin{align}\label{def:Omega_entry}
  \Omega_{i,j}(z) \defeq \sum_{P \in \mathcal{P}_{i,j}} \prod_{e \in P} z_e
\end{align}
for $i,j \in \intset{k}$, where $\mathcal{P}_{i,j}$ is the set of all disjoint $s_i$--$t_j$ paths of $G$.

\begin{lemma}[{LGV lemma~\cite{Gessel1985,Lindstrom1973}}]\label{lem:normal-lgv}
  It holds
  \begin{align}\label{eq:normal-lgv}
    \det \Omega(z) = \sum_{P \in \mathcal{P}} \sgn P \prod_{e \in P} z_e,
  \end{align}
  where $\mathcal{P}$ is the set of all disjoint $S$--$T$ paths of $G$.
\end{lemma}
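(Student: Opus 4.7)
The plan is to prove the LGV lemma by the classical sign-reversing involution on path tuples. First I expand $\det \Omega(z)$ using the Leibniz formula:
\begin{align}
  \det \Omega(z)
  = \sum_{\sigma \in \sym_k} \sgn(\sigma) \prod_{i=1}^k \Omega_{i,\sigma(i)}(z)
  = \sum_{\sigma \in \sym_k} \sgn(\sigma) \sum_{(P_1,\ldots,P_k)} \prod_{i=1}^k \prod_{e \in P_i} z_e,
\end{align}
where the inner sum ranges over all tuples with $P_i$ a path from $s_i$ to $t_{\sigma(i)}$. Since $G$ is a DAG, every such walk is automatically simple. I then partition these pairs $(\sigma,(P_1,\ldots,P_k))$ into ``good'' ones, in which the paths are pairwise vertex-disjoint, and ``bad'' ones, in which at least two paths share a vertex.

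The good tuples contribute exactly the right-hand side of~\eqref{eq:normal-lgv}: a good tuple is precisely a disjoint $S$--$T$ path $P$ with $\sigma=\sigma_P$, so its contribution is $\sgn \sigma_P \prod_{e\in P} z_e = \sgn P \prod_{e \in P} z_e$. The task therefore reduces to showing that the bad tuples cancel in pairs. For this I define an involution $\iota$ on bad tuples as follows: let $i$ be the smallest index such that $P_i$ meets some $P_{j}$ with $j\neq i$; let $v$ be the first vertex on $P_i$ (in its order from $s_i$ to $t_{\sigma(i)}$) that also lies on some other $P_j$; and let $j\neq i$ be the smallest index for which $v\in P_j$. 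Then define $\iota(\sigma,(P_1,\ldots,P_k))$ by swapping tails: replace $P_i$ by the concatenation of the $s_i$-to-$v$ prefix of $P_i$ with the $v$-to-$t_{\sigma(j)}$ suffix of $P_j$, and symmetrically for $P_j$. The new tuple has permutation $\sigma\circ(i\;j)$, hence opposite sign, while the multiset of edges used (counted with multiplicity) is preserved, so the monomial $\prod_i \prod_{e\in P_i} z_e$ is unchanged.

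The main verification, and the only delicate point, is that $\iota$ is indeed a well-defined involution on bad tuples. Well-definedness uses that in the DAG the swapped paths are still simple paths (any revisit of a vertex would create a directed cycle). To check $\iota^2 = \mathrm{id}$, I observe that the prefixes of the swapped paths up to $v$ are identical to those of $P_i$ and $P_j$ before the swap; since the choice of $i$, of $v$, and of $j$ depends only on these prefixes together with the rule ``smallest index $j\neq i$ with $v\in P_j$'' (which is symmetric in $P_i$ and $P_j$ after the swap), the same triple $(i,v,j)$ is selected and the second swap restores the original tuple. Hence the bad contributions cancel, and the sum reduces to the claimed right-hand side. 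The main obstacle to watch out for is bookkeeping in the choice of the pivot $(i,v,j)$ so that it is genuinely unchanged by one application of $\iota$; I would settle this by phrasing the selection as a lexicographically minimal triple with respect to an ordering fixed at the outset.
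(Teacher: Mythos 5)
Your proof is correct, but it is a genuinely different argument from the one in the paper: you give the classical Lindström--Gessel--Viennot sign-reversing involution (Leibniz expansion of $\det \Omega(z)$, good tuples give the signed sum over disjoint $S$--$T$ paths, bad tuples cancel in pairs by a tail swap at the first intersection vertex), whereas the paper derives the lemma from its matching framework: it forms the bipartite adjacency matrix $N(z)$ of the auxiliary graph $\vec{\Gamma}$, uses the Schur complement with respect to the $\tilde{V}$-block (upper triangular with unit diagonal after a topological ordering) to show $\det N(z) = \det\bigl(X - YW^{-1}Z\bigr)$, identifies each entry of $X - YW^{-1}Z$ with $-\Omega_{i,j}(z)$ by running the path--matching correspondence (Lemmas~\ref{lem:dag_correspondense} and~\ref{lem:normal-sgn}) on the single-source, single-sink subgraphs $G^{(i,j)}$, and then equates $\det N(z)$ with $(-1)^k$ times the signed sum over disjoint $S$--$T$ paths. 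Your route is elementary and self-contained, needing no matching reduction or Schur complements, and it is the standard proof of the lemma; the paper's route is deliberately less direct because it exhibits the LGV lemma as a consequence of the sign-preserving bijection with bipartite perfect matchings, which is exactly the structure it needs for its Pfaffian-pair examples. One small caveat on your write-up: the invariance of the pivot triple $(i,v,j)$ under the swap should not be justified by saying the choice ``depends only on the prefixes'' (whether $P_i$ meets another path also depends on suffixes); the correct justification is that the swap preserves the multiset of vertex visits of the family, that acyclicity keeps the swapped walks simple, and hence that the set of multiply-covered vertices, the minimal index $i$, the first such vertex $v$ on $P_i$, and the minimal $j$ with $v \in P_j$ are all unchanged --- a short check that closes the gap you flagged.
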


We say that $(S,T)$ is in the \emph{LGV position} on $G$ if $\sgn P$ is constant for any disjoint $S$--$T$ path $P$ of $G$.
When $(S, T)$ is in the LGV position, the number of disjoint $S$--$T$ paths of $G$ can be computed through the LGV lemma.

% regarding S-T paths on DAG as bipartite matching.

The \emph{disjoint $S$--$T$ path problem} on a DAG $G = (V, E)$, which is to find a disjoint $S$--$T$ path of $G$, can be reduced to the bipartite matching problem.
We review the reduction presented in the proof of~\cite[Theorem~2.5.9]{Frank2011}.
Let $\tilde{V}_S$ and $\tilde{V}_T$ be disjoint copies of $\tilde{V} \defeq V \setminus (S \cup T)$.
For $v \in \tilde{V}$, we denote the corresponding vertices to $v$ in $\tilde{V}_S$ and $\tilde{V}_T$ by $v_s$ and $v_t$, respectively.
Also $v_s$ and $v_t$ indicate $v$ itself for $v \in S$ and $v \in T$, respectively.
We construct a bipartite graph $\Gamma$ as follows.
The vertex set of $\Gamma$ is the disjoint union of $V_S \defeq S \cup \tilde{V}_S$ and $V_T \defeq T \cup \tilde{V}_T$.
The edge set of $\Gamma$ is $F_1 \cup F_2$, where
\begin{align}
  F_1 &\defeq \set{\set{u_s, v_t}}[(u, v) \in E], \\
  F_2 &\defeq \set[\big]{\set{v_s, v_t}}[v \in \tilde{V}].
\end{align}

\begin{lemma}[{see~\cite[Theorem~2.5.9]{Frank2011}}]\label{lem:dag_correspondense}
  There is a one-to-one correspondence between disjoint $S$--$T$ paths of $G$ and perfect matchings of $\Gamma$.
\end{lemma}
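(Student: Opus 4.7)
The plan is to exhibit the bijection explicitly and verify it directly. Define a map $\Phi$ from disjoint $S$--$T$ paths of $G$ to perfect matchings of $\Gamma$ as follows: given a disjoint $S$--$T$ path $P = P_1 \cup \cdots \cup P_k$, let
\[
  \Phi(P) \defeq \bigl\{\{u_s, v_t\} : (u,v) \in P\bigr\} \cup \bigl\{\{v_s, v_t\} : v \in \tilde{V},\; v \text{ is not covered by any } P_i\bigr\}.
\]
Conversely, define $\Psi(M) \defeq \bigl\{(u,v) \in E : \{u_s, v_t\} \in M \cap F_1\bigr\}$ for a perfect matching $M$ of $\Gamma$. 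The goal is to check that both maps are well-defined and mutually inverse.

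The first step is to verify that $\Phi(P)$ is a perfect matching by case analysis on the vertices of $\Gamma$. Each $s_i \in V_S$ is incident to exactly the $F_1$-edge corresponding to the first edge of $P_i$, using the assumption that $s_i$ has zero in-degree in $G$. Each $t_j \in V_T$ is symmetrically covered once by the $F_1$-edge corresponding to the last edge of the path of $P$ terminating at $t_j$. For $v \in \tilde{V}$, if $v$ lies on some $P_i$, then its incoming edge in $P_i$ covers $v_t$ and its outgoing edge covers $v_s$; otherwise, the single $F_2$-edge $\{v_s, v_t\}$ covers both.

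The second step verifies that $\Psi(M)$ is a disjoint $S$--$T$ path. Count degrees in $\Psi(M)$ vertex by vertex: the in-degree and out-degree of $v \in V$ in $\Psi(M)$ equal the numbers of $F_1$-edges of $M$ containing $v_t$ and $v_s$, respectively. Each $s_i$ then has out-degree $1$ and in-degree $0$; each $t_j$ has in-degree $1$ and out-degree $0$; for $v \in \tilde{V}$, either $\{v_s, v_t\} \in M$, giving both degrees $0$, or $v_s$ and $v_t$ are each matched via $F_1$-edges to other vertices, giving both degrees $1$. Together with the DAG hypothesis on $G$, these degree constraints force $\Psi(M)$ to decompose into a vertex-disjoint union of directed paths, each starting in $S$ and ending in $T$. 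A counting argument using $\card{S} = \card{T} = k$ then shows there are exactly $k$ such paths, which produces a permutation $\sigma \in \sym_k$ so that condition~\ref{item:P1} holds.

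Finally, the identities $\Psi \circ \Phi = \mathrm{id}$ and $\Phi \circ \Psi = \mathrm{id}$ follow directly from the constructions: on the $F_1$ part the two maps are inverse by definition, and on the $F_2$ part a perfect matching is forced once one knows which vertices of $\tilde{V}$ are unused on the path side. The only substantive ingredient in the whole argument is the acyclicity of $G$, which rules out cycles in $\Psi(M)$ and is essential for the reverse direction; the rest is routine case analysis, so I expect no real obstacle beyond keeping the case distinctions for vertices of $S$, $T$, and $\tilde{V}$ straight.
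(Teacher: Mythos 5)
Your proposal is correct and follows essentially the same route as the paper: the same explicit maps $P \mapsto M_1 \cup M_2$ and $M \mapsto \{(u,v) : \{u_s,v_t\} \in M \cap F_1\}$, the same degree-counting case analysis on $S$, $T$, $\tilde{V}$, and the same use of acyclicity to exclude cycles on $\tilde{V}$ in the reverse direction. No gaps; the argument matches the paper's proof of \cref{lem:dag_correspondense}.
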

\begin{proof}
  Let $P$ be a disjoint $S$--$T$ path of $G$ and $U \subseteq V$ the set of vertices covered by $P$.
  Let $M$ be the union of $M_1$ and $M_2$, where
  \begin{align}
    M_1 &\defeq \set{\set{u_s, v_t}}[(u, v) \in P] \subseteq F_1, \\
    M_2 &\defeq \set[\big]{\set{v_s, v_t}}[v \in \tilde{V} \setminus U] \subseteq F_2.
  \end{align}
  Then each $u_s \in V_S$ is covered by $\set{u_s, v_t} \in M_1$ for some $v_t \in V_T$ if $u \in U$ and by $\set{u_s, u_t} \in M_2$ if $u \notin U$.
  In addition, such an edge in $M$ is unique since $P$ is disjoint.
  The same argument holds for vertices in $V_T$.
  Hence $M$ is a perfect matching of $\Gamma$.

  Conversely, let $M$ be a perfect matching of $\Gamma$.
  Define $P \defeq \{(u, v) \mid \set{u_s, v_t} \in M \cap F_1\}$.
  Then the in- and out-degrees of $v \in \tilde{V}$ are the same (zero or one) and the in-degree of $t_j \in T$ and the out-degree of $s_i \in S$ are one in $P$.
  Thus $P$ is the disjoint union of a disjoint $S$--$T$ path $P'$ of $G$ and cycles on $\tilde{V}$.
  In particular, since $G$ is acyclic, it holds $P = P'$.
  It is easily confirmed that the former and the latter correspondences are inversely mapped.
\end{proof}

Orienting every edge of $\Gamma$ appropriately, we show that the correspondence given in \cref{lem:dag_correspondense} preserves the signs of disjoint $S$--$T$ paths and perfect matchings up to the factor of $\prn{-1}^k$.
Suppose that vertices in $V_S$ and $V_T$ are ordered so that the first $k$ vertices are $s_1, \ldots, s_k$ and $t_1, \ldots, t_k$, respectively.
Construct a directed bipartite graph $\vec{\Gamma} = (V_S \cup V_T, \vec{F}_1 \cup \vec{F}_2)$, where $\vec{F}_1$ is the orientation of $F_1$ from $V_T$ to $V_S$ and $\vec{F}_2$ is the orientation of $F_2$ from $V_S$ to $V_T$.
Recall from \cref{sec:perfect_matchings} that $\sgn \vec{M}$ is defined by~\eqref{def:sign_of_directed_perfect_bipartite_matching} for a perfect matching $M$ of $\Gamma$.

% another proof of LGV-lemma
\begin{lemma}\label{lem:normal-sgn}
  Let $P$ be a disjoint $S$--$T$ path of $G$ and $M$ the perfect matching of $\Gamma$ corresponding to $P$.
  Then it holds $\sgn P = \prn{-1}^k \sgn \vec{M}$.
\end{lemma}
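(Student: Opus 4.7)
The plan is to compute $\sgn \vec{M} = \sgn M \cdot \prod_{e \in M} s_e$ directly from the definition~\eqref{def:sign_of_directed_perfect_bipartite_matching} and to evaluate each factor by tracing how the path structure of $P$ determines the matching $M = M_1 \cup M_2$ produced in \cref{lem:dag_correspondense}.

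The factor $\prod_{e \in M} s_e$ is immediate from the orientation of $\vec{\Gamma}$. With $V_S$ playing the role of the first side of the bipartition, edges of $M_1 \subseteq \vec{F}_1$ are oriented from $V_T$ to $V_S$ and hence contribute $s_e = -1$ each, while edges of $M_2 \subseteq \vec{F}_2$ are oriented from $V_S$ to $V_T$ and contribute $s_e = +1$ each. Since $\card{M_1}$ equals the total number of edges in $P$, we obtain $\prod_{e \in M} s_e = (-1)^{\card{P}}$.

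The main step is to compute $\sgn \sigma_M$ via the cycle structure of $\sigma_M$. For each path $P_i : s_i \to u_1 \to \cdots \to u_{\ell_i - 1} \to t_{\sigma_P(i)}$ of $P$, the edges $\set{s_i, (u_1)_t}, \set{(u_1)_s, (u_2)_t}, \ldots, \set{(u_{\ell_i - 1})_s, t_{\sigma_P(i)}}$ of $M_1$ show that $\sigma_M$ carries the index of $s_i$ in $V_S$ successively through the positions of $u_1, \ldots, u_{\ell_i - 1}$ in $\tilde{V}_S$ and terminates at the index of $t_{\sigma_P(i)}$ in $V_T$. Consequently, each cycle $(c_1\; c_2\; \cdots\; c_r)$ of $\sigma_P$ lifts to a single cycle of $\sigma_M$ of length $r + \sum_{j=1}^r (\ell_{c_j} - 1)$, obtained by weaving in the internal vertices of the corresponding paths, while each vertex of $\tilde{V} \setminus U$ becomes a fixed point of $\sigma_M$ via $M_2$.

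Writing $c(\pi)$ for the number of cycles of a permutation $\pi$ and observing that $\card{U} = \card{P} - k$ and $n = \card{V_S} = k + \card{\tilde{V}}$, we obtain $c(\sigma_M) = c(\sigma_P) + \card{\tilde{V}} - \card{P} + k$. Substituting into $\sgn \pi = (-1)^{n - c(\pi)}$ yields $\sgn \sigma_M = (-1)^{\card{P} - k} \sgn \sigma_P$. Combining with $\prod_{e \in M} s_e = (-1)^{\card{P}}$ gives $\sgn \vec{M} = (-1)^{2\card{P} - k} \sgn P = (-1)^k \sgn P$, which is equivalent to the stated claim. The main obstacle is the cycle-lifting bookkeeping in the third paragraph; once the bijection between cycles of $\sigma_P$ and the nontrivial cycles of $\sigma_M$ is stated precisely, the rest is a routine parity count.
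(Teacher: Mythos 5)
Your proof is correct, but it follows a genuinely different route from the paper's. The paper augments $\vec{\Gamma}$ with the $k$ return edges $\vec{F}_3$ joining $t_{\sigma_P(i)}$ to $s_i$, takes the reference matching $M' = F_2 \cup F_3$ (whose sign is visibly $\prn{-1}^k \sgn P$), and transfers the sign from $M'$ to $M$ along the alternating cycles of $M \symdif M'$, invoking the standard fact that oddly oriented alternating cycles preserve the sign of a matching. You instead evaluate both factors of $\sgn \vec{M} = \sgn M \prod_{e \in M} s_e$ directly: the orientation factor is $\prn{-1}^{\card{P}}$ since exactly the $\card{M_1} = \card{P}$ edges of $F_1$ point from $V_T$ to $V_S$, and $\sgn \sigma_M$ is computed from its cycle structure, each cycle of $\sigma_P$ lifting (by vertex-disjointness of the paths) to a single cycle of $\sigma_M$ threaded through the internal vertices, with uncovered internal vertices as fixed points; the parity count $\sgn \sigma_M = \prn{-1}^{\card{P}-k}\sgn \sigma_P$ then combines with the orientation factor so that the dependence on $\card{P}$ cancels. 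Your argument is self-contained and more elementary (no appeal to the external lemma on oddly oriented alternating cycles), at the cost of the cycle-lifting bookkeeping, which you state precisely enough for the count to go through; the paper's argument avoids that bookkeeping by a one-step application of a standard matching-theory tool. Two minor remarks: you write $\card{U} = \card{P} - k$, whereas with $U$ the set of all vertices covered by $P$ (as in \cref{lem:dag_correspondense}) the correct identity is $\card{U \cap \tilde{V}} = \card{P} - k$, which is what your cycle count actually uses, so the conclusion is unaffected; and, like the paper, you implicitly use the natural consistent ordering of $\tilde{V}_S$ and $\tilde{V}_T$, under which the $F_2$-edges correspond to fixed points of $\sigma_M$.
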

\begin{proof}
  Let $\vec{\Gamma}^*$ be the directed bipartite graph obtained from $\vec{\Gamma}$ by appending $k$ directed edges
  \begin{align}
    \vec{F}_3 \defeq \set[\big]{\pbig{t_{\sigma_P(i)}, s_i}}[i \in \intset{k}].
  \end{align}
  Then $\vec{M}' \defeq \vec{F}_2 \cup \vec{F}_3$ is a perfect matching of $\vec{\Gamma}^*$.
  It is clear that $\sgn M' = \sgn P$.
  The number of edges in $\vec{M}'$ from $V_T$ to $V_S$ is $\card[\big]{\vec{F}_3} = k$.
  Hence the sign of $\vec{M}'$ is $\prn{-1}^k \sgn P$.

  We show $\sgn \vec{M} = \sgn \vec{M}$, which implies the claim.
  For $i \in \intset{k}$, let $P_i$ be the $s_i$--$t_{\sigma_P(i)}$ path in $P$ and $U_i$ the set of vertices in $\tilde{V}$ covered by $P_i$.
  Let $\vec{C}_i$ be the disjoint union of $\vec{C}_{i,1}, \vec{C}_{i,2}$ and $\vec{C}_{i,3}$ defined by
  \begin{align}
    \vec{C}_{i,1} &\defeq \set{(v_t, u_s)}[(u, v) \in P_i] \subseteq \vec{F}_1, \\
    \vec{C}_{i,2} &\defeq \set{(v_s, v_t)}[v \in U_i] \subseteq \vec{F}_2, \\
    \vec{C}_{i,3} &\defeq \set[\big]{(t_{\sigma_P(i)}, s_i)} \subseteq \vec{F}_3.
  \end{align}
  Then $\set[\big]{\vec{C}_1, \ldots, \vec{C}_k}$ is the set of alternating cycles of $M \symdif M'$.
  Note that $M$ is also a perfect matching of $\vec{\Gamma}^*$.

  An even cycle of directed edges is said to be \emph{oddly oriented} if the number of edges consistent with the direction of a traversal is odd for either choice of traversals.
  Every alternating cycle $\vec{C}_i$ is oddly oriented because the unique edge in $\vec{C}_{i,3}$ is opposite to all the other edges of $\vec{C}_i$.
  This means that the signs of $M$ and $M'$ are the same (see, e.g.,~\cite[Lemma~8.3.1]{Plummer1986}).
  Hence the claim holds.
\end{proof}

By \cref{lem:normal-sgn,lem:matching_sign_directed,lem:dag_correspondense}, we have the following.

\begin{theorem}
  Let $G$ be a DAG and take disjoint vertex subsets $S, T$ with $\card{S} = \card{T}$.
  Let $\vec{\Gamma}$ be the directed bipartite graph defined above and $(\vec{A}_1, \vec{A}_2)$ the matrix pair representing perfect matchings of $\vec{\Gamma}$ given in \cref{sec:perfect_matchings}.
  Then $\cbase(\vec{A}_1, \vec{A}_2)$ and disjoint $S$--$T$ paths of $G$ correspond one-to-one.
  In addition, if $(S, T)$ is in the LGV position, then $(\vec{A}_1, \vec{A}_2)$ is Pfaffian with constant $\pm 1$.
\end{theorem}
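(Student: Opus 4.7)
The plan is to assemble the theorem as a direct composition of the two bijections and sign identities established earlier, so there is no genuine new content beyond bookkeeping.

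First, I would handle the set-level one-to-one correspondence. By \cref{lem:dag_correspondense}, disjoint $S$--$T$ paths of $G$ are in bijection with perfect matchings of the underlying bipartite graph $\Gamma$. Because $\vec{\Gamma}$ is simply an orientation of $\Gamma$, the perfect matchings of $\vec{\Gamma}$ (as edge subsets, ignoring orientation) are exactly those of $\Gamma$. By \cref{thm:bipartite_matching_pfaffian}, $\cbase(\vec{A}_1, \vec{A}_2)$ equals the set of perfect matchings of $\Gamma$. Composing the two bijections yields the claimed correspondence between $\cbase(\vec{A}_1, \vec{A}_2)$ and disjoint $S$--$T$ paths of $G$.

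Next, for the Pfaffian property, I would track the constant $\det \vec{A}_1[B] \det \vec{A}_2[B]$ along the bijection. Given a common base $B$ corresponding to a perfect matching $M$ of $\Gamma$ and thence to a disjoint $S$--$T$ path $P$ of $G$, \cref{lem:bipartite_matching_sign_directed} gives
\begin{align}
  \det \vec{A}_1[B] \det \vec{A}_2[B] = \sgn \vec{M},
\end{align}
and \cref{lem:normal-sgn} gives $\sgn \vec{M} = (-1)^k \sgn P$. Chaining these,
\begin{align}
  \det \vec{A}_1[B] \det \vec{A}_2[B] = (-1)^k \sgn P \in \set{+1, -1}.
\end{align}
When $(S, T)$ is in the LGV position, $\sgn P$ is constant over all disjoint $S$--$T$ paths $P$, so the above quantity is a constant equal to $\pm 1$; this is exactly the Pfaffian condition of \cref{def:pfaffian_pair} with constant $\pm 1$.

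I do not anticipate a real obstacle, since the technical work has been absorbed into the preceding lemmas. The only points that require care are compatibility of conventions: that the vertex orderings on $V_S$ and $V_T$ used to construct $(\vec{A}_1, \vec{A}_2)$ (with $s_1,\dotsc,s_k$ listed first in $V_S$ and $t_1,\dotsc,t_k$ first in $V_T$) are the same orderings relative to which the matching sign $\sgn \vec{M}$ in \cref{lem:bipartite_matching_sign_directed} and \cref{lem:normal-sgn} is computed, and that the orientation convention on $\vec{\Gamma}$ (edges of $\vec{F}_1$ directed $V_T \to V_S$ and of $\vec{F}_2$ directed $V_S \to V_T$) is the one used in the derivation of \cref{lem:normal-sgn}. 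Once these alignments are verified, the proof reduces to citing the three lemmas in sequence.
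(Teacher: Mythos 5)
Your proposal is correct and takes essentially the same route as the paper: the theorem is stated there as an immediate consequence of chaining \cref{lem:dag_correspondense}, the directed bipartite sign identity (\cref{lem:bipartite_matching_sign_directed}), and \cref{lem:normal-sgn}, exactly the three facts you compose. Your additional remarks about aligning the vertex orderings and the orientation convention are the right compatibility checks, and no further argument is needed.
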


\Cref{fig:LGVDAG} illustrates two examples of $(S, T)$ that is in the LGV position.
If $\sigma_P$ is identity for every disjoint $S$--$T$ path $P$ of $G$, then $(S, T)$ is clearly in the LGV position.
Such $(S, T)$ is called \emph{nonpermutable}~\cite{Gessel1985} and one famous example arises from a planar DAG $G$.
Suppose that $s_1, \ldots, s_k, t_k, \ldots, t_1$ are aligned clockwise on the boundary of one face of $G$.
Then $(S, T)$ is nonpermutable because an $S$--$T$ path $P$ cannot be disjoint if $\sigma_P$ is not identity.
Another example of the LGV position also arises from a planar DAG $G$.
Suppose that all the vertices in $S$ are on the boundary of a face $F$ of $G$, and all the vertices in $T$ are on the boundary of another face that does not adjoin $F$.
Then $\sigma_P$ must be a power of a cyclic permutation over $\intset{k}$, whose sign is always $+1$ when $k$ is odd.

\tikzset{s/.style={fill, circle, inner sep=0pt, minimum size=4pt}}
\tikzset{t/.style={draw, circle, inner sep=0pt, minimum size=4pt}}
\tikzset{arc/.style={-{Stealth[length=2mm, width=2mm, angle'=40]}}}

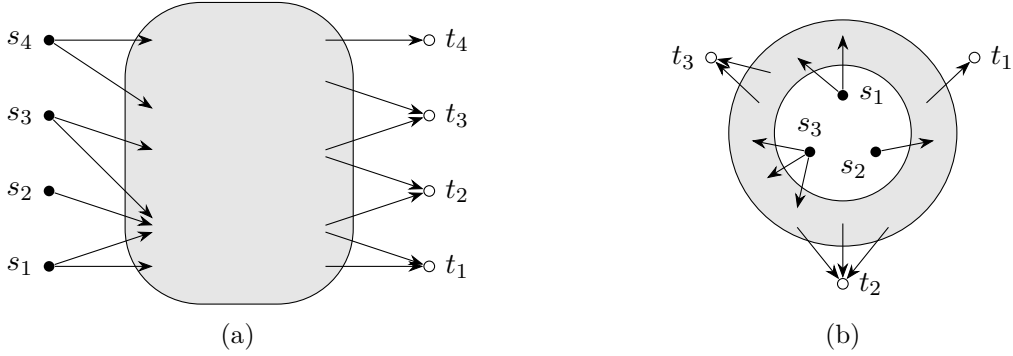
\begin{figure}[tbp]
  \begin{minipage}[b]{0.5\textwidth}
    \centering
    \begin{tikzpicture}
      \node (s1) [s] at (0, 1) {}; \node [left=0mm of s1] {$s_1$};
      \node (s2) [s] at (0, 2) {}; \node [left=0mm of s2] {$s_2$};
      \node (s3) [s] at (0, 3) {}; \node [left=0mm of s3] {$s_3$};
      \node (s4) [s] at (0, 4) {}; \node [left=0mm of s4] {$s_4$};
      \node (t1) [t] at (5, 1) {}; \node [right=0mm of t1] {$t_1$};
      \node (t2) [t] at (5, 2) {}; \node [right=0mm of t2] {$t_2$};
      \node (t3) [t] at (5, 3) {}; \node [right=0mm of t3] {$t_3$};
      \node (t4) [t] at (5, 4) {}; \node [right=0mm of t4] {$t_4$};

      \node (e) at (1.5,1.0) {};
      \node (f) at (1.5,1.5) {};
      \node (g) at (1.5,2.5) {};
      \node (h) at (1.5,3.0) {};
      \node (i) at (1.5,4.0) {};
      \node (e2) at (3.5,1.0) {};
      \node (f2) at (3.5,1.5) {};
      \node (g2) at (3.5,2.5) {};
      \node (h2) at (3.5,3.5) {};
      \node (i2) at (3.5,4) {};

      \draw[rounded corners=1cm,fill=gray!20] (1,0.5) rectangle ++(3,4) node [midway] {} ;

      \draw[arc] (s1) -- (e);
      \draw[arc] (s1) -- (f);
      \draw[arc] (s2) -- (f);
      \draw[arc] (s3) -- (f);
      \draw[arc] (s3) -- (g);
      \draw[arc] (s4) -- (h);
      \draw[arc] (s4) -- (i);

      \draw[arc] (e2) -- (t1);
      \draw[arc] (f2) -- (t1);
      \draw[arc] (f2) -- (t2);
      \draw[arc] (g2) -- (t2);
      \draw[arc] (g2) -- (t3);
      \draw[arc] (h2) -- (t3);
      \draw[arc] (i2) -- (t4);
    \end{tikzpicture}
    \subcaption{}
  \end{minipage}%
  \begin{minipage}[b]{0.5\textwidth}
    \centering
    \begin{tikzpicture}
      \draw[fill = gray!20] (0, 0) circle (1.5cm);
      \draw[fill = white] (0, 0) circle (0.9cm);

      \node (s1) [s] at ( 0    ,  0.5 ) {}; \node [right=0mm of s1] {$s_1$};
      \node (s2) [s] at ( 0.433, -0.25) {}; \node [below left=-1mm of s2] {$s_2$};
      \node (s3) [s] at (-0.433, -0.25) {}; \node [above=0mm of s3] {$s_3$};

      \node (t1) [t] at ( 1.732,  1) {}; \node [right=0mm of t1] {$t_1$};
      \node (t2) [t] at (     0, -2) {}; \node [right=0mm of t2] {$t_2$};
      \node (t3) [t] at (-1.732,  1) {}; \node [left=0mm of t3] {$t_3$};

      \draw[arc] (s1) -- (0, 1.3);
      \draw[arc] (s1) -- (-0.6, 1);
      \draw[arc] (s2) -- (1.2, -0.1);
      \draw[arc] (s3) -- (-1.2, -0.1);
      \draw[arc] (s3) -- (-0.6, -1);
      \draw[arc] (s3) -- (-1, -0.6);

      \draw[arc] (1.1, 0.4) -- (t1);
      \draw[arc] (0, -1.2) -- (t2);
      \draw[arc] (0.6, -1.25) -- (t2);
      \draw[arc] (-0.6, -1.25) -- (t2);
      \draw[arc] (-1.1, 0.4) -- (t3);
      \draw[arc] (-0.95, 0.8) -- (t3);
    \end{tikzpicture}
    \subcaption{}
  \end{minipage}
  \caption{%
    Examples of $(S, T)$ that are in the LGV position.
    Gray areas represent planar DAGs.
    (a) $s_1, \ldots, s_4, t_4, \ldots, t_1$ are aligned clockwise on the boundary of an (outer) face.
    (b) $S$ adjoins a face, $T$ adjoins another face, and $\card{S} = \card{T}$ is odd.
  }\label{fig:LGVDAG}
\end{figure}

Lastly, we present another application of \cref{lem:normal-sgn}.
Let $(\vec{A}_1, \vec{A}_2)$ be the matrix pair representing perfect matchings of $\vec{\Gamma}$.
Let $z = \prn{z_e}_{e \in F_1 \cup F_2}$ be a vector of distinct indeterminates indexed by $F_1 \cup F_2$.
We substitute $z_e \defeq 1$ for $e \in F_2$.
On indexing an component of $z$, we identify $\set{u_s, v_t} \in F_1$ with $(u, v) \in E$.
Put $N(z) \defeq \vec{A}_1 D(z) \trsp{{\vec{A}_2}}$.
Denote by $\mathcal{M}$ the set of perfect matchings of $\Gamma$, by $\mathcal{P}$ the set of disjoint $S$--$T$ paths of $G$, and by $\Omega(z)$ the matrix defined in the LGV lemma for $G$.
Then we have
\begin{align}\label{eq:derive_LGV}
  \det N(z)
  = \sum_{M \in \mathcal{M}} \sgn \vec{M} \prod_{e \in M} z_e
  = \sum_{P \in \mathcal{P}} \prn{-1}^k \sgn P \prod_{e \in P} z_e
  = \det \prn{-\Omega(z)},
\end{align}
where the first equality follows from~\eqref{eq:cauchy_binet_1}, the second one follows from \cref{lem:dag_correspondense,lem:normal-sgn}, and the last one is due to the LGV lemma.
Indeed, we can prove the equality $\det N(z) = \det \prn{-\Omega(z)}$ without using the LGV lemma, which turns to be a new proof of the lemma.

\begin{proof}[{of \cref{lem:normal-lgv}}]
  Recall that entries of $N(z)$ satisfy~\eqref{eq:bipartite_adjacency_matrix} up to the factor of $x$.
  The matrix $N(z)$ can be partitioned as
  \begin{align}
    N(z) =
    \begin{pmatrix}
      X & Y \\
      Z & W
    \end{pmatrix},
  \end{align}
  where the row and column sets of $X$ correspond to $S$ and $T$, respectively and other rows and columns are indexed by $\tilde{V}$.
  By sorting $\tilde{V}$ in a topological order with respect to $\vec{G}$, we can transform $W$ into an upper triangular matrix, as $G$ is acyclic.
  In addition, diagonal entries of $W$ are 1 since $(v_s, v_t) \in \vec{E}_2$ for $v \in \tilde{V}$.
  Hence we have $\det W(z) = 1$.
  By the formula~\eqref{eq:shurt_det} on the Schur complement, we have
  \begin{align}
    \det N(z)
    = \det W \det \pbig{X - Y{W}^{-1}Z}
    = \det \pbig{X - YW^{-1}Z}.
  \end{align}

  We show $X - YW^{-1}Z = -\Omega(z)$.
  Fix $i, j \in \intset{k}$ and let $G^{(i,j)}$ be the subgraph of $G$ obtained by deleting $S \setminus \set{s_i}$ and $T \setminus \set{t_j}$.
  We consider the disjoint $\set{s_i}$--$\set{t_j}$ path problem on $G^{(i,j)}$.
  The directed bipartite graph $\vec{\Gamma}^{(i,j)}$ on this problem is the subgraph of $\vec{\Gamma}$ induced by $\pbig{\tilde{V}_S \cup \set{s_i}} \cup \pbig{\tilde{V}_T \cup \set{t_j}}$.
  Similarly, the counterpart $N^{(i,j)}(z)$ of $N(z)$ on this problem is $N(z)[\tilde{V}_S \cup \set{s_i}, \tilde{V}_T \cup \set{t_j}]$, which means
  \begin{align}
    N^{(i,j)}(z) =
    \begin{pmatrix}
      X[\set{s_i}, \set{t_j}] & Y[\set{s_i}, \tilde{V}] \\
      Z[\tilde{V}, \set{t_j}] & W
    \end{pmatrix}.
  \end{align}
  By~\eqref{eq:shurt_det}, it holds
  \begin{align}
    \det N^{(i,j)}(z) = X[\set{s_i}, \set{t_j}] - Y[\set{s_i}, \tilde{V}] W^{-1} Z[\tilde{V}, \set{t_j}],
  \end{align}
  which coincides with the $(i,j)$th entry of $X - YW^{-1}Z$.

  Denote by $\mathcal{M}^{(i,j)}$ the set of perfect matchings of $\Gamma^{(i,j)}$ and by $\mathcal{P}^{(i,j)}$ the set of disjoint $\set{s_i}$--$\set{t_j}$ paths of $G^{(i,j)}$, which are exactly $s_i$--$t_i$ paths of $G$.
  In the same was as~\eqref{eq:derive_LGV}, we have
  \begin{align}
    \det N^{(i,j)}(z)
    &= \sum_{M \in \mathcal{M}^{(i,j)}} \sgn \vec{M} \prod_{e \in M} z_e \\
    &= \sum_{P \in \mathcal{P}^{(i,j)}} \prn{-1}^1 \sgn P \prod_{e \in P} z_e \\
    &= -\sum_{P \in \mathcal{P}^{(i,j)}} \prod_{e \in P} z_e
    = -\Omega_{i,j}(z).\label{eq:LGV_proof_last}
  \end{align}
  Note that we did not apply the LGV lemma on the last equality of~\eqref{eq:LGV_proof_last}; it is just the definition~\eqref{def:Omega_entry} of $\Omega_{i,j}(z)$.
  Hence $X - YW^{-1}Z = -\Omega(z)$ holds.
  The LGV lemma follows from this fact together with the first two equalities in~\eqref{eq:derive_LGV}.
\end{proof}

\subsection{Shortest Disjoint \ST\ Paths on Undirected Graphs}\label{sec:st}

%We extend the LGV lemma to the counting of shortest vertex-disjoint $S$--$T$ paths on a weighted undirected graph.
Let $G=(V,E)$ be an undirected graph and $S = \set{s_1, \ldots, s_k}$ and $T = \set{t_1, \ldots, t_k}$ disjoint vertex subsets of cardinality $k$.
An \emph{$S$--$T$ path} $P \subseteq E$ of $G$ is the union of $k$ paths $P_1, \ldots, P_k$ of $G$ satisfying~\ref{item:P1} (with direction ignored).
We denote the permutation in~\ref{item:P1} by $\sigma_P$.
An $S$--$T$ path $P$ is said to be (\emph{vertex}-)\emph{disjoint} if $P_i$ and $P_j$ do not share the same vertices for all distinct $i, j \in \intset{k}$.
The \emph{disjoint $S$--$T$ path problem} on $G$ is to find a disjoint $S$--$T$ path of $G$.
We also equip $G$ with a positive edge length $\funcdoms{l}{E}{\setRpp}$.
The \emph{length} $l(P)$ of an disjoint $S$--$T$ path $P$ of $G$ is defined as the sum of lengths of all edges in $P$.
The \emph{shortest disjoint $S$--$T$ path problem} on $G$ is to find a disjoint $S$--$T$ path of $G$ with minimum length.

We first show that the shortest disjoint $S$--$T$ path problem on an undirected graph is a generalization of the disjoint $S$--$T$ path problem on a DAG\@.
Let $\vec{G} = (V, \vec{E})$ be a DAG and take disjoint vertex subsets $S, T \subseteq V$ of cardinality $k$.
Let $v_1, \ldots, v_n$ be a topological ordering of $V$ with respect to $\vec{G}$, i.e., $(v_i, v_j) \notin \vec{E}$ for all $i, j \in \intset{n}$ with $i \ge j$.
Let $G = (V, E)$ be the undirected graph obtained from $\vec{G}$ by ignoring the orientation.
We set an edge length for $G$ as $l(e) \defeq j-i$ for $e = \set{v_i, v_j} \in E$ with $i < j$.

\begin{proposition}
  Suppose that $\vec{G}$ has at least one disjoint directed $S$--$T$ path.
  Then $P \subseteq E$ is a shortest disjoint $S$--$T$ path of $G$ if and only if $\vec{P}$ is a disjoint directed $S$--$T$ path of $\vec{G}$.
\end{proposition}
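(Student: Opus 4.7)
The plan is to derive a length lower bound on every disjoint $S$--$T$ path of $G$ that depends only on $(S, T)$ and the topological ordering, and then to recognize this bound as attained exactly by those paths whose orientation in $\vec{G}$ forms a disjoint directed $S$--$T$ path. Both directions of the equivalence then follow from the hypothesis that at least one such directed path exists.

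First, I would observe that, by the topological ordering, the edge length satisfies $l(\set{v_i, v_j}) = \card{i - j}$ and every arc of $\vec{E}$ points from a smaller-indexed vertex to a larger-indexed one. For any path $Q = u_0, u_1, \ldots, u_l$ of $G$, the triangle inequality on successive index differences gives $l(Q) = \sum_{m=1}^l \card{i_m - i_{m-1}} \geq \card{i_l - i_0}$, where $i_m$ denotes the index of $u_m$, with equality iff the sequence $i_0, \ldots, i_l$ is monotone.

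Second, I would apply this estimate to each component $P_i$ of a disjoint $S$--$T$ path $P$ with permutation $\sigma_P$, writing $\alpha_i$ for the index of $s_i$ and $\beta_j$ for that of $t_j$; this gives $l(P_i) \geq \card{\beta_{\sigma_P(i)} - \alpha_i} \geq \beta_{\sigma_P(i)} - \alpha_i$. Summing over $i \in \intset{k}$ and using that $\sigma_P$ permutes $\intset{k}$ to telescope the contributions yields
\begin{align*}
  l(P) \geq C \defeq \sum_{j=1}^k \beta_j - \sum_{i=1}^k \alpha_i,
\end{align*}
a quantity independent of $P$.

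Third, I would characterize equality: $l(P) = C$ is equivalent to each $P_i$ being strictly index-increasing from $s_i$ to $t_{\sigma_P(i)}$. Because every arc of $\vec{G}$ runs from smaller to larger index, this monotonicity is precisely the condition that $\vec{P}_i$ is a directed path from $s_i$ to $t_{\sigma_P(i)}$ in $\vec{G}$, i.e., that $\vec{P}$ is a disjoint directed $S$--$T$ path of $\vec{G}$. The hypothesis supplies at least one such directed path, so $C$ is attained and hence is the minimum length of a disjoint $S$--$T$ path of $G$; the equality characterization then yields both implications of the proposition. The argument is essentially a triangle-inequality estimate applied twice, and the only mildly delicate point is the simultaneous sharpness of the two inequalities, which forces not only monotonicity of each $P_i$ but also $\alpha_i \leq \beta_{\sigma_P(i)}$; this latter condition is automatic once monotonicity of a path running from $s_i$ to $t_{\sigma_P(i)}$ is established, so no extra work is required.
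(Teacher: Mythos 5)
Your proof is correct and follows essentially the same route as the paper: a telescoping/triangle-inequality lower bound $l(P) \ge \sum_j \beta_j - \sum_i \alpha_i$ independent of $P$, with equality exactly when every edge of $P$ is traversed from smaller to larger index, which by the topological ordering is precisely the condition that $\vec{P}$ is a disjoint directed $S$--$T$ path; the hypothesis then gives attainment of the bound. The only nitpick is the closing aside that $\alpha_i \le \beta_{\sigma_P(i)}$ is "automatic from monotonicity" (a monotone path could a priori be decreasing); it is harmless, since tightness of your second inequality already supplies that condition directly.
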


\begin{proof}
  Let $P = P_1 \cup \cdots \cup P_k$ be a disjoint $S$--$T$ path of $G$, where $P_i$ is the $s_i$--$t_{\sigma_P(i)}$ path contained in $P$ for $i \in [k]$.
  For each $e \in P$, we denote by $\partial^s e$ and $\partial^t e$ the ends of $e$ which are in sides closer to $s_i$ and to $t_{\sigma_P(i)}$ on $P_i$, respectively.
  Let $f(v_i) \defeq i$ for $i \in [n]$.
  Then we have
  \begin{align}\label{eq:l_P_i}
    l(P_i)
    = \sum_{e \in P_i} l(e)
    = \sum_{\{u,v\} \in P_i} |f(v) -f(u)|
    \ge \sum_{e \in P_i} \prn{f\pbig{\partial^t e} - f\pbig{\partial^s e}}
    = f\pbig{t_{{\sigma_P}(i)}} - f(s_i)
  \end{align}
  for $i \in \intset{k}$.
  Summing~\eqref{eq:l_P_i} up over all $i \in \intset{k}$, we have
  \begin{align}\label{eq:l_P}
    l(P)
    = \sum_{i=1}^k l(P_i)
    \ge \sum_{i=1}^k \prn{f\pbig{t_{{\sigma_P}(i)}} - f(s_i)}
    = \sum_{v \in T} f(v) - \sum_{u \in S} f(u).
  \end{align}
  The equality of~\eqref{eq:l_P} is attained if and only if $f\pbig{\partial^s e} \le f\pbig{\partial^t e}$ for every $e \in P$.
  Since $V$ is topologically ordered, this is equivalent to the condition that $\vec{P}$ is a disjoint directed $S$--$T$ path of $\vec{G}$.
  By assumption, such a disjoint $S$--$T$ path exists on $G$.
  Hence the set of shortest disjoint $S$--$T$ path of $G$ corresponds to the set of disjoint directed $S$--$T$ paths of $\vec{G}$.
\end{proof}

% Psuedopoly algorithm

% graphical matroid intersection
We next describe a reduction from the disjoint $S$--$T$ path problem on an undirected graph $G = (V, E)$ to the graphic matroid intersection problem, given by Tutte~\cite{Tutte1965}.
Let $G_S$ and $G_T$ be the graphs obtained from $G$ by shrinking $S$ and $T$ into a single vertex $v^*$, respectively.
Recall from \cref{sec:regular_matroids} that $\matroid(G_S)$ and $\matroid(G_T)$ denote the graphic matroids of $G_S$ and $G_T$.
An edge subset $B \subseteq E$ is a common base of $\matroid(G_S)$ and $\matroid(G_T)$ if and only if $B$ is a spanning forest of $G$ consisting of $k$ connected components each of which covers exactly one vertex belonging to $S$ and exactly one vertex belonging to $T$ in $G$.
This means that any common base $B$ contains a unique disjoint $S$--$T$ path $P$ of $G$.
Conversely, given a disjoint $S$--$T$ path $P$, we can construct a common base $B$ by adding some of the remaining edges to $P$ so that it forms a spanning forest with $k$ connected components.
This means that $\matroid(G_S)$ and $\matroid(G_T)$ have a common base if and only if $G$ has a disjoint $S$--$T$ path.
Note that $B \mapsto P$ is injective in this correspondence but $P \mapsto B$ is not so.

Analogously to Tutte's reduction, the shortest disjoint $S$--$T$ path problem can be reduced to the weighted graphic matroid intersection problem.
This is a special case of Yamaguchi's reduction~\cite{Yamaguchi2016} of the shortest disjoint \calS-path problem, which will be described in \cref{sec:stu}, on our $S$--$T$ path setting.
While this is a simple extension of Tutte's reduction, it provides a one-to-one correspondence between optimal solutions of these problems.
Let $G^*$ be the graph obtained from $G$ by appending $v^*$ as a new vertex and adding the edge set $E' \defeq \set[\big]{\set{v, v^*}}[v \in \tilde{V}]$, where $\tilde{V} := V \setminus (S \cup T)$.
Similarly, let $G_S^*$ and $G_T^*$ be the graphs obtained from $G_S$ and $G_T$ by adding $E'$, respectively.
Denote $E \cup E'$ by $E^*$.
We set a column weight $\funcdoms{w}{E^*}{\setR}$ of $\matroid\pbig{G_S^*}$ and $\matroid\pbig{G_T^*}$ as $w(e) \defeq l(e)$ for $e \in E$ and as $q(e) \defeq 0$ for $e \in E'$.

\begin{lemma}\label{lem:st-one2one}
  The minimum length of a disjoint $S$--$T$ path of $G$ with respect to $l$ is equal to the minimum weight of a common base of $\matroid\pbig{G_S^*}$ and $\matroid\pbig{G_T^*}$ with respect to $w$.
  In addition, there is a one-to-one correspondence between shortest disjoint $S$--$T$ paths of $G$ and minimum-weight common bases of $\matroid\pbig{G_S^*}$ and $\matroid\pbig{G_T^*}$.
\end{lemma}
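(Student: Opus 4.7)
The plan is to exhibit an explicit weight-preserving injection $\Phi$ from disjoint $S$--$T$ paths of $G$ to common bases of $\matroid(G_S^*)$ and $\matroid(G_T^*)$, and then show via a double count that every minimum-weight common base lies in its image. For the forward direction, I would set
\[
\Phi(P) \defeq P \cup \set[\big]{\set{v, v^*}}[v \in \tilde{V} \setminus V(P)],
\]
where $V(P)$ is the vertex set covered by $P$. A routine count gives $\card{\Phi(P)} = \card{\tilde{V}} + k$, matching the rank of both matroids; in $G_S^*$ the paths $P_i$ become $k$ branches of a tree rooted at the shrunken vertex $v^*$, with the uncovered $\tilde{V}$-vertices attached as leaves via $E'$, so $\Phi(P)$ is a spanning tree of $G_S^*$, and symmetrically of $G_T^*$. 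Since $w$ vanishes on $E'$, we have $w(\Phi(P)) = l(P)$, and $\Phi$ is injective because $\Phi(P) \cap E = P$.

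For the converse, given any common base $B$ I would set $F \defeq B \cap E$. Acyclicity of $B$ in both $G_S^*$ and $G_T^*$ forces $F$ to be a forest in $G$ in which each vertex of $S$ and each vertex of $T$ sits in its own component. Classify the components of $F$ in $G$ as type $(s,t)$, $(s)$, $(t)$, or $(\varnothing)$ according to which of $S$, $T$ they meet, with counts $n_{st}$, $m$, $m$, and $n_0$ (the equality $n_s = n_t = m$ follows from $n_{st} + n_s = n_{st} + n_t = k$). Setting $X \defeq \set{v \in \tilde{V}}[\set{v, v^*} \in B]$, computing $\card{F}$ from the number of components of $F$ in $G_S$ gives $\card{X} = \card{B} - \card{F} = m + n_0$. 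The crux of the argument is the following double count: because $B$ is a spanning tree of $G_S^*$, each non-$v^*$ component of $F$ in $G_S$, namely each type-$(t)$ or type-$(\varnothing)$ component, receives exactly one $E'$-edge in $B$ (one to preserve connectivity, no more to avoid a cycle through $v^*$), and hence contains exactly one $X$-vertex; the symmetric argument in $G_T^*$ places exactly one $X$-vertex in each type-$(s)$ component. Summing yields $\card{X} = 2m + n_0$, and comparison forces $m = 0$.

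Once $m = 0$ is established, $F$ decomposes into $k$ type-$(s,t)$ trees together with (possibly) some type-$(\varnothing)$ trees on $\tilde{V}$; each type-$(s,t)$ tree contains a unique path from its $S$-vertex to its $T$-vertex, and the union $P$ of these paths is a disjoint $S$--$T$ path contained in $F$. Positivity of $l$ on $E$ gives $l(P) \le l(F) = w(B)$, with equality if and only if each type-$(s,t)$ tree equals its $s$--$t$ path and each type-$(\varnothing)$ tree is a lone $\tilde{V}$-vertex, i.e., $B = \Phi(P)$. Combined with $w(\Phi(P)) = l(P)$, this yields $\min_B w(B) = \min_P l(P)$ and shows that $\Phi$ restricts to the desired bijection between shortest disjoint $S$--$T$ paths and minimum-weight common bases. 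The main obstacle is the double count forcing $m = 0$; once that is in hand, the remainder is routine tree-structural reasoning.
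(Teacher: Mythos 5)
Your proof is correct and follows essentially the same strategy as the paper: exhibit the explicit map $\Phi(P) = P \cup \set[\big]{\set{v,v^*}}[v \in \tilde{V} \setminus V(P)]$, observe that $w(\Phi(P)) = l(P)$, and use positivity of $l$ on $E$ to argue that minimum-weight common bases are exactly the images $\Phi(P)$ of shortest disjoint $S$--$T$ paths. The difference is one of rigor rather than route. The paper's proof disposes of the key structural step, namely that every common base $B$ of $\matroid\pbig{G_S^*}$ and $\matroid\pbig{G_T^*}$ contains a (unique) disjoint $S$--$T$ path, with the phrase ``as in the unweighted case,'' asserting that $B$ decomposes into a star $B_0 = B \cap E'$ at $v^*$ together with $k$ trees each meeting $S$ and $T$ in exactly one vertex. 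You instead prove this by classifying the components of $F = B \cap E$, deriving $\card{X} = m + n_0$ from the rank equation $\card{B} = \card{\tilde V} + k$, and then counting $E'$-edges per component from the two acyclicity/connectivity requirements to obtain $\card{X} = 2m + n_0$, forcing $m = 0$. This is valid; a slightly more direct version of the same observation is that a type-$(t)$ component would need exactly one $E'$-edge for connectivity in $G_S^*$ yet exactly zero for acyclicity in $G_T^*$, a local contradiction, so no type-$(s)$ or type-$(t)$ component can exist. As an incidental benefit, your analysis correctly allows a common base to contain type-$(\varnothing)$ trees with edges in $\tilde{V}$ (so $B \cap E'$ need not itself be a component of $B$ in $G^*$ when $B$ is not minimum-weight); the paper's ``if and only if'' characterization is stated a bit too strongly, but this does not affect the argument since only the existence of the embedded $S$--$T$ path and the positivity of $l$ are used.
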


\begin{proof}
  As in the unweighted case, $B \subseteq E^*$ is a common base of $\matroid\pbig{G_S^*}$ and $\matroid\pbig{G_T^*}$ if and only if $B$ is a spanning forest of $G^*$ consisting of $k+1$ connected components $B_0 = B \cap E', B_1, \ldots, B_k \subseteq B$ such that $v^*$ is covered only by $B_0$ and each of $B_1, \ldots, B_k$ covers exactly one vertex belonging to $S$ and exactly one vertex belonging to $T$.
  Thus any common base $B$ contains a unique disjoint $S$--$T$ path $P$ of $G$.
  Since $w$ is nonnegative, we have $w(B) \ge l(P)$.
  Conversely, given a disjoint $S$--$T$ path $P$, we can construct a common base $B$ by adding $\set{v, v^*} \in E'$ for each $v \in \tilde{V}$ that is not covered by $P$.
  We have $w(B) = l(P)$ as $w(e) = 0$ for $e \in E'$.
  Hence the shortest length of a disjoint $S$--$T$ path of $G$ is equal to the minimum weight of a common base of $\matroid\pbig{G_S^*}$ and $\matroid\pbig{G_T^*}$.

  As we discussed above, every minimum-weight common base contains a unique shortest disjoint $S$--$T$ path, and for every shortest disjoint $S$--$T$ path $P$, there exists a minimum-weight common base $B$ containing $P$.
  Since $l(e) > 0$ for all $e \in E$, this map $P \mapsto B$ is injective.
\end{proof}

We next consider the correspondence in \cref{lem:st-one2one} by means of matrices.
Let $A$ be the incidence matrix~\eqref{def:incidence_matrix} of any orientation of $G^*$.
We define $A_1 \defeq A[S \cup \tilde{V}, E^*]$ and $A_2 \defeq A[T \cup \tilde{V}, E^*]$.
Then $A_1$ and $A_2$ represent $\matroid(G_S^*)$ and $\matroid(G_T^*)$, respectively.
We assume that for each $i \in \intset{k}$, the $i$th rows of $A_1$ and $A_2$ correspond to $s_i$ and $t_i$, respectively.
Then the sign of a disjoint $S$--$T$ path $P$ can be written using the subdeterminants of $A_1$ and $A_2$ as follows, even if $P$ is not the shortest.

\begin{lemma}\label{lem:st_sign}
  Let $P$ be a disjoint $S$--$T$ path of $G$ and $B$ a common base of $(A_1, A_2)$ containing $P$.
  Then we have
  \begin{align}
    \sgn P = \prn{-1}^k \det A_1[B] \det A_2[B].
  \end{align}
\end{lemma}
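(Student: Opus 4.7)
My plan is to decompose $\det A_1[B] \det A_2[B]$ along the connected components of $B$ viewed as a subgraph of $G^*$ and to track the signs contributed by the row permutations needed to block-diagonalize. As noted in the proof of \cref{lem:st-one2one}, $B$ is a spanning forest of $G^*$ with $k+1$ components: a component $B_0$ containing $v^*$ and, for each $i \in \intset{k}$, a tree $B_i$ containing both $s_i$ and $t_{\sigma_P(i)}$. Let $V(B_i)$ denote the vertex set of $B_i$ and put $\tilde{V}_i \defeq V(B_i) \cap \tilde{V}$. Since each $B_i$ is a tree, the $s_i$--$t_{\sigma_P(i)}$ subpath $P_i$ of $P$ coincides with the unique such path in $B_i$.

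First, I would simultaneously permute the columns of $A_1[B]$ and $A_2[B]$ to group them by component; this common column permutation contributes squared signs to the product of the two determinants and can be ignored. After suitable row permutations $\pi_1$ (on $A_1[B]$) and $\pi_2$ (on $A_2[B]$), both matrices become block diagonal with one block per component. For $i \in \intset{k}$, the $A_1$-block consists of rows $\set{s_i} \cup \tilde{V}_i$ and columns $B_i$, which equals the incidence matrix of $B_i$ with the row for $t_{\sigma_P(i)}$ deleted; the $A_2$-block consists of rows $\set{t_{\sigma_P(i)}} \cup \tilde{V}_i$ and the same columns, equal to the incidence matrix of $B_i$ with the row for $s_i$ deleted. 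The $B_0$-block uses the same rows $\tilde{V}_0$ in both $A_1$ and $A_2$, so the two blocks are identical.

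Next, I would compare $\sgn \pi_1$ and $\sgn \pi_2$. Both permutations act identically on the $\tilde{V}$-rows, but they differ on the $S$-rows of $A_1$ versus the $T$-rows of $A_2$: $\pi_1$ sends $s_i$ (originally at position $i$ of $S$) into the designated slot of block $B_i$, while $\pi_2$ sends $t_{\sigma_P(i)}$ (originally at position $\sigma_P(i)$ of $T$) into the same slot. A direct comparison shows that $\pi_2^{-1}\pi_1$, restricted to these positions and extended by identity on the $\tilde{V}$-positions, realizes $\sigma_P$, so $\sgn \pi_1 \cdot \sgn \pi_2 = \sgn \sigma_P = \sgn P$.

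Finally, I would compute the product of the two block determinants for each component. The $B_0$-block contributes $(\pm 1)^2 = 1$. For $i \in \intset{k}$, order the rows of the incidence matrix of $B_i$ as $s_i, t_{\sigma_P(i)}$, followed by $\tilde{V}_i$; then the $A_1$-block is obtained by deleting the second row and the $A_2$-block by deleting the first. Because the rows of such an incidence matrix sum to zero, an elementary row-operation argument yields the identity that the determinant with the row at position $p$ removed equals $(-1)^{p-q}$ times the determinant with the row at position $q$ removed. Applied with $p = 1$ and $q = 2$, this shows the product of the two block determinants equals $-1$. Taking the product over the $k$ non-$B_0$ blocks yields $(-1)^k$, and combining with $\sgn \pi_1 \sgn \pi_2 = \sgn P$ gives $\det A_1[B] \det A_2[B] = (-1)^k \sgn P$, as desired. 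The most delicate step is the row-swap identity for incidence matrices together with the choice of canonical ordering that places $s_i$ and $t_{\sigma_P(i)}$ adjacent, which is what forces the clean factor $(-1)^k$.
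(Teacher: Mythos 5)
Your proof is correct and follows essentially the same route as the paper: decompose $B$ into its $k+1$ components, block-diagonalize $A_1[B]$ and $A_2[B]$, show the two row permutations differ by $\sigma_P$, and extract a factor $-1$ from each non-$B_0$ block pair. The only cosmetic difference is how that $-1$ is obtained — you use the row-deletion identity for a matrix whose rows sum to zero, while the paper adds the $s_i$- and $t_{\sigma_P(i)}$-rows and observes the resulting matrix $Z_i$ is singular — but both rest on the same zero-row-sum property of the component's incidence matrix.
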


\begin{proof}
  As shown in the proof of \cref{lem:st-one2one}, $B$ consists of $k+1$ connected components $B_0$, $B_1, \ldots, B_k$ in $G^*$.
  Here, $B_0$ covers $v^*$ and $B_i$ contains the $s_i$--$t_{\sigma_P(i)}$ path contained in $P$ for $i \in \intset{k}$.
  Let $V_i$ be the vertex subset covered by $B_i$ for $i = 0, \ldots, k$.
  By row and column permutations, $A_1[B]$ is transformed into a block diagonal matrix $X = \diag(X_0, X_1, \ldots, X_k)$, where $X_0$ has the row set $V_0 \setminus \set{v^*}$ and the column set $B_0$ and $X_i$ has the row set $V_i \setminus \set[\big]{t_{\sigma_P(i)}}$ and the column set $B_i$ for $i \in \intset{k}$.
  Similarly, we permute rows and columns of $A_2[B]$ so that it becomes a block diagonal matrix $Y = \diag(Y_0, Y_1, \ldots, Y_k)$, where $Y_0$ has the row set $V_0 \setminus \set{v^*}$ and the column set $B_0$ and $Y_i$ has the row set $V_i \setminus \set[\big]{s_i}$ and the column set $B_i$ for $i \in \intset{k}$.
  We can assume that column permutations of these two transformations are the same.
  We can also assume that $X_0$ and $Y_0$ have the same ordering of rows, and for each $i \in \intset{k}$, the first rows of $X_i$ and $Y_i$ correspond to $s_i$ and $t_{\sigma_P(i)}$, respectively, and other rows are in the same ordering.
  This implies that the product of signs of these two row permutations on $A_1$ and $A_2$ are $\sgn P$.
  Hence we have
  \begin{align}\label{eq:st_sign_mid}
    \det A_1[B] \det A_2[B]
    = \sgn P \det X \det Y
    = \sgn P \prod_{i=0}^k \det X_i \det Y_i.
  \end{align}

  Here, we evaluate $\det X_i \det Y_i$ for $i = 0, \ldots, k$.
  Note that $\det X_i \det Y_i$ is in $\set{+1, -1}$ since the incidence matrix $A$ is totally unimodular and $A_1[B]$ and $A_2[B]$ are nonsingular.
  For $i = 0$, we have $\det X_0 \det Y_0 = 1$ by $X_0 = Y_0 = A[V_0 \setminus \set{v^*}, B_0]$.
  For $i \in \intset{k}$, let $Z_i$ be the matrix such that the first row is the sum of those of $X_i$ and $Y_i$ and other rows are $X_i[V_i \cap \tilde{V}, B_i] = Y_i[V_i \cap \tilde{V}, B_i] = A[V_i \cap \tilde{V}, B_i]$.
  Note that $\det X_i + \det Y_i = \det Z_i$.
  Since both the ends of every edge in $B_i$ are in $V_i$, every column of $Z_i$ contains exactly one $+1$ and one $-1$ and other entries are zero.
  Hence $Z_i$ is singular and thus we have $\det X_i = -\det Y_i$.
  The claim of the proposition holds by~\eqref{eq:st_sign_mid}.
\end{proof}

We have the following conclusion by \cref{lem:st-one2one,lem:st_sign}.

\begin{theorem}
  Let $G$ be an undirected graph and take disjoint vertex subsets $S, T$ with $k \defeq \card{S} = \card{T}$.
  Let $(A_1, A_2)$ be the matrix pair associated with $G, S$ and $T$.
  When $(S, T)$ is in the LGV position, then $(A_1, A_2)$ is Pfaffian with constant ${(-1)}^k \sgn P$, where $P$ is an arbitrary disjoint $S$--$T$ path of $G$.
  In addition, when $G$ is equipped with a positive edge length $l$, there is a one-to-one correspondence between shortest disjoint $S$--$T$ paths of $G$ and minimum-weight common bases of $(A_1, A_2)$ with respect to the column weight $w$ defined above.
\end{theorem}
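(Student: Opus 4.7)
The plan is to assemble the theorem directly from the two preceding lemmas (\cref{lem:st-one2one,lem:st_sign}), which have already done the heavy lifting.

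For the first assertion, I would first observe that every common base $B \in \cbase(A_1, A_2)$ contains a unique disjoint $S$--$T$ path $P$ of $G$: this is the content of the forest-decomposition argument used in the proof of \cref{lem:st-one2one} (with the reduction back to the non-starred graphs obtained by noting that edges of $E'$ have weight $0$, but here we only need the structural decomposition of $B$ into connected components $B_0, B_1, \dots, B_k$). Combining this with \cref{lem:st_sign}, which gives $\sgn P = (-1)^k \det A_1[B] \det A_2[B]$ for any common base $B$ containing $P$, I conclude that
\begin{align}
  \det A_1[B] \det A_2[B] = (-1)^k \sgn P
\end{align}
for every $B \in \cbase(A_1, A_2)$. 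Since $(S, T)$ is in the LGV position, $\sgn P$ takes a common value across all disjoint $S$--$T$ paths, so $\det A_1[B] \det A_2[B]$ is the constant $(-1)^k \sgn P$ for every common base $B$. By \cref{def:pfaffian_pair}, $(A_1, A_2)$ is therefore Pfaffian with the claimed constant.

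For the second assertion, I would invoke \cref{lem:st-one2one} essentially verbatim: the matrices $A_1$ and $A_2$ were constructed to represent the graphic matroids $\matroid(G_S^*)$ and $\matroid(G_T^*)$, so $\cbase(A_1, A_2)$ coincides with the common bases of the matroid pair, and the weight function $w$ on $E^*$ is the same in both formulations. The one-to-one correspondence between shortest disjoint $S$--$T$ paths of $G$ and minimum-weight common bases of $(\matroid(G_S^*), \matroid(G_T^*))$ established in \cref{lem:st-one2one} therefore transfers directly to one between shortest disjoint $S$--$T$ paths of $G$ and minimum-weight common bases of $(A_1, A_2)$.

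No step here looks genuinely difficult, since the substantive work—the sign identity in \cref{lem:st_sign} and the length-preserving bijection in \cref{lem:st-one2one}—has already been carried out; the only thing to verify carefully is that \cref{lem:st_sign} applies to \emph{every} common base (not just those extending a fixed $P$), which is immediate once one recalls that any common base $B$ decomposes into a spanning forest whose non-trivial components trace out a unique disjoint $S$--$T$ path.
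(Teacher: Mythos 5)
Your proposal is correct and matches the paper's own argument: the paper derives this theorem exactly by combining \cref{lem:st-one2one} (the structural decomposition and length-preserving bijection) with \cref{lem:st_sign} (the sign identity valid for every common base containing a given path), just as you do.
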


We say that $(S, T)$ is in the \emph{LGV position} on $G$ if $\sgn P$ is constant for any disjoint $S$--$T$ path $P$ of $G$.
\Cref{fig:LGV-ST} illustrates two examples of LGV-position, which are obtained by ignoring edge directions in \cref{fig:LGVDAG}.

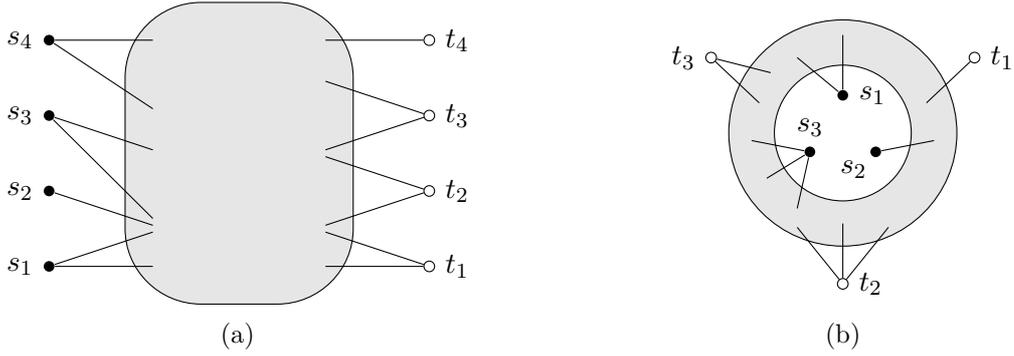
\begin{figure}[tbp]
  \begin{minipage}[b]{0.5\textwidth}
    \centering
    \begin{tikzpicture}
      \node (s1) [s] at (0, 1) {}; \node [left=0mm of s1] {$s_1$};
      \node (s2) [s] at (0, 2) {}; \node [left=0mm of s2] {$s_2$};
      \node (s3) [s] at (0, 3) {}; \node [left=0mm of s3] {$s_3$};
      \node (s4) [s] at (0, 4) {}; \node [left=0mm of s4] {$s_4$};
      \node (t1) [t] at (5, 1) {}; \node [right=0mm of t1] {$t_1$};
      \node (t2) [t] at (5, 2) {}; \node [right=0mm of t2] {$t_2$};
      \node (t3) [t] at (5, 3) {}; \node [right=0mm of t3] {$t_3$};
      \node (t4) [t] at (5, 4) {}; \node [right=0mm of t4] {$t_4$};

      \node (e) at (1.5,1.0) {};
      \node (f) at (1.5,1.5) {};
      \node (g) at (1.5,2.5) {};
      \node (h) at (1.5,3.0) {};
      \node (i) at (1.5,4.0) {};
      \node (e2) at (3.5,1.0) {};
      \node (f2) at (3.5,1.5) {};
      \node (g2) at (3.5,2.5) {};
      \node (h2) at (3.5,3.5) {};
      \node (i2) at (3.5,4) {};

      \draw[rounded corners=1cm,fill=gray!20] (1,0.5) rectangle ++(3,4) node [midway] {} ;

      \draw (s1) -- (e);
      \draw (s1) -- (f);
      \draw (s2) -- (f);
      \draw (s3) -- (f);
      \draw (s3) -- (g);
      \draw (s4) -- (h);
      \draw (s4) -- (i);

      \draw (e2) -- (t1);
      \draw (f2) -- (t1);
      \draw (f2) -- (t2);
      \draw (g2) -- (t2);
      \draw (g2) -- (t3);
      \draw (h2) -- (t3);
      \draw (i2) -- (t4);
    \end{tikzpicture}
    \subcaption{}
  \end{minipage}%
  \begin{minipage}[b]{0.5\textwidth}
    \centering
    \begin{tikzpicture}
      \draw[fill = gray!20] (0, 0) circle (1.5cm);
      \draw[fill = white] (0, 0) circle (0.9cm);

      \node (s1) [s] at ( 0    ,  0.5 ) {}; \node [right=0mm of s1] {$s_1$};
      \node (s2) [s] at ( 0.433, -0.25) {}; \node [below left=-1mm of s2] {$s_2$};
      \node (s3) [s] at (-0.433, -0.25) {}; \node [above=0mm of s3] {$s_3$};

      \node (t1) [t] at ( 1.732,  1) {}; \node [right=0mm of t1] {$t_1$};
      \node (t2) [t] at (     0, -2) {}; \node [right=0mm of t2] {$t_2$};
      \node (t3) [t] at (-1.732,  1) {}; \node [left=0mm of t3] {$t_3$};

      \draw (s1) -- (0, 1.3);
      \draw (s1) -- (-0.6, 1);
      \draw (s2) -- (1.2, -0.1);
      \draw (s3) -- (-1.2, -0.1);
      \draw (s3) -- (-0.6, -1);
      \draw (s3) -- (-1, -0.6);

      \draw (1.1, 0.4) -- (t1);
      \draw (0, -1.2) -- (t2);
      \draw (0.6, -1.25) -- (t2);
      \draw (-0.6, -1.25) -- (t2);
      \draw (-1.1, 0.4) -- (t3);
      \draw (-0.95, 0.8) -- (t3);
    \end{tikzpicture}
    \subcaption{}
  \end{minipage}
  \caption{%
    Examples of $(S, T)$ that are in the LGV position.
    Gray areas represent planar DAGs.
  }\label{fig:LGV-ST}
\end{figure}

\subsection{Shortest Disjoint \STU\ Paths on Undirected Graphs}\label{sec:stu}

We further extend the shortest disjoint $S$--$T$ path problem to the shortest disjoint $S$--$T$--$U$ path problem, which is a special case of the shortest disjoint \calS-path problem.

% matroid parity
We first introduce Mader's disjoint \calS-path problem~\cite{Gallai1964,Mader1978}.
Let $G = (V, E)$ be an undirected graph and $\mathcal{S} = \{S_1, \ldots, S_s\}$ a family of disjoint nonempty subsets of $V$.
Suppose that $\Sigma \defeq S_1 \cup \cdots \cup S_s$ is of cardinality $2k$ and ordered as $u_1, \ldots, u_{2k}$ so that $i \le j$ means $\alpha \le \beta$, where $u_i \in S_{\alpha}$ and $u_j \in S_{\beta}$ for $i, j \in \intset{2k}$.
Vertices in $\Sigma$ are called \emph{terminals}.
Recall that $F_{2k}$ is the subset of $\sym_{2k}$ defined by~\eqref{def:F}.
An \emph{\calS-path} $P$ of $G$ is the union of $k$ paths $P_1, \ldots, P_k \subseteq E$ of $G$ satisfying the following:
\begin{enumerate}[label={(P2)}]
  \item There exists a permutation $\sigma \in F_{2k}$ such that $P_i$ is a path between $u_{\sigma(2i-1)} \in S_\alpha$ and $u_{\sigma(2i)} \in S_\beta$ with $\alpha \ne \beta$ for each $i \in \intset{k}$.\label{item:P2}
\end{enumerate}
Namely, $P$ is an \calS-path if the ends of each $P_i$ belong to distinct parts in $\mathcal{S}$ and the ends of $P_i$ and $P_j$ are disjoint for all distinct $i, j \in \intset{k}$.
We call an \calS-path $P$ \emph{disjoint} if $P_i$ and $P_j$ have no common vertices for all distinct $i,j \in [k]$.
For a disjoint \calS-path $P$, a permutation satisfying~\ref{item:P2} uniquely exists in $F_{2k}$ and we denote it by $\sigma_P$.
The \emph{sign} of a disjoint \calS-path $P$ is defined as $\sgn P \defeq \sgn \sigma_P$.
The \emph{disjoint \calS-path problem} on $G$ is to find a disjoint \calS-path of $G$.
We also consider the situation when $G$ is equipped with a positive edge length $l \colon E \to \setRpp$.
Then the \emph{length} of an \calS-path $P$ is defined as $l(P) \defeq \sum_{e \in P} l(e)$.
The \emph{shortest disjoint \calS-path problem} is to find a disjoint \calS-path of $G$ with minimum length.

We next describe a reduction of the disjoint \calS-path problem to the linear matroid parity problem, based on Schrijver's linear representation~\cite{Schrijver2003} of Lov\'asz' reduction~\cite{Lovasz1980}.
We assume that there are no edges connecting terminals.
Put $\tilde{V} \defeq V \setminus \Sigma$, $\tilde{E} \defeq \set[\big]{\set{u,v} \in E}[u,v \in \tilde{V}]$, $m \defeq \card{E}$ and $\tilde{m} \defeq \card[\big]{\tilde{E}}$.
Fix two-dimensional row vectors $b_1, \ldots, b_s$ which are pairwise linearly independent.
We construct a matrix
\begin{align}
  X = \begin{pmatrix}
    X_1 & O   \\
    X_2 & X_3
  \end{pmatrix}
  %\in \setR^{(2n-2k) \times 2m}
\end{align}
from $G$ as follows.
The size of each block is $2k \times 2\tilde{m}$ for $X_1$, $(2n-4k) \times 2\tilde{m}$ for $X_2$, and $(2n-4k) \times 2(m-\tilde{m})$ for $X_3$.
Each edge $e \in \tilde{E}$ is associated with two columns of $\begin{psmallmatrix} X_1 \\ X_2 \end{psmallmatrix}$ and each $e \in E \setminus \tilde{E}$ is associated with two columns of $\begin{psmallmatrix} O \\ X_3 \end{psmallmatrix}$.
Each terminal $u_i \in \Sigma$ corresponds to the $i$th row of $\begin{pmatrix} X_1 & O \end{pmatrix}$ for $i \in \intset{2k}$ and each $v \in \tilde{V}$ is associated with two rows of $\begin{pmatrix} X_2 & X_3 \end{pmatrix}$.
Entries of each block are determined as follows.
\begin{itemize}
  \item The $1 \times 2$ submatrix of $X_1$ associated with $u_i \in U$ and $e \in E \setminus \tilde{E}$ is $b_\alpha$ if $e \cap S_{\alpha} = \set{u_i}$ and $O$ otherwise.
  \item The $2 \times 2$ submatrix of $X_2$ associated with $v \in \tilde{V}$ and $e \in E \setminus \tilde{E}$ is the identify matrix $I_2$ of order two if $v \in e$ and $O$ otherwise.
  \item The matrix $X_3$ is defined to be the Kronecker product $H[\tilde{V}, \tilde{E}] \otimes I_2$, where $H$ is the incidence matrix~\eqref{def:incidence_matrix} of any orientation of $G$.
        Namely, $X_3$ is obtained from $H[\tilde{V}, \tilde{E}]$ by replacing $+1$ with $+I_2$, $-1$ with $-I_2$, and $0$ with $O$.
\end{itemize}
We regard each edge $e \in E$ as a line of $X$, which consists of the two columns associated with $e$.

\begin{lemma}[{\cite[Lemma~4]{Yamaguchi2016}}]\label{lem:Yamaguchi2017}
  An edge subset $B \subseteq E$ is a parity base of $(X, E)$ if and only if $B$ is a spanning forest of $G$ such that every connected component covers exactly two terminals belonging to distinct parts of $\mathcal{S}$.
\end{lemma}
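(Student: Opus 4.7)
The plan is to analyze the block-diagonal structure of $X[B]$ induced by the connected components of the subgraph $(V, B)$, reduce the characterization to a determinant computation on each block, and then evaluate that determinant by induction. To begin, note that the rows of $X$ are indexed by $V$ (one row per terminal in $\Sigma$, two rows per vertex in $\tilde V$) and that the two columns associated with an edge $e \in B$ have nonzero entries only in the rows indexed by the endpoints of $e$. Permuting rows and columns by components of $(V, B)$ therefore turns $X[B]$ into a block-diagonal matrix with one block per component $C$, of size $(2n_C - t_C) \times 2b_C$, where $n_C$, $t_C$, $b_C$ denote the numbers of vertices, terminals, and $B$-edges in $C$.

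For $X[B]$ to be nonsingular, each block must itself be square and nonsingular, which forces $b_C = n_C - t_C/2$; in particular, $t_C$ must be even. Combined with the bound $b_C \ge n_C - 1$ coming from connectedness of $C$, this yields $t_C \in \{0, 2\}$. In the case $t_C = 0$, the component lies entirely in $\tilde V$, so all its edges are in $\tilde E$ and the block is exactly $H[V_C, E_C] \otimes I_2$ for a unicyclic connected graph; since the oriented incidence matrix of a connected graph on $n_C$ vertices has rank $n_C - 1$, this block is singular. Hence every component must be a tree containing exactly two terminals. The remaining task is to show that for such a tree component with terminals in $S_\alpha$ and $S_\beta$, the corresponding block determinant equals $\pm \det\begin{psmallmatrix} b_\alpha \\ b_\beta \end{psmallmatrix}$, which is nonzero if and only if $\alpha \ne \beta$ by pairwise linear independence of the $b_\gamma$'s.

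I would prove this determinant formula by induction on $n_C$. The base case $n_C = 3$, where $C$ is a two-edge path $u_i - v - u_j$ through a single non-terminal, reduces to an explicit $4 \times 4$ determinant, directly giving $\pm\det\begin{psmallmatrix} b_\alpha \\ b_\beta \end{psmallmatrix}$. For the inductive step, two subcases arise. If the tree possesses a non-terminal leaf $v$, its two rows carry a $\pm I_2$ block at the unique incident edge $e$ and zeros elsewhere, so Laplace expansion on those two rows and two columns peels off a factor of $\pm 1$ and leaves precisely the block associated with the smaller tree $T \setminus \{v, e\}$, which still has the same two terminals. Otherwise the only leaves are the two terminals and $T$ is a path $u_i - v_1 - \cdots - v_m - u_j$; here a column operation that subtracts the columns of the edge $\{u_i, v_1\}$ from those of $\{v_1, v_2\}$ cancels the incidence of $v_1$ with its interior edge, after which Laplace expansion on $v_1$'s rows reduces to the block of the path $u_i - v_2 - \cdots - v_m - u_j$ with the coefficient $b_\alpha$ now (up to a sign) decorating the edge $\{u_i, v_2\}$; this produces exactly the inductive smaller instance.

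The main obstacle is this last inductive step, where the standard ``prune a leaf'' strategy does not apply because the leaves are terminals carrying the irreducible vectors $b_\alpha, b_\beta$: the path case forces one to transport $b_\alpha$ along the path via a column operation, and one must carefully track how the signs inherited from the chosen orientation of $X_3$, and from the column operations themselves, combine to preserve the claimed formula $\pm \det\begin{psmallmatrix} b_\alpha \\ b_\beta \end{psmallmatrix}$.
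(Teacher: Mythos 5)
Your proposal is correct, but it is worth noting that the paper itself does not prove this statement at all: it is imported verbatim as Lemma~4 of Yamaguchi's paper \cite{Yamaguchi2016}, so you have supplied a self-contained proof where the authors rely on a citation. Your route --- decompose $X[B]$ into one block per connected component of $(V,B)$, force each block to be square (which kills uncovered vertices and any component with an odd number of terminals, and restricts to $t_C\in\{0,2\}$), dismiss the terminal-free unicyclic case by the rank deficiency of $H[V_C,E_C]\otimes I_2$, and evaluate the two-terminal tree block as $\pm\det\begin{psmallmatrix} b_\alpha \\ b_\beta \end{psmallmatrix}$ by leaf-pruning plus a column-elimination step for the path case --- is sound, and it closely parallels the computation the paper does carry out later for the augmented matrix $A$ in the proofs of \cref{lem:stu_correspondence,lem:stu_sign_correspondence}, where $A[B]$ is likewise split into per-component blocks $Z_i$ and each $Z_i$ is reduced by row/column operations to a matrix of determinant $\det\begin{psmallmatrix} b_\alpha \\ b_\beta \end{psmallmatrix}$. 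Two small remarks: the ``main obstacle'' you flag (tracking the exact sign through the path-case column operation and the orientation of $X_3$) is immaterial for this lemma, since only nonsingularity of each block is needed and your claimed formula already carries a $\pm$; precise signs only matter for the later sign lemma, where the paper handles them by explicitly normalizing line signs. Also, in the path case your induction should be phrased over abstract block matrices of trees with two marked terminals (the ``path $u_i$--$v_2$--$\cdots$--$u_j$'' you reduce to need not be a subgraph of $G$), and you should state explicitly that squareness of the isolated-vertex blocks fails, which is what yields the spanning condition --- both are presentational points, not gaps.
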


Note that if $B$ is a parity base of $(X, E)$, the number of connected components must be $k$ since $B$ covers all the vertices of $G$ by \cref{lem:Yamaguchi2017}.
Hence $(X, E)$ has a parity base if and only if $G$ has a disjoint \calS-path.
Unfortunately, as in the $S$--$T$ path case described in \cref{sec:st}, this reduction does not provide a one-to-one correspondence between $\pbase(X, E)$ and the set of disjoint \calS-paths of $G$.

Yamaguchi~\cite{Yamaguchi2016} showed that the shortest disjoint \calS-path problem can be reduced to the weighted linear matroid parity problem.
Here we present a simplified reduction for our setting (where an \calS-path covers all terminals), together with a one-to-one correspondence of optimal solutions.
Let $G^*$ be the graph obtained from $G$ by adding a new vertex $v^*$ and an edge set $E' \defeq \set[\big]{\set{v, v^*}}[v \in \tilde{V}]$.
We construct a matrix $X^*$ from $G^*$ in the same way as the construction of $X$ from $G$.
Let $A$ be the matrix obtained from $X^*$ by removing the two rows corresponding to $v^*$.
Namely, by an appropriate column permutation and an edge orientation on $E'$, the matrix $A$ is written as
\begin{align}\label{eq:form_of_A}
  \begin{pmatrix}
    X_1 & O   & O         \\
    X_2 & X_3 & I_{2n-4k}
  \end{pmatrix},
\end{align}
where each two columns of the left, middle and right blocks correspond to an edge in $\tilde{E}$, $E \setminus \tilde{E}$ and $E'$, respectively.
Regarding $E^* \defeq E \cup E'$ as the set of lines on $A$, we set a line weight $\funcdoms{w}{E^*}{\setR}$ as $w(e) \defeq l(e)$ for $e \in E$ and as $w(e) = 0$ for $e \in E'$.

\begin{lemma}\label{lem:stu_correspondence}
  The minimum length of a disjoint \calS-path of $G$ with respect to $l$ is equal to the minimum weight of a parity base of $(A, E^*)$ with respect to $w$.
  In addition, there is a one-to-one correspondence between shortest disjoint \calS-paths of $G$ and minimum-weight parity bases of $(A, E^*)$.
\end{lemma}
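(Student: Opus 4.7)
The plan is to mirror the proof of \cref{lem:st-one2one}, with Tutte's matroid-intersection reduction replaced by the Lov\'asz--Schrijver matroid-parity reduction of \cref{lem:Yamaguchi2017}. For any $B \subseteq E^*$, split $B = B_1 \cup B_2$ with $B_1 \defeq B \cap E$ and $B_2 \defeq B \cap E'$; set $V_2 \defeq \set{v \in \tilde V}[\set{v, v^*} \in B_2]$, let $V_{B_1}$ denote the set of non-terminals covered by $B_1$, and for a disjoint \calS-path $P$ let $V_P$ denote the set of non-terminals covered by $P$.

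Starting from the block form~\eqref{eq:form_of_A}, the identity block $I_{2n-4k}$ attached to the $E'$-columns lets one eliminate the $V_2$-row entries of every $B_1$-column by column operations. This reduces $\det A[B]$, up to sign, to the determinant of the Lov\'asz--Schrijver matroid-parity submatrix for the induced subgraph $G[V \setminus V_2]$ (with the same terminal family $\mathcal{S}$) restricted to $B_1$. A case analysis on edges of $B_1$ incident to $V_2$, distinguishing $e \in \tilde E$ from $e \in E \setminus \tilde E$, shows that when $B$ is a parity base of minimum weight we may assume $V_2 \cap V_{B_1} = \varnothing$: an overlap could only come from an edge $e \in B_1 \cap \tilde E$ with one endpoint in $V_2$ (a terminal-incident edge with the non-terminal endpoint in $V_2$ would force $\det A[B] = 0$), and replacing such $e$ by the $E'$-edge at its other endpoint yields a lighter parity base. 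After this reduction, \cref{lem:Yamaguchi2017} applied to $G[V \setminus V_2]$ characterizes $B$: it is a parity base iff $B_1$ is a spanning forest of $G[V \setminus V_2]$ in which every connected component covers exactly two terminals from distinct parts of $\mathcal{S}$.

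Such a forest $B_1$ contains a unique disjoint \calS-path $P_B \subseteq B_1$, namely the union over components of the unique terminal-to-terminal path. Because $w$ vanishes on $E'$ and equals $l > 0$ on $E$, we have $w(B) = l(B_1) \ge l(P_B)$ with equality iff $B_1 = P_B$. Conversely, for any disjoint \calS-path $P$ of $G$ the set $B_P \defeq P \cup \set{\set{v, v^*}}[v \in \tilde V \setminus V_P]$ is a parity base (taking $V_2 = \tilde V \setminus V_P$ and $B_1 = P$ in the characterization above), with $w(B_P) = l(P)$.

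Combining the two directions gives $\min_B w(B) = \min_P l(P)$. If $B$ is minimum-weight then the sandwich $l(P_B) \le l(B_1) = w(B) = \min_P l(P) \le l(P_B)$ collapses to equality, forcing $B_1 = P_B$ and hence $B = B_{P_B}$; conversely $B_P$ is minimum-weight whenever $P$ is shortest, which yields the claimed bijection. The main obstacle is the combinatorial interplay between $V_2$ and $B_1$ in the column-operation reduction and the accompanying exchange argument removing overlap; alternatively one can appeal to~\cite[Theorem~3]{Yamaguchi2016}, where the weighted matroid-parity reduction for shortest disjoint \calS-paths is proved in full generality and directly specializes to our simplified setting.
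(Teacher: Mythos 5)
Your proposal is correct and follows essentially the same route as the paper: eliminate the $E'$-columns via the identity block, characterize the resulting submatrix through \cref{lem:Yamaguchi2017}, pad a disjoint \calS-path with the edges $\{v, v^*\}$ at uncovered non-terminals, and use nonnegativity and positivity of $l$ to obtain both the equality of optima and the one-to-one correspondence. Your additional exchange step for edges of $B \cap E$ incident to $V_2$ is sound (the column pair of such an edge survives only at its other, non-terminal endpoint, so swapping it for that endpoint's $E'$-edge preserves nonsingularity and strictly decreases the weight) and in fact handles more carefully a configuration that the paper's proof folds into its connected-component characterization of parity bases.
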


\begin{proof}
  The line of this discussion is almost the same as the proof of \cref{lem:st-one2one}.
  Since $A$ can be transformed into~\eqref{eq:form_of_A}, $B \subseteq E^*$ is a parity base of $(A, E^*)$ if and only if a submatrix of $X$ is nonsingular, where the rows and columns of the submatrix are ones associated with $V \setminus \set{v}[\set{v, v^*} \in B \cap E']$ and $B \cap E$, respectively.
  By \cref{lem:Yamaguchi2017}, this is equivalent to the condition that $B$ consists of $k+1$ connected components $B_0 = B \cap E', B_1, \ldots, B_k$ in $G^*$ such that $B_0$ contains $v^*$ and each of $B_1, \ldots, B_k$ is a tree covering exactly two terminals belonging to distinct parts of $\mathcal{S}$.
  Thus any parity base $B \in \pbase(A, E^*)$ contains a unique disjoint \calS-path $P$ of $G$.
  We have $l(P) \le w(B)$ by the nonnegativity of $w$.
  Conversely, given a disjoint \calS-path $P$, one can construct a parity base $B$ by adding $\set{v, v^*} \in E'$ for each $v \in \tilde{V}$ that is not covered by $P$.
  Then we have $w(B) = l(P)$ as $w(e) = 0$ for $e \in E'$.
  Hence the minimum length of a disjoint \calS-path of $G$ is equal to the minimum weight of a parity base of $(A, E^*)$.

  Any minimum-weight common base $B \in \pbase(A, E^*)$ contains a unique shortest disjoint \calS-path $P$ of $G$, as we have seen above.
  Conversely, for any shortest disjoint \calS-path, there exists a unique minimum-weight common base $B$ because $l(e)$ is positive for all $e \in E$.
  Hence the correspondence is one-to-one.
\end{proof}

We next give a formula that connects $\sgn P$ and $\det A[B]$.
For a disjoint \calS-path $P$, define
\begin{align}\label{def:c_P}
  c_P \defeq \prod_{i=1}^k \det
  \begin{pmatrix}
    b_{\alpha_{2i-1}} \\
    b_{\alpha_{2i}}
  \end{pmatrix}
  \in \setR \setminus \set{0},
\end{align}
where $\alpha_i$ is the element in $\intset{s}$ such that $u_{\sigma_P(i)} \in S_{\alpha_i}$ for $i \in \intset{2k}$.

\begin{lemma}\label{lem:stu_sign_correspondence}
  Let $P$ be a disjoint \calS-path of $G$ and $B$ a parity base of $(A, E^*)$ containing $P$.
  Then we have
  \begin{align}\label{eq:stu_sgn_P}
    c_P \sgn P = \det A[B].
  \end{align}
\end{lemma}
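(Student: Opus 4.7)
My plan is to compute $\det A[B]$ by reducing $A[B]$ to a block-diagonal form reflecting the decomposition of $B$ into tree components, generalizing the approach used in the proof of \cref{lem:st_sign}. By the characterization used in the proof of \cref{lem:stu_correspondence} (which rests on \cref{lem:Yamaguchi2017}), we can write $B = B_0 \cup B_1 \cup \cdots \cup B_k$, where $B_0 \subseteq E'$ collects the auxiliary edges attached to $v^*$, and for each $i \in \intset{k}$ the set $B_i \subseteq E$ is a tree containing exactly the two terminals $u_{\sigma_P(2i-1)} \in S_{\alpha_{2i-1}}$ and $u_{\sigma_P(2i)} \in S_{\alpha_{2i}}$ along with some non-terminal vertices $V_i \subseteq \tilde{V}$; moreover the unique path between these two terminals inside $B_i$ is exactly $P_i$.

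First, I would permute the rows and columns of $A[B]$ to bring it into the block-diagonal form $\diag(A_0, A_1, \ldots, A_k)$, where the rows and columns of $A_i$ correspond to the vertices covered by $B_i$ (excluding $v^*$) and the lines in $B_i$, respectively. The column permutation together with the permutation of non-terminal rows contributes a global sign $\epsilon \in \set{+1, -1}$ that depends only on fixed conventions and not on $P$, whereas the rearrangement of the terminal rows from the fixed order $u_1, \ldots, u_{2k}$ into the order $u_{\sigma_P(1)}, \ldots, u_{\sigma_P(2k)}$ contributes exactly $\sgn \sigma_P = \sgn P$. The block $A_0$ is, through the rightmost $I_{2n - 4k}$ block of $A$ attached to $E'$, a permutation of an identity matrix, so $\det A_0 = \pm 1$.

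The main step is to establish $\det A_i = \pm \det \begin{psmallmatrix} b_{\alpha_{2i-1}} \\ b_{\alpha_{2i}} \end{psmallmatrix}$ for each $i \in \intset{k}$. I would proceed by iteratively peeling non-terminal leaves of $B_i$: for a non-terminal leaf $v$ with incident edge $e$, the two rows of $A_i$ for $v$ have nonzero entries only in the two columns of $e$, forming a $\pm I_2$ block (coming from $X_2$ if $e$ is terminal-incident, or from $X_3$ otherwise). A cofactor expansion along these two rows deletes $v$ and $e$ at the cost of a $\pm 1$ factor. After exhaustive peeling, every leaf of the remaining tree is a terminal, and since $B_i$ contains exactly two terminals, what remains is the path $P_i$. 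I would then eliminate the interior non-terminal rows and all edge columns but one by successive row operations using the $\pm I_2$ blocks along $P_i$ (analogous to the reduction used in \cref{sec:arborescences}), shrinking the surviving block to a $2 \times 2$ matrix whose rows are $b_{\alpha_{2i-1}}$ and $\pm b_{\alpha_{2i}}$, with determinant $\pm \det \begin{psmallmatrix} b_{\alpha_{2i-1}} \\ b_{\alpha_{2i}} \end{psmallmatrix}$.

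Putting everything together yields $\det A[B] = \pm \sgn P \cdot c_P$, where the global $\pm$ depends only on fixed conventions (the orientation used in $X_3$, the ordering of the two columns inside each line, and the chosen vertex ordering). The hardest part is the careful bookkeeping of this cumulative sign across the block diagonalization, each leaf peeling, and the path reduction; by selecting these conventions consistently with the definition of $c_P$---for instance, orienting successive edges of $P_i$ so that the $\pm I_2$ signs telescope along the path---this global $\pm$ collapses to $+1$, giving $\det A[B] = c_P \sgn P$ as claimed.
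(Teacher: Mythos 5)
Your overall strategy is the same as the paper's: decompose $B$ into $B_0 = B \cap E'$ and the tree components $B_1, \ldots, B_k$ (via \cref{lem:Yamaguchi2017} and the proof of \cref{lem:stu_correspondence}), block-diagonalize $A[B]$, extract $\sgn P$ from the rearrangement of the terminal rows, and show each component block has determinant $\det\begin{psmallmatrix} b_\alpha \\ b_\beta \end{psmallmatrix}$. But there is a genuine gap exactly where the content of the lemma lies: the sign. The statement is an exact equality $c_P \sgn P = \det A[B]$, and even its use for Pfaffianness requires the residual sign to be independent of the particular $P$ and $B$; your argument only yields $\det A[B] = \pm\, c_P \sgn P$ with the $\pm$ unverified. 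Concretely: (i) you assert that the column permutation and the permutation of non-terminal rows contribute a sign ``depending only on fixed conventions and not on $P$'', but a priori this permutation depends on $B$ (which vertices and lines land in which component); the missing justification is the pairing argument — the permutations can be chosen to move the two rows of each $v \in \tilde{V}$ and the two columns of each line together, hence are permutations of pairs, hence even, so they contribute exactly $+1$ and the only odd contribution is the rearrangement of the $2k$ terminal rows, giving $\sgn P$; (ii) your leaf peelings and path reduction each carry an untracked ``$\pm 1$'', and the claim that the accumulated sign ``collapses to $+1$ by selecting these conventions consistently'' is not available: $A$, the vertex ordering, the orientation used in $X_3$, and the vectors $b_\alpha$ are fixed by the construction, so you must verify the sign is $+1$ for that fixed data, not choose it. Since the number of $-I_2$ blocks met along $B_i$ depends on the orientation and on $B$, without an argument these factors could in principle flip the sign in a $P$-dependent way.

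The paper closes precisely this hole by using only determinant-preserving moves: permutations that preserve each row pair and column pair (even permutations) and reversal of the sign of both columns of a line (factor $(-1)^2 = 1$). With these, $Z_0$ reduces to a block diagonal of $\pm I_2$ blocks, each of determinant $+1$, so $\det Z_0 = 1$ exactly (not merely $\pm 1$), and each $Z_i$ reduces to the explicit matrix~\eqref{eq:Zi}, whose determinant is exactly $\det\begin{psmallmatrix} b_\alpha \\ b_\beta \end{psmallmatrix}$; combining with the terminal-row permutation gives~\eqref{eq:stu_sgn_P} on the nose. Your cofactor-expansion route could be repaired in the same spirit — each $\pm I_2$ block has determinant $+1$, and a two-row Laplace expansion along an adjacent pair of rows against an adjacent pair of columns carries sign $(-1)^{(2i_1+1)+(2j_1+1)} = +1$ — but as written the sign bookkeeping, which is the entire point of the lemma, is deferred rather than done.
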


\begin{proof}
  As shown in the proof of \cref{lem:stu_correspondence}, $B$ consists of $k+1$ connected components $B_0 = B \cap E', B_1, \ldots, B_k$ in $G^*$ such that $B_0$ covers $v^*$ and $B_i$ contains the path $P_i \subseteq P$ between $u_{\sigma_P(2i-1)}$ and $u_{\sigma_P(2i)}$ for each $i \in \intset{k}$.
  Let $V_i$ be the vertex subset covered by $B_i$ for $i = 0, \ldots, k$.
  By row and column permutations, $A[B]$ is transformed into a block diagonal matrix $Z = \diag(Z_0, Z_1, \ldots, Z_k)$.
  For $i = 0, \ldots, k$, the columns of $Z_i$ correspond to $B_i$ and the rows of $Z_i$ are associated with $V_i$ if $i \ne 0$ and with $V_0 \setminus \set{v^*}$ if $i = 0$.
  We can assume that the permutations are taken so that they preserve the two rows and the two columns associated with each $v \in \tilde{V}$ and $e \in E^*$.
  This means that the column permutation is even.
  We also arrange the rows corresponding to $u_{\sigma_P(2i-1)}$ and $u_{\sigma_P(2i)}$ on the first and second rows of $Z_i$, respectively, for $i \in \intset{k}$.
  Since the first $2k$ rows of $A$ correspond to $u_1, \ldots, u_{2k}$ from top to bottom, the sign of the row permutation coincides with $\sgn P$.
  Hence we have
  \begin{align}\label{eq:eq:stu_sgn_P_mid}
    \det A[B] = \sgn P \prod_{i=0}^k \det Z_k.
  \end{align}

  Applying row and column permutations preserving each two rows and columns, $Z_0$ becomes a block diagonal matrix whose diagonal blocks are $I_2$ or $-I_2$.
  Thus $\det Z_0 = 1$.
  Consider $Z_i$ for $i \in \intset{k}$.
  Let $\alpha, \beta \in \intset{s}$ with $u_{\sigma_P(2i-1)} \in S_\alpha$, $u_{\sigma_P(2i)} \in S_\beta$.
  Permuting row and columns and reversing the signs of some lines (two consecutive columns) in $B_i \cap \tilde{E}$, we can transform $Z_i$ into
  \begin{align}\label{eq:Zi}
    \prn{\begin{array}{ccccc|c}
        b_\alpha &      &        &      &         & \multirow{6}{*}{$C$} \\
                 &      &        &      & b_\beta & \\
        I_2      & -I_2 &        &      &         & \\
                 & I_2  & \ddots &      &         & \\
                 &      & \ddots & -I_2 &         & \\
                 &      &        & I_2  & I_2     & \\\hline
                 &      &        &      &         & I_{2p}
      \end{array}},
  \end{align}
  for some matrix $C$, where empty cells indicate zero and $p \defeq \card{B_i \setminus P_i}$.
  Note that these permutations and change of sign retain the determinant again.
  The determinant of~\eqref{eq:Zi} is equal to $\det \begin{psmallmatrix} b_\alpha \\ b_\beta \end{psmallmatrix}$.
  Hence~\eqref{eq:stu_sgn_P} holds via~\eqref{eq:eq:stu_sgn_P_mid}.
\end{proof}

We say that $\mathcal{S}$ is in the \emph{LGV position} on $G$ if $\sgn P$ is constant for all disjoint \calS-path $P$ of $G$.
Since $\det A[B]$ depends not only on $\sgn P$ but on $c_P$ by \cref{lem:stu_sign_correspondence}, the matroid parity $(A, E^*)$ might not be Pfaffian even if $\mathcal{S}$ is in the LGV position.
Nevertheless, $c_P$ is constant when $\card{\mathcal{S}} = 3$, as claimed in the following.
We refer to the (shortest) disjoint \calS-path problem with $\mathcal{S} = \set{S = S_1, T = S_2, U = S_3}$ as the (\emph{shortest}) \emph{disjoint $S$--$T$--$U$ path problem}.
An \emph{$S$--$T$--$U$ path} means an $\set{S, T, U}$-path.

\begin{lemma}\label{lem:stu_c_P_constant}
  Let $G = (V, E)$ be an undirected graph and $S, T, U \subseteq V$ disjoint vertex subsets.
  Then $c_P$ defined by~\eqref{def:c_P} is constant for all disjoint $S$--$T$--$U$ path $P$ of $G$.
\end{lemma}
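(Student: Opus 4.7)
The plan is to exploit a linear accounting of path-endpoints that is special to the case $\card{\mathcal{S}} = 3$. Fix a disjoint $S$--$T$--$U$ path $P$ and, for each pair of distinct parts, let $k_{ST}$, $k_{SU}$, $k_{TU}$ denote the number of constituent paths of $P$ whose two endpoints lie in the indicated pair of parts. Since $P$ covers every terminal (each of the $2k$ indices in $\intset{2k}$ appears exactly once in $\sigma_P \in F_{2k}$), counting endpoints in each part yields the linear system
\begin{equation*}
  k_{ST} + k_{SU} = \card{S}, \quad
  k_{ST} + k_{TU} = \card{T}, \quad
  k_{SU} + k_{TU} = \card{U}.
\end{equation*}
This $3 \times 3$ system has a unique solution, namely $k_{ST} = (\card{S} + \card{T} - \card{U})/2$ and analogous formulas for $k_{SU}$ and $k_{TU}$. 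In particular, the triple $(k_{ST}, k_{SU}, k_{TU})$ depends only on $G$ and $\mathcal{S}$, not on the specific disjoint $S$--$T$--$U$ path $P$.

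Next I would analyze the factor $\det\begin{psmallmatrix} b_{\alpha_{2i-1}} \\ b_{\alpha_{2i}} \end{psmallmatrix}$ contributed by the $i$th path $P_i$. Recall that terminals are ordered so that indices in $S$ precede those in $T$, which precede those in $U$, and that $\sigma_P \in F_{2k}$ satisfies $\sigma_P(2i-1) < \sigma_P(2i)$. Hence for a path $P_i$ joining $S$ and $T$ we automatically have $\alpha_{2i-1} = 1$ and $\alpha_{2i} = 2$, so the factor is exactly $\det\begin{psmallmatrix} b_1 \\ b_2 \end{psmallmatrix}$; analogously, a path joining $S$ and $U$ contributes $\det\begin{psmallmatrix} b_1 \\ b_3 \end{psmallmatrix}$ and one joining $T$ and $U$ contributes $\det\begin{psmallmatrix} b_2 \\ b_3 \end{psmallmatrix}$. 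The ordering convention on $F_{2k}$ is what rules out sign ambiguity here.

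Combining these two observations, the definition~\eqref{def:c_P} simplifies to
\begin{equation*}
  c_P
  = \det\!\begin{pmatrix} b_1 \\ b_2 \end{pmatrix}^{k_{ST}}
  \det\!\begin{pmatrix} b_1 \\ b_3 \end{pmatrix}^{k_{SU}}
  \det\!\begin{pmatrix} b_2 \\ b_3 \end{pmatrix}^{k_{TU}},
\end{equation*}
whose right-hand side depends only on $\card{S}, \card{T}, \card{U}$ and the fixed vectors $b_1, b_2, b_3$. Hence $c_P$ is independent of $P$, proving the lemma.

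The only nontrivial point is realizing that the three-part case is exactly the boundary where endpoint-counting fully determines the multiset of connection types: for $\card{\mathcal{S}} \ge 4$ the analogous linear system is underdetermined, so different disjoint \calS-paths can realize different multisets and $c_P$ can vary. Apart from this observation, the argument is purely combinatorial and requires no further machinery.
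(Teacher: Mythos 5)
Your proposal is correct and follows essentially the same route as the paper: counting endpoints per part gives a uniquely solvable $3\times 3$ system for the numbers of $S$--$T$, $S$--$U$, and $T$--$U$ paths, from which $c_P$ is a fixed product of the three determinants $\det\begin{psmallmatrix} b_1 \\ b_2 \end{psmallmatrix}$, $\det\begin{psmallmatrix} b_1 \\ b_3 \end{psmallmatrix}$, $\det\begin{psmallmatrix} b_2 \\ b_3 \end{psmallmatrix}$. Your extra remark that the ordering of terminals together with $\sigma_P \in F_{2k}$ forces $\alpha_{2i-1} < \alpha_{2i}$ (ruling out any sign ambiguity in each factor) is a point the paper leaves implicit, but it does not change the argument.
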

\begin{proof}
  Let $P$ be a disjoint $S$--$T$--$U$ path of $G$ and $x$, $y$, and $z$ the numbers of paths in $P$ connecting $S$ to $T$, $T$ to $U$, and $U$ to $S$, respectively.
  We have $x + y = \card{T}$, $y + z = \card{U}$, and $z + x = \card{S}$.
  This means that $x, y, z$ are uniquely determined from $\card{S}, \card{T}, \card{U}$.
  The value of $c_P$ is equal to
  \begin{align}
    c_P =
    \prn{\det \begin{pmatrix} b_1 \\ b_2 \end{pmatrix}}^x
    \prn{\det \begin{pmatrix} b_2 \\ b_3 \end{pmatrix}}^y
    \prn{\det \begin{pmatrix} b_1 \\ b_3 \end{pmatrix}}^z,
  \end{align}
  where $b_1, b_2$ and $b_3$ are two-dimensional row vectors corresponding to $S$, $T$, and $U$, respectively.
  Hence $c_P$ does not depend on the choice of $P$.
\end{proof}

We have the following conclusion by \cref{lem:stu_correspondence,lem:stu_sign_correspondence,lem:stu_c_P_constant}.

\begin{theorem}
  Let $G$ be an undirected graph and take disjoint vertex subsets $S, T, U$.
  Let $(A, E^*)$ be the matroid parity defined above.
  When $\set{S, T, U}$ is in the LGV position, then $(A, E^*)$ is Pfaffian with constant $c_P \sgn P$, where $P$ is an arbitrary disjoint $S$--$T$--$U$ path of $G$.
  In addition, when $G$ is equipped with a positive edge length $l$, there is a one-to-one correspondence between shortest disjoint $S$--$T$--$U$ paths of $G$ and minimum-weight parity bases of $(A, E^*)$ with respect to the line weight $w$ defined above.
\end{theorem}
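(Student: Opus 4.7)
The theorem has two assertions, and both should follow by routing the general $\mathcal{S}$-path lemmas already established (\cref{lem:stu_correspondence,lem:stu_sign_correspondence,lem:stu_c_P_constant}) through the specialization $\mathcal{S} = \set{S, T, U}$. So my plan is to assemble rather than to re-derive: identify what each previous lemma contributes, then check that the $s = 3$ case neutralizes the one obstacle that prevents the general $\mathcal{S}$-path reduction from yielding a Pfaffian parity.

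For the first assertion, I would start from \cref{lem:stu_sign_correspondence}, which gives $\det A[B] = c_P \sgn P$ whenever $B \in \pbase(A, E^*)$ contains the disjoint \calS-path $P$. Combined with the fact (implicit in the proof of \cref{lem:stu_correspondence}) that every parity base $B$ contains exactly one disjoint \calS-path, this reduces the task of showing that $\det A[B]$ is constant on $\pbase(A, E^*)$ to showing that $c_P \sgn P$ is constant over all disjoint $S$--$T$--$U$ paths $P$. The LGV position hypothesis handles the $\sgn P$ factor; the $c_P$ factor is exactly where the $s = 3$ case is special, and this is handled by \cref{lem:stu_c_P_constant}, whose proof exploits the linear constraints $x + y = \card{T}$, $y + z = \card{U}$, $z + x = \card{S}$ to pin down $(x,y,z)$, which in turn freezes the product of $2 \times 2$ determinants defining $c_P$. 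Multiplying these two constancies yields constancy of $\det A[B]$, and this constant is nonzero since both $c_P \ne 0$ and $\sgn P \in \set{\pm 1}$; hence $(A, E^*)$ is Pfaffian with the claimed constant $c_P \sgn P$.

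For the second assertion, nothing new is needed: \cref{lem:stu_correspondence} was stated for arbitrary $\mathcal{S}$ and already gives both the equality of minimum values and the one-to-one correspondence between shortest disjoint \calS-paths and minimum-weight parity bases of $(A, E^*)$ with respect to $w$. I would simply specialize it to $\mathcal{S} = \set{S, T, U}$ and quote the conclusion.

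The only nontrivial point, and the one I want to flag as the main obstacle, is verifying that the $s = 3$ restriction in \cref{lem:stu_c_P_constant} is genuinely needed here: for $s \ge 4$, the numbers $x_{\alpha\beta}$ of paths joining $S_\alpha$ and $S_\beta$ are no longer determined by $\card{S_1}, \dotsc, \card{S_s}$, so $c_P$ can vary even when $\sgn P$ does not, and hence the argument above would break. This is why the Pfaffian-parity framework captures $S$--$T$--$U$ paths but does not extend verbatim to general $\mathcal{S}$-paths, and I would make this remark explicit at the end of the proof to delimit the scope of the statement.
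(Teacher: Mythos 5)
Your proposal is correct and matches the paper's own argument, which likewise obtains the theorem by combining \cref{lem:stu_correspondence,lem:stu_sign_correspondence,lem:stu_c_P_constant}: the LGV hypothesis fixes $\sgn P$, \cref{lem:stu_c_P_constant} fixes $c_P$ in the three-part case, and the weighted correspondence is quoted directly. Your closing remark about $s \ge 4$ is also consistent with the paper's discussion preceding the theorem.
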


We show one example of the LGV position for the $S$--$T$--$U$ case.
Let $G$ be a planar graph and suppose that terminals are aligned on the boundary of one face of $G$ in the order of $S$, $U$, $T$ clockwise.
Then as shown in the proof of \cref{lem:stu_c_P_constant}, the connecting pattern of terminals are uniquely determined from $\card{S}$, $\card{T}$ and $\card{U}$.
Hence $\sigma_P$ is constant for all disjoint $S$--$T$--$U$ path $P$ of $G$, which means that $\set{S, T, U}$ is in the LGV position.
See \cref{fig:LGV-STU} for an illustration.

\tikzset{u/.style={draw, fill=gray!50, circle, inner sep=0pt, minimum size=4pt}}

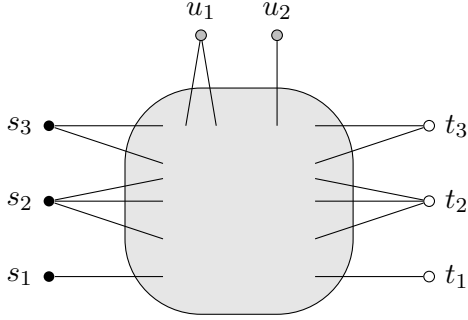
\begin{figure}
  \centering
  \begin{tikzpicture}
    \node (s1) [s] at (0, 0.5) {}; \node [left=0mm of s1] {$s_1$};
    \node (s2) [s] at (0, 1.5) {}; \node [left=0mm of s2] {$s_2$};
    \node (s3) [s] at (0, 2.5) {}; \node [left=0mm of s3] {$s_3$};
    \node (t1) [t] at (5, 0.5) {}; \node [right=0mm of t1] {$t_1$};
    \node (t2) [t] at (5, 1.5) {}; \node [right=0mm of t2] {$t_2$};
    \node (t3) [t] at (5, 2.5) {}; \node [right=0mm of t3] {$t_3$};

    \node (u1) [u] at (2, 3.7) {}; \node [above=0mm of u1] {$u_1$};
    \node (u2) [u] at (3, 3.7) {}; \node [above=0mm of u2] {$u_2$};

    \draw [rounded corners=1cm,fill=gray!20] (1,0) rectangle ++(3,3) node [midway] {} ;

    \draw (s1) -- (1.5, 0.5);
    \draw (s2) -- (1.5, 1);
    \draw (s2) -- (1.5, 1.5);
    \draw (s2) -- (1.5, 1.8);
    \draw (s3) -- (1.5, 2);
    \draw (s3) -- (1.5, 2.5);

    \draw (t1) -- (3.5, 0.5);
    \draw (t2) -- (3.5, 1);
    \draw (t2) -- (3.5, 1.5);
    \draw (t2) -- (3.5, 1.8);
    \draw (t3) -- (3.5, 2);
    \draw (t3) -- (3.5, 2.5);

    \draw (u1) -- (1.8, 2.5);
    \draw (u1) -- (2.2, 2.5);
    \draw (u2) -- (3, 2.5);
  \end{tikzpicture}
  \caption{%
    Example of $S, T, U$ that are in the LGV position, where $S = \set{s_1, s_2, s_3}$, $T = \set{t_1, t_2, t_3}$ and $U = \set{u_1, u_2}$.
    The gray area represents a planar graph.
  }\label{fig:LGV-STU}
\end{figure}

\section{Algorithms}\label{sec:algorithms}

In this section, we present algorithms for Pfaffian pairs and parities.
In \cref{sec:unweighted_counting}, we see that the current fastest randomized algorithms~\cite{Cheung2014,Harvey2009} for the linear matroid intersection and parity problems can be derandomized for Pfaffian pairs and parities with $\ch(\setK) = 0$.
\Cref{sec:counting_on_weighted_pfaffian_pairs,sec:counting_on_weighted_pfaffian_parities} describe counting algorithms for minimum-weight common bases of Pfaffian pairs and minimum-weight parity bases of Pfaffian parities, respectively.
Unless otherwise stated, we deal with matrices over a field $\setK$ of characteristic $\ch(\setK)$.
We assume that we can perform arithmetic operations on $\setK$ in constant time.

\subsection{Counting on Unweighted Pfaffian Pairs and Parities}\label{sec:unweighted_counting}

\Cref{prop:counting_formula_for_pfaffian_pairs,prop:counting_formula_for_pfaffian_prities} claim that the number of common bases of a Pfaffian pair $(A_1, A_2)$ of constant $c$ is equal to $c^{-1} \det A_1 \trsp{A_2}$ and the number of parity bases of a Pfaffian parity $(A, L)$ of constant $c$ is equal to $c^{-1} \pf A \Delta(\onevec) \trsp{A}$, both over $\setK$.
Therefore, if the value of $c$ is already known, we can compute these quantities just by performing matrix computations.

\begin{theorem}\label{thm:complexity_c_known}
  Suppose that we are given the value of constants.
  Then the following hold:
  \begin{enumerate}
    \item We can compute the number of common bases of an $r \times n$ Pfaffian pair modulo $\ch(\setK)$ in deterministic $\Order\prn{nr^{\omega-1}}$-time.\label{item:complexity_c_known_1}
    \item We can compute the number of parity bases of a $2r \times 2n$ Pfaffian parity modulo $\ch(\setK)$ in deterministic $\Order\prn{nr^{\omega-1} + r^3}$-time.
          When $\ch(\setK) = 0$, the running time can be improved to $\Order\prn{nr^{\omega-1}}$.\label{item:complexity_c_known_2}
  \end{enumerate}
\end{theorem}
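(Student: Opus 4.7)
The plan is to reduce both counting problems to a single rectangular matrix multiplication followed by a determinant or Pfaffian evaluation on the resulting $O(r) \times O(r)$ matrix, so that the matrix multiplication dominates. The key primitive is that an $r \times n$ matrix times an $n \times r$ matrix can be formed deterministically in $O(nr^{\omega-1})$ time by splitting the inner dimension into $\lceil n/r \rceil$ blocks of width $r$ and accumulating $\lceil n/r \rceil$ many $r \times r$ matrix products, each costing $O(r^\omega)$.

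For Part~\ref{item:complexity_c_known_1}, I would first observe that if $n < r$ then $A_1\trsp{A_2}$ is automatically singular and the number of common bases is $0$ by \cref{prop:algebraic_formulation_of_intersection}; otherwise I would form $N \defeq A_1\trsp{A_2}$ in $O(nr^{\omega-1})$ time by the above primitive, compute $\det N$ in $O(r^\omega)$ time, and return $c^{-1}\det N$, which equals the number of common bases over $\setK$ by \cref{prop:counting_formula_for_pfaffian_pairs}. Since $n \ge r$ in the nontrivial case, the $r^\omega$ term is absorbed and the total running time is $O(nr^{\omega-1})$.

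For Part~\ref{item:complexity_c_known_2}, the same primitive is applied after first forming $A\Delta(\onevec)$: since $\Delta(\onevec)$ is block-diagonal with $2 \times 2$ skew-symmetric blocks $\begin{psmallmatrix} 0 & +1 \\ -1 & 0 \end{psmallmatrix}$, the matrix $A\Delta(\onevec)$ is obtained from $A$ simply by a swap and a sign flip within the two columns of each line, which costs $O(nr)$ time. The product $S \defeq A\Delta(\onevec)\trsp{A}$ is then assembled in $O(nr^{\omega-1})$ time by the primitive. For general characteristic I would invoke a standard skew-symmetric elimination with $2 \times 2$ pivots (the Pfaffian analogue of Gaussian elimination) to compute $\pf S$ in $O(r^3)$ time; dividing by $c$ then yields the number of parity bases by \cref{prop:counting_formula_for_pfaffian_prities}, giving a total of $O(nr^{\omega-1} + r^3)$.

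Finally, when $\ch(\setK) = 0$, I would bypass the $O(r^3)$ Pfaffian routine by exploiting the identity $(\pf S)^2 = \det S$ from~\eqref{eq:pf_det} together with the sign information carried by $c$. By \cref{prop:counting_formula_for_pfaffian_prities}, the quantity $c^{-1}\pf S$ equals the number of parity bases, which is a nonnegative integer $N$; hence $\det S = c^2 N^2$, and $N$ is the unique nonnegative integer satisfying $\card{c}N = \sqrt{\det S}$. Computing $\det S$ in $O(r^\omega)$ time and extracting the nonnegative integer square root of this perfect square (in constant time under our arithmetic model) determines $N$, for a total of $O(nr^{\omega-1})$. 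The only delicate point, which is also the conceptual crux of the improvement, is the sign disambiguation of the Pfaffian from its square; this is resolved cleanly here by the a priori nonnegativity of the count.
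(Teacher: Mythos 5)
Your proposal is correct and follows essentially the same approach as the paper: form the $O(r) \times O(r)$ matrix product in $O(nr^{\omega-1})$ time by blocking the inner dimension, evaluate $\det$/$\pf$, and when $\ch(\setK)=0$ disambiguate the Pfaffian from $(\pf S)^2 = \det S$ using nonnegativity of the parity-base count. The only cosmetic difference is that the paper assembles $A\Delta(\onevec)\trsp{A}$ via the identity $A_1\trsp{A_2} - A_2\trsp{A_1}$ from~\eqref{eq:rewrite_ADeltaA}, whereas you apply the block-diagonal $\Delta(\onevec)$ directly to the columns of $A$ in $O(nr)$ before multiplying by $\trsp{A}$; both yield $O(nr^{\omega-1})$.
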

\begin{proof}
  \ref{item:complexity_c_known_1} % chktex 2
  For an $r \times n$ Pfaffian pair $(A_1, A_2)$, we can compute the matrix multiplication $A_1 \trsp{A_2}$ in $\Order\prn{nr^{\omega-1}}$-time and its determinant in $\Order\prn{r^\omega}$-time~\cite[Theorem~6.6]{Aho1974}.

  \ref{item:complexity_c_known_2} % chktex 2
  Let $(A, L)$ be a $2r \times 2n$ Pfaffian parity.
  We rewrite $A \Delta(\onevec) \trsp{A}$ as
  \begin{align}\label{eq:reqrite_ADeltaA_onevec}
    A \Delta(\onevec) \trsp{A} = A_1 \trsp{A_2} - A_2 \trsp{A_1}
  \end{align}
  using~\eqref{eq:rewrite_ADeltaA}, where $A_1, A_2 \in \setK^{2r \times n}$ are the submatrices of $A$ defined in the same way as~\eqref{eq:rewrite_ADeltaA}.
  Then we can compute $A \Delta(\onevec) \trsp{A}$ in $\Order\prn{nr^{\omega-1}}$-time through~\eqref{eq:reqrite_ADeltaA_onevec}.
  The computation of Pfaffian requires $\Order(r^3)$-time via the naive Gaussian elimination.
  While we can compute the square of the Pfaffian using the fast determinant computation through~\eqref{eq:pf_det}, in general we cannot determine which of two square roots of $\det A \Delta(\onevec) \trsp{A}$ is $\pf A \Delta(\onevec) \trsp{A}$.
  When $\ch(\setK) = 0$, we have $c^{-1} \pf A \Delta(\onevec) \trsp{A} \ge 0$ since it is the cardinality of $\pbase(A, L)$.
  Thus we can compute the Pfaffian in $\Order(r^\omega)$-time.
\end{proof}

In practice, the value of $c$ can be typically retrieved from the reduction of discrete structures to Pfaffian pairs or parities, as seen in \cref{sec:examples,sec:lgv}.
However, if this is not the case, one common or parity base $B$ should be obtained to compute $c$.
Solving linear matroid intersection and parity problems will make the running time larger than $\Order\prn{nr^{\omega-1}}$ if we stick to deterministic algorithms~\cite{Gabow1996,Orlin2008}, or we can employ randomized algorithms~\cite{Cheung2014,Harvey2009} to keep the running time.
We face the same trade-off to find one common or parity base even if we know $c$.

Indeed, for Pfaffian pairs and parities with $\ch(\setK) = 0$, we can derandomize the linear matroid intersection algorithm of Harvey~\cite{Harvey2009} and the linear matroid parity algorithm of Cheung--Lau--Leung~\cite{Cheung2014}.
In these algorithms, randomness is used only to find a vector over $\setK$ satisfying some genericity conditions, summarized below.
See~\cite[Section~4]{Harvey2009} and~\cite[Section~6]{Cheung2014} for details on their algorithms.

Let $(A_1, A_2)$ be a matrix pair with common column set $E$.
A column subset $J \subseteq E$ is said to be \emph{extensible} if there exists a common base of $(A_1, A_2)$ containing $J$.
Similarly, for a matroid parity $(A, L)$, we call $J \subseteq L$ \emph{extensible}%
\footnote{
  Cheung--Lau--Leung~\cite{Cheung2014} call such $J$ ``growable.''
}
if there exists a parity base of $(A, L)$ containing $J$.
For a vector $z = \prn{z_j}_{j \in E}$ and $J \subseteq E$, let $\phi_J(z)$ denote a vector whose each component is defined as
\begin{align}
  {\phi_J(z)}_j \defeq \begin{cases}
    0   & (j \in J),            \\
    z_j & (j \in E \setminus J)
  \end{cases}
\end{align}
for $j \in E$.
We also define $\phi_J(z)$ for $z = \prn{z_l}_{l \in L}$ and $J \subseteq L$ in the same way.
Recall that matrices $\Xi(z)$ and $\Phi(z)$ are defined by~\eqref{def:Xi} and by~\eqref{def:Phi}, respectively.

\begin{lemma}[{\cite{Cheung2014,Harvey2009}}]\label{lem:validity_of_derandomization}
  The following hold.
  \begin{enumerate}
    \item Let $(A_1, A_2)$ be an $r \times n$ matrix pair with common column set $E$.
          Suppose that we are given a vector $z = \prn{z_j}_{j \in E} \in \setK^n$ such that $J$ is extensible if and only if $\Xi\prn{\phi_J(z)}$ is nonsingular for every $J \subseteq E$.
          Then we can construct a common base of $(A_1, A_2)$ (if it exists) in deterministic $\Order\prn{nr^{\omega-1}}$-time.\label{item:validity_of_derandomization_pairs}
    \item Let $(A, L)$ be a $2r \times 2n$ matroid parity.
          Suppose that we are given a vector $z = \prn{z_l}_{l \in L} \in \setK^n$ such that $J$ is extensible if and only if $\Phi\prn{\phi_J(z)}$ is nonsingular for every $J \subseteq L$.
          Then we can construct a parity base of $(A, L)$ (if it exists) in deterministic $\Order\prn{nr^{\omega-1}}$-time.\label{item:validity_of_derandomization_parities}
  \end{enumerate}
\end{lemma}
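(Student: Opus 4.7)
The plan is to reduce both parts of the lemma directly to the randomized algorithms of Harvey~\cite{Harvey2009} and Cheung--Lau--Leung~\cite{Cheung2014}, isolating the single place in each algorithm where randomness is used. Both algorithms follow the same skeleton: they start from $J = \varnothing$ and iterate through the elements of $E$ (respectively $L$) exactly once, appending the current element $e$ to $J$ whenever the extended set $J \cup \set{e}$ is still extensible. The test for extensibility is purely algebraic: one zeros out the indeterminates of $z$ corresponding to already-selected elements and checks the nonsingularity of $\Xi\prn{\phi_{J}(z)}$ or $\Phi\prn{\phi_{J}(z)}$. In the original randomized versions, the vector $z$ is sampled from a large enough set so that, by Schwartz--Zippel, this algebraic test faithfully tracks true extensibility with high probability; under our hypothesis, the supplied $z$ certifies extensibility exactly, so the procedure becomes deterministic.

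For part~\ref{item:validity_of_derandomization_pairs}, I would invoke Harvey's framework: maintain the inverse of the working submatrix derived from $\Xi(z)$ and update it by a rank-one correction each time a new column is accepted, using the Sherman--Morrison formula. Harvey shows how to batch these rank-one updates into block updates so that the total arithmetic cost, including all extensibility queries, is $\Order\prn{nr^{\omega-1}}$. For part~\ref{item:validity_of_derandomization_parities}, the analogous argument uses Cheung--Lau--Leung's rank-two updates on $\Phi(z)$, since each accepted line contributes two columns at once; they similarly amortize the updates into blocks to reach $\Order\prn{nr^{\omega-1}}$. Neither the update scheme nor the complexity analysis depends on how $z$ was produced, so replacing the random sample by our deterministic $z$ leaves the running time untouched.

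The only point that requires care is that the hypothesis on $z$ must keep holding throughout the execution. Concretely, at each step the algorithm queries the nonsingularity of $\Xi\prn{\phi_{J'}(z)}$ (or $\Phi\prn{\phi_{J'}(z)}$) for the current partial solution $J' \subseteq J$; we need this query to truly decide extensibility of $J'$. This is precisely what the hypothesis guarantees, uniformly over all subsets of $E$ (or $L$), so no additional property of $z$ need be verified along the way. I expect the main obstacle in writing the proof to be notational rather than mathematical: one has to carefully match the matrix pencils $\Xi(z)$ and $\Phi(z)$ used here with the specific matrices appearing in~\cite{Harvey2009,Cheung2014}, confirm that their extensibility subroutines are indeed applied to $\Xi\prn{\phi_{J}(z)}$ and $\Phi\prn{\phi_{J}(z)}$, and then state that, apart from this identification, the algorithms and their analyses carry over verbatim.

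If the algorithm reaches the end of $E$ (or $L$) with a $J$ of the correct size, it outputs $J$; otherwise, the initial test $\det \Xi(z) \ne 0$ (or $\pf \Phi(z) \ne 0$), which uses the hypothesis with $J = \varnothing$, certifies that no common or parity base exists, as follows from \cref{prop:algebraic_formulation_of_intersection,prop:algebraic_formulation_of_parity}. This completes the deterministic construction within the claimed time bound.
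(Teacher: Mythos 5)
Your proposal is correct and matches the paper's intended justification: the paper gives no proof of this lemma beyond citing Harvey and Cheung--Lau--Leung, observing that their algorithms use randomness only to pick a vector $z$ for which the nonsingularity tests on $\Xi\prn{\phi_J(z)}$ and $\Phi\prn{\phi_J(z)}$ track extensibility, which is exactly the hypothesis you exploit. Your added detail (greedy structure with exact extensibility queries, batched rank-one/rank-two inverse updates giving $\Order\prn{nr^{\omega-1}}$, and the $J=\varnothing$ test for nonexistence) is a faithful elaboration of the same argument.
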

It is shown in~\cite[Theorem~4.4]{Harvey2009} and in~\cite[Theorem~6.4]{Cheung2014} that a vector of distinct indeterminates satisfies the requirements of $z$ in \cref{lem:validity_of_derandomization}.
The algorithms of Harvey~\cite{Harvey2009} and Cheung--Lau--Leung~\cite{Cheung2014} use a random vector over $\setK$ instead of indeterminates to avoid symbolic computations.
For Pfaffian pairs and parities, we can use $\onevec$ for $z$ as follows.

\begin{lemma}\label{lem:derandomizing_for_pfaffian}
  Let $\setK$ be a field of characteristic zero.
  For Pfaffian pairs and parities over $\setK$, we can choose $z = \onevec$ in \cref{lem:validity_of_derandomization}.
\end{lemma}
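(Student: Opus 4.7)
The plan is to substitute $z = \onevec$ into the counting identities of \cref{prop:counting_formula_for_pfaffian_pairs,prop:counting_formula_for_pfaffian_prities} applied to the ``shrunken'' matroid pair or parity obtained by forcing $J$ into the base. The core observation is that for Pfaffian pairs and parities all summands in the Cauchy--Binet / Ishikawa--Wakayama expansion carry the same sign $c$, so the formula computes a count rather than a signed sum; in characteristic zero a nonnegative integer count vanishes if and only if the set being counted is empty.

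First I would handle the pair case. Let $(A_1, A_2)$ be Pfaffian with constant $c$ and fix $J \subseteq E$. Substituting $z = \phi_J(\onevec)$ into the second identity of \cref{prop:counting_formula_for_pfaffian_pairs} gives
\begin{align}
  \det \Xi(\phi_J(\onevec)) = c \sum_{B \in \cbase(A_1, A_2)} \prod_{j \in E \setminus B} {\phi_J(\onevec)}_j.
\end{align}
The product $\prod_{j \in E \setminus B} {\phi_J(\onevec)}_j$ equals $1$ when $J \subseteq B$ (so that the zero coordinates of $\phi_J(\onevec)$ are absent from $E \setminus B$) and equals $0$ otherwise. Hence
\begin{align}
  \det \Xi(\phi_J(\onevec)) = c \cdot \card{\set{B \in \cbase(A_1, A_2)}[J \subseteq B]}
\end{align}
as an identity in $\setK$. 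Since $\ch(\setK) = 0$ and $c \ne 0$, the right-hand side is nonzero if and only if some common base contains $J$, i.e., if and only if $J$ is extensible. This is exactly the hypothesis required in \cref{lem:validity_of_derandomization}\ref{item:validity_of_derandomization_pairs}.

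The parity case is entirely analogous. For a Pfaffian parity $(A, L)$ with constant $c$ and $J \subseteq L$, the second identity in \cref{prop:counting_formula_for_pfaffian_prities} applied to $z = \phi_J(\onevec)$ collapses to
\begin{align}
  \pf \Phi(\phi_J(\onevec)) = c \cdot \card{\set{B \in \pbase(A, L)}[J \subseteq B]},
\end{align}
by the same monomial-vanishing argument. Because $\Phi(\phi_J(\onevec))$ is skew-symmetric, nonsingularity is equivalent to the nonvanishing of its Pfaffian (via \eqref{eq:pf_det}), so in characteristic zero this matrix is nonsingular if and only if $J$ extends to a parity base, which is the hypothesis of \cref{lem:validity_of_derandomization}\ref{item:validity_of_derandomization_parities}.

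There is no real obstacle here; the only delicate point is the use of $\ch(\setK) = 0$ to ensure that a nonzero cardinality of a finite set is not congruent to $0$ in $\setK$. Over a field of positive characteristic, the counted quantity could vanish in $\setK$ while the set is nonempty, and the derandomization would break; this is why the hypothesis $\ch(\setK) = 0$ in the statement is essential.
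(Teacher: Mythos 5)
Your proposal is correct and follows essentially the same route as the paper's own proof: substitute $z = \phi_J(\onevec)$ into the counting identities of \cref{prop:counting_formula_for_pfaffian_pairs,prop:counting_formula_for_pfaffian_prities}, observe that the surviving terms count exactly the common (resp.\ parity) bases containing $J$ scaled by the nonzero constant $c$, and use $\ch(\setK)=0$ to conclude that nonsingularity of $\Xi(\phi_J(\onevec))$ (resp.\ $\Phi(\phi_J(\onevec))$) is equivalent to extensibility of $J$. The only cosmetic difference is that your factor $c$ matches the statement of the cited propositions (the paper's proof writes $c^{-1}$, which is immaterial since $c \ne 0$), and your explicit appeal to $(\pf S)^2 = \det S$ for the parity case is a harmless elaboration.
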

\begin{proof}
  Let $(A_1, A_2)$ be a Pfaffian pair of constant $c$ with common column set $E$.
  By \cref{prop:counting_formula_for_pfaffian_pairs}, we have
  \begin{align}
    \det \Xi\prn{\phi_J(\onevec)}
    = c^{-1} \sum_{B \in \cbase(A_1, A_2)} \prod_{j \in E \setminus B} {\phi_J(\onevec)}_j
    = c^{-1} \card{\set{B \in \cbase(A_1, A_2)}[J \subseteq B]}
  \end{align}
  for $J \subseteq E$.
  Since $\ch(\setK) = 0$, the set $\set{B \in \cbase(A_1, A_2)}[J \subseteq B]$ is nonempty if and only if its cardinality is nonzero over $\setK$.
  Hence the nonsingularity of $\Xi(\phi_J(\onevec))$ is equivalent to the extensibility of $J$.

  The same argument can be applied to Pfaffian parities by using \cref{prop:counting_formula_for_pfaffian_prities}.
  Let $(A, L)$ be a Pfaffian parity of constant $c$.
  Then
  \begin{align}
    \pf \Phi\prn{\phi_J(\onevec)}
    = c^{-1} \sum_{B \in \pbase(A, L)} \prod_{l \in L \setminus B} {\phi_J(\onevec)}_l
    = c^{-1} \card{\set{B \in \pbase(A, L)}[J \subseteq B]}
  \end{align}
  for $J \subseteq L$.
  Thus $J$ is extensible if and only if $\Phi(\phi_J(\onevec))$ is nonsingular.
\end{proof}

The proof of \cref{lem:derandomizing_for_pfaffian} can also be seen as alternative simple proofs of~\cite[Theorem~4.4]{Harvey2009} and~\cite[Theorem~6.4]{Cheung2014}.
Now \cref{thm:complexity_of_unweighted_counting} is obtained as a conclusion of \cref{thm:complexity_c_known,lem:validity_of_derandomization,lem:derandomizing_for_pfaffian}.

\subsection{Counting on Weighted Pfaffian Pairs}\label{sec:counting_on_weighted_pfaffian_pairs}
Let $(A_1, A_2)$ be an $r \times n$ weighted Pfaffian pair of constant $c$ with column weight $\funcdoms{w}{E}{\setR}$.
In this section, we consider counting the number of minimum-weight common bases of $(A_1, A_2)$ over $\setK$.
While we can compute it by naively expanding $\det A_1 D\prn{\theta^w} \trsp{A_2}$ or $\det \Xi\prn{\theta^w}$ from \cref{prop:algebraic_weighted_pair}, this expansion requires pseudo-polynomial time with respect to the maximum absolute value of a weight (assuming $w$ to be integral).
Instead, we reduce the problem to the counting on an unweighted Pfaffian pair.

We introduce some notions to make our descriptions rigorous.
Since $w$ is real-valued, the determinant and entries of $A_1 D\prn{\theta^w} \trsp{A_2}$ are a formal $\setK$-linear combination $f(\theta)$ of real powers of $\theta$.
Namely, $f(\theta)$ is formally expressed as
\begin{align}
  f(\theta) = \sum_{x \in X} a_x \theta^x
\end{align}
with finite $X \subseteq \setR$ and $a_x \in \setK$ for $x \in X$.
Abusing terminology, we call $f(\theta)$ a \emph{polynomial} in $\theta$.
We define the \emph{degree} $\deg f(\theta)$ and the \emph{order} $\ord f(\theta)$ of $f(\theta)$ as the maximum and the minimum $x \in X$ such that $a_x \ne 0$, respectively.
We set $\deg 0 \defeq -\infty$ and $\ord 0 \defeq +\infty$ for convenience.
The \emph{constant term} of $f(\theta)$ means $a_0$.

We begin to describe the algorithm.
Suppose that $(A_1, A_2)$ has at least one common base and we have obtained a minimum-weight common base $B \in \cbase(A_1, A_2)$ by solving the weighted linear matroid intersection problem.
We first perform row transformations on $A_1$ and $A_2$ so that $A_1[B]$ and $A_2[B]$ become the identity matrix $I_r$.
This operation, called \emph{pivoting}, remains $(A_1, A_2)$ Pfaffian but changes its constant to 1.
Now we can regard the row sets of $A_1$ and $A_2$ as $B$ since $A_1[B]$ and $A_2[B]$ are identity.

Frank's \emph{weight splitting lemma}~\cite{Frank1981} reveals the dual structure of the weighted matroid intersection problem.
It claims that there exist $\funcdoms{w_1, w_2}{E}{\setR}$ such that

\begin{enumerate}[label={(W\arabic*)}]
  \item $w_1(j) + w_2(j) = w(j)$ for $j \in E$, and\label{item:W1}
  \item a common base $B' \in \cbase(A_1, A_2)$ minimizes the weight $w$ if and only if $B'$ minimizes $w_1$ among all bases of $A_1$ and $B'$ minimizes $w_2$ among all bases of $A_2$.\label{item:W2}
\end{enumerate}

Let $w_1, w_2$ be split weights satisfying~\ref{item:W1} and~\ref{item:W2}.
The following observation is easy but important.

\begin{proposition}\label{prop:weighted_intersection_cocircuit}
  For $k = 1, 2$, $u \in B$ and $j \in E$, if the $(u,j)$th entry of $A_k$ is nonzero, then it holds $w_k(u) \le w_k(j)$.
\end{proposition}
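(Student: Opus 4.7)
The plan is to exploit the simple structure created by the pivoting step together with property~\ref{item:W2} of the weight splitting lemma. After pivoting, $A_k[B] = I_r$ for $k = 1, 2$, and the rows of $A_k$ are indexed by $B$. Thus, for $u \in B$ and $j \in E$, the submatrix $A_k[B \setminus \set{u} \cup \set{j}]$ is obtained from $I_r$ by replacing its $u$th column (which is the standard basis vector $e_u$) with the $j$th column of $A_k$. Expanding the determinant along row $u$, we get
\begin{align}
    \det A_k\prn[\big]{B \setminus \set{u} \cup \set{j}} = {(A_k)}_{u,j}.
\end{align}

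First, I would observe that whenever ${(A_k)}_{u,j} \ne 0$, the above identity yields $B \setminus \set{u} \cup \set{j} \in \base(A_k)$. Next, since $B$ is a minimum-$w$ common base of $(A_1, A_2)$, property~\ref{item:W2} of the weight splitting lemma guarantees that $B$ achieves the minimum value of $w_k$ among \emph{all} bases of $A_k$, not only among common bases. In particular, $w_k(B) \le w_k\pbig{B \setminus \set{u} \cup \set{j}}$, which simplifies to $w_k(u) \le w_k(j)$, the desired inequality.

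There is no real obstacle here; the statement is essentially a one-line consequence of the pivoting normalization plus the dual optimality given by Frank's lemma. The only thing to be mildly careful about is to recall that~\ref{item:W2} asserts optimality of $B$ with respect to \emph{each} matroid separately (not merely jointly), which is precisely what lets us compare $B$ with the single-element exchange $B \setminus \set{u} \cup \set{j}$ that need not be a common base.
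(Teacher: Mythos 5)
Your proposal is correct and follows essentially the same route as the paper: use $A_k[B] = I_r$ to conclude that $B \setminus \set{u} \cup \set{j}$ is a base of $A_k$ whenever the $(u,j)$th entry is nonzero, then invoke~\ref{item:W2} (optimality of $B$ for $w_k$ within $\base(A_k)$ separately) to get $w_k(u) \le w_k(j)$. The paper phrases it as a contradiction while you argue directly, but the content is identical.
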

\begin{proof}
  By $A_k[B] = I_r$, the set $B' \defeq B \setminus \set{u} \cup \set{j}$ is a base of $A_k$.
  If $w_k(u) > w_k(j)$, we have $w(B') < w(B)$, which contradicts to~\ref{item:W2} and the minimality of $B$.
\end{proof}

For $k = 1,2$, let $A_k^\# \in \setK^{r \times n}$ be the matrix with row set $B$ and column set $E$ whose the $(u,j)$th entry $\pbig{A_k^\#}_{u,j}$ is defined by
\begin{align}\label{def:A_k_sharp}
  \pbig{A_k^\#}_{u,j} \defeq \begin{cases}
    \text{the $(u,j)$th entry of $A_k$} & (w_k(u) = w_k(j)), \\
    0 & (\text{otherwise})
  \end{cases}
\end{align}
for $u \in B$ and $j \in E$.

\begin{lemma}\label{lem:pair_sharp}
  The set of minimum-weight common bases of $(A_1, A_2)$ with respect to $w$ is equal to $\cbase\pbig{A_1^\#, A_2^\#}$.
  In addition, $\pbig{A_1^\#, A_2^\#}$ is Pfaffian with constant 1.
\end{lemma}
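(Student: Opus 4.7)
The plan is to compare the determinants $\det A_k[B']$ and $\det A_k^\#[B']$ on any size-$r$ subset $B' \subseteq E$ via \cref{prop:weighted_intersection_cocircuit}, and then invoke the weight splitting property to translate ``minimum-weight common base of $(A_1, A_2)$'' into ``common base of $(A_1^\#, A_2^\#)$.''

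After pivoting we have $A_1[B] = A_2[B] = I_r$, so the constant of $(A_1, A_2)$ becomes $1$; in particular $\det A_1[B']\det A_2[B'] = 1$ for every $B' \in \cbase(A_1, A_2)$. Identifying the row set of each $A_k$ with $B$, the definition~\eqref{def:A_k_sharp} also gives $A_k^\#[B] = I_r$, so $B$ is itself a common base of $(A_1^\#, A_2^\#)$ (in particular each $A_k^\#$ has full row rank). The central step is to expand
\[
  \det A_k[B'] = \sum_{\sigma \colon B \to B'} \sgn \sigma \prod_{u \in B} (A_k)_{u, \sigma(u)}
\]
as a sum over bijections $\sigma$. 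Every nonzero summand has $(A_k)_{u, \sigma(u)} \ne 0$ for all $u \in B$, so \cref{prop:weighted_intersection_cocircuit} yields $w_k(u) \le w_k(\sigma(u))$ for each $u$. Summing over $u$ gives $w_k(B) \le w_k(B')$, with equality if and only if $w_k(u) = w_k(\sigma(u))$ pointwise. Since $A_k^\#$ retains exactly the entries $(A_k)_{u,j}$ with $w_k(u) = w_k(j)$ and zeros out the rest, the same expansion applied to $A_k^\#$ implies two facts: (i) if $\det A_k^\#[B'] \ne 0$, then $w_k(B') = w_k(B)$; and (ii) if $w_k(B') = w_k(B)$, then $\det A_k^\#[B'] = \det A_k[B']$.

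The weight splitting properties~\ref{item:W1}--\ref{item:W2} ensure that $B$ minimizes $w_k$ over $\base(A_k)$, so every $B' \in \cbase(A_1, A_2)$ satisfies $w_k(B') \ge w_k(B)$ for both $k$, and $w(B') = w(B)$ if and only if both inequalities are equalities. Combining this with (i) and (ii), $B' \in \cbase(A_1^\#, A_2^\#)$ if and only if $B' \in \cbase(A_1, A_2)$ and $w(B') = w(B)$, which is the first claim. For any such $B'$, (ii) and the unit constant of the pivoted pair yield $\det A_1^\#[B'] \det A_2^\#[B'] = \det A_1[B'] \det A_2[B'] = 1$, establishing that $(A_1^\#, A_2^\#)$ is Pfaffian with constant $1$.

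The only subtle point is the equivalences (i) and (ii): they rely on the fact that the bound from \cref{prop:weighted_intersection_cocircuit} forces every nonzero term in $\det A_k[B']$ to use only ``weight-monotone'' entries, so matching the minimum total weight is possible only when each such entry is actually weight-preserving---precisely the entries kept by $A_k^\#$. Once this bookkeeping is in hand, everything else is a direct combination of the splitting lemma and the pivoted Pfaffian property.
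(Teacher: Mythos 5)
Your proposal is correct and takes essentially the same approach as the paper: the paper's proof packages your term-by-term Leibniz argument into the monomial matrix $D\pbig{\theta^{-w_k}} A_k D\pbig{\theta^{w_k}}$, whose constant terms are exactly the entries of $A_k^\#$, but the underlying mechanism—using \cref{prop:weighted_intersection_cocircuit} to show that only weight-tight entries can contribute when $w_k(B') = w_k(B)$, and then combining $k = 1, 2$ via the splitting properties~\ref{item:W1} and~\ref{item:W2}—is identical. Your facts (i) and (ii) are precisely the paper's intermediate claim that $\base\pbig{A_k^\#}$ is the set of $w_k$-minimum bases of $A_k$ with $\det A_k^\#[B'] = \det A_k[B']$ on that set, so there is no gap.
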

\begin{proof}
  We first show that $\base\pbig{A_k^\#}$ is the set of minimum-weight bases of $A_k$ with respect to the weight $w_k$ for $k = 1,2$.
  Then the first claim of the lemma follows from~\ref{item:W2}.

  Define $\funcdoms{p}{B}{\setR}$ by $p(u) \defeq -w_k(u)$ for $u \in B$ and put $\tilde{A}_k(\theta) \defeq D\pbig{\theta^p} A_k D\pbig{\theta^{w_k}}$, where $\theta$ is an indeterminate.
  Note that each nonzero entry in $\tilde{A}_k(\theta)$ is a ``monomial,'' i.e., its degree and order are the same.
  Take $B' \subseteq E$ with $\card{B'} = r$.
  Then we have $\det \tilde{A}_k(\theta)[B'] = \theta^{w_k(B') - w_k(B)} \det A_k[B']$.
  Thus, $B'$ is a minimum-weight base of $A_k$ with respect to $w_k$ if and only if $\det \tilde{A}_k(\theta)[B']$ is in $\setK \setminus \set{0}$.
  From \cref{prop:weighted_intersection_cocircuit}, the degree of each nonzero entry in $\tilde{A}_k(\theta)$ is nonnegative; this implies that any term of positive degree in $\tilde{A}_k(\theta)[B']$ cannot contribute to the constant term of $\det \tilde{A}_k(\theta)[B']$.
  In addition, the constant term of each entry in $\tilde{A}_k(\theta)$ is the same as that of $A_k^\#$ by its definition.
  Hence the constant term of $\det \tilde{A}_k(\theta)[B']$ is equal to $\det A_k^\#[B']$, which means that $B'$ is in $\base\pbig{A_k^\#}$ if and only if $B'$ minimizes $w_k$ among $\base(A_k)$.

  In the above argument, $\det A_k^\#[B'] = \det \tilde{A}_k(\theta)[B'] = \det A_k[B']$ is proved for $B' \in \base\pbig{A_k^\#}$.
  Hence we have $\det A_1^\#[B'] \det A_2^\#[B'] = \det A_1[B'] \det A_2[B'] = 1$ for all $B' \in \cbase\pbig{A_1^\#, A_2^\#}$.
\end{proof}

By \cref{lem:pair_sharp} and \cref{prop:counting_formula_for_pfaffian_pairs}, we have the following corollary, which leads us to an algorithm described in \cref{alg:weighted_pfaffian_pair}.

\begin{corollary}\label{cor:counting_via_sharp}
  The number of minimum-weight common bases of $(A_1, A_2)$ modulo is equal to $\det A_1^\# \trsp{{A_2^\#}}$ over $\setK$.
\end{corollary}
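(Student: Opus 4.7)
The plan is to derive the corollary as a direct combination of the two statements in \cref{lem:pair_sharp} together with the counting formula in \cref{prop:counting_formula_for_pfaffian_pairs}. First I would invoke the second assertion of \cref{lem:pair_sharp}, which says that $\pbig{A_1^\#, A_2^\#}$ is a Pfaffian pair with constant $c = 1$. Applying \cref{prop:counting_formula_for_pfaffian_pairs} to this Pfaffian pair with $z = \onevec$, the number of common bases of $\pbig{A_1^\#, A_2^\#}$ is equal to $c^{-1} \det A_1^\# \trsp{{A_2^\#}} = \det A_1^\# \trsp{{A_2^\#}}$ over $\setK$.

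Next I would invoke the first assertion of \cref{lem:pair_sharp}, which identifies $\cbase\pbig{A_1^\#, A_2^\#}$ with the set of minimum-weight common bases of the original pair $(A_1, A_2)$ with respect to $w$. Therefore the cardinality computed in the previous step is exactly the number of minimum-weight common bases of $(A_1, A_2)$, establishing the claim.

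There is essentially no technical obstacle left at this stage, since all of the work (the weight splitting via Frank's lemma, the pivoting to make $A_k[B] = I_r$, and the verification that the ``sharp'' matrices inherit the Pfaffian property with trivial constant) has already been packaged into \cref{lem:pair_sharp}. The only thing to be careful about is to note that this identity is meant over $\setK$, i.e., modulo $\ch(\setK)$, which matches the statement of \cref{prop:counting_formula_for_pfaffian_pairs}; this is automatic once the two lemmas are combined.
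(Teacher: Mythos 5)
Your proposal is correct and is essentially the paper's own derivation: the corollary is introduced in the main text precisely as an immediate consequence of \cref{lem:pair_sharp} (minimum-weight common bases of $(A_1,A_2)$ equal $\cbase\pbig{A_1^\#, A_2^\#}$, which is Pfaffian with constant $1$) combined with \cref{prop:counting_formula_for_pfaffian_pairs} applied with $z=\onevec$. The paper's appendix additionally gives an alternative proof via bipartite-matching duality and Murota's tight coefficient matrix, but your route matches the primary argument and needs no further work.
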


\begin{algorithm}[tbp]
  \caption{Computing the number of minimum-weight common bases of a Pfaffian pair.}\label{alg:weighted_pfaffian_pair}
  \begin{algorithmic}[1]
    \Input{An $r \times n$ Pfaffian pair $(A_1, A_2)$ and a column weight $\funcdoms{w}{E}{\setZ}$}
    \Output{The number of minimum-weight common bases of $(A_1, A_2)$ modulo $\ch(\setK)$}
    %\If{$\cbase(A_1, A_2) = \varnothing$}
    %\State{\Return 0} % chktex 1
    %\EndIf{}
    \State{Compute a minimum-weight common base $B \in \cbase(A_1, A_2)$ and split weights $w_1, w_2$}
    \State{$A_1 \gets {A_1[B]}^{-1}A_1, \, A_2 \gets {A_2[B]}^{-1}A_2$}
    \State{Construct the matrices $A_1^\#$ and $A_2^\#$ defined by~\eqref{def:A_k_sharp}}
    \State{\Return $\det A_1^\# \trsp{{A_2^\#}}$} % chktex 1
  \end{algorithmic}
\end{algorithm}

Now \cref{thm:complexity_of_counting_weighted_pairs} can be proved as follows.

\begin{proof}[{of \cref{thm:complexity_of_counting_weighted_pairs}}]
  The validity of \cref{cor:counting_via_sharp} is proved in the above arguments.
  We analyze its time complexity.
  Frank's weighted matroid intersection algorithm~\cite{Frank1981} can be implemented for linear matroids in $\Order\prn{nr^\omega + nr \log n}$-time (see, e.g.,~\cite[Corollarly~41.10a]{Schrijver2003}).
  Other computations can be done within this time.
\end{proof}

\subsection{Counting on Weighted Pfaffian Parities}\label{sec:counting_on_weighted_pfaffian_parities}

Let $(A, L)$ be a $2r \times 2n$ Pfaffian parity of constant $c$ with line weight $\funcdoms{w}{L}{\setR}$.
We describe an algorithm to count the number of minimum-weight parity bases of $(A, L)$ modulo $\ch(\setK)$.

%On formulating weighted Pfaffian parities algebraically, we use $\Phi(z)$ instead of $A \Delta(t) \trsp{A}$ as follows.
Suppose that $(A, L)$ has at least one parity base.
Let $\zeta$ denote the minimum weight of a parity base of $(A, L)$ and $N$ the number of minimum-weight parity bases modulo $\ch(\setK)$.
Note that $N$ is nonzero if $\ch(\setK) = 0$.
We put $\delta \defeq w(L) - \zeta$.
Then following holds from \cref{prop:algebraic_weighted_parity}.

\begin{lemma}\label{lem:algebraic_form_of_weighted_parity}
  The coefficient of $\theta^{\delta}$ in $\pf \Phi\pbig{\theta^w}$ is equal to $cN$.
  In addition, it holds $\delta \ge \deg \pf \Phi\pbig{\theta^w}$ and the equality is attained if and only if $N \ne 0$.
\end{lemma}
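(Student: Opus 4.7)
The proof will be a direct application of the counting identities already established for Pfaffian parities. The plan is to read off the coefficient of $\theta^{\delta}$ from the expansion of $\pf \Phi\pbig{\theta^w}$ obtained via Propositions~\ref{prop:pfaffian_cauchy_binet} and~\ref{prop:algebraic_weighted_parity}, and then deduce the degree bound by comparing exponents.

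First I would invoke \cref{prop:algebraic_weighted_parity} with $x = \zeta$. This gives that the coefficient of $\theta^{w(L) - \zeta} = \theta^{\delta}$ in $\pf \Phi\pbig{\theta^w}$ equals $\sum_{B \in \pbase_{\zeta}} \det A[B]$. Since $(A, L)$ is Pfaffian with constant $c$, every parity base $B$ satisfies $\det A[B] = c$, so this sum collapses to $c \cdot \card{\pbase_{\zeta}}$. Reducing modulo $\ch(\setK)$ and recalling that $N \equiv \card{\pbase_{\zeta}} \pmod{\ch(\setK)}$ yields that the coefficient equals $cN$ over $\setK$, establishing the first assertion.

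For the degree bound, I would use \cref{prop:counting_formula_for_pfaffian_prities} (or equivalently \cref{prop:pfaffian_cauchy_binet}) specialized to $z = \theta^w$, which gives
\begin{align}
  \pf \Phi\pbig{\theta^w} = c \sum_{B \in \pbase(A, L)} \theta^{w(L) - w(B)}.
\end{align}
Every exponent $w(L) - w(B)$ appearing on the right satisfies $w(L) - w(B) \le w(L) - \zeta = \delta$, since $\zeta$ is the minimum weight of a parity base. Hence $\deg \pf \Phi\pbig{\theta^w} \le \delta$, proving the inequality.

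Finally, equality holds precisely when the coefficient of $\theta^{\delta}$ is nonzero over $\setK$. By the first part, this coefficient is $cN$; as $c \ne 0$ by the definition of a Pfaffian parity, it is nonzero over $\setK$ if and only if $N \ne 0$. This is essentially bookkeeping rather than a genuine obstacle: the only subtle point is to keep track of the distinction between ``nonzero as an integer'' and ``nonzero modulo $\ch(\setK)$'' (which is why the statement is phrased in terms of $N$ rather than $\card{\pbase_{\zeta}}$), and to remember that nothing in $\pf \Phi\pbig{\theta^w}$ beyond what appears in the Pfaffian Cauchy--Binet expansion can contribute to the leading term.
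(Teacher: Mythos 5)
Your proposal is correct and follows essentially the same route as the paper, which simply derives the lemma from \cref{prop:algebraic_weighted_parity} (equivalently, the expansion in \cref{prop:counting_formula_for_pfaffian_prities}); you have merely spelled out the coefficient extraction at $x=\zeta$, the bound $w(L)-w(B)\le\delta$, and the equality criterion via $c\ne 0$. No gaps.
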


%Iwata--Kobayashi~\cite{Iwata2017} developed the only polynomial-time algorithm for the weighted linear matroid parity problem so far.
We first obtain a minimum-weight parity base $B \in \pbase(A, L)$ applying the algorithm of Iwata--Kobayashi~\cite{Iwata2017}.
Then we perform a row transformation and a line (column) permutation on $A$ so that the left $2r$ columns of $A$ correspond to $B$ and $A[B] = I_{2r}$.
Namely, $A$ is in the form of $A =\begin{pmatrix} I_{2r} & C \end{pmatrix}$ for some matrix $C \in \setK^{2r \times (2n - 2r)}$.
Note that these transformations retain $(A, L)$ Pfaffian but change the constant to constant 1.
We perform the same transformations on $\Phi\pbig{\theta^w}$ (and on $\Delta\pbig{\theta^w}$) accordingly.
Now the polynomial matrix $\Phi\pbig{\theta^w}$ is in the form of
\begin{align}
  \Phi\pbig{\theta^w} \defeq \prn{\begin{array}{c|cc}
      O         & I_{2r}                                                       & C \\\hline
      -I_{2r}   & \multicolumn{2}{c}{\multirow{2}{*}{$\Delta\pbig{\theta^w}$}}     \\
      -\trsp{C} & \multicolumn{2}{c}{}
    \end{array}}
  \begin{array}{l}
    \gets U \\
    \gets B \\
    \gets E \setminus B,
  \end{array}
\end{align}
where $U$ is the row set of $A$ identified with $B$.

Besides the minimum-weight parity base $B$, the algorithm of Iwata--Kobayashi~\cite{Iwata2017} output an extra matrix $C^*$.
Its row set $U^*$ and column set $E^*$ contains $U$ and $E \setminus B$, respectively, and elements in $U^* \setminus U$ and $E^* \setminus E = E^* \setminus (E \setminus B)$ are newly introduced ones.
The Schur complement of $C^*$ with respect to $Y \defeq C^*[U^* \setminus U, E^* \setminus E]$ coincides with $C$, i.e., it holds
\begin{align}\label{eq:C_star_Schur}
  C = C^*[U, E \setminus B] - C^*[U, E^* \setminus E]Y^{-1} C^*[U^* \setminus U, E \setminus B].
\end{align}
In addition, the cardinalities of $U^*$ and $E^*$ are guaranteed to be $\Order(n)$.
We put $W \defeq U^* \cup B \cup E^*$ and $c^* \defeq \det Y$.
Consider the skew-symmetric polynomial matrix $\Phi^*(\theta) = \pbig{\Phi_{u,v}^*(\theta)}_{u,v \in W}$ defined by
\begin{align}\label{def:phi_star_theta}
  \Phi^*(\theta) = \prn{\begin{array}{c|c|c|c|c}
      \multicolumn{2}{c|}{\multirow{2}{*}{$O$}}             & O                      & \multicolumn{2}{c}{\multirow{2}{*}{$C^*$}}                                           \\\cline{3-3}
      \multicolumn{2}{c|}{}                                 & I_{2r}                 & \multicolumn{2}{c}{}                                                                 \\\hline
      O                                                     & -I_{2r}                & \multicolumn{2}{c|}{\multirow{2}{*}{$\Delta\pbig{\theta^w}$}} & \multirow{2}{*}{$O$} \\\cline{1-2}
      \multicolumn{2}{c|}{\multirow{2}{*}{$-\trsp{{C^*}}$}} & \multicolumn{2}{c|}{}  &                                                                                      \\\cline{3-5}
      \multicolumn{2}{c|}{}                                 & \multicolumn{2}{c|}{O} & O
    \end{array}}
  \begin{array}{l}
    \gets U^* \setminus U \\
    \gets U               \\
    \gets B               \\
    \gets E \setminus B   \\
    \gets E^* \setminus E.
  \end{array}
\end{align}
Then we have the following claim, which is essentially the same as Claim~6.2 in the arXiv preprint of~\cite{Iwata2017}.

\begin{lemma}\label{lem:pfaffian_equality_phi_star}
  It holds $\pf \Phi^*(\theta) = c^* \pf \Phi\pbig{\theta^w}$.
\end{lemma}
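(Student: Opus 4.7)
The plan is to apply the Pfaffian Schur complement identity~\eqref{eq:shurt_pf} to $\Phi^*(\theta)$, treating the $2k \times 2k$ submatrix indexed by $(U^* \setminus U) \cup (E^* \setminus E)$ as the pivot block, where $k \defeq \card{U^* \setminus U} = \card{E^* \setminus E}$ (forced by the squareness of $Y$). First I would permute the rows and columns of $\Phi^*(\theta)$ so that these $2k$ indices become contiguous at the bottom. Since this moves $k$ rows and columns past the $2r + 2n$ rows and columns indexed by $U \cup B \cup (E \setminus B)$, the permutation is even and the Pfaffian is unchanged; the resulting skew-symmetric matrix has the form $\begin{psmallmatrix} X & Z \\ -\trsp{Z} & W \end{psmallmatrix}$. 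By inspection of~\eqref{def:phi_star_theta}, the pivot block is $W = \begin{psmallmatrix} O & Y \\ -\trsp{Y} & O \end{psmallmatrix}$, and $X$ agrees with the pivoted $\Phi\pbig{\theta^w}$ in every block except the top-right one (rows $U$, columns $E \setminus B$), where $X$ has $C^*[U, E \setminus B]$ in place of $C$.

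The key step is to verify that the Schur complement $X + Z W^{-1} \trsp{Z}$ recovers the pivoted $\Phi\pbig{\theta^w}$. Using $W^{-1} = \begin{psmallmatrix} O & -(\trsp{Y})^{-1} \\ Y^{-1} & O \end{psmallmatrix}$ together with the only nonzero entries of $Z$, namely $C^*[U, E^* \setminus E]$ in row block $U$ and $-\trsp{{C^*[U^* \setminus U, E \setminus B]}}$ in row block $E \setminus B$, the sole nonzero block of $Z W^{-1} \trsp{Z}$ appears in the top-right position and equals $-C^*[U, E^* \setminus E]\, Y^{-1}\, C^*[U^* \setminus U, E \setminus B]$. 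Adding this block to $X$ then gives $C^*[U, E \setminus B] - C^*[U, E^* \setminus E]\, Y^{-1}\, C^*[U^* \setminus U, E \setminus B] = C$ by the Schur complement relation~\eqref{eq:C_star_Schur}, so $X + Z W^{-1} \trsp{Z}$ coincides with the pivoted $\Phi\pbig{\theta^w}$ blockwise.

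Combining these via~\eqref{eq:shurt_pf} gives $\pf \Phi^*(\theta) = \pf W \cdot \pf \Phi\pbig{\theta^w}$, and a direct evaluation of $\pf W$ from the definition~\eqref{def:pfaffian} yields $\pf W = \pm \det Y = \pm c^*$, proving the identity up to sign. The principal technical obstacle is bookkeeping the overall sign: the antidiagonal block $W$ contributes a factor of $(-1)^{k(k-1)/2}$ under the naive Pfaffian expansion, which must be reconciled with the stated equality either by a careful orientation convention within the index sets $U^* \setminus U$ and $E^* \setminus E$, or by absorbing the sign into how the Iwata--Kobayashi algorithm outputs $C^*$ (as hinted by the reference to the arXiv preprint). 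The substantive mathematical content of the lemma, however, is entirely captured by the Schur complement computation in the previous paragraph.
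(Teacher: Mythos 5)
Your proposal is essentially the paper's own argument: the paper carries out exactly this Schur-complement reduction, only packaged as explicit Pfaffian-preserving elementary operations (using~\eqref{eq:C_star_Schur} to turn $C^*[U, E\setminus B]$ into $C$ and decouple $Y$) followed by an even block permutation that splits off $\begin{psmallmatrix} O & Y \\ -\trsp{Y} & O \end{psmallmatrix}$, which is the unpacked form of your direct appeal to~\eqref{eq:shurt_pf}. The residual sign you flag is not a gap relative to the paper: the paper's final step likewise just evaluates $\pf \begin{psmallmatrix} O & Y \\ -\trsp{Y} & O \end{psmallmatrix}$ as $c^*$, i.e.\ the factor $\prn{-1}^{k(k-1)/2}$ is absorbed into the convention for $c^*$ inherited from the Iwata--Kobayashi output (cf.\ the use of $\pf C^*[U^* \setminus U, E^* \setminus E]$ in \cref{alg:weighted_pfaffian_parity}), so your computation matches the paper's proof step for step.
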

\begin{proof}
  From the property~\eqref{eq:C_star_Schur} of $C^*$ on the Schur complement, we can transform $\Phi^*(\theta)$ by elementary operations as
  \begin{align}
    \hat{\Phi}(\theta) \defeq
    \prn{\begin{array}{c|c|c|c|c}
        \multicolumn{2}{c|}{\multirow{2}{*}{$O$}} & O                             & O                                                             & Y                    \\\cline{3-5}
        \multicolumn{2}{c|}{}                     & I_{2r}                        & C                                                             & O                    \\\hline
        O                                         & -I_{2r}                       & \multicolumn{2}{c|}{\multirow{2}{*}{$\Delta\pbig{\theta^w}$}} & \multirow{2}{*}{$O$} \\\cline{1-2}
        O                                         & \raisebox{-.3ex}{$-\trsp{C}$} & \multicolumn{2}{c|}{}                                         &                      \\\hline
        \raisebox{-.3ex}{$-\trsp{Y}$}             & O                             & \multicolumn{2}{c|}{O}                                        & O
      \end{array}}
    \begin{array}{l}
      \gets U^* \setminus U \\
      \gets U               \\
      \gets B               \\
      \gets E \setminus B   \\
      \gets E^* \setminus E.
    \end{array}
  \end{align}
  Then we have
  \begin{align}
    \pf \Phi^*(\theta)
    = \pf \hat{\Phi}(\theta)
    = \pf \begin{pmatrix}
      O         & O                   & Y \\
      O         & \Phi\pbig{\theta^w} & O \\
      -\trsp{Y} & O                   & O
    \end{pmatrix}
    =
    \pf \begin{pmatrix}
      O         & Y & O                   \\
      -\trsp{Y} & O & O                   \\
      O         & O & \Phi\pbig{\theta^w}
    \end{pmatrix}
    =
    c^* \pf \Phi\pbig{\theta^w}.
  \end{align}
  Note that the permutation which we applied on the third equality is even since the order of $\Phi\pbig{\theta^w}$ is $2r + 2n$.
  Hence the claim holds.
\end{proof}

\Cref{lem:algebraic_form_of_weighted_parity} can be rephrased in terms of $\Phi^*(\theta)$ by using \cref{lem:pfaffian_equality_phi_star} as follows.

\begin{lemma}\label{lem:algebraic_form_of_weighted_parity_star}
  The coefficient of $\theta^\delta$ in $\pf \Phi^*(\theta)$ is equal to $c^*N$.
  In addition, it holds $\delta \ge \deg \pf \Phi^*(\theta^w)$ and the equality is attained if and only if $N \ne 0$.
\end{lemma}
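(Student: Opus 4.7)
The plan is to derive this lemma directly from \cref{lem:algebraic_form_of_weighted_parity,lem:pfaffian_equality_phi_star}, treating it essentially as a change-of-variables corollary once the Schur-complement identity has been established. Since the derivation is transparent, I would keep the proof short and focus on carefully tracking how the constants and degrees are preserved.

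First, I would apply \cref{lem:algebraic_form_of_weighted_parity} to the transformed matroid parity after pivoting. The row transformation and line permutation that bring $A$ into the form $(I_{2r} \ C)$ leave $(A, L)$ Pfaffian (with constant becoming $1$), and they act on $\pbase(A, L)$ by an identification that preserves line weights, hence preserve both $\zeta$, $\delta$, and the count $N$. Consequently, \cref{lem:algebraic_form_of_weighted_parity} in this setting states that the coefficient of $\theta^\delta$ in $\pf \Phi\pbig{\theta^w}$ is $N$, and moreover $\delta \ge \deg \pf \Phi\pbig{\theta^w}$ with equality if and only if $N \ne 0$.

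Second, I would multiply through by $c^*$ using \cref{lem:pfaffian_equality_phi_star}, which gives $\pf \Phi^*(\theta) = c^* \pf \Phi\pbig{\theta^w}$. The key observation I need is that $c^* = \det Y$ is a \emph{nonzero} element of $\setK$: this is guaranteed because $Y = C^*[U^*\setminus U, E^*\setminus E]$ is nonsingular by the construction of $C^*$ in the Iwata--Kobayashi algorithm (otherwise the Schur-complement expression~\eqref{eq:C_star_Schur} would not be defined). Multiplying a Laurent polynomial in $\theta$ by a nonzero scalar preserves its degree and multiplies each coefficient by that scalar. Therefore the coefficient of $\theta^\delta$ in $\pf \Phi^*(\theta)$ equals $c^* N$, and $\delta \ge \deg \pf \Phi^*(\theta)$ with equality if and only if $N \ne 0$ (equivalently, if and only if the coefficient $c^* N$ is nonzero over $\setK$).

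There is no real obstacle here; the lemma is a corollary. The only point that needs care is ensuring $c^* \ne 0$ over $\setK$ so that the scaling by $c^*$ is genuinely invertible and preserves both the degree and the nonvanishing of the top coefficient. I would also briefly note that the displayed statement contains the typographical ``$\Phi^*(\theta^w)$'' which should be read as $\Phi^*(\theta)$, since $\Phi^*$ is defined as a matrix in the single indeterminate $\theta$ in~\eqref{def:phi_star_theta}.
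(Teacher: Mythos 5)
Your proposal is correct and follows exactly the route the paper intends: apply \cref{lem:algebraic_form_of_weighted_parity} to the pivoted parity (whose constant is $1$) and then rescale by the nonzero factor $c^*$ via \cref{lem:pfaffian_equality_phi_star}, which is precisely why the paper states the lemma as an immediate rephrasing without further argument. Your added remarks that $c^* = \det Y \ne 0$ and that ``$\Phi^*(\theta^w)$'' is a typo for $\Phi^*(\theta)$ are accurate and harmless.
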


We next define an undirected graph $G = G(\Phi^*)$ associated with $\Phi^*(\theta)$.
The vertex set of $G$ is $W$ and the edge set is given by
\begin{align}\label{def:graph_of_phi_star_theta}
  F \defeq \set[\big]{\set{u, v}}[u, v \in W, \, \Phi_{u,v}^*(\theta) \ne 0].
\end{align}
We set the weight of every edge $\set{u, v} \in F$ to $\deg \Phi_{u,v}^*(\theta)$.
Let $\hat{\delta}(\Phi^*)$ denote the maximum weight of a perfect matching of $G$.
We set $\hat{\delta}(\Phi^*) \defeq -\infty$ if $G$ has no perfect matching.
Here we put $\hat{\delta} \defeq \hat{\delta}(\Phi^*)$.
From the definition~\eqref{def:pfaffian} of Pfaffian, $\hat{\delta}$ serves as a combinatorial upper bound on $\deg \pf \Phi^*(\theta)$.
For later use, we define $G(S)$ and $\hat{\delta}(S)$ for any skew-symmetric polynomial matrix $S(\theta)$ in the same manner.

The dual problem of the maximum-weight perfect matching problem on $G$ is as follows (see~\cite{Iwata2017} and~\cite[Theorem~25.1]{Schrijver2003}):
\begin{align}
  \text{(D)} \quad
  \begin{array}{|c>{\hspace{-.5em}}l}
    \underset{\pi, \xi}{\text{minimize}} & \begin{array}[t]{>{\displaystyle}l}
      \sum_{u \in W} \pi(u) - \sum_{Z \in \Omega} \xi(Z)
    \end{array} \\
    \text{subject to}                    & \begin{array}[t]{>{\displaystyle}l>{\displaystyle}l}
      \pi(u) + \pi(v) - \sum_{Z \in \Omega_{u,v}} \xi(Z) \ge \deg \Phi^*_{u,v}(\theta) & (\set{u,v} \in F),  \\
      \xi(Z) \ge 0                                                                     & (Z \in \Omega),
    \end{array}
  \end{array}
\end{align}
where $\Omega \defeq \set{Z \subseteq W}[\text{$\card{Z}$ is odd and $\card{Z} \ge 3$}]$ and $\Omega_{u,v} \defeq \set{Z \in \Omega}[\card{Z \cap \set{u,v}} = 1]$ for $u,v \in W$.
The following claim is proved in~\cite{Iwata2017} as a key ingredient of the optimality certification on the weighted linear matroid parity problem.

\begin{proposition}[{\cite[Claim~6.3 in the arXiv preprint]{Iwata2017}}]\label{prop:feasibility_of_D2}
  There exists a feasible solution of (D) having the objective value $\delta$.
\end{proposition}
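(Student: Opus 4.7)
The plan is to extract an explicit feasible dual solution $(\pi, \xi)$ from the data that the Iwata--Kobayashi algorithm~\cite{Iwata2017} outputs together with the minimum-weight parity base $B$ and the extended matrix $C^*$. The Schur-complement identity $\pf \Phi^*(\theta) = c^* \pf \Phi\pbig{\theta^w}$ from \cref{lem:pfaffian_equality_phi_star} is the bridge transferring weight information between $\Phi^*(\theta)$ and $\Phi\pbig{\theta^w}$, so that statements about $\delta$ on the original side translate into statements about $\hat\delta(\Phi^*)$ on the extended side.

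Concretely, I would first invoke the algorithmic correctness of~\cite{Iwata2017} to establish that $C^*$ is constructed precisely so that the combinatorial upper bound on $\deg \pf \Phi^*(\theta)$ is tight, i.e., $\hat\delta(\Phi^*) = \delta$, witnessed by a perfect matching of weight $\delta$ in $G(\Phi^*)$. Next, I would exploit the primal--dual structure of their algorithm: at termination, it maintains a potential $p \colon W \to \setR$ together with a laminar family of odd subsets of $W$ carrying nonnegative labels $\lambda$, all satisfying the LP dual constraints of the maximum-weight perfect matching problem on $G(\Phi^*)$. Setting $\pi \defeq p$ and $\xi \defeq \lambda$ then yields a feasible solution of (D), and by LP duality for Edmonds' odd-set polytope its objective value is exactly $\hat\delta(\Phi^*) = \delta$.

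The main obstacle is the verification that $\hat\delta(\Phi^*) = \delta$, which is the algebraic--combinatorial core of~\cite{Iwata2017}: one must argue that every augmentation in their algorithm preserves the invariant that the combinatorial Pfaffian-degree bound coincides with the true Pfaffian degree, and that termination is detected precisely when a perfect matching realizing this bound exists in $G(\Phi^*)$. Once this invariant is in hand, the remaining work is the edge-by-edge check of the constraint $\pi(u) + \pi(v) - \sum_{Z \in \Omega_{u,v}} \xi(Z) \ge \deg \Phi^*_{u,v}(\theta)$ in $G(\Phi^*)$, splitting cases according to whether the edge comes from the weighted block $\Delta\pbig{\theta^w}$ or from the degree-zero block $C^*$; in both cases the required inequality reduces to the complementary-slackness conditions maintained by the algorithm at termination.
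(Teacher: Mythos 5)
There is a mismatch you should be aware of at the outset: the paper does not prove this proposition at all. It is imported as a black box from Iwata--Kobayashi (Claim~6.3 of the arXiv version of~\cite{Iwata2017}), and the only use made of it is the inequality $\delta \ge \hat{\delta}$ via weak duality in \cref{lem:tightness_of_parity}. So a genuine proof here would have to reconstruct Iwata--Kobayashi's argument: build an explicit $(\pi,\xi)$ of value $\delta$ from the dual data (potentials and nested odd sets) of their \emph{weighted parity} algorithm and check every constraint of (D) against $\deg \Phi^*_{u,v}(\theta)$. Your proposal does not do this; it defers exactly that content back to ``algorithmic correctness'' and ``invariants maintained at termination,'' which is circular --- the statement you are asked to prove \emph{is} the optimality certificate that the algorithm's correctness proof supplies.

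The one concrete invariant you do name is moreover false in general, and your logical dependency runs backwards. You propose to first establish $\hat{\delta}(\Phi^*) = \delta$ by arguing that ``the combinatorial Pfaffian-degree bound coincides with the true Pfaffian degree'' throughout the algorithm. Over a field of positive characteristic the top coefficient of $\pf \Phi^*(\theta)$ is $c^* N$ with $N$ counted modulo $\ch(\setK)$, so it can vanish; then $\deg \pf \Phi^*(\theta) < \delta$ and the invariant you rely on fails, yet the proposition must still hold (the paper's \cref{lem:tightness_of_parity} is careful to claim the equalities only when $N \ne 0$). Note also that the easy direction $\hat{\delta} \ge \delta$ is immediate from the structure of $\Phi^*(\theta)$ (match $U$ to $B$ through $I_{2r}$, the lines of $L \setminus B$ through $\Delta(\theta^w)$, and $U^* \setminus U$ to $E^* \setminus E$ through a nonzero permutation term of the nonsingular block $Y$), so the entire difficulty is the reverse inequality $\hat{\delta} \le \delta$ --- equivalently, the existence of the claimed dual solution. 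LP duality on $G(\Phi^*)$ cannot produce it: duality only yields an optimal dual of value $\hat{\delta}$, and identifying $\hat{\delta}$ with $\delta$ is precisely what you have not shown. In the actual argument of~\cite{Iwata2017}, the dual solution of value $\delta$ is constructed directly from the parity-side optimality certificate, and $\hat{\delta} \le \delta$ is a \emph{consequence}; your sketch assumes that consequence as its first step.
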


We make use of \cref{prop:feasibility_of_D2} for the purpose of counting.

\begin{lemma}\label{lem:tightness_of_parity}
  It holds $\delta \ge \hat{\delta} \ge \deg \pf \Phi^*(\theta)$.
  The equalities are attained if $N \ne 0$.
\end{lemma}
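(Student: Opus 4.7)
The plan is to sandwich the three quantities using two independent arguments (one combinatorial, one from LP duality) and then close the sandwich with Lemma~\ref{lem:algebraic_form_of_weighted_parity_star}.

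First I would establish the right inequality $\hat{\delta} \ge \deg \pf \Phi^*(\theta)$ purely from the definition~\eqref{def:pfaffian} of the Pfaffian. Expanding
\begin{align}
  \pf \Phi^*(\theta) = \sum_{\sigma \in F_{\card{W}}} \sgn\sigma \prod_{i} \Phi^*_{\sigma(2i-1), \sigma(2i)}(\theta),
\end{align}
each permutation $\sigma \in F_{\card{W}}$ for which the product is nonzero corresponds to a perfect matching $M_\sigma$ of $G(\Phi^*)$ by the very definition~\eqref{def:graph_of_phi_star_theta} of its edge set $F$. The degree of the corresponding term equals the total weight of $M_\sigma$, which is at most $\hat{\delta}$. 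Since the maximum degree of a sum is at most the maximum of the degrees of its summands, we obtain $\deg \pf \Phi^*(\theta) \le \hat{\delta}$ (with the convention $\deg 0 = -\infty$ handling the case $\hat{\delta} = -\infty$).

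Next I would prove the left inequality $\delta \ge \hat{\delta}$ by weak LP duality. The linear program (D) introduced just before the lemma is the standard dual of the maximum-weight perfect matching problem on $G(\Phi^*)$, whose optimal primal value is exactly $\hat{\delta}$. By \cref{prop:feasibility_of_D2}, there exists a feasible solution $(\pi, \xi)$ of (D) whose objective value is $\delta$. Weak duality then gives $\hat{\delta} \le \delta$. (If $G(\Phi^*)$ has no perfect matching, so $\hat{\delta} = -\infty$, the inequality is trivial.) Combining both bounds yields the chain $\delta \ge \hat{\delta} \ge \deg \pf \Phi^*(\theta)$.

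Finally, to handle the equality case when $N \ne 0$, I would invoke \cref{lem:algebraic_form_of_weighted_parity_star}, which asserts $\delta = \deg \pf \Phi^*(\theta)$ whenever $N \ne 0$. Substituting this equality into the chain above forces $\delta = \hat{\delta} = \deg \pf \Phi^*(\theta)$, completing the proof.

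The only non-routine step is the left inequality, and even that is essentially a citation of \cref{prop:feasibility_of_D2} combined with weak duality; the rest is a direct unfolding of definitions and a sandwich argument. The main conceptual point to flag cleanly in the write-up is that the feasible dual solution from \cref{prop:feasibility_of_D2} certifies $\delta$ as an upper bound on every primal-feasible matching weight, which is exactly what is needed.
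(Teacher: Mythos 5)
Your proposal is correct and follows the paper's own proof essentially verbatim: the right inequality from the definition of the Pfaffian (each nonzero term gives a perfect matching of $G(\Phi^*)$), the left inequality from \cref{prop:feasibility_of_D2} together with weak duality of (D), and the equality case by substituting $\delta = \deg \pf \Phi^*(\theta)$ from \cref{lem:algebraic_form_of_weighted_parity_star} into the chain. No gaps; your write-up simply spells out the steps the paper states more tersely.
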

\begin{proof}
  We have $\delta \ge \hat{\delta}$ by \cref{prop:feasibility_of_D2} and the weak duality of (D).
  We also have $\hat{\delta} \ge \deg \pf \Phi^*(\theta)$ from the definition of Pfaffian.
  The equality condition is obtained from \cref{lem:algebraic_form_of_weighted_parity_star}.
\end{proof}

By \cref{lem:tightness_of_parity}, it holds $N = 0$ if $\delta > \hat{\delta}$.
Otherwise, our goal is to compute the coefficient of $\theta^\delta = \theta^{\hat{\delta}}$ in $\pf \Phi^*(\theta)$ by \cref{lem:algebraic_form_of_weighted_parity_star}.
This can be obtained by executing Murota's upper-tightness testing algorithm on combinatorial relaxation~\cite[Section~4.4]{Murota1995a} (with $\det$ replaced with $\pf$).

\begin{proposition}[{see~\cite[Section~4.4]{Murota1995a}}]\label{prop:combinatorial_relaxation}
  Let $S(\theta)$ be a $2n \times 2n$ skew-symmetric polynomial matrix.
  We can compute the coefficient of $\theta^{\hat{\delta}(S)}$ in $\pf S(\theta)$ in $\Order(n^3)$-time.
\end{proposition}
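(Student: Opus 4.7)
The plan is to invoke Murota's combinatorial relaxation framework from~\cite{Murota1995a}, adapted from the determinant to the Pfaffian setting. The algorithm proceeds in three phases.

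\textbf{Phase 1 (Dual optimum).} Apply Edmonds' weighted non-bipartite matching algorithm to the weighted graph $G(S)$ defined analogously to~\eqref{def:graph_of_phi_star_theta}. This yields $\hat{\delta}(S)$ together with an optimal dual solution $(\pi, \xi)$ of the LP (D) with objective value $\hat{\delta}(S)$. Blossom-based implementations achieve this in $\Order(n^3)$ time.

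\textbf{Phase 2 (Evaluation matrix).} Using $(\pi, \xi)$, construct a skew-symmetric constant matrix $\hat{S} \in \setK^{2n \times 2n}$ encoding the ``leading part'' of $S(\theta)$ along tight dual constraints. Concretely, for an edge $\set{u,v}$ of $G(S)$, set $\hat{S}_{u,v}$ to the coefficient of $\theta^{\deg S_{u,v}(\theta)}$ in $S_{u,v}(\theta)$ if the constraint $\pi(u)+\pi(v)-\sum_{Z \in \Omega_{u,v}} \xi(Z) = \deg S_{u,v}(\theta)$ is tight, and to $0$ otherwise. Positive odd-set potentials $\xi(Z) > 0$ are folded in by contracting the laminar family of such $Z$'s in the standard blossom fashion, producing the final evaluation matrix.

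\textbf{Phase 3 (Pfaffian computation).} Compute $\pf \hat{S}$ by Gaussian elimination in $\Order(n^3)$ time. The correctness assertion, which is Murota's upper-tightness theorem in its Pfaffian form, states that $\pf \hat{S}$ equals the coefficient of $\theta^{\hat{\delta}(S)}$ in $\pf S(\theta)$. In particular, $\pf \hat{S}$ is nonzero precisely when the combinatorial bound is tight, i.e., when $\deg \pf S(\theta) = \hat{\delta}(S)$; otherwise the coefficient is $0$ and so is $\pf \hat{S}$. The correctness follows from complementary slackness applied term-by-term to the defining sum~\eqref{def:pfaffian} of the Pfaffian, after scaling rows and columns by $\theta^{-\pi}$ and performing the blossom contractions.

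The main obstacle is the faithful handling of the odd-set potentials $\xi$ when constructing $\hat{S}$: in the bipartite (determinant) analog only vertex potentials $\pi$ appear, so a diagonal scaling suffices, whereas here blossom contractions are needed and the correctness proof must reconcile the sign conventions of the Pfaffian with the signs introduced by contraction. This is precisely the nontrivial content of Murota's non-bipartite extension, which we invoke directly from~\cite[Section~4.4]{Murota1995a} with $\det$ replaced by $\pf$.
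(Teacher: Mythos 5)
There is a genuine gap, and it sits exactly where you locate the "main obstacle." Note first that the paper itself gives no internal proof of this proposition: it is used as a black box, citing Murota's determinant-based upper-tightness test ``with $\det$ replaced with $\pf$.'' So a blind proof has to actually supply (or correctly locate) the Pfaffian adaptation, and your Phase~2 does not. When all odd-set duals vanish, the argument you sketch is fine: $\pf\bigl(D(\theta^{-\pi})S(\theta)D(\theta^{-\pi})\bigr)=\theta^{-\sum_u\pi(u)}\pf S(\theta)$ with $\sum_u\pi(u)=\hat{\delta}(S)$, and the tight-coefficient (constant-term) matrix plays the same role as in Murota's bipartite setting. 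But when some $\xi(Z)>0$ — which is unavoidable for genuinely nonbipartite instances — we have $\sum_u\pi(u)=\hat{\delta}(S)+\sum_Z\xi(Z)>\hat{\delta}(S)$, so diagonal scaling by $\theta^{-\pi}$ extracts the coefficient of the wrong power, and the natural repair (keep the leading coefficient of $S_{u,v}$ exactly on edges whose dual constraint is tight, zero elsewhere, then take the Pfaffian) is false: a perfect matching consisting entirely of tight edges may cross a blossom three times, hence have weight strictly less than $\hat{\delta}(S)$, yet it still contributes to the Pfaffian of that evaluation matrix. Concretely, take two disjoint triangles with weight-$2$ inner edges and weight-$0$ cross edges; with $\pi\equiv 1$ and $\xi=1$ on each triangle every edge is tight, so the evaluation matrix also picks up the three-cross-edge matchings of weight $0$, while the coefficient of $\theta^{\hat{\delta}(S)}$ only sees the weight-$4$ matchings.

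Your proposed fix — ``fold in'' the positive $\xi(Z)$ by ``contracting the laminar family in the standard blossom fashion'' — is not a defined operation at the matrix level: graph contraction tells you nothing about what the entries and signs of the contracted skew-symmetric matrix should be so that its Pfaffian equals the signed sum over maximum-weight matchings only, and nested blossoms make any such construction recursive. You acknowledge this is the nontrivial content and then discharge it by invoking ``Murota's non-bipartite extension'' from Section~4.4 of the cited paper, but that reference treats determinants of general polynomial matrices via the bipartite (assignment) dual, where only vertex potentials occur and diagonal scaling suffices; there is no blossom/Pfaffian version there to invoke. So the step you defer is precisely the step that must be defined and proved (leading-coefficient identity, handling of positive odd-set duals, sign bookkeeping under contraction), and as written the proposal does not establish the proposition — it reproduces the correct bipartite mechanism plus a citation that does not cover the nonbipartite case.
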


\begin{algorithm}[tbp]
  \caption{Computing the number of minimum-weight parity bases of a Pfaffian parity.}\label{alg:weighted_pfaffian_parity}
  \begin{algorithmic}[1]
    \Input{An $2r \times 2n$ Pfaffian parity $(A, L)$ and a line weight $\funcdoms{w}{L}{\setZ}$}
    \Output{The number of minimum-weight common bases of $(A_1, A_2)$ modulo $\ch(\setK)$}
    \State{Compute a minimum-weight parity base $B \in \pbase(A, L)$ and the matrix $C^*$}
    \State{Construct the matrix $\Phi^*\pbig{\theta}$ and the graph $G = G(\Phi^*)$}
    \State{Compute the maximum weight $\hat{\delta} = \hat{\delta}(\Phi^*)$ of a perfect matching of $G$}
    \If{$\delta \defeq w(B) > \hat{\delta}$}
    \State{\Return 0} % chktex 1
    \Else{}
    \State{Compute the coefficient $a$ of $\theta^{\hat{\delta}}$ in $\pf \Phi^*(\theta)$}
    \State{$c^* \defeq \pf C^*[U^* \setminus U, E^* \setminus E]$}
    \State{\Return ${c^*}^{-1}a$} % chktex 1
    \EndIf{}
  \end{algorithmic}
\end{algorithm}

\Cref{alg:weighted_pfaffian_parity} shows the entire procedure of our algorithm.
Its time complexity, which is stated in \cref{thm:counting_minimum_weight_parity_bases}, is analyzed as follows.

\begin{proof}[{of \cref{thm:counting_minimum_weight_parity_bases}}]
  The algorithm of Iwata--Kobayashi~\cite{Iwata2017} runs in $\Order\prn{n^3 r}$-time~\cite[Theorem~11.1]{Iwata2017}.
  The maximum-weight perfect matching problem can be solved in $\Order\prn{n^3}$-time as $\card{W} = \Order\prn{n}$; see~\cite[Section~26.3a]{Schrijver2003}.
  The coefficient of $\theta^{\hat{\delta}}$ in $\pf \theta^*(\theta)$ can also be computed in $\Order\prn{n^3}$-time by \cref{prop:combinatorial_relaxation}.
  Hence the total running time is dominated by $\Order\prn{n^3 r}$.
\end{proof}

In the above arguments, we have assumed that arithmetic operations on $\setK$ can be performed in constant time.
This assumption is reasonable when $\setK$ is a finite field of fixed order.
When $\setK = \setQ$, it has not been proved that a direct application of the algorithm by Iwata--Kobayashi~\cite{Iwata2017} does not swell the bit-lengths of intermediate numbers.
Instead, they showed that one can obtain a minimum-weight parity base $B$ by applying their algorithm over a sequence of finite fields.
However, since our counting algorithm requires not only $B$ but also $C^*$, we cannot directly execute our counting algorithm if we use the weighted linear matroid parity algorithm for $\setK = \setQ$ as a black-box.
Here, we describe a polynomial-time counting algorithm for $\setK = \setQ$, which is based on the same reduction to problems over finite fields as~\cite{Iwata2017}.

Let $(A, L)$ be a Pfaffian parity over $\setQ$ equipped with a line weight $\funcdoms{w}{L}{\setR}$.
Multiplying the product of denominators of entries in $A$, we may assume that entries in $A$ are integral.
Applying the weighted linear matroid parity algorithm for $\setK = \setQ$, we first compute the minimum weight $\eta$ of a parity base and the constant $c \in \setZ$ of $(A, L)$.
Let $\gamma$ be the maximum absolute value of the entries of $A$ and put $K \defeq \ceil{r \log (nr\gamma)} + 2$.
We compute $K$ smallest prime numbers $p_1, \ldots, p_K$ by the sieve of Eratosthenes.
Since $K$ is bounded by a polynomial of the bit-length of $A$ and $p_K = \Order(K \log K)$ by the prime number theorem, this computation can be done in polynomial time.

Let $N$ be the number of minimum-weight parity bases of $(A, L)$.
For $i \in \intset{K}$, we consider the problem of computing $cN$ modulo $p_i$.
We have $cN \equiv 0$ modulo $p_i$ if $p_i$ divides $c$.
Suppose the case when $c$ is not a multiple of $p_i$.
Since $\det A[B] = c$ for all $B \in \pbase(A, L)$, a line subset $B \subseteq L$ is a parity base of $(A, L)$ if and only if $B$ is a parity base of $(A^{p_i}, L)$, where $A^{p_i}$ is a matrix over $\GF(p_i)$ obtained by regarding each entry of $A$ as an element of $\GF(p_i)$.
Therefore, $cN$ modulo $p_i$ is equal to the number of parity bases of $(A^{p_i}, L)$ over $\GF(p_i)$.
We compute this quantity by applying \cref{alg:weighted_pfaffian_parity} to $(A^{p_i}, L)$.
Since arithmetic operations on $\GF(p_i)$ can be performed in polynomial time, this computation can also be done in polynomial time by \cref{thm:counting_minimum_weight_parity_bases}.

Using the Chinese remainder theorem and the Euclidean algorithm, we compute $cN$ modulo $\prod_{i=1}^K p_i$ from $cN$ modulo $p_1, \ldots, p_K$.
Then we have
\begin{align}
  2\abs{cN}+1
  < 4\abs{cN}
  \le 4r! \gamma^r \binom{n}{r}
  \le 4\prn{r\gamma n}^{r}
  \le 2^K
  \le \prod_{i=1}^K p_i,
\end{align}
which implies that $cN$ is uniquely determined from $cN$ modulo $\prod_{i=1}^K p_i$.
These arguments certificate the correctness of \cref{thm:bit_complexity}.

\section*{Acknowledgments}
The authors thank Satoru Iwata for his helpful comments, and Yusuke Kobayashi, Yutaro Yamaguchi, and Koyo Hayashi for discussions.
This work was supported by JST ACT-I Grant Number JPMJPR18U9, Japan, and Grant-in-Aid for JSPS Research Fellow Grant Number JP18J22141, Japan.

\bibliographystyle{abbrv}
\bibliography{references}

\begin{thebibliography}{10}

\bibitem{Aho1974}
A.~V. Aho, J.~E. Hopcroft, and J.~D. Ullman.
\newblock {\em {The Design and Analysis of Computer Algorithms}}.
\newblock Addison-Wesley, Reading, MA, 1974.

\bibitem{Anari2018}
N.~Anari, S.~O. Gharan, and C.~Vinzant.
\newblock {Log-concave polynomials, entropy, and a deterministic approximation
  algorithm for counting bases of matroids}.
\newblock In {\em Proceedings of the 59th Annual IEEE Symposium on Foundations
  of Computer Science (FOCS '18)}, pages 35--46, 2018.

\bibitem{Anari2019}
N.~Anari, K.~Liu, S.~O. Gharan, and C.~Vinzant.
\newblock {Log-concave polynomials II: High-dimensional walks and an FPRAS for
  counting bases of a matroid}.
\newblock In {\em Proceedings of the 51st Annual ACM Symposium on Theory of
  Computing (STOC '19)}, pages 1--12, 2019.

\bibitem{Bouchet1995}
A.~Bouchet.
\newblock {Coverings and delta-coverings}.
\newblock In E.~Balas and J.~Clausen, editors, {\em Proceedings of the 4th
  International Conference on Integer Programming and Combinatorial
  Optimization (IPCO '95)}, volume 920 of {\em Lecture Notes in Computer
  Science}, pages 228--243, Berlin Heidelberg, 1995. Springer.

\bibitem{Bouchet1998}
A.~Bouchet, J.~F. Geelen, and W.~H. Cunningham.
\newblock {Principally unimodular skew-symmetric matrices}.
\newblock {\em Combinatorica}, 18(4):461--486, 1998.

\bibitem{Broder1997}
A.~Z. Broder and E.~W. Mayr.
\newblock {Counting minimum weight spanning trees}.
\newblock {\em Journal of Algorithms}, 24(1):171--176, 1997.

\bibitem{Cai2017}
J.-Y. Cai and X.~Chen.
\newblock {\em {Complexity Dichotomies for Counting Problems}}, volume~1.
\newblock Cambridge University Press, Cambridge, 2017.

\bibitem{Cheung2014}
H.~Y. Cheung, L.~C. Lau, and K.~M. Leung.
\newblock {Algebraic algorithms for linear matroid parity problems}.
\newblock {\em ACM Transactions on Algorithms}, 10(3):1--26, 2014.

\bibitem{Colbourn1995}
C.~J. Colbourn, J.~S. Provan, and D.~Vertigan.
\newblock {The complexity of computing the Tutte polynomial on transversal
  matroids}.
\newblock {\em Combinatorica}, 15(1):1--10, 1995.

\bibitem{Edmonds1968}
J.~Edmonds.
\newblock {Matroid partition}.
\newblock In G.~B. Dantzig and A.~F. {Veinott, Jr.}, editors, {\em Mathematics
  of the Decision Sciences: Part I}, volume~11 of {\em Lectures in Applied
  Mathematics}, pages 335--345. AMS, Providence, RI, 1968.

\bibitem{Edmonds1970}
J.~Edmonds.
\newblock {Submodular functions, matroids, and certain polyhedra}.
\newblock In R.~Guy, H.~Hanani, N.~Sauer, and J.~Sch{\"{o}}nheim, editors, {\em
  Combinatorial Structures and Their Applications}, pages 69--87. Gordon and
  Breach, New York, NY, 1970.

\bibitem{Frank1981}
A.~Frank.
\newblock {A weighted matroid intersection algorithm}.
\newblock {\em Journal of Algorithms}, 2(4):328--336, 1981.

\bibitem{Frank2011}
A.~Frank.
\newblock {\em {Connections in Combinatorial Optimization}}.
\newblock Oxford Lecture Series in Mathematics and Its Applications. Oxford
  University Press, New York, NY, 2011.

\bibitem{Gabow1986}
H.~N. Gabow and M.~Stallmann.
\newblock {An augmenting path algorithm for linear matroid parity}.
\newblock {\em Combinatorica}, 6(2):123--150, 1986.

\bibitem{Gabow1996}
H.~N. Gabow and Y.~Xu.
\newblock {Efficient theoretic and practical algorithms for linear matroid
  intersection problems}.
\newblock {\em Journal of Computer and System Sciences}, 53(1):129--147, 1996.

\bibitem{Gallai1964}
T.~Gallai.
\newblock {Maximum-Minimum S{\"{a}}tze und verallgemeinerte Faktoren von
  Graphen}.
\newblock {\em Acta Mathematica Academiae Scientiarum Hungaricae}, 12:131--173,
  1964.

\bibitem{Geelen1995}
J.~F. Geelen.
\newblock {\em {Matchings, Matroids and Unimodular Matrices}}.
\newblock PhD thesis, University of Waterloo, Waterloo, ON, 1995.

\bibitem{Geelen2001}
J.~F. Geelen.
\newblock {\em {Matching Theory}}.
\newblock Lecture Notes from the Euler Institute for Discrete Mathematics and
  Its Applications. 2001.

\bibitem{Geelen2005}
J.~F. Geelen and S.~Iwata.
\newblock {Matroid matching via mixed skew-symmetric matrices}.
\newblock {\em Combinatorica}, 25(2):187--215, 2005.

\bibitem{Gessel1985}
I.~Gessel and G.~Viennot.
\newblock {Binomial determinants, paths, and hook length formulae}.
\newblock {\em Advances in Mathematics}, 58(3):300--321, 1985.

\bibitem{Goodall2011}
A.~Goodall and A.~{De Mier}.
\newblock {Spanning trees of 3-uniform hypergraphs}.
\newblock {\em Advances in Applied Mathematics}, 47(4):840--868, 2011.

\bibitem{Harvey2009}
N.~J.~A. Harvey.
\newblock {Algebraic algorithms for matching and matroid problems}.
\newblock {\em SIAM Journal on Computing}, 39(2):679--702, 2009.

\bibitem{Hayashi2018}
K.~Hayashi and S.~Iwata.
\newblock {Counting minimum weight arborescences}.
\newblock {\em Algorithmica}, 80(12):3908--3919, 2018.

\bibitem{Hirschman2004}
S.~Hirschman and V.~Reiner.
\newblock {Note on the Pfaffian matrix-tree theorem}.
\newblock {\em Graphs and Combinatorics}, 20(1):59--63, 2004.

\bibitem{Ishikawa1995}
M.~Ishikawa and M.~Wakayama.
\newblock {Minor summation formula of Pfaffians}.
\newblock {\em Linear and Multilinear Algebra}, 39(3):285--305, 1995.

\bibitem{Iwata2017}
S.~Iwata and Y.~Kobayashi.
\newblock {A weighted linear matroid parity algorithm}.
\newblock In {\em Proceedings of the 49th Annual ACM SIGACT Symposium on Theory
  of Computing (STOC '17)}, pages 264--276. ACM Press, 2017.
\newblock arXiv: \href{https://arxiv.org/abs/1905.13371}{\texttt{1905.13371}}.

\bibitem{Kasteleyn1961}
P.~W. Kasteleyn.
\newblock {The statistics of dimers on a lattice: I. The number of dimer
  arrangements on a quadratic lattice}.
\newblock {\em Physica}, 27(12):1209--1225, 1961.

\bibitem{Kirchhoff1847}
G.~Kirchhoff.
\newblock {Ueber die Aufl{\"{o}}sung der Gleichungen, auf welche man bei der
  Untersuchung der linearen Vertheilung galvanischer Str{\"{o}}me gef{\"{u}}hrt
  wird}.
\newblock {\em Annalen der Physik}, 148(12):497--508, 1847.

\bibitem{Lawler1976}
E.~L. Lawler.
\newblock {\em {Combinatorial Optimization: Networks and Matroids}}.
\newblock Holt, Rinehart and Winston, New York, NY, 1976.

\bibitem{LeGall2014}
F.~{Le Gall}.
\newblock {Powers of tensors and fast matrix multiplication}.
\newblock In {\em Proceedings of the 39th International Symposium on Symbolic
  and Algebraic Computation (ISSAC '14)}, pages 296--303, New York, NY, 2014.
  ACM.

\bibitem{Lindstrom1973}
B.~Lindstr{\"{o}}m.
\newblock {On the vector representations of induced matroids}.
\newblock {\em Bulletin of the London Mathematical Society}, 5(1):85--90, 1973.

\bibitem{Little1974}
C.~H.~C. Little.
\newblock {An extension of Kasteleyn's method of enumerating the 1-factors of
  planar graphs}.
\newblock In D.~A. Holton, editor, {\em Proceedings of the 2nd Australian
  Conference on Combinatorial Mathematics (ACCM '74)}, volume 403 of {\em
  Lecture Notes in Mathematics}, pages 63--72, Berlin Heidelberg, 1974.
  Springer.

\bibitem{Lovasz1979}
L.~Lov{\'{a}}sz.
\newblock {On determinants, matchings, and random algorithms}.
\newblock In L.~Budach, editor, {\em Fundamentals of Computation Theory}.
  Akademie-Verlag, Berlin, 1979.

\bibitem{Lovasz1980}
L.~Lov{\'{a}}sz.
\newblock {Matroid matching and some applications}.
\newblock {\em Journal of Combinatorial Theory, Series B}, 28(2):208--236,
  1980.

\bibitem{Mader1978}
W.~Mader.
\newblock {{\"{U}}ber die Maximalzahl kreuzungsfreier $H$-Wege}.
\newblock {\em Archiv der Mathematik}, 31(1):387--402, 1978.

\bibitem{Masbaum2002}
G.~Masbaum and A.~Vaintrob.
\newblock {A new matrix-tree theorem}.
\newblock {\em International Mathematics Research Notices},
  2002(27):1397--1426, 2002.

\bibitem{Maurer1976}
S.~B. Maurer.
\newblock {Matrix generalizations of some theorems on trees, cycles and
  cocycles in graphs}.
\newblock {\em SIAM Journal on Applied Mathematics}, 30(1):143--148, 1976.

\bibitem{Murota1995a}
K.~Murota.
\newblock {Computing the degree of determinants via combinatorial relaxation}.
\newblock {\em SIAM Journal on Computing}, 24(4):765--796, 1995.

\bibitem{Murota2000}
K.~Murota.
\newblock {\em {Matrices and Matroids for Systems Analysis}}, volume~20 of {\em
  Algorithms and Combinatorics}.
\newblock Springer, Berlin, 2010.

\bibitem{Orlin2008}
J.~B. Orlin.
\newblock {A fast, simpler algorithm for the matroid parity problem}.
\newblock In A.~Lodi, A.~Panconesi, and G.~Rinaldi, editors, {\em Proceedings
  of the 13th International Conference on Integer Programming and Combinatorial
  Optimization (IPCO '08)}, volume 5035 of {\em Lecture Notes in Computer
  Science}, pages 240--258, Berlin, 2008. Springer.

\bibitem{Oxley2011}
J.~G. Oxley.
\newblock {\em {Matroid Theory}}.
\newblock Oxford Graduate Texts in Mathematics. Oxford University Press, New
  York, NY, second edition, 2011.

\bibitem{Plummer1986}
M.~D. Plummer and L.~Lov{\'{a}}sz.
\newblock {\em {Matching Theory}}, volume~29 of {\em Annals of Discrete
  Mathematics}.
\newblock Elsevier, North Holland, 1986.

\bibitem{Robertson1999}
N.~Robertson, P.~D. Seymour, and R.~Thomas.
\newblock {Permanents, Pfaffian orientations, and even directed circuits}.
\newblock {\em Annals of Mathematics}, 150(3):929--975, 1999.

\bibitem{Schrijver2003}
A.~Schrijver.
\newblock {\em Combinatorial Optimization}, volume~24 of {\em Algorithms and
  Combinatorics}.
\newblock Springer, Berlin, 2003.

\bibitem{Snook2012}
M.~Snook.
\newblock {Counting bases of representable matroids}.
\newblock {\em Electronic Journal of Combinatorics}, 19(4), 2012.

\bibitem{Temperley1961}
H.~N.~V. Temperley and M.~E. Fisher.
\newblock {Dimer problem in statistical mechanics-an exact result}.
\newblock {\em Philosophical Magazine}, 6(68):1061--1063, 1961.

\bibitem{Tomizawa1974}
N.~Tomizawa and M.~Iri.
\newblock {An algorithm for determining the rank of a triple matrix product
  $AXB$ with application to the problem of discerning the existence of the
  unique solution in a network}.
\newblock {\em Electronics and Communications in Japan}, 57(11):50--57, 1974.

\bibitem{Tutte1948}
W.~T. Tutte.
\newblock {The dissection of equilateral triangles into equilateral triangles}.
\newblock {\em Mathematical Proceedings of the Cambridge Philosophical
  Society}, 44(4):463--482, 1948.

\bibitem{Tutte1965}
W.~T. Tutte.
\newblock {Menger's theorem for matroids}.
\newblock {\em Journal of Research of the National Bureau of Standards Section
  B}, 69(1--2):49--53, 1965.

\bibitem{Valiant1977}
L.~G. Valiant.
\newblock {The complexity of computing the permanent}.
\newblock {\em Theoretical Computer Science}, 8(2):189--201, 1979.

\bibitem{Vazirani1989}
V.~V. Vazirani.
\newblock {NC algorithms for computing the number of perfect matchings in
  $K_{3,3}$-free graphs and related problems}.
\newblock {\em Information and Computation}, 80(2):152--164, 1989.

\bibitem{Webb2004}
K.~P. Webb.
\newblock {\em {Counting Bases}}.
\newblock PhD thesis, University of Waterloo, Waterloo, ON, 2004.

\bibitem{Yamaguchi2016}
Y.~Yamaguchi.
\newblock {Shortest disjoint $\mathcal{S}$-paths via weighted linear matroid
  parity}.
\newblock In S.-H. Hong, editor, {\em Proceedings of the 27th International
  Symposium on Algorithms and Computation (ISAAC '16)}, volume~64 of {\em
  Leibniz International Proceedings in Informatics}, pages 63:1--63:13. Schloss
  Dagstuhl--Leibniz-Zentrum f{\"{u}}r Informatik, 2016.

\end{thebibliography}

\appendix
\section{Appendix}

\subsection{Counting on Weighted Pfaffian Pairs Revisited}

In this section, we derive \cref{alg:weighted_pfaffian_pair} in a different manner than \cref{sec:counting_on_weighted_pfaffian_pairs}.
While it is a bit roundabout approach, it might provide a connection between \cref{alg:weighted_pfaffian_pair} and the counting algorithm for weighted Pfaffian parities, explained in \cref{sec:counting_on_weighted_pfaffian_parities}.

Let $(A_1, A_2)$ be a Pfaffian pair with constant $c$ and column weight $\funcdoms{w}{E}{\setR}$.
Suppose that $(A_1, A_2)$ has at least one common base.
Let $\zeta$ be the minimum weight of a common base and $N$ the number of minimum-weight common bases modulo $\ch(\setK)$.
Define $P(\theta) = \pbig{P_{u,v}(\theta)}_{u \in U_1, v \in U_2} \defeq A_1 D\pbig{\theta^w} \trsp{A_2}$, where $\theta$ is an indeterminate and $U_1$ and $U_2$ are row sets of $A_1$ and $A_2$, respectively.
From \cref{prop:algebraic_weighted_pair}, we have the following algebraic characterization on $\zeta$ and $N$.

\begin{lemma}\label{lem:algebraic_form_of_weighted_intersection}
  The coefficient of $\theta^{\zeta}$ in $P(\theta)$ is equal to $cN$.
  In addition, it holds $\zeta \le \ord P(\theta)$ and the equality is attained if and only if $N \ne 0$.
\end{lemma}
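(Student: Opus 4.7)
The plan is to reduce the lemma directly to \cref{prop:algebraic_weighted_pair}, which has already expanded $\det A_1 D\pbig{\theta^w} \trsp{A_2}$ as a signed count over common bases. Throughout, I will read ``coefficient of $\theta^\zeta$ in $P(\theta)$'' as the coefficient in $\det P(\theta) = \det A_1 D\pbig{\theta^w} \trsp{A_2}$; this is what is needed to connect with \cref{prop:algebraic_weighted_pair} and it is the univariate polynomial whose order is being referenced.

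First I would apply \cref{prop:algebraic_weighted_pair} with $x = \zeta$ to obtain
\begin{align}
  [\theta^\zeta]\det P(\theta) = \sum_{B \in \cbase_\zeta} \det A_1[B] \det A_2[B],
\end{align}
where $\cbase_\zeta = \set{B \in \cbase(A_1,A_2)}[w(B) = \zeta]$. Since $(A_1,A_2)$ is Pfaffian with constant $c$, every summand equals $c$, so the sum is $c \cdot \card{\cbase_\zeta}$. By the definition of $N$ as the number of minimum-weight common bases modulo $\ch(\setK)$, this equals $cN$ over $\setK$. This establishes the first assertion.

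Next I would handle the order inequality. For any $x < \zeta$, the set $\cbase_x$ is empty by the very definition of $\zeta$ as the minimum weight of a common base, so \cref{prop:algebraic_weighted_pair} yields $[\theta^x]\det P(\theta) = 0$. Hence no monomial of degree strictly less than $\zeta$ appears in $\det P(\theta)$, giving $\zeta \le \ord \det P(\theta)$.

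Finally, the equality $\zeta = \ord \det P(\theta)$ holds if and only if $[\theta^\zeta]\det P(\theta) \ne 0$ over $\setK$. By the first part this coefficient equals $cN$, and because a Pfaffian pair has $c \in \setK \setminus \set{0}$ by \cref{def:pfaffian_pair}, this is equivalent to $N \ne 0$. There is no real obstacle here; the only subtlety is the bookkeeping ``over $\setK$'' that encapsulates working modulo $\ch(\setK)$, but since $c$ is a unit, multiplying or dividing by $c$ preserves nonvanishing, so the equivalence $cN \ne 0 \Leftrightarrow N \ne 0$ is automatic.
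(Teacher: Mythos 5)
Your proposal is correct and matches the paper's intended argument: the paper gives no separate proof for this lemma, simply noting that it follows from \cref{prop:algebraic_weighted_pair}, which is exactly the specialization you carry out (with $x=\zeta$, constancy of $\det A_1[B]\det A_2[B]=c$, emptiness of $\cbase_x$ for $x<\zeta$, and invertibility of $c$ for the equality criterion). Your reading of ``coefficient of $\theta^\zeta$ in $P(\theta)$'' and $\ord P(\theta)$ as referring to $\det P(\theta)$ is also the intended one, as the later use in \cref{lem:tightness_of_intersection} confirms.
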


Let $B \in \cbase(A_1, A_2)$ be a minimum-weight common base.
As in \cref{sec:counting_on_weighted_pfaffian_pairs}, we first apply pivoting to $A_1$ and $A_2$ so that $A_1[B] = A_2[B] = I_r$.
We perform the same row and column operation on $P(\theta)$ accordingly.
Now we can identify $U_1$ and $U_2$ with $B$ since $A_1[B]$ and $A_2[B]$ are identity.

Next, we construct a bipartite graph $G = (U_1 \cup U_2, F)$ from $P(\theta)$ as follows.
The vertex set $U_1 \cup U_2$ is bipartitioned as $\set{U_1, U_2}$.
The edge set $F$ is given by
\begin{align}
  F \defeq \set{(u,v)}[u \in U_1, v \in U_2, P_{u,v}(\theta) \ne 0]
\end{align}
and we set the weight of every edge $(u,v) \in F$ as $\ord P_{u,v}(\theta)$.
Let $\hat{\zeta}$ denote the minimum weight of a perfect matching of $G$.
If $G$ has no perfect matching, we let $\hat{\zeta} \defeq +\infty$.
Then it is easily observed from the definition~\eqref{def:determinant} of the determinant that $\hat{\zeta}$ serves as a combinatorial lower bound on $\ord \det P(\theta)$ (see, e.g.,~\cite[Proposition~2.1]{Murota1995a}), and hence on $\zeta$ if $N \neq 0$ by \cref{lem:algebraic_form_of_weighted_intersection}.
Indeed, these quantities satisfy the following relation.

\begin{lemma}\label{lem:tightness_of_intersection}
  It holds $\zeta \le \hat{\zeta} \le \ord \det P(\theta)$.
  The equalities are attained if $N \ne 0$.
\end{lemma}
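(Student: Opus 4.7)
The plan is to establish the two inequalities separately and then appeal to \cref{lem:algebraic_form_of_weighted_intersection} for the equality clause. The middle inequality $\hat{\zeta} \le \ord \det P(\theta)$ is the standard combinatorial lower bound on the order of a determinant: expanding $\det P(\theta) = \sum_{\sigma} \sgn \sigma \prod_{u \in B} P_{u,\sigma(u)}(\theta)$ by the Leibniz formula, a nonzero term requires $P_{u,\sigma(u)}(\theta) \ne 0$ for every $u$, so $\sigma$ identifies a perfect matching $M_\sigma$ of $G$, and the order of the corresponding monomial product is $\sum_{u \in B} \ord P_{u,\sigma(u)}(\theta) = w(M_\sigma) \ge \hat{\zeta}$.

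For the left inequality $\zeta \le \hat{\zeta}$, my plan is to exploit Frank's weight splitting lemma in the same way as in \cref{sec:counting_on_weighted_pfaffian_pairs}. I would take split weights $w_1, w_2 \colon E \to \setR$ with $w_1 + w_2 = w$ certifying that $B$ minimizes $w_k$ over $\base(A_k)$ for $k = 1, 2$. Since pivoting has rendered $A_1[B] = A_2[B] = I_r$, \cref{prop:weighted_intersection_cocircuit} yields $w_k(u) \le w_k(j)$ whenever $(A_k)_{u, j} \ne 0$ with $u \in B$ and $j \in E$. For any edge $(u, v) \in F$ some $j \in E$ satisfies $(A_1)_{u, j}(A_2)_{v, j} \ne 0$, and combining the two cocircuit inequalities gives the pointwise bound
\begin{align}
\ord P_{u, v}(\theta) \ge \min\bigl\{w(j) : (A_1)_{u, j}(A_2)_{v, j} \ne 0\bigr\} \ge w_1(u) + w_2(v).
\end{align}
Summing this bound along a perfect matching $M = \{(u, \sigma(u))\}_{u \in B}$ of $G$ yields $w(M) \ge w_1(B) + w_2(B) = w(B) = \zeta$, and minimizing over $M$ gives $\hat{\zeta} \ge \zeta$.

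Finally, if $N$ is nonzero over $\setK$, then since $(A_1, A_2)$ is Pfaffian the constant $c$ lies in $\setK \setminus \{0\}$, so by \cref{lem:algebraic_form_of_weighted_intersection} the coefficient $cN$ of $\theta^\zeta$ in $\det P(\theta)$ is nonzero; hence $\ord \det P(\theta) \le \zeta$ and the chain $\zeta \le \hat{\zeta} \le \ord \det P(\theta) \le \zeta$ collapses to equality throughout. The only substantive step is the pointwise order bound on $P_{u,v}(\theta)$ via weight splitting; this is essentially a reformulation of the reduction to an unweighted Pfaffian pair already carried out in \cref{lem:pair_sharp}, so I do not anticipate any genuine obstacle.
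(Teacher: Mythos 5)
Your proposal is correct and follows essentially the same route as the paper: both prove $\zeta \le \hat{\zeta}$ from Frank's split weights via \cref{prop:weighted_intersection_cocircuit}, yielding the pointwise bound $\ord P_{u,v}(\theta) \ge w_1(u) + w_2(v)$, and both settle the equality clause through \cref{lem:algebraic_form_of_weighted_intersection}. The only cosmetic difference is that the paper packages this bound as a feasible solution of the explicit dual LP (DB) and invokes weak duality, whereas you sum the same inequality directly along a perfect matching, which is just weak duality written out.
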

\begin{proof}
  By \cref{lem:algebraic_form_of_weighted_intersection} and $\hat{\zeta} \le \ord \det P(\theta)$, it suffices to show $\zeta \le \hat{\zeta}$.
  To show the claim, we use the dual problem of the minimum-weight perfect bipartite matching problem on $G$, which is formulated as follows:
  \begin{align}
    \text{(DB)} \quad
    \begin{array}{|c>{\hspace{-.5em}}l}
      \underset{p_1,p_2}{\text{maximize}} & \begin{array}[t]{>{\displaystyle}l}
        \sum_{u \in B} p_1(u) + \sum_{v \in B} p_2(v)
      \end{array} \\
      \text{subject to}                   & \begin{array}[t]{>{\displaystyle}l>{\displaystyle}l}
        p_1(u) + p_2(v) \le \ord P_{u,v}(\theta) & ((u,v) \in F). \\
      \end{array}
    \end{array}
  \end{align}
  See~\cite[Theorem~17.5]{Schrijver2003} for example.
  Note again that $U_1$ and $U_2$ are identified with $B$.

  Using the split weight $w_1, w_2$ satisfying~\ref{item:W1} and~\ref{item:W2}, we define $p_1(u) \defeq w_1(u)$ and $p_2(v) \defeq w_2(v)$ for $u,v \in B$.
  We show that this $p_1$ and $p_2$ are feasible on (DB).
  For every $(u, v) \in F$, there exists $j \in E$ with $w(j) = \ord P_{u,v}(\theta)$ such that both the $(u,j)$th entry of $A_1$ and the $(v,j)$th entry of $A_2$ are nonzero.
  By \cref{prop:weighted_intersection_cocircuit}, we have $w_1(u) \le w_1(j)$ and $w_2(v) \le w_2(j)$.
  Thus $p_1(u) + p_2(v) = w_1(u) + w_2(v) \le w_1(j) + w_2(j) = w(j) = \ord P_{u,v}(\theta)$, where the third equality follows from~\ref{item:W1}.
  Hence $(p_1, p_2)$ is feasible.

  The value of the objective function with respect to $p_1$ and $p_2$ is
  \begin{align}
    \sum_{u \in B} p_1(u) + \sum_{v \in B} p_2(v)
    = \sum_{u \in B} w_1(u) + \sum_{v \in B} w_2(v)
    = w(B) = \zeta.
  \end{align}
  This equality means that $\zeta$ is no more than the optimal value of the dual program, and thus than $\hat{\zeta}$ by the weak duality of the linear program.
\end{proof}

If $\zeta < \hat{\zeta}$, then $N$ must be zero by \cref{lem:tightness_of_intersection}.
Otherwise, it holds $\zeta = \hat{\zeta} = \ord \det P(\theta)$ and $N$ is equal to the coefficient of $\theta^{\hat{\zeta}}$ in $\det P(\theta)$.
In the definition~\eqref{def:determinant} of the determinant of $P(\theta)$, every minimum-weight perfect matching of $G$ contributes to the coefficient of $\theta^{\hat{\zeta}}$ in $\det P(\theta)$.
In this sense, we can regard the computation of $N$ as a kind of ``counting operation'' on minimum-weight perfect matchings of the bipartite graph $G$.

Murota~\cite{Murota1995a} gave a characterization of the coefficient of $\theta^{\hat{\zeta}}$ in $\det P(\theta)$ (for a general polynomial matrix $P(\theta)$) as follows.
Let $(p_1, p_2)$ be a feasible solution of (DB).
The \emph{tight coefficient matrix} $P^\# = \prn{P_{u,v}^\#}_{u,v \in B}$ of $P(\theta)$ with respect to $(p_1, p_2)$ is defined by
\begin{align}
  P_{u,v}^\# \defeq \text{the coefficient of $\theta^{p_1(u)+p_2(v)}$ in $P_{u,v}(\theta)$}
\end{align}
for $u,v \in B$.

\begin{proposition}[{\cite[Propositions~2.4 and~2.6]{Murota1995a}}]\label{prop:combinatorial_relaxation_det}
  Let $(p_1, p_2)$ be a feasible solution of $\mathrm{(DB)}$.
  If $(p_1, p_2)$ is not optimal, then $P^\#$ is singular.
  If $(p_1, p_2)$ is optimal, then $\det P^\#$ is equal to the coefficient of $\theta^{\hat{\zeta}}$ in $\det P(\theta)$.
\end{proposition}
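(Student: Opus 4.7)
The plan is to reduce both parts of the proposition to a single diagonal scaling identity. Introduce the scaled polynomial matrix $\tilde{P}(\theta) \defeq D\pbig{\theta^{-p_1}} P(\theta) D\pbig{\theta^{-p_2}}$, whose $(u,v)$ entry equals $\theta^{-p_1(u) - p_2(v)} P_{u,v}(\theta)$. Feasibility of $(p_1, p_2)$ in (DB) translates directly into $\ord \tilde{P}_{u,v}(\theta) = \ord P_{u,v}(\theta) - p_1(u) - p_2(v) \ge 0$ for every $(u,v) \in F$ (and the entry is identically zero outside $F$), so $\tilde{P}(\theta)$ has no negative powers of $\theta$. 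Moreover, the constant term of $\tilde{P}_{u,v}(\theta)$ is precisely $P^\#_{u,v}$: it equals the coefficient of $\theta^{p_1(u)+p_2(v)}$ in $P_{u,v}(\theta)$, which is nonzero only when the dual constraint $p_1(u) + p_2(v) \le \ord P_{u,v}(\theta)$ is tight. Hence the constant term of $\tilde{P}(\theta)$ is exactly $P^\#$.

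Next I would exploit multiplicativity of the determinant with respect to the diagonal factors: $\det \tilde{P}(\theta) = \theta^{-C} \det P(\theta)$, where $C \defeq \sum_{u \in B} p_1(u) + \sum_{v \in B} p_2(v)$ is the dual objective value. Since the entries of $\tilde{P}(\theta)$ lie in $\setK[\theta]$ (i.e., no negative exponents appear), evaluating $\det \tilde{P}(\theta)$ at $\theta = 0$ yields $\det P^\#$. In other words, the constant term of $\det \tilde{P}(\theta)$ is $\det P^\#$, and by the shift identity this constant term equals the coefficient of $\theta^{C}$ in $\det P(\theta)$.

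Now the two cases follow at once. When $(p_1, p_2)$ is optimal on $\mathrm{(DB)}$, strong LP duality for bipartite matching gives $C = \hat{\zeta}$, so $\det P^\#$ equals the coefficient of $\theta^{\hat{\zeta}}$ in $\det P(\theta)$. When $(p_1, p_2)$ is not optimal, weak duality gives $C < \hat{\zeta}$; combining this with the combinatorial lower bound $\hat{\zeta} \le \ord \det P(\theta)$ (which was recorded just before the proposition, and which follows term-by-term from the definition~\eqref{def:determinant} since every nonvanishing term of $\det P(\theta)$ is indexed by a perfect matching of $G$), we conclude $C < \ord \det P(\theta)$, so the coefficient of $\theta^{C}$ in $\det P(\theta)$ vanishes, which gives $\det P^\# = 0$ and hence the singularity of $P^\#$.

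The substantive content is the scaling identity together with the inequality chain $C \le \hat{\zeta} \le \ord \det P(\theta)$; there is no real obstacle beyond tracking that $\tilde{P}(\theta)$ has genuinely nonnegative order, which is exactly the content of feasibility. The only mild subtlety is that $w$, and therefore the $p_k$, take real rather than integer values, so $\theta^{-p_k(u)}$ must be interpreted in the ring of formal $\setK$-linear combinations of real powers of $\theta$ used throughout the section; this is consistent with the conventions already in place for $\det A_1 D\pbig{\theta^w} \trsp{A_2}$, and the cancellation $D\pbig{\theta^{-p_k}} D\pbig{\theta^{p_k}} = I$ remains valid formally.
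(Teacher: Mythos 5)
Your proof is correct. Be aware that the paper does not prove this proposition at all --- it imports it directly as \cite[Propositions~2.4 and~2.6]{Murota1995a}, and the only hint of a proof given in the surrounding text is the single sentence that the result ``essentially follows from the complementarity of the linear program.'' So there is no paper proof to compare against; you have supplied a complete, self-contained argument where the paper relies on a citation.

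Your diagonal-scaling route and the complementarity hint are, at heart, the same fact expressed in two languages. You conjugate $P(\theta)$ by $D\pbig{\theta^{-p_1}}$ and $D\pbig{\theta^{-p_2}}$; feasibility of $(p_1, p_2)$ makes every entry of $\tilde{P}(\theta)$ have nonnegative order, so extracting constant terms is a $\setK$-algebra homomorphism, and doing so recovers exactly the tight coefficient matrix $P^{\#}$ (the tight entries are precisely those whose dual constraint is satisfied with equality --- which is the complementary-slackness condition in disguise). The shift identity $\det \tilde{P}(\theta) = \theta^{-C}\det P(\theta)$ then converts the question into reading off the coefficient of $\theta^{C}$ in $\det P(\theta)$, after which weak duality ($C < \hat{\zeta}$) together with the established bound $\hat{\zeta} \le \ord \det P(\theta)$ dispatches the non-optimal case, and strong duality ($C = \hat{\zeta}$) the optimal one. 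This is the standard argument in the combinatorial-relaxation literature and is essentially what Murota's proof does. Two small points worth making explicit: (a) the degenerate case $\hat{\zeta} = +\infty$ (i.e.\ $G$ has no perfect matching, $\det P(\theta) \equiv 0$) is silently covered by your chain $C < \hat{\zeta} \le \ord\det P(\theta) = +\infty$, so no separate argument is needed, but it does deserve a word; and (b) your concluding remark about working in the ring of formal $\setK$-linear combinations of nonnegative real powers of $\theta$ rather than literally in $\setK[\theta]$ is exactly the right technical patch and is consistent with the conventions the paper sets up just before this point.
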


\Cref{prop:combinatorial_relaxation_det} essentially follows from the complementarity of the linear program.
At this point, we have obtained a polynomial-time algorithm for computing $N$ since $P^\#$ can be calculated in polynomial-time.
We need one more argument to reach to \cref{alg:weighted_pfaffian_pair}.

Let $w_1$ and $w_2$ be a split weight satisfying~\ref{item:W1} and~\ref{item:W2}.
For $k = 1,2$, let $A_k^\#$ be the matrix obtained from $A_k$ and $w_k$ by~\eqref{def:A_k_sharp}.
We also put $p_1(u) \defeq w_1(u)$ and $p_2(u) \defeq w_2(u)$ for $u \in B$.
Note that $(p_1, p_2)$ is feasible on (DB) as shown in the proof of \cref{lem:tightness_of_intersection}.

\begin{lemma}\label{lem:from_A_sharp_to_tcm}
  The matrix $A_1^\# \trsp{{A_2^\#}}$ is equal to the tight coefficient matrix of $P(\theta)$ with respect to $(p_1, p_2)$.
\end{lemma}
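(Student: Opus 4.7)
The plan is to compare the $(u,v)$-entries of the two matrices directly, and show they equal the same sum via the cocircuit-type inequality of \cref{prop:weighted_intersection_cocircuit}.

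First I would unpack both sides. After pivoting, $A_1[B] = A_2[B] = I_r$, so we may index the rows of $A_1$ and $A_2$ by $B$. The $(u,v)$-entry of $P(\theta) = A_1 D(\theta^w) \trsp{A_2}$ is
\begin{align}
  P_{u,v}(\theta) = \sum_{j \in E} (A_1)_{u,j}(A_2)_{v,j} \theta^{w(j)},
\end{align}
so the tight coefficient, namely the coefficient of $\theta^{p_1(u)+p_2(v)} = \theta^{w_1(u)+w_2(v)}$, equals the sum of $(A_1)_{u,j}(A_2)_{v,j}$ over all $j \in E$ such that $w(j) = w_1(u)+w_2(v)$. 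On the other hand, by the definition~\eqref{def:A_k_sharp} of $A_k^\#$,
\begin{align}
  \pbig{A_1^\#\trsp{A_2^\#}}_{u,v} = \sum_{\condit{j \in E}[w_1(u)=w_1(j),\, w_2(v)=w_2(j)]} (A_1)_{u,j}(A_2)_{v,j}.
\end{align}

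Next I would verify that these two sums range over the same nonzero contributions. The inclusion $\supseteq$ is immediate from property~\ref{item:W1}: if $w_1(u)=w_1(j)$ and $w_2(v)=w_2(j)$, then $w(j) = w_1(j)+w_2(j) = w_1(u)+w_2(v)$. For the reverse inclusion, suppose $j \in E$ satisfies $w(j) = w_1(u)+w_2(v)$ and $(A_1)_{u,j}(A_2)_{v,j} \neq 0$. Then \cref{prop:weighted_intersection_cocircuit} applied to each of $A_1$ and $A_2$ yields $w_1(u) \le w_1(j)$ and $w_2(v) \le w_2(j)$. Summing these and using~\ref{item:W1}, we obtain $w_1(u)+w_2(v) \le w_1(j)+w_2(j) = w(j) = w_1(u)+w_2(v)$, so both inequalities must hold with equality.

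Finally, the two index sets coincide (up to zero summands), which gives the desired equality entry-by-entry. I do not expect any real obstacle here: the argument is a clean application of the cocircuit inequality together with the splitting identity, both of which were established earlier. The only thing to be mindful of is that $\ord P_{u,v}(\theta)$ may strictly exceed $p_1(u)+p_2(v)$; in that case the coefficient of $\theta^{p_1(u)+p_2(v)}$ is zero, and the argument above correctly forces the corresponding $A_1^\#\trsp{A_2^\#}$ sum to be empty.
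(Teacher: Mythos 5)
Your proposal is correct and follows essentially the same route as the paper: both compare the $(u,v)$-entries summand by summand and use \cref{prop:weighted_intersection_cocircuit} together with~\ref{item:W1} to show that, for a nonzero product $(A_1)_{u,j}(A_2)_{v,j}$, the condition $w(j) = w_1(u)+w_2(v)$ is equivalent to $w_1(u)=w_1(j)$ and $w_2(v)=w_2(j)$. Your write-up is just a slightly more explicit version of the paper's argument, so there is nothing further to add.
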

\begin{proof}
  Fix $u, v \in B$ and $j \in E$.
  Let $a$ and $b$ be the $(u,j)$th entry of $A_1$ and the $(v,j)$th entry of $A_2$, respectively.
  Assume that $ab \ne 0$.
  Then $ab$ contributes to $P^\#_{u,v}$ if and only if $w_1(u)+w_2(v) = w(j)$, which is equivalent to $w_1(u) = w_1(j)$ and $w_2(u) = w_2(j)$ by \cref{prop:weighted_intersection_cocircuit} and~\ref{item:W1}.
  Hence $ab$ contributes to $P^\#_{u,v}$ if and only if it contributes to the $(u,v)$th entry of $A_1^\# \trsp{{A_2^\#}}$.
\end{proof}

We prove \cref{cor:counting_via_sharp} using \Cref{lem:from_A_sharp_to_tcm}, which guarantees the validity of \cref{alg:weighted_pfaffian_pair}.

\begin{proof}[of \cref{cor:counting_via_sharp}]
  Suppose that $\zeta = \hat{\zeta}$.
  Then $(p_1, p_2)$ is optimal on (DB) since the associated objective value is $\zeta = \hat{\zeta}$.
  By \cref{lem:algebraic_form_of_weighted_intersection}, $N$ is equal to the coefficient of $\theta^{\hat{\zeta}}$ in $\det P(\theta)$, which is the same as $\det A_1^\# \trsp{{A_2^\#}}$ by the latter part of \cref{prop:combinatorial_relaxation_det} and \cref{lem:from_A_sharp_to_tcm}.

  If $\zeta < \hat{\zeta}$, then $N = 0$ by \cref{lem:tightness_of_intersection}.
  In addition, $(p_1, p_2)$ is not optimal on (DB).
  Hence $\det A_1^\# \trsp{{A_2^\#}}$ must be zero by the former claim of \cref{prop:combinatorial_relaxation_det}.
  Thus $\det A_1^\# \trsp{{A_2^\#}} = N$ holds on both cases.
\end{proof}

\begin{remark}
  In the above arguments, we have constructed the bipartite graph $G$ from $P(\theta) = A_1 D\prn{\theta^w} \trsp{A_2}$.
  On the other hand, \cref{alg:weighted_pfaffian_parity} builds a graph from $\Phi^*\prn{\theta}$ instead of $Q(\theta) \defeq A \Delta\prn{\theta^w} \trsp{A}$ for a Pfaffian parity $(A, L)$.
  Assuming that $A$ is pivoted so that $A[B] = I_{2r}$ with a minimum-weight parity base $B$, we conjecture that the number of minimum-weight parity bases is equal to the coefficient of $\theta^{\hat{\delta}(Q)}$ in $\pf Q(\theta)$, where $\hat{\delta}(Q)$ is defined in \cref{sec:counting_on_weighted_pfaffian_parities}.
  If this is the case, we can improve the running time of the weighted counting algorithm for Pfaffian parities.
  Moreover, we can use an arbitrary algorithm to output $B$ since $C^*$ is no longer needed, which might further improve the running time for specific instances.
  This conjecture is true for linear matroid intersection by \cref{lem:tightness_of_intersection,lem:algebraic_form_of_weighted_intersection} and for the matching problem by the definition of Pfaffian.
\end{remark}

\end{document}